\documentclass[11pt]{article}
\usepackage[utf8]{inputenc}
\usepackage[T1]{fontenc}
\usepackage{lmodern}
\usepackage{amsmath,amssymb,amsfonts,amsthm,mathtools,bm}
\usepackage{bbm}
\usepackage{enumitem}
\usepackage{booktabs}
\usepackage{algorithm}
\usepackage{algpseudocode}
\usepackage{natbib}
\usepackage[colorlinks=true,linkcolor=blue,citecolor=blue,urlcolor=blue]{hyperref}
\usepackage{xcolor}
\usepackage{graphicx}
\usepackage{float}
\graphicspath{{./}}
\usepackage{adjustbox}
\usepackage{multirow}
\usepackage[margin=0.8in]{geometry}
\newtheorem{theorem}{Theorem}
\newtheorem{lemma}{Lemma}
\newtheorem{proposition}{Proposition}
\newtheorem{corollary}{Corollary}
\newtheorem{assumption}{Assumption}
\newtheorem{definition}{Definition}
\newtheorem{remark}{Remark}

\newcommand{\E}{\mathbb E}
\newcommand{\Pp}{\mathbb P}
\newcommand{\R}{\mathbb R}

\newcommand{\G}{\mathbb G}
\newcommand{\ind}[1]{\mathbbm 1\{#1\}}
\newcommand{\Var}{\mathbb V\mathrm{ar}}
\newcommand{\Cov}{\mathbb C\mathrm{ov}}
\newcommand{\opro}{o_{\Pp}}

\newcommand{\Op}{O_{\Pp}}
\newcommand{\dto}{\rightsquigarrow}
\newcommand{\dtow}{\overset{\mathrm w}{\rightsquigarrow}}
\newcommand{\pto}{\overset{p}{\to}}
\newcommand{\Law}{\mathrm{Law}}
\DeclareMathOperator*{\argmin}{arg\,min}

\newcommand{\dist}{\operatorname{dist}}

\newcommand{\supp}{\operatorname{supp}}
\newcommand{\diag}{\operatorname{diag}}
\newcommand{\tr}{\operatorname{tr}}
\newcommand{\diam}{\operatorname{diam}}
\newcommand{\calA}{\mathcal A}
\newcommand{\calB}{\mathcal B}
\newcommand{\calC}{\mathcal C}
\newcommand{\calD}{\mathcal D}

\newcommand{\calF}{\mathcal F}
\newcommand{\calG}{\mathcal G}

\newcommand{\calM}{\mathcal M}

\newcommand{\calS}{\mathcal S}
\newcommand{\calT}{\mathcal T}
\newcommand{\calU}{\mathcal U}
\newcommand{\calX}{\mathcal X}
\newcommand{\bmu}{\bm\mu}

\newcommand{\hateta}{\widehat\eta}

\newcommand{\hU}{\widehat U}
\newcommand{\tU}{\widetilde U}
\newcommand{\hR}{\widehat R}
\newcommand{\hPsi}{\widehat\Psi}
\newcommand{\hphi}{\widehat\phi}
\newcommand{\eps}{\varepsilon}
\newcommand{\norm}[1]{\lVert #1\rVert}
\newcommand{\given}{\,\big|\,}
\newcommand{\BL}{\mathrm{BL}_1}
\newcommand{\am}{\alpha_{\mathrm M}}

\usepackage{setspace}
\usepackage{bibspacing}
\newcommand{\blind}{0}

\begin{document}

\def\spacingset#1{\renewcommand{\baselinestretch}%
{#1}\small\normalsize} \spacingset{1}


\if0\blind
{
  \title{The Resolution of Causal Heterogeneity} 
  \author{Yuki Ohnishi$^{*}$ and Fan Li$^{\dagger}$\\
    Department of Biostatistics,  
    Yale School of Public Health\\
    $^*$yuki.ohnishi@yale.edu, 
    $^\dagger$fan.f.li@yale.edu}
  \maketitle
} \fi

\if1\blind
{
  \bigskip
  \bigskip
  \bigskip
  \begin{center}
    {\LARGE\bf }
\end{center}
  \medskip
} \fi


\begin{abstract}
Causal subgroup analyses often report a small number of groups summarizing treatment effect heterogeneity, as if that number were a well-defined estimand. Outside genuinely latent class populations, however, a ``true'' subgroup count is model dependent rather than a population functional. We replace it with a new population estimand, the resolution profile, a functional of the causal feature law giving the fewest groups explaining a prescribed fraction of causal heterogeneity, defined for every population without latent structure. Inference is organized around one cross-fitted Bayesian-bootstrap posterior for a single structured moment process, its scores corrected with influence functions, so that paths, profiles, fixed-resolution summaries, and subgroup effects follow by composition. A uniform conditional Bernstein--von Mises theorem over a loss class containing the nonsmooth quantization losses shows this posterior merges with the efficient Gaussian limit under stated nuisance-rate and margin conditions. Subgroup-number uncertainty is not model selection but threshold nonregularity, the profile being an integer-valued threshold of a continuous path, discontinuous in the law at each knot. At these knots no single-valued selector is locally uniformly consistent over root-$n$ neighborhoods, and the set-valued report obtained by inverting a simultaneous band retains locally uniform validity over exactly the same perturbations. Simulations support the approximations, and an analysis of the MineThatData e-mail experiment illustrates the resolution-indexed report, in which two to three groups summarize the visit response while finer structure falls below a noise-floor diagnostic.
\end{abstract}

\noindent%
{\it Keywords: Causal heterogeneity $R^2$; causal inference; quantization; treatment effect heterogeneity; Bayesian bootstrap}

\spacingset{1.1} 

\section{Introduction}
\label{sec:intro}

Heterogeneity of treatment effects has evolved from a nuisance concept to the object of causal inference. Modern randomized trials and observational studies are increasingly analyzed with flexible data-adaptive estimators precisely because causal effects differ across patients, customers, and program participants. The operational question is whether individuals organize into a small number of causally interpretable subgroups with similar counterfactual response profiles. Physicians ask whether a therapy has distinct responder classes, whereas policy analysts ask whether a program helps one population segment while leaving another unmoved. Subgroup discovery is where estimated heterogeneity meets decisions, but it demands more than point estimates. Because individual characteristics interact in complex ways, the relevant subgroups are typically latent, and even their number is contested. Imposing a single, overconfident subgroup structure therefore risks reading into the data patterns that are artifacts of the analysis rather than features of the population.

Standard clustering methods, however, were not designed for this question, and pressing them into service raises two difficulties. The first is conceptual. An intuitive approach is to first construct a vector of causal features for each covariate profile, denoted $U(X)$, such as conditional treatment effects $\mu_a(x)-\mu_0(x)$ with $\mu_a(x)=\E(Y^a\mid X=x)$, and then cluster the induced feature values by $k$-means or a finite mixture model. The second step conceals a foundational difficulty. The feature law $P_U$, the distribution of $U(X)$ induced by the covariate distribution, is a well-defined nonparametric object. Outside genuinely latent class data-generating processes, however, it is not exactly a finite mixture of any particular order. It may be supported on a curve or manifold, be skewed or heavy-tailed, or be purely atomic, a case excluded as an exact truth by standard continuous-kernel mixture models. Thus, ``the true number of causal subgroups $K_0$'' is in general a property of the model rather than of the population, and a posterior over $K$ can provide calibrated-looking uncertainty about a quantity that, absent additional modeling assumptions, is not a population functional.

The second challenge is inferential. The causal features are not observed but instead functionals of the outcome regression, which must be estimated by flexible data-adaptive methods. At the level of population-level risks, semiparametric efficiency theory provides the standard remedy. For example, one can replace the plug-in empirical loss $\widehat L_n$ with a cross-fitted criterion corrected using influence functions, so that loss values are estimable at parametric rates up to second-order bias terms \citep{chernozhukov2018,kennedy2023review}. However, the common two-stage default, estimating individual effects and then clustering them as if they were observed data, may carry first-stage error into the clustering step while quantifying neither feature uncertainty nor partition uncertainty. The inferential target is therefore not simply a collection of point estimates but joint uncertainty quantification for several latent quantities, including how many groups each resolution supports, which units belong together, and how large their causal effects are. A posterior distribution is a natural reporting device for this task, but only after the population object being updated has been defined, whereas a reliable likelihood for the unobserved causal features may not be easily obtained.

\subsection{Our proposal and contributions}\label{sec:intro:proposal}

We contend that the solution to the first challenge is not better inference about $K$ but a better question, namely how many groups are needed to explain a given fraction of the heterogeneity, the question practitioners already ask informally. The proposal is estimand-first. We define the resolution profile of causal heterogeneity, the map $\gamma\mapsto K^\star(\gamma)$ returning the smallest number of groups whose best $K$-group summary explains at least a fraction $\gamma$ of the variance of the causal features. It is a new population target, a nonparametric functional of $P_U$ defined for every population without latent structure, and our question is inference for the functional itself rather than evaluation at several prechosen resolutions. The number of subgroups is thereby demoted from a parameter to be discovered to a coordinate on a resolution path, and one reports the whole profile with uncertainty rather than defending any single $K$. The conceptual shift is from subgroup discovery to causal heterogeneity summarization, the description of a feature distribution at chosen resolutions.

The profile has properties that a single number of clusters lacks. It is free of mixture-model assumptions, being a functional of $P_U$ alone, though it depends on declared ingredients, namely the feature map, the metric, the covariate population, and the ceiling $\overline K$. It is pathwise, reporting the entire step function with uncertainty, for example that two groups suffice for 60\% and five for 90\% of the heterogeneity. It is also backward compatible, reproducing the classical answer $K_0$ when $P_U$ consists of $K_0$ well-separated tight groups.

Inference is then organized around a single corrected moment process. Section~\ref{sec:method} handles mixture losses, quantization losses, membership scores, and subgroup-effect ratios through one structured moment process, with uniform efficient influence function (EIF) corrections and second-order bias control, including the margin term for quantization. We define a calibrated Bayesian-bootstrap feature-law posterior for it by cross-fitting the nuisances, forming corrected score evaluations, and reweighting the score array by Dirichlet weights \citep{rubin1981}. Here ``posterior'' refers to this conditional weighted law, not a generative Bayesian posterior over outcomes, partitions, or model order. The main guarantee is a uniform conditional Bernstein--von Mises theorem over a loss class that contains the nonsmooth quantization losses, showing that the weighted law of the corrected process merges with its efficient Gaussian limit. Every downstream report, the paths, profiles, projections, subgroup effects, and joint statements across $K$, then follows by composition of this one process limit, so partition uncertainty propagates automatically and fixed-$K$ marginal results cannot be reassembled into these joint and threshold statements.

The subgroup count enters this construction not as a model index but as a threshold. Because $K^\star(\gamma)$ is an integer-valued threshold functional of a continuous path, its value is discontinuous in the underlying law at each knot, a threshold nonregularity rather than model-selection uncertainty. At a threshold the posterior profile splits across the admissible neighboring counts, and Theorem~\ref{thm:impossibility} shows that no single-valued rule can select the count with local uniform consistency over root-$n$ neighborhoods, an instance of impossibility for nondifferentiable functionals. The matched response is a set-valued report obtained by inverting the simultaneous band, and Theorem~\ref{thm:honest} shows that it retains its nominal simultaneous guarantee over exactly the root-$n$ perturbations on which single-valued selection fails. The pair of theorems delimits the inference attainable about the profile at the root-$n$ scale, identifying the set-valued report as honest rather than conservative. Simulations and a data application illustrate these operating characteristics.

\subsection{Related work and positioning}\label{sec:intro:related}

Classical rules for choosing the number of clusters, such as the gap statistic \citep{tibshirani2001}, silhouette scores, and information criteria, turn an empirical clustering into a point choice of $K$, with neither uncertainty nor a population estimand. A more principled population formulation comes from quantization theory, in which each $K$ defines a best $K$-point approximation to the feature law $P_U$ with its approximation loss \citep{pollard1981,graf2000,bartlett1998}, moving from an empirical count to how population approximation error changes with resolution. Our contribution is inference for this quantization path when the feature law is causal, unobserved, and accessible only through cross-fitted corrected scores. Data-adaptive heterogeneity discovery is dominated by recursive partitioning and forests \citep{athey2016,wager2018}, which target the conditional effect surface rather than a population summary of its law. The closest causal inference work is Kim et al.\ \cite{kim2026causal}, who introduced causal $k$-means and provide bias-corrected codebook estimation of $P_U$ with asymptotic normality, root-$n$ inference at a single fixed and prechosen resolution. What is new is the resolution profile as the estimand, the subgroup count as a nonregular threshold functional, a set-valued report matched to an impossibility result, and one corrected moment process for all resolutions jointly. Relative to that fixed-$K$ analysis, Theorem~\ref{thm:eif}(ii) makes the second-order bias control uniform over codebooks and over $K\le\overline K$, Theorem~\ref{thm:bvm} gives the joint process limit across resolutions, Theorems~\ref{thm:impossibility} and~\ref{thm:profile} pair the impossibility at the knots where the profile jumps with the matched set-valued inference, and Theorem~\ref{thm:effects} propagates partition uncertainty into subgroup effects. Density-based formulations of causal subgroup structure \citep{kim2024hierarchical} instead pursue level-set and hierarchical targets, irregular functionals outside our root-$n$ scope, with no threshold theory and no joint guarantee across resolutions, so we regard them as complementary. Margin conditions for empirical quantization go back to Levrard \cite{levrard2015} and Biau et al. \cite{biau2008}, whose conditions are local at the optimal codebook, whereas we require a margin holding uniformly over hyperplanes, a strictly stronger hypothesis that buys bias control uniformly over codebooks.

Our set-valued report also connects to two further literatures. Confidence sets for a discrete model index go back to the model confidence set of Hansen et al. \cite{hansen2011}, and the impossibility of consistent selection under contiguous alternatives is central to the post model selection literature \citep{leeb2005,leeb2006} and to impossibility theory for nondifferentiable functionals \citep{hirano2012}. Theorem~\ref{thm:impossibility} instantiates this phenomenon at the knots of the quantization path of a causal feature law accessible only through corrected scores, and the matched band inversion delivers a set-valued guarantee that Theorem~\ref{thm:honest} shows is honest, in the locally uniform sense of the honest-inference literature \citep{li1989}, over the same root-$n$ perturbations on which selection fails. A related concern animates mixture order estimation, where posteriors and information criteria for the number of components are inconsistent or fragile under misspecification \citep{miller2014,rousseau2011}, reinforcing the case for resolution-indexed rather than order-based reports. A complementary line conditions on the clustering event itself, yielding selective tests for contrasts within a chosen partition \citep{gao2024}, whereas our target is the resolution profile of the population feature law, with partition uncertainty propagated rather than conditioned away. A separate line summarizes treatment effect heterogeneity through prespecified functionals, including the variance of the conditional treatment effect and its variable-importance extensions \citep{levy2021,hines2022}, group average effects sorted by predicted benefit \citep{chernozhukov2025generic}, and sorted effect curves \citep{chernozhukov2018sorted}. The causal heterogeneity $R^2$ curve of Section~\ref{sec:resolution} extends variance-explained summaries from prespecified to optimized groupings.

Finally, Bayesian approaches provide posterior distributions over parameters, partitions, or model order under a sampling model or loss \citep{miller2018,wade2018,rigon2023}, asking which order or partition the posterior favors, whereas the resolution profile is a functional of the causal feature law $P_U$ itself.
Our uncertainty device is instead a calibrated Bayesian-bootstrap posterior for corrected feature-law moments. The weighting scheme descends from Rubin \cite{rubin1981} and its $M$-estimation interpretation \citep{newton1994}, related to predictive-resampling and martingale-posterior constructions \citep{fong2023}. Closest to us, Yiu et al. \cite{yiu2025} debias Bayesian-bootstrap posteriors for scalar smooth functionals via one-step corrections, and a uniform Bernstein--von Mises theorem for the Dirichlet process over function classes is itself available \citep{rayvdv2021}. The specific content of Theorem~\ref{thm:bvm} is the process-level statement for cross-fitted, estimated influence-function scores, including the nonsmooth quantization class, consumed by the downstream path, profile, and threshold analyses. Semiparametric Bernstein--von Mises theory for model-based posteriors \citep{castillo2015,ray2020} gives context for when full Bayes attains efficient limits, whereas our construction is modular rather than generative, designed to attain the efficient process limit directly.

\subsection{Preliminary notation}\label{sec:intro:notation}

For a probability measure $Q$ and $Q$-integrable $f$ we write $Qf=\int f\,dQ$. $P_n$ is the empirical measure of $O_1,\dots,O_n$ and $\G_n=\sqrt n(P_n-P_0)$. For a class $\calF$, $\ell^\infty(\calF)$ is the space of bounded real functions on $\calF$ with the supremum norm. Weak convergence $\dto$ is in the Hoffmann--J{\o}rgensen sense \citep{vdvwellner1996}. 
For posterior-draw quantities, $\dtow$ denotes conditional weak convergence in probability given the data, in the bounded-Lipschitz sense. In cross-fitting, $I_b$ denotes fold $b$, $b(i)$ the fold containing observation $i$, and $\hat\mu^{(-b)}_a$, $\hat\pi^{(-b)}_a$ the nuisance estimators trained without fold $b$. 
We write $\hateta^{(-b)}=(\hat\bmu^{(-b)},\hat\pi^{(-b)})$. 
$\norm{\cdot}_{P,2}$ and $\norm{\cdot}_\infty$ are the $L_2(P)$ norm and the uniform norm. Constants $C$ may change between displays and depend only on quantities declared in the assumptions. 
We abbreviate $a\wedge b=\min(a,b)$, $a\vee b=\max(a,b)$, and write $[K]=\{1,\dots,K\}$. Population quantities evaluated at $P_0$ carry a subscript zero when the dependence on the truth matters, as in $\rho_0$, $W_0$, and $K^\star_0$.

\section{Estimands for summarizing causal heterogeneity}\label{sec:estimands}

\subsection{Causal feature law}\label{sec:setup}
We observe $n$ independent copies of $O=(X,A,Y)\sim P_0$, with covariates $X\in\calX\subseteq\R^d$, treatment $A\in\calA=\{0,1,\dots,p\}$, and outcome $Y\in\R$. Let $Y^a$ denote the potential outcome under treatment $a$, $\pi_a(x)=P_0(A=a\mid X=x)$ the generalized propensity score, and $\mu_a(x)=\E_0(Y\mid A=a,X=x)$ the outcome regression. Let $\bmu=(\mu_0,\dots,\mu_p)^\top$, $\pi=(\pi_0,\dots,\pi_p)^\top$ and $\eta=(\bmu,\pi)$ for the nuisance pair. The following standard identification conditions are assumed.

\begin{assumption}[Identification]\label{ass:ident}
(i) Consistency, $Y=Y^A$ almost surely. (ii) No unmeasured confounding, $A\perp\!\!\!\perp(Y^0,\dots,Y^p)\mid X$. (iii) Positivity, $\pi_a(X)\ge\eps_\pi>0$ almost surely for every $a\in\calA$.
\end{assumption}

Under Assumption~\ref{ass:ident}, $\mu_a(x)=\E_0(Y^a\mid X=x)$, so the vector $\bmu(x)$ is the identified counterfactual mean response profile at $x$. The analyst selects a fixed matrix $H\in\R^{q\times(p+1)}$, thereby defining the
feature map $U(x)=H\bmu(x)\in\R^q$.
Canonical choices include the vector of contrasts relative to control, $U=(\mu_1-\mu_0,\dots,\mu_p-\mu_0)^\top$, the control-anchored profile $U=(\mu_0,\mu_1-\mu_0,\dots,\mu_p-\mu_0)^\top$, and the full profile $U=\bmu$ with $H=I$. Linearity of $H$ is assumed to simplify the second-order analysis. A fixed nonlinear $C^2$ map $T$ can replace $H$ throughout, with $H$ replaced by the Jacobian $\dot T_{\mu(x)}$, at the cost of heavier notation.

\begin{definition}[Causal feature law]\label{def:featurelaw}
The causal feature law is $P_U=P_0\circ U^{-1}$, the distribution of $U(X)$ when $X\sim P_{0,X}$.
\end{definition}

The feature law is the primitive estimand of this paper. It is a nonparametric functional of $P_0$, well-defined under Assumption~\ref{ass:ident} alone, requiring no latent class or mixture structure. Each scientific quantity we consider in what follows is a functional of $P_U$, or of the joint law of $(U(X),\bmu(X),X)$.

\subsection{The resolution profile as a functional of the causal feature law}\label{sec:resolution}

In our setup, group structure is quantified by how well the feature law is summarized by a small set of candidate response profiles. Let $\calC\subset\R^q$ be a fixed compact convex set containing the feature support. For a codebook of $K$ such profiles, $c=(c_1,\dots,c_K)\in\calC^K$ define the quantization loss
\begin{equation}\label{eq:quantloss}
g_c(u)=\min_{h\in[K]}\norm{u-c_h}^2 ,
\end{equation}
the squared distance from the response profile $u$ to its nearest summary. Let $\overline K$ be a fixed, analyst-chosen bound on the complexity of summaries entertained. For $K\in[\overline K]$, the $K$-point quantization risk of the feature law and its optimizers are
\begin{equation}\label{eq:WK}
W(K)\;=\;\inf_{c\in\calC^K}P_U(g_c),
\qquad
\calC^\star(K)\;=\;\argmin_{c\in\calC^K}P_U(g_c),
\end{equation}
the population $k$-means objective on $P_U$ \citep{pollard1981,graf2000}. Since $g_c$ depends on $c$ only through the set $\{c_1,\dots,c_K\}$, the population values are label invariant.  For empirical-process arguments, however, we index the loss class by ordered tuples in $\calC^K$, allow repeated centers, and break ties deterministically, say by the lowest label.  For population optimizers we pass back to unordered sets after deleting labels. Assumption~\ref{ass:quant}\textup{(i)} of Section~\ref{sec:assumptions} is the special case in which the optimal representative has exactly $K$ distinct centers. At $K=1$, the unique minimizer is $c^\star(1)=\E U$, and $W(1)=\E\norm{U-\E U}^2=\tr\{\Var(U)\}$, which is the total variance of the causal features. We assume throughout that $W(1)>0$ and define the \emph{causal heterogeneity $R^2$} curve and the \emph{resolution profile}, for $\gamma\in[0,\rho(\overline K)]$,
\begin{equation}\label{eq:profile}
\rho(K)\;=\;1-\frac{W(K)}{W(1)}\in[0,1],
\quad
K^\star(\gamma)\;=\;\min\bigl\{K\in[\overline K]:\rho(K)\ge\gamma\bigr\}.
\end{equation}
The quantity $W(\cdot)$ is nonincreasing, hence $\rho(\cdot)$ nondecreasing with $\rho(1)=0$, and $K^\star(\cdot)$ is a nondecreasing step function with jumps at the knots $\mathcal R_0=\{\rho(K):K\in[\overline K]\}$. The profile is defined for $\gamma\in[0,\rho(\overline K)]$, and all inferential statements are made for $\gamma$ in the open interval $(0,\rho(\overline K))$. For $\gamma>\rho(\overline K)$, no $K\le\overline K$ attains $\gamma$, so $K^\star(\gamma)$ is undefined. The ceiling $\overline K$ is fixed throughout the main text, while Supplementary Remark~\ref{rem:growingK} discusses growing ceilings.

The entire development is carried out under $W(1)>0$. At the degenerate law the normalization defining $\rho$ is undefined and the root-$n$ analysis of Section~\ref{sec:theory} does not apply, so reported intervals for $W(1)$ above zero are statements within the maintained positive-heterogeneity regime rather than a formal test of $W(1)=0$, which is outside this paper's scope. Supplementary Section~\ref{sec:app:star_null} illustrates the recommended behavior when $W(1)$ is statistically indistinguishable from zero.

Several further remarks follow. First, $\rho(K)$ is the explained portion of heterogeneity and has an exact analysis-of-variance reading. Writing $\Pi_c(u)=c_{h_c(u)}$ for the codebook projection, $\rho(K)=\sup_{c\in\calC^K}\tr\{\Var(\Pi_c(U))\}/\tr\{\Var(U)\}$ when centroids are cell means, the fraction of causal-feature variance captured by the best $K$-group summary. Second, the profile is backward compatible with a mixture truth. If $P_U=\sum_{h\le K_0}\omega_h Q_h$ with component means separated by at least $\Delta$ and within-component spread at most $\sigma^2$, then $1-\rho(K_0)\le\sigma^2/W(1)$ and $1-\rho(K)\ge\omega_{\min}(\Delta^2/8-\sigma^2)/W(1)$ for $K<K_0$, so whenever $\sigma^2<\omega_{\min}\Delta^2/16$ one has $K^\star(\gamma)=K_0$ on a nonempty interval of resolutions, and the profile reports the classical $K_0$ with its supporting resolutions (Supplementary Lemma~\ref{lem:nearmixture}). Third, the profile is invariant to relabeling and to the choice of optimal codebook when $\calC^\star(K)$ is not a singleton, being defined through the values $W(K)$ rather than the minimizers, so it is a well-defined target without any uniqueness condition, although the band-based inference of Section~\ref{sec:path_theory} still uses Assumption~\ref{ass:quant}\textup{(i)} and Supplementary Remark~\ref{rem:nonunique} records what fails without it. Fourth, for multivariate features the Euclidean metric in \eqref{eq:quantloss} is itself part of the estimand, and a fixed positive-definite weighting can replace it without new theory (Supplementary Remark~\ref{rem:metric}). Finally, the resolution profile fixes the desired approximation quality, while Supplementary Section~\ref{sec:geometry_penalized_profile_supp} develops a complementary formulation assigning a linear incremental cost to each additional subgroup, with the corresponding inference.

\subsection{Soft summaries at a working resolution}\label{sec:soft}

Once a working resolution is chosen, the analyst wants interpretable group descriptions and group-specific effects, which we provide as projections with no truth claim attached. Fix a location--scale family $k(\cdot;\theta)$ on $\R^q$ (e.g.\ Gaussian) and let
\begin{equation}\label{eq:proj}
\beta^\star(K)\;\in\;\argmin_{\beta\in\calB_K}P_U(\ell_\beta),
\qquad \ell_\beta(u)=-\log m_\beta(u),\quad m_\beta(u)=\sum_{h\in[K]}\omega_h\,k(u;\theta_h),
\end{equation}
the Kullback--Leibler (KL) projection of $P_U$ onto $K$-component mixtures over a compact set $\calB_K$. The induced soft memberships are the component posterior weights under the projected mixture, $r_h(u;\beta)=\omega_h\,k(u;\theta_h) / m_\beta(u)$ for $h\in[K]$,
and the subgroup-specific mean response of arm $a$ in group $h$ is the membership-weighted mean
\begin{equation}\label{eq:effects}
\psi_{h,a}(K)\;=\;\frac{\E_0\{r_h(U;\beta^\star(K))\,\mu_a(X)\}}{\E_0\{r_h(U;\beta^\star(K))\}} ,
\end{equation}
with contrasts $\psi_{h,a}-\psi_{h,a'}$ the subgroup treatment effects. Hard-cell analogues replacing $r_h$ by Voronoi indicators require the margin condition rather than the projection regularity below, so we focus on the soft versions for smoothness. On the interpretive stance, $\beta^\star(K)$ is the best $K$-component description of $P_U$ in KL divergence, a well-defined functional under misspecification, and nothing in the sequel asserts that $P_U$ is a mixture. 
Table~\ref{tab:cast} collects the representative population objects of the analysis together with the result that delivers each one's posterior inference, 
and Supplementary Section~\ref{sec:report} states the reporting protocol.

\begin{table}[ht!]
\centering
\caption{Representative population objects of the analysis. Each is a functional of the causal feature law $P_U$ (or of the joint law of $\{U(X),\bmu(X)\}$), estimated through the moment process of Definition~\ref{def:process}. The last column gives the result delivering its inference, and the optional penalized dual is summarized in Supplementary Section~\ref{sec:geometry_penalized_profile_supp}.}
\label{tab:cast}
\small
\begin{tabular}{llll}
\toprule
Object & Defined & Meaning & Inference\\
\midrule
$W(K)$ & \eqref{eq:WK} & best within-group dispersion with $K$ groups & Cor.~\ref{cor:path}\\
$\rho(K)$ & \eqref{eq:profile} & fraction of heterogeneity explained by $K$ groups & Cor.~\ref{cor:path}\\
$K^\star(\gamma)$ & \eqref{eq:profile} & smallest $K$ achieving resolution $\gamma$ & Thms.~\ref{thm:profile}, \ref{thm:honest}\\
knots $\mathcal R_0$ & \S\ref{sec:resolution} & resolutions at which $K^\star$ jumps & Thms.~\ref{thm:impossibility} and \ref{thm:profile}(ii)\\
$\psi_{h,a}(K)$ & \eqref{eq:effects} & mean response of arm $a$ in subgroup $h$ & Thm.~\ref{thm:effects}\\
\bottomrule
\end{tabular}
\end{table}

\section{Introducing the feature-law posterior}\label{sec:method}

\subsection{One moment process for the entire analysis}\label{sec:moment}

Every estimand in Sections~\ref{sec:resolution}--\ref{sec:soft} is an explicit functional of the feature law, or of the joint law of $(U(X),\bmu(X))$. The path and profile are built from the values $P_U(g_c)$, the projection parameters from $P_U(\ell_\beta)$, and the subgroup means are ratios of joint moments involving memberships and arm-specific responses. A cluster analysis thus involves many functionals at once, and estimating them separately, each with its own correction, would forfeit the joint uniform control needed for simultaneous reporting, so we organize the full catalog through a single indexed family.

\begin{definition}[Structured moment process]\label{def:process}
Let $\calF$ be a class of measurable functions $f:\R^q\times\R^{p+1}\to\R$. The structured moment process is
\begin{equation}\label{eq:process}
\Psi_f(P)\;=\;\E_P\bigl[f\{U_P(X),\bmu_P(X)\}\bigr],\qquad f\in\calF,
\end{equation}
where $U_P=H\bmu_P$ and $\bmu_P$ is the outcome-regression vector under $P$. We write $\Psi_0=\Psi_\cdot(P_0)$ and regard $\Psi(P)=\{\Psi_f(P):f\in\calF\}$ as an element of $\ell^\infty(\calF)$.
\end{definition}

Functionals depending only on $u$ correspond to $f(u,m)=g(u)$, written $\calG\subset\calF$, so $\Psi_g(P)=\int g\,dP_U$. The dependence on $m=\bmu(x)$ beyond $u$ lets subgroup-specific effects (means of $\mu_a$ within feature-defined groups) live in the same process. The class $\calF$ is the union of three blocks, formalized in Assumption~\ref{ass:class} below:
\begin{itemize}[leftmargin=2em]
\item[$\calF_{\mathrm{sm}}$:] smooth losses $f(u,m)$ that are twice continuously differentiable with uniformly bounded first and second derivatives on the relevant compact space, e.g.\ the projection losses $\ell_\beta$ of \eqref{eq:proj} and soft-membership scores;
\item[$\calF_{\mathrm{qt}}$:] the quantization losses $g_c$ of \eqref{eq:quantloss}, indexed by codebooks $c\in\calC^K$, $K\in[\overline K]$;
\item[$\calF_{\mathrm{str}}$:] structured effect scores. For a mixture parameter $\beta\in\calB_K$, component $h$, and arm $a$,
\begin{equation}\label{eq:effectscores}
f^{N}_{h,a;\beta}(u,m)=r_h(u;\beta)\,m_a,
\qquad
f^{D}_{h;\beta}(u,m)=r_h(u;\beta).
\end{equation}
\end{itemize}
In this notation, the estimand catalogue of Sections~\ref{sec:resolution}--\ref{sec:soft} is a short list of process functionals. Since $\Psi_0(g)=P_U(g)$ for $g\in\calG$, the definitions \eqref{eq:WK} and \eqref{eq:proj} read verbatim in process notation, i.e., $W(K)=\inf_{c\in\calC^K}\Psi_0(g_c)$
and $\beta^\star(K)\in\argmin_{\beta\in\calB_K}\Psi_0(\ell_\beta)$, while the subgroup means \eqref{eq:effects} are ratios of structured moments, $\psi_{h,a}(K) = \Psi_0\bigl(f^N_{h,a;\beta^\star(K)}\bigr) \:/\: \Psi_0\bigl(f^D_{h;\beta^\star(K)}\bigr)$.
A single uniform inferential statement about $\Psi$ therefore delivers joint inference for every object in the catalogue by composition. We next introduce the regularity conditions.

\begin{assumption}[Boundedness and truncation]\label{ass:bound}
We assume
$|Y|\le B_Y$ almost surely, $\max_a\norm{\mu_a}_\infty\le B_\mu$, the feature support $\calU=\supp(P_U)$ is contained in the compact convex set $\calC$ of \eqref{eq:quantloss}, and $W(1)=\tr\Var(U)>0$. The nuisance estimators are truncated so that, deterministically, $\max_a\norm{\hat\mu^{(-b)}_a}_\infty\le B_\mu$, $\hat\pi^{(-b)}_a\ge\eps_\pi/2$, and $H\hat\bmu^{(-b)}(x)\in\calC$ for all $x,b$.
\end{assumption} 

\begin{assumption}[Function class]
\label{ass:class}
Let $\calF=\calF_{\mathrm{sm}}\cup\calF_{\mathrm{qt}}\cup\calF_{\mathrm{str}}$.
\begin{enumerate}[label=(\roman*),leftmargin=2em]
\item $\calF_{\mathrm{sm}}=\{f_t:t\in T_{\mathrm{sm}}\}$, where
$T_{\mathrm{sm}}\subset\R^{d_{\mathrm{sm}}}$ is compact, each $f_t$ is $C^2$
on an open neighborhood of $\calC\times[-B_\mu,B_\mu]^{p+1}$, and $\sup_t\{\norm{f_t}_\infty+\norm{\nabla f_t}_\infty
+\norm{\nabla^2 f_t}_\infty\}\le B_F$,
with $t\mapsto(f_t,\nabla f_t)$ Lipschitz in supremum norm.
\item $\calF_{\mathrm{qt}}=\{g_c:c\in\calC^K,\ K\in[\overline K]\}$, with
$\overline K$ fixed, $g_c$ as in \eqref{eq:quantloss}, ordered codebook
tuples, and the deterministic tie rule above.
\item $\calF_{\mathrm{str}}$
contains the scores in \eqref{eq:effectscores} and the coordinate functions of
$\nabla_\beta\ell_\beta$ for $\beta$ in compact sets $\calB_K$ satisfying
$\inf_{u\in\calC,\beta}m_\beta(u)\ge\underline m>0$.  The maps $(u,\beta)\mapsto
(\ell_\beta,\nabla_u\ell_\beta,\nabla_\beta\ell_\beta,\nabla^2\ell_\beta)$
are continuous and obey the same uniform boundedness and Lipschitz conditions
as in \textup{(i)}, including the mixed derivatives
$\nabla_u(\partial_{\beta_j}\ell_\beta)$ for every coordinate $j$.
\end{enumerate}
Vector scores are always interpreted coordinatewise inside $\Psi$ or $\phi$, with
finite-dimensional vectors reassembled afterward.
\end{assumption}

The first theorem gives the EIF of the process and the key second-order bias bound uniformly over $\calF$. For a candidate nuisance write $\eta=(\bmu_\eta,\pi_\eta)$ and $U_\eta(x)=H\bmu_\eta(x)$, and define the inverse probability weighted residual vector $R(O;\eta)\in\R^{p+1}$ by
\begin{equation}\label{eq:residual}
R_a(O;\eta)\;=\;\frac{\ind{A=a}}{\pi_{\eta,a}(X)}\bigl\{Y-\mu_{\eta,a}(X)\bigr\},\qquad a\in\calA ,
\end{equation}
and, for $f\in\calF$ differentiable in its arguments, the corrected score
\begin{equation}\label{eq:score}
\phi_f(O;\eta)\;=\;f\{U_\eta(X),\bmu_\eta(X)\}
+\Bigl[\nabla_u f\{U_\eta(X),\bmu_\eta(X)\}^\top H+\nabla_m f\{U_\eta(X),\bmu_\eta(X)\}^\top\Bigr]R(O;\eta).
\end{equation}
For the quantization losses \eqref{eq:quantloss}, which are not everywhere differentiable, $\nabla_u g_c(u)=2\{u-c_{h_c(u)}\}$ with $h_c(u)=\argmin_{h}\norm{u-c_h}$ is defined off the Lebesgue-null set of Voronoi boundaries, where the convention is immaterial under the margin condition below. For $g\in\calG$ the score reduces to $\phi_g(O;\eta)=g\{U_\eta(X)\}+\nabla g\{U_\eta(X)\}^\top HR(O;\eta)$, and since $\E_0\{R(O;\eta_0)\mid X\}=0$ we have $\E_0\{\phi_f(O;\eta_0)\}=\Psi_f(P_0)$, so the correction is mean-zero.

\begin{theorem}[Efficient influence process and second-order bias]\label{thm:eif}
Let Assumptions~\ref{ass:ident}--\ref{ass:class} hold. For quantization scores $g_c\in\calF_{\mathrm{qt}}$, part \textup{(i)} is asserted only for codebooks whose Voronoi boundaries are $P_U$-null. For the uniform bias bound in part \textup{(ii)}, assume the margin condition, Assumption~\ref{ass:margin} of Section~\ref{sec:assumptions}, whenever the supremum includes quantization scores. Then the following hold.
\begin{enumerate}[label=(\roman*),leftmargin=2em]
\item For any such $f\in\calF$, the map $P\mapsto\Psi_f(P)$ is pathwise differentiable at $P_0$ in the nonparametric model, with efficient influence function $\phi_f(O;\eta_0)-\Psi_f(P_0)$.

\item For any candidate nuisance pair $\bar\eta=(\bar\bmu,\bar\pi)$ satisfying Assumption~\ref{ass:bound}, the one-step bias obeys
\begin{equation}\label{eq:bias}
\sup_{f\in\calF}\ \Bigl|\E_0\{\phi_f(O;\bar\eta)\}-\Psi_f(P_0)\Bigr|
\;\le\; C\,Rem_2(\bar\eta),
\end{equation}
where, with 
$r_\mu=\max_a\norm{\bar\mu_a-\mu_a}_{P_0,2}$ and 
$r_\pi=\max_a\norm{\bar\pi_a-\pi_a}_{P_0,2}$,
\begin{equation}\label{eq:R2}
Rem_2(\bar\eta) = r_\mu^2+r_\mu r_\pi +\ind{\calF\supseteq\calF_{\mathrm{qt}}} r_\mu^{\,2(1+\am)/(2+\am)} ,
\end{equation}
and $\am$ is the margin exponent in Assumption~\ref{ass:margin}.
\end{enumerate}
\end{theorem}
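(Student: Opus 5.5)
For part (i) I will use the standard route for functionals of a regression: differentiate along a smooth one-dimensional submodel and read off the gradient. Take $dP_t=(1+t s)\,dP_0$ with bounded score $s$, and factor it as $s=s_X+s_{A\mid X}+s_{Y\mid A,X}$. Write $\Psi_f(P_t)=\int f\{H\bmu_t(x),\bmu_t(x)\}\,dP_{t,X}(x)$. Differentiating at $t=0$ gives two terms. The first, from the marginal of $X$, is $\E_0[\{f(U_0,\bmu_0)-\Psi_f(P_0)\}s_X]$. The second comes from the regression and uses the chain rule $\E_0[\{\nabla_u f^\top H+\nabla_m f^\top\}\,\partial_t\bmu_t(X)]$. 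The usual identity
\[
\partial_t\mu_{t,a}(x)=\E_0\bigl[\ind{A=a}\{Y-\mu_a(X)\}\,s_{Y\mid A,X}/\pi_a(X)\bigm| X=x\bigr]
\]
lets me write the second term as $\E_0[\{\nabla_u f^\top H+\nabla_m f^\top\}R(O;\eta_0)\,s]$. The reason is that $R$ has conditional mean zero given $(X,A)$, so the components $s_X$ and $s_{A\mid X}$ of the score drop out. Adding the two terms shows that the gradient is $\phi_f(O;\eta_0)-\Psi_f(P_0)$. Since the model is nonparametric, this gradient is the EIF.

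For $f\in\calF_{\mathrm{sm}}\cup\calF_{\mathrm{str}}$ the chain rule is legitimate by Assumption~\ref{ass:class}. For $g_c$ with $P_U$-null Voronoi boundaries, I will argue as follows. The map $u\mapsto g_c(u)$ is a minimum of finitely many quadratics, so it is locally Lipschitz. It is differentiable off the boundaries and directionally differentiable everywhere. Dominated convergence then still gives $\partial_t\E[g_c(U_t)]=\E[\nabla g_c(U_0)^\top\partial_tU_t]$, because the set where the derivative fails is $P_U$-null.

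For part (ii) I will write $\Delta\bmu=\bar\bmu-\bmu_0$ and $\bar U=H\bar\bmu$, and condition on $X$. This gives $\E_0\{R_a(O;\bar\eta)\mid X\}=\pi_a(\mu_a-\bar\mu_a)/\bar\pi_a$, and so
\[
\E_0\phi_f(O;\bar\eta)-\Psi_f(P_0)=\E_0\Bigl[f(\bar U,\bar\bmu)-f(U,\bmu)-\bar D_f^\top\Delta\bmu\Bigr]+\E_0\Bigl[\bar D_f^\top\,\mathrm{diag}\{(\bar\pi_a-\pi_a)/\bar\pi_a\}\Delta\bmu\Bigr],
\]
where $\bar D_f=H^\top\nabla_uf+\nabla_mf$ is evaluated at $(\bar U,\bar\bmu)$. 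The second term is bounded by Cauchy--Schwarz, using the uniform derivative bound $B_F$ and $\bar\pi\ge\eps_\pi/2$, by $C r_\mu r_\pi$. For smooth and structured $f$, the first term is a Taylor remainder with uniformly bounded Hessian, and is bounded by $C\norm{\Delta\bmu}^2\le Cr_\mu^2$. Both constants are uniform over $\calF$ by Assumption~\ref{ass:class}, and the mixed derivatives there handle $\nabla_\beta\ell_\beta$.

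The main obstacle is the first term for $g_c$, which must be controlled uniformly in $c$ and $K$. Write $h=h_c(\bar U)$. Then $g_c(\bar U)-\nabla g_c(\bar U)^\top(\bar U-U)=2\bar U^\top(\cdots)$, and expanding the quadratic for the cell $h$ gives exactly $\norm{U-c_h}^2-\norm{U-\bar U}^2$. The first term therefore equals
\[
\E_0\bigl[\norm{U-c_{h_c(\bar U)}}^2-g_c(U)\bigr]-\E_0\norm{\bar U-U}^2 .
\]
The last piece is at most $Cr_\mu^2$. The first piece is nonnegative, and it is nonzero only when $U$ and $\bar U$ fall in different Voronoi cells. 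On that event, $U$ lies within $\norm{U-\bar U}$ of the hyperplane bisecting $c_h$ and $c_{h_c(U)}$, and the excess loss is at most $C\,\norm{c_h-c_{h'}}\,\norm{U-\bar U}$ (of order $\dist\cdot$gap).

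To finish, I split on the event $\{\norm{U-\bar U}\le t\}$. On that event, the margin condition (Assumption~\ref{ass:margin}, which is uniform over hyperplanes) bounds the probability of lying within $t$ of a boundary by $Ct^{\am}$. This gives a contribution of order $t\cdot t^{\am}$, or more carefully $t^{1+\am}$. On the complement, Markov's inequality gives a contribution of order $\E[\norm{U-\bar U}\ind{\norm{U-\bar U}>t}]\le r_\mu^2/t$. Balancing $t^{1+\am}$ against $r_\mu^2/t$ gives $t=r_\mu^{2/(2+\am)}$ and the rate $r_\mu^{2(1+\am)/(2+\am)}$. The hyperplanes form a fixed family with finitely many centers in the compact set $\calC$, so the constant is uniform over $c\in\calC^K$ and $K\le\overline K$. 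Taking the supremum over the three blocks yields \eqref{eq:bias} with the remainder \eqref{eq:R2}.
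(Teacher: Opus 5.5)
Your proposal is correct and follows the paper's proof essentially step for step. In (i) you use bounded linear submodels with $\partial_t\mu_{a,t}=\E_0[R_a(O;\eta_0)\,s\mid X]$, plus the Lipschitz chain rule and dominated convergence for $g_c$ at $P_U$-null boundaries. In (ii) you use the same split into a first-order remainder plus the $(\bar\pi_a-\pi_a)/\bar\pi_a$ product term, control the quantization remainder through the bisector-crossing geometry, and truncate at $t=r_\mu^{2/(2+\am)}$. Your exact form, a nonnegative crossing excess minus $\E_0\norm{\bar U-U}^2$, is the algebra behind Supplementary Proposition~\ref{prop:levelshift}, and your $\dist(U,B)\le\norm{U-\bar U}$ is slightly sharper than the paper's $2\norm{\Delta}$.

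One detail to add: Assumption~\ref{ass:margin} only covers $t\le t_0$. When the optimizing $t$ exceeds $t_0$, use either the trivial bound on the bias or the enlarged constant $C_M\vee t_0^{-\am}$, as the paper does.
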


The boundary-null condition above is implied by Assumption~\ref{ass:margin}, and the proof is in Supplementary Section~\ref{app:eif}. We use \eqref{eq:bias} in two ways. First, for the smooth and structured classes the bias has the product and squares form $r_\mu^2+r_\mu r_\pi$, rate robustness rather than exact symmetric double robustness. When $\bar\bmu=\bmu$ the centering is exactly unbiased under any bounded $\bar\pi$, whereas no propensity estimator can repair a regression estimator that converges to the wrong limit, since the feature law is itself a functional of $\bmu$. The quantization term differs again, depending on $r_\mu$ alone, and Supplementary Remark~\ref{rem:ratedr} expands on this robustness structure. Second, the bound is uniform over the loss class, so a single nuisance fit controls the bias of every risk value, codebook objective, and effect score simultaneously. The last term of \eqref{eq:R2} and the margin exponent are absent if attention is restricted to $\calF_{\mathrm{sm}}\cup\calF_{\mathrm{str}}$.

\begin{remark}[A noise floor for the path level]\label{rem:noisefloor}
The bias bound \eqref{eq:R2} also has a constructive reading at the level of the path. When the design propensities are known and used, so that $\hat\pi=\pi$ as in a randomized or stratified trial, the corrected quantization score for every codebook $c$ equals the population loss $g_c\{U(X)\}$ minus the squared feature-estimation error $\|\hU(X)-U(X)\|^2$, up to margin-controlled boundary terms, and Supplementary Proposition~\ref{prop:levelshift} states this level-shift identity precisely with all its qualifications. The squared error is a common downward level shift of the whole path $K\mapsto W(K)$, so path contrasts $W(K)-W(K')$ and the merge scales are insensitive to it while the level $W(1)$ and the normalization $\rho(K)=1-W(K)/W(1)$ absorb it in full. Two consequences follow. First, a corrected $\widehat W(1)$ at or below zero signals that the feature-estimation error $\E\|\hU-U\|^2$ is at least comparable to the true heterogeneity $W(1)$, so the nuisance-rate condition of Assumption~\ref{ass:nuisance} fails in the sample at hand rather than being evidence that $W(1)=0$, complementing the caveat on $W(1)>0$ in Section~\ref{sec:resolution}. Second, the level shift is estimable. Refitting the outcome regressions on arm-stratified halves $S_A,S_B$ with the same learner stack to obtain $\hat\bmu_A,\hat\bmu_B$, the split-difference diagnostic $\widehat\Delta$ is the sample average of $\tfrac12\|H\{\hat\bmu_A(X_i)-\hat\bmu_B(X_i)\}\|^2$, and under a mild variance-scaling assumption on the learner $[\widehat\Delta/2,\widehat\Delta]$ brackets the variance part of the full-sample floor (Supplementary Section~\ref{sec:noisefloor_supp}). The quantity $\widehat\Delta$ is a diagnostic, with no coverage claim attached to $\widehat W(1)+\widehat\Delta$. We recommend reporting $\widehat\Delta$ alongside $\widehat W(1)$ and reading the two through a single two-tier gate, applied identically in every analysis below.

The first tier is detection. A simultaneous band for $W(1)$ that excludes zero supports $W(1)>0$, and because the level shift is a downward bias this reading is one-sided conservative, valid even when the floor is large. A band that does not clear zero instead signals that the data cannot support resolution analysis at the attempted feature dimension and sample size, in the sense of Assumption~\ref{ass:nuisance}, and the analysis stops there.

The second tier is the reliability of the $\rho$ scale. By the multiplicative form of the level-shift identity (Supplementary Corollary~\ref{cor:levelshift_rho}), a common shift $\delta$ inflates the normalized curve, $\rho_\delta(K)=\rho(K)\,W(1)/\{W(1)-\delta\}$, so the $\rho$ path and the set-valued report $\widehat C(\gamma)$ tilt toward coarser counts and nothing cancels. Reliability therefore requires the floor to be small relative to the level, measured by the reliability ratio $\hat r=[\widehat\Delta/2,\widehat\Delta]/\widehat W(1)$. When $\hat r$ is small the $\rho$-scale reports carry their nominal reading. When it is not, every $\rho$-scale statement must be accompanied by the shift-sensitivity reading obtained by re-inverting the profile at the bracket endpoints, and any set-valued report at a resolution inside the shift-sensitivity range of a knot is reported as sensitivity-qualified rather than as a nominal confidence statement.
\end{remark}

\subsection{Construction}\label{sec:construction}

Split $[n]$ into $B$ folds $I_1,\dots,I_B$ of comparable size ($B$ fixed, e.g.\ $B=5$).  For each fold $b$, estimate $\hateta^{(-b)}=(\hat\bmu^{(-b)},\hat\pi^{(-b)})$ from the data outside $I_b$ by any supervised learners, truncated so that the bounds of Assumption~\ref{ass:bound} hold.  Write $b(i)$ for the fold containing observation $i$, define $\hU_i=H\hat\bmu^{(-b(i))}(X_i)$, $\hR_i=R\{O_i;\hateta^{(-b(i))}\}$,
and form the cross-fitted corrected evaluations $\hphi_{f,i}=\phi_f\bigl(O_i;\hateta^{(-b(i))}\bigr)$, for $i\in[n],\ f\in\calF$,
with $\phi_f$ as in \eqref{eq:score}.  The cross-fitted one-step process is $\hPsi(f)=\frac1n\sum_{i=1}^n \hphi_{f,i}$,
the uniform analogue of the standard double machine learning estimator \citep{chernozhukov2018}.

\begin{definition}[Feature-law posterior]\label{def:flposterior}
Let 
$w^{(s)}=(w^{(s)}_1,\dots,w^{(s)}_n)\sim n\cdot\mathrm{Dirichlet}(1,\dots,1),$ independently of the data, for $s=1,\dots,S$.  The feature-law posterior is the conditional law, given the data, of the random process
\begin{equation}\label{eq:posteriorprocess}
\Psi^{(s)}(f)
=
\frac1n\sum_{i=1}^n w^{(s)}_i\,\hphi_{f,i},
\qquad f\in\calF ,
\end{equation}
viewed as an element of $\ell^\infty(\calF)$.  For any functional $T$ defined on the relevant domain in $\ell^\infty(\calF)$, its posterior is the conditional law of $T(\Psi^{(s)})$.
\end{definition}

This conditional law is a posterior for the corrected moment process. It need not correspond to a probability measure on the feature space, since a weighted corrected functional can take negative values in finite samples, so the name feature-law posterior is shorthand for the Bayesian-bootstrap posterior of the corrected feature-law moments.

The full procedure is summarized in Supplementary Algorithm~\ref{alg:main}. In outline, one cross-fits the nuisances, forms the corrected evaluations $\hphi_{f,i}$ and the point process $\hPsi(f)=n^{-1}\sum_i\hphi_{f,i}$, and applies each estimand functional to $\hPsi$ for the point estimates. One then reweights the corrected array by $S$ independent Dirichlet draws and within every draw recomputes each functional of interest, from the quantization values $W^{(s)}(K)$ and resolution curve $\rho^{(s)}(K)$ through the profile $K^{\star(s)}$ and the selected fixed-resolution and subgroup summaries. The draws then deliver the scale estimates, the simultaneous quantile, and the confidence band, inverted to the set-valued profile report $\widehat C(\gamma)$. Table~\ref{tab:cast} collects the estimands so sampled.

For squared-distance quantization, the corrected criterion has a useful pseudo-feature form. With $\tU_i=\hU_i+H\hR_i$, for any codebook $c\in\calC^K$, $\hphi_{g_c,i} = \bigl\|\tU_i-c_{h_c(\hU_i)}\bigr\|^2 - \bigl\|H\hR_i\bigr\|^2$, where $h_c(\hU_i)$ is the Voronoi cell of the estimated feature $\hU_i$, not of the pseudo-feature $\tU_i$. Centers may therefore be updated by averaging pseudo-features, but assignments must be computed from $\hU_i$, and plain $k$-means on $\{\tU_i\}$ optimizes a different, noise-convolved objective. The formal weighted identity and implementation details are in Supplementary Section~\ref{sec:pseudo_supp}. Because \eqref{eq:posteriorprocess} is linear in the corrected score array, any component common to two corrected risks cancels in every draw of a path contrast, the main reason to base inference on a weighted corrected moment process rather than an exponentiated order posterior (Supplementary Remark~\ref{rem:whynot} and Supplementary Section~\ref{sec:app:energy_amplification}).

The word ``posterior'' is used in a calibrated sense. The construction is a Bayesian-bootstrap posterior for the corrected moment process, not a generative Bayesian posterior for outcomes, causal features, partitions, or model order, and randomness enters only through exchangeable data weights, so no Markov chain over partitions or mixture parameters is run and no model order is sampled. Every reported quantity is recomputed as a deterministic functional of each weighted process draw. This per-draw recomputation matters for composite summaries. Recomputing $\beta^{(s)}(K)$ inside each draw, for instance, propagates uncertainty in the subgroup definition into the posterior of $\psi_{h,a}(K)$ rather than treating the partition as fixed (Supplementary Section~\ref{sec:bayesian}).
Two terms are used with care below. A posterior equal-tailed credible interval refers to the quantiles of the weighted conditional law of a smooth scalar summary, which by Theorem~\ref{thm:delta} is an asymptotically valid frequentist confidence interval at the same level. The set-valued profile report $\widehat C(\gamma)$, by contrast, is proved as a frequentist confidence correspondence and called a confidence set throughout.

\section{Asymptotic theory}\label{sec:theory}

The corrected moment process underlying the feature-law posterior is the central object in the theoretical argument.  Once that process is shown to have the efficient Gaussian limit conditionally and unconditionally, the resolution path, resolution-profile report, and subgroup effects follow by finite-dimensional composition and band inversion.  We develop the desired results in the following order. The efficient process limit comes first, then the posterior functional delta method that transfers it to every derived report, then path inference, the impossibility result, the set-valued profile with its locally uniform validity guarantee, and subgroup effects.  Guarantees for atomic laws, together with consistency results and a caveat for mixed atomic-continuous laws, are collected in Supplementary Section~\ref{sec:app:additional_theory}, and inference for the optional penalized profile in Supplementary Section~\ref{sec:penalized_supp_inference}.

\subsection{Regularity assumptions}\label{sec:assumptions}

\begin{assumption}[Margin]\label{ass:margin}
There exist $\am\in(0,1]$, $C_M<\infty$, and $t_0>0$ such that for every hyperplane $B\subset\R^q$ and every $t\in(0,t_0]$,
$P_0\bigl(\dist\{U(X),B\}\le t\bigr)\;\le\;C_M\,t^{\am}$.
\end{assumption}

\begin{assumption}[Quantization]\label{ass:quant}
For each $K\in[\overline K]$, we assume \textup{(i)} the optimal codebook is unique as a set, $\calC^\star(K)=\{c^\star(K)\}$ with $K$ distinct centers in the interior of $\calC$, and \textup{(ii)} $W(K)<W(K-1)$ for $K\ge2$, equivalently, $P_U$ is not supported on fewer than $\overline K$ points. 
\end{assumption}

\begin{assumption}[Projection]\label{ass:proj}
For each $K$ under consideration, $\beta^\star(K)$ is the unique minimizer of $\beta\mapsto\Psi_0(\ell_\beta)$ over $\calB_K$ up to label permutation, lies in the interior. The Hessian $V_\beta=\nabla^2_\beta\Psi_0(\ell_\beta)\big|_{\beta^\star(K)}$ is nonsingular, and moreover $\Psi_0(f^D_{h;\beta^\star(K)})>0$ for each $h$.
\end{assumption}

\begin{assumption}[Nuisance rates]
\label{ass:nuisance}
With $r_\mu,r_\pi$ the $L_2(P_0)$ rates of $\hateta^{(-b)}$ as in Theorem~\ref{thm:eif}, we assume \textup{(i)} $\sqrt n\,Rem_2(\hateta^{(-b)})=\opro(1)$ for each $b$, with $Rem_2$ as in \eqref{eq:R2}, and \textup{(ii)} $\delta_n=\opro(1)$, where $\delta_n=\max_b\,\sup_{f\in\calF}\,\bigl\Vert\phi_f(\cdot;\hateta^{(-b)})-\phi_f(\cdot;\eta_0)\bigr\Vert_{P_0,2}$.
\end{assumption}

Assumption~\ref{ass:margin} is the process-level margin condition needed for quantization losses uniformly over codebooks.  It holds with $\am=1$ when $P_U$ has a bounded Lebesgue density on $\calC$ and fails for atoms. Supplementary Proposition~\ref{prop:atomic} characterizes which quantization results continue to hold in the purely atomic setting and how the corresponding conclusions must be modified. Assumption~\ref{ass:quant}\textup{(i)} rules out exactly symmetric optimal codebooks. The profile itself is value-based and remains the recommended target when codebook labels or minimizers are not unique.  
Assumption \ref{ass:proj} is only for soft summaries. The nuisance-rate condition in Assumption~\ref{ass:nuisance}\textup{(i)} is the product-rate requirement from Theorem~\ref{thm:eif}. For quantization, the margin term strengthens the regression-rate requirement to $r_\mu=\opro(n^{-(2+\am)/(4(1+\am))})$, which is $\opro(n^{-3/8})$ when $\am=1$.
The increment condition \textup{(ii)} imposes no additional rate restriction beyond nuisance consistency. Supplementary Lemma~\ref{lem:deltaqt} shows that on the quantization class it holds automatically at the rate $r_\mu^{\am/(2+\am)}+r_\mu+r_\pi$, so the binding requirement throughout is the bias condition \textup{(i)}.
For theorem statements involving only a subclass $\calF_0\subseteq\calF$, the nuisance-rate
condition is understood with $Rem_2$ and the supremum defining $\delta_n$ restricted
to $\calF_0$.  Thus, for
$\calF_{\mathrm{eff}}:=\calF_{\mathrm{sm}}\cup\calF_{\mathrm{str}}$, used in the
projection and subgroup-effect results, $R_{2,\mathrm{eff}}=r_\mu^2+r_\mu r_\pi$.
The quantization boundary term and its margin-rate condition are required only
when scores from $\calF_{\mathrm{qt}}$ are included.

\subsection{The efficient feature-law posterior}\label{sec:bvm}

The primary result below is stated in the mode of conditional weak convergence used throughout for the reweighted process.

\begin{definition}[Conditional weak convergence in probability]\label{def:condweak}
Let $\{Z_n^{(s)}\}$ be random elements of a metric space $(\mathbb D,d)$ depending on the data $O_{1:n}$ and on weights $w^{(s)}$ independent of the data, and let $Z$ be a tight Borel element of $\mathbb D$. We write $Z_n^{(s)}\dtow Z$ if
\begin{equation*}
\sup_{\varphi\in\BL(\mathbb D)}\Bigl|\E_w\,\varphi\bigl(Z_n^{(s)}\bigr)-\E\,\varphi(Z)\Bigr|\;\longrightarrow\;0
\quad\text{in outer probability},
\end{equation*}
where $\BL(\mathbb D)$ is the set of $1$-Lipschitz functions bounded by $1$ and $\E_w$ integrates over the weights with the data fixed \citep[Section~3.6]{vdvwellner1996} and \citep[Section~2.2.3]{kosorok2008}.
\end{definition}

\begin{theorem}[Efficient feature-law posterior]\label{thm:bvm}
Under Assumptions~\ref{ass:ident}--\ref{ass:margin} and \ref{ass:nuisance}, the cross-fitted one-step process and the feature-law posterior satisfy the following.
\begin{enumerate}[label=(\roman*),leftmargin=2em]
\item \textup{(Uniform asymptotic linearity.)}
\begin{equation}\label{eq:linearity}
\sup_{f\in\calF}\Bigl|\hPsi(f)-P_n\phi_f(\cdot;\eta_0)\Bigr|\;=\;\opro(n^{-1/2}),
\end{equation}
and consequently $\sqrt n\,(\hPsi-\Psi_0)\dto\G_0$ in $\ell^\infty(\calF)$, where $\G_0$ is the tight mean-zero Gaussian process with covariance $\Cov\{\phi_f(O;\eta_0),\phi_{f'}(O;\eta_0)\}$.
\item \textup{(Uniform conditional Bernstein--von Mises.)}
\begin{equation}\label{eq:bvm}
\sqrt n\,\bigl(\Psi^{(s)}-\hPsi\bigr)\ \dtow\ \G_0
\qquad\text{in } \ell^\infty(\calF).
\end{equation}
\end{enumerate}
Thus the posterior centered at $\hPsi$ consistently estimates the efficient sampling uncertainty of the entire corrected moment process, uniformly over the loss class.  The convergence in \eqref{eq:bvm} is conditional weak convergence in probability, as defined in Definition~\ref{def:condweak}.
\end{theorem}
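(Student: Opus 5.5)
For part (i) I will use the standard cross-fitting decomposition, fold by fold. For fold $b$ with empirical measure $P_{n,b}$ on $I_b$ and $n_b=|I_b|$, write
\begin{equation*}
P_{n,b}\phi_f(\cdot;\hateta^{(-b)})-P_{n,b}\phi_f(\cdot;\eta_0)
=(P_{n,b}-P_0)\bigl\{\phi_f(\cdot;\hateta^{(-b)})-\phi_f(\cdot;\eta_0)\bigr\}
+\bigl[P_0\phi_f(\cdot;\hateta^{(-b)})-\Psi_f(P_0)\bigr],
\end{equation*}
using $P_0\phi_f(\cdot;\eta_0)=\Psi_f(P_0)$. The second (bias) term is bounded uniformly in $f$ by $C\,Rem_2(\hateta^{(-b)})$ via Theorem~\ref{thm:eif}(ii), applied conditionally on the training folds (the truncation in Assumption~\ref{ass:bound} makes $\hateta^{(-b)}$ an admissible candidate). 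It is then $\opro(n^{-1/2})$ by Assumption~\ref{ass:nuisance}(i).

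The first (empirical-process) term is the main obstacle, because the supremum over $\calF$ includes the nonsmooth class $\calF_{\mathrm{qt}}$. Conditional on the training data, $\hateta^{(-b)}$ is fixed. So I will bound
\begin{equation*}
\E\sup_{f}\bigl|\G_{n_b}\{\phi_f(\cdot;\hateta^{(-b)})-\phi_f(\cdot;\eta_0)\}\bigr|
\end{equation*}
by a maximal inequality with uniform-entropy integral $J(\delta_n,\mathcal D_b)$, where $\mathcal D_b$ is the difference class and $\delta_n$ is its $L_2$ envelope radius from Assumption~\ref{ass:nuisance}(ii). This requires showing that the fixed-nuisance classes $\{\phi_f(\cdot;\eta):f\in\calF\}$ are uniformly bounded VC-type (or at least uniform Donsker) with envelope independent of $\eta$ in the truncated set:
\begin{itemize}[leftmargin=2em]
\item For $\calF_{\mathrm{sm}}$ and $\calF_{\mathrm{str}}$, the Lipschitz parametrization over compact finite-dimensional index sets gives polynomial bracketing numbers.
\item For $\calF_{\mathrm{qt}}$, the score is $g_c(U_\eta)+2\{U_\eta-c_{h_c(U_\eta)}\}^\top HR$. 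Voronoi cell indicators are intersections of at most $\overline K-1$ halfspaces, hence VC. Products and sums with the bounded functions $U_\eta$ and $R$ preserve the VC-type property, and minima of finitely many quadratic functions are VC-subgraph.
\end{itemize}
The boundedness of $R$ follows from $|Y|\le B_Y$ and $\hat\pi\ge\eps_\pi/2$. With $J(\delta_n)\to0$ as $\delta_n\to0$, the term is $\opro(1)$ after scaling by $\sqrt n$. Summing over the finitely many folds gives \eqref{eq:linearity}. The process $\{\phi_f(\cdot;\eta_0)\}$ is then $P_0$-Donsker by the same entropy bound, so $\sqrt n(\hPsi-\Psi_0)=\G_n\phi_\cdot(\cdot;\eta_0)+\opro(1)\dto\G_0$.

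For part (ii), write $w_i=\xi_i/\bar\xi$ with $\xi_i$ i.i.d.\ standard exponential. Then
\begin{equation*}
\sqrt n(\Psi^{(s)}-\hPsi)(f)=\bar\xi^{-1}n^{-1/2}\sum_i(\xi_i-1)(\hphi_{f,i}-\hPsi(f)).
\end{equation*}
Since $\bar\xi\to1$, it suffices to treat the multiplier process $n^{-1/2}\sum_i(\xi_i-1)\hphi_{f,i}$, with the centering at $\hPsi$ absorbing the mean. I will replace $\hphi_{f,i}$ by $\phi_f(O_i;\eta_0)$ and then apply the conditional multiplier central limit theorem for Donsker classes \citep[Thm.~2.9.6]{vdvwellner1996}; $\xi-1$ has $\|\cdot\|_{2,1}<\infty$. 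That theorem gives $n^{-1/2}\sum(\xi_i-1)\phi_f(O_i;\eta_0)\dtow\G_0$.

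The replacement error is $n^{-1/2}\sum_i(\xi_i-1)\{\hphi_{f,i}-\phi_f(O_i;\eta_0)\}$. Conditionally on the data it has mean zero. Its supremum is controlled by the unconditional multiplier inequality \citep[Lemma~2.9.1]{vdvwellner1996} applied fold-wise, with the same difference classes and radius $\delta_n$ as in part (i). This again yields $\opro(1)$ in outer probability, so the bias term plays no role here. Finally, a bounded-Lipschitz argument transfers both negligible perturbations (the factor $\bar\xi^{-1}$ and the replacement error) to \eqref{eq:bvm}.
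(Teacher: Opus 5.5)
\textbf{Comparison with the paper's proof.} Your part (i) is the paper's argument. It uses the same fold-wise split into a bias term controlled by Theorem~\ref{thm:eif}(ii) and a conditional empirical-process term controlled by a localized uniform-entropy maximal inequality, with the same VC treatment of Voronoi cells. One detail: the paper evaluates that inequality at $\delta=\delta_n\vee n^{-1/2}$, so that its second term, of order $\log(1/\delta)/\sqrt n$, also vanishes.

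Part (ii) keeps the paper's skeleton, an oracle multiplier process plus an estimated-score increment, but uses different tools at both steps.
\begin{itemize}
\item \emph{Oracle term.} The paper applies the exchangeable-bootstrap CLT directly to the Dirichlet weights. This needs Lemma~\ref{lem:pwcheck}, but exact centering comes for free from $\sum_i(w_i-1)=0$. You use the normalized-exponential representation and the i.i.d.\ conditional multiplier CLT, the classical route to the Bayesian bootstrap. This avoids the exchangeable-weight conditions, but you must handle $\bar\xi^{-1}$ and the centering yourself.
\item \emph{Increment term.} The paper works conditionally on the data. It applies the mixed-tail chaining bound of Lemma~\ref{lem:multmax} at the \emph{empirical} radius $\delta_n'$ and then converts $\delta_n'$ to $\delta_n$ via Lemma~\ref{lem:incrementclass}(d). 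You also integrate over the fold data given the training folds. The classical multiplier inequality plus the local maximal inequality then act directly at the population radius $\delta_n$, and conditional negligibility follows by Fubini and Markov. Your route needs only $\norm{\xi_1}_{2,1}<\infty$, so it covers other i.i.d.\ normalized weights, which the paper leaves open.
\item \emph{Choice of $n_0$.} In Lemma~2.9.1 of van der Vaart and Wellner, let $n_0\to\infty$ with $n_0\log n/\sqrt n\to0$. On $\calF_{\mathrm{qt}}$ the quantity $\E^*\sup_f|\Delta^{(b)}(f)(O)|$ does not vanish, because a codebook can always place a bisector between $U$ and $\hU$. So $\delta_n$ helps only in Rademacher averages over at least $n_0$ terms.
\end{itemize}

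\textbf{A step that is wrong as written.} You reduce to the uncentered process $n^{-1/2}\sum_i(\xi_i-1)\hphi_{f,i}$ and claim $n^{-1/2}\sum_i(\xi_i-1)\phi_f(O_i;\eta_0)\dtow\G_0$. Both fail. Since $\sum_i(\xi_i-1)\neq0$, each differs from its centered version by $\sqrt n(\bar\xi-1)$ times $\hPsi(f)$, respectively $\Psi_0(f)$, and this is not negligible. The uncentered oracle process converges to $\G_0(f)+\Psi_0(f)Z$ with $Z\sim N(0,1)$, whose covariance is $\E_0[\phi_f\phi_{f'}]$ rather than that of $\G_0$.

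The fix is to keep the centering from your exact identity. Apply the multiplier CLT to $\phi_f(O_i;\eta_0)-P_n\phi_f(\cdot;\eta_0)$. What remains is your replacement error $n^{-1/2}\sum_i(\xi_i-1)\Delta_i(f)$ plus the extra term $-\sqrt n(\bar\xi-1)P_n\Delta_\cdot(f)$. By part (i), the extra term is $\Op(1)\cdot\opro(n^{-1/2})$ uniformly in $f$; this is exactly the second term of \eqref{eq:increpr}. So you do need $\sup_f|P_n\Delta_\cdot(f)|=\opro(1)$ here, although not the bias rate.
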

The proof is in Supplementary Sections~\ref{app:onestep} and~\ref{app:bvm}. Part~(i), the uniform one-step expansion, follows because cross-fitting makes each evaluation fold independent of $\hateta^{(-b)}$, Theorem~\ref{thm:eif}(ii) controls uniform second-order bias, and the empirical-process increment vanishes under the entropy condition in Assumption~\ref{ass:class}, leaving $\G_0$. Part~(ii) decomposes $\sqrt n(\Psi^{(s)}-\hPsi)$ into an oracle exchangeable-multiplier process converging to $\G_0$ by the exchangeable-bootstrap central limit theorem and a weighted increment vanishing by Supplementary Lemma~\ref{lem:multmax} and Assumption~\ref{ass:nuisance}(ii). Because the Dirichlet weights satisfy $\sum_i(w_i^{(s)}-1)=0$, the posterior is centered at $\hPsi$. The argument applies verbatim to any fixed VC-type subclass. For $\calF_{\mathrm{eff}}$, it requires only the nuisance rate restricted to $\calF_{\mathrm{eff}}$, without a quantization-margin rate.
The posterior neither adds a first-order nuisance-uncertainty term, because the corrected score removes it, nor ignores first-order uncertainty, because the efficient influence process remains in the limit. Its uniformity over $\calF$ licenses simultaneous inference across paths, profiles, memberships, and effect contrasts. Efficiency is coordinatewise: every finite linear combination of process coordinates attains the semiparametric efficiency bound with the efficient influence function of Theorem~\ref{thm:eif}(i). Thus, the efficient Gaussian limit refers to this finite-dimensional efficiency, together with tightness of the limit process, without claiming process-level optimality (Supplementary Section~\ref{sec:app:bvm_details}).

\subsubsection{Posterior functional delta method}\label{sec:delta}

Theorem~\ref{thm:delta} below transfers the process-level result of Theorem~\ref{thm:bvm} to the smooth and argmin-type summaries, turning Hadamard differentiability of a target functional into matched sampling and posterior limits, so that the quantization values $W(K)$, the causal heterogeneity $R^2$ curve $\rho(K)$, the band it inverts, the mixture projection, and the subgroup effects all inherit valid inference by composition without a separate limit argument for each. The path corollary of Section~\ref{sec:path_theory} and the subgroup-effect limits of Section~\ref{sec:effectstheory} are instances of it, proved in Supplementary Section~\ref{sec:proof:thm:delta}. The threshold impossibility result of Section~\ref{sec:profiletheory} is of a different kind, a Le Cam two-point argument.

\begin{theorem}[Posterior delta method]\label{thm:delta}
Let $\mathbb D_0\subset\ell^\infty(\calF)$ contain $\Psi_0$ and let $T:\mathbb D_0\to\mathbb E$ be Hadamard differentiable at $\Psi_0$ tangentially to the set $C_\varsigma(\calF)$ of functions uniformly continuous with respect to the covariance semimetric $\varsigma$ of $\G_0$, with derivative $T'_{\Psi_0}$ admitting a continuous extension to $\ell^\infty(\calF)$. Under the conditions of Theorem~\ref{thm:bvm},
\begin{equation*}
\sqrt n\,\bigl\{T(\Psi^{(s)})-T(\hPsi)\bigr\}\ \dtow\ T'_{\Psi_0}(\G_0),
\qquad
\sqrt n\,\bigl\{T(\hPsi)-T(\Psi_0)\bigr\}\ \dto\ T'_{\Psi_0}(\G_0).
\end{equation*}
Consequently, if $T'_{\Psi_0}(\G_0)$ has a continuous distribution, posterior equal-tailed credible intervals for scalar $T$ built from the draws $\{T(\Psi^{(s)})\}$ are asymptotically valid frequentist confidence sets at the same level, and sup-$t$ bands for vector- and path-valued $T$ are likewise valid provided the distribution function of the associated maximum statistic is continuous and strictly increasing at the target quantile.
\end{theorem}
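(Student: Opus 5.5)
The plan is to combine Theorem~\ref{thm:bvm} with the functional delta method for the bootstrap \citep[Theorems~3.9.4 and~3.9.11]{vdvwellner1996}, adapted to conditional weak convergence in probability.

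\textbf{Unconditional statement.} Theorem~\ref{thm:bvm}(i) gives $\sqrt n(\hPsi-\Psi_0)\dto\G_0$ in $\ell^\infty(\calF)$. Since $\G_0$ is tight and Gaussian, its sample paths lie almost surely in $C_\varsigma(\calF)$, with $\calF$ totally bounded for $\varsigma$. Hadamard differentiability tangentially to $C_\varsigma(\calF)$ then gives $\sqrt n\{T(\hPsi)-T(\Psi_0)\}\dto T'_{\Psi_0}(\G_0)$ by the standard delta method. Along the way I would check that $\hPsi\in\mathbb D_0$, or at least that this holds with outer probability tending to one; for the functionals of interest (infima, argmins) one may take $\mathbb D_0=\ell^\infty(\calF)$.

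\textbf{Posterior statement.} Decompose
\[
\sqrt n\{T(\Psi^{(s)})-T(\hPsi)\}=\sqrt n\{T(\Psi^{(s)})-T(\Psi_0)\}-\sqrt n\{T(\hPsi)-T(\Psi_0)\}.
\]
The key step is the joint, unconditional convergence
\[
\bigl(\sqrt n(\hPsi-\Psi_0),\ \sqrt n(\Psi^{(s)}-\hPsi)\bigr)\dto(\G_0,\G_0'),
\]
where $\G_0'$ is an independent copy of $\G_0$, under the product law of data and weights. I would obtain this from the linear representation in the proof of Theorem~\ref{thm:bvm}. The first coordinate is $\G_n\phi_\cdot(\cdot;\eta_0)+\opro(1)$. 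The second is the exchangeable multiplier process $n^{-1/2}\sum_i(w_i-1)\{\phi_f(O_i;\eta_0)-P_n\phi_f\}$ plus a remainder, which is $o_{\Pp}(1)$ unconditionally by Lemma~\ref{lem:multmax}. Joint convergence then follows from \citep[Theorem~3.6.13 and Lemma~3.9.?]{vdvwellner1996} for exchangeable weights.

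Given joint convergence, apply the ordinary delta method to both $\Psi^{(s)}$ and $\hPsi$ around $\Psi_0$. Linearity and continuity of the extended derivative then yield
\[
\sqrt n\{T(\Psi^{(s)})-T(\hPsi)\}=T'_{\Psi_0}\bigl(\sqrt n(\Psi^{(s)}-\hPsi)\bigr)+o_{\Pp}(1)
\]
unconditionally. Here the continuous extension of $T'_{\Psi_0}$ to $\ell^\infty(\calF)$ is needed, because $\sqrt n(\Psi^{(s)}-\hPsi)$ need not lie in $C_\varsigma$ at finite $n$.

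\textbf{Main obstacle: unconditional to conditional.} The hardest part is converting the unconditional $o_{\Pp}(1)$ remainder into conditional bounded-Lipschitz convergence in outer probability, since outer expectations and measurability of the map $w\mapsto T(\Psi^{(s)})$ are delicate. I would follow the argument of \citep[Theorem~3.9.11]{vdvwellner1996}. Bound
\[
\sup_{\varphi\in\BL}\bigl|\E_w\varphi\{\sqrt n(T(\Psi^{(s)})-T(\hPsi))\}-\E_w\varphi\{T'_{\Psi_0}(\sqrt n(\Psi^{(s)}-\hPsi))\}\bigr|
\]
by $\E_w\min(1,\|\text{remainder}\|)$. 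This quantity tends to zero in outer probability by Markov's inequality applied to the unconditional remainder. Next, $T'_{\Psi_0}$ is continuous and linear, so bounded-Lipschitz test functions composed with it are Lipschitz up to the constant $\|T'_{\Psi_0}\|$. Hence \eqref{eq:bvm} transfers to $T'_{\Psi_0}(\sqrt n(\Psi^{(s)}-\hPsi))\dtow T'_{\Psi_0}(\G_0)$ after rescaling the test class.

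\textbf{Coverage consequences.} For a scalar $T$ with continuous limit law, conditional weak convergence gives uniform convergence of the conditional distribution function (Pólya's theorem), so the posterior quantiles converge in probability to the quantiles of $T'_{\Psi_0}(\G_0)$. The equal-tailed interval $[T(\hPsi)-n^{-1/2}\hat q_{1-\alpha/2},\,T(\hPsi)-n^{-1/2}\hat q_{\alpha/2}]$ then has asymptotic coverage $1-\alpha$ by the unconditional limit and Slutsky's lemma, using the symmetry of the centered Gaussian limit to equate the percentile and basic intervals. For sup-$t$ bands, apply the same argument to the continuous map $z\mapsto\sup_j|z_j|/\hat\sigma_j$ with consistent scale estimates. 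Continuity and strict monotonicity of the limit distribution function at the target quantile give convergence of the estimated quantile and hence asymptotically exact simultaneous coverage.
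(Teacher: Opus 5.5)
Your proposal is correct and follows the paper's route. The paper cites the bootstrap delta method \cite[Theorem~3.9.11]{vdvwellner1996}, fed by the two convergences of Theorem~\ref{thm:bvm}, with tangentiality supplied by the $\varsigma$-continuous paths of $\G_0$ and with the continuous linear extension of the derivative, and then cites \cite[Lemma~23.3]{vandervaart1998} for interval validity; you have simply unpacked that same proof. One simplification: you do not need joint convergence to an independent copy $\G_0'$, for which your citation is still a placeholder. Asymptotic tightness of $\sqrt n(\Psi^{(s)}-\Psi_0)$, with every limit point concentrated on the separable subspace $C_\varsigma(\calF)$, already makes the unconditional delta-method remainder vanish by a subsequence argument.
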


The proof is in Supplementary Section~\ref{sec:proof:thm:delta}. The quantization values require an envelope result of Danskin type for the infimum defining $W(K)$. The map $\iota_K(\nu)=\inf_{c\in\calC^K}\nu(g_c)$ on $\ell^\infty(\calF_{\mathrm{qt}})$ is concave and Hadamard directionally differentiable at $\Psi_0$ with derivative $\zeta\mapsto\inf_{c\in\calC^\star(K)}\zeta(g_c)$, and under Assumption~\ref{ass:quant}\textup{(i)} it is fully Hadamard differentiable with the linear derivative $\zeta\mapsto\zeta(g_{c^\star(K)})$. Supplementary Lemma~\ref{lem:envelope} gives the formal statement and proof, with classical antecedents in Shapiro \cite{shapiro1991} and D\"umbgen \cite{dumbgen1993}.
Without uniqueness the envelope map is only directionally differentiable and the posterior is generally inconsistent for the limit, a failure Supplementary Remark~\ref{rem:nonunique} records with its known remedies.

\subsection{Inference for the resolution path}\label{sec:path_theory}
The following corollary applies the posterior functional delta method of Theorem~\ref{thm:delta} to the quantization values.
\begin{corollary}[Joint inference for the heterogeneity path]\label{cor:path}
Under the conditions of Theorem~\ref{thm:bvm} and Assumption~\ref{ass:quant}, jointly over $K\in[\overline K]$,
\begin{equation*}
\sqrt n\bigl\{W^{(s)}(K)-\widehat W(K)\bigr\}_{K\le\overline K}\ \dtow\ \bigl\{\G_0(g_{c^\star(K)})\bigr\}_{K\le\overline K},
\end{equation*}
and the same limit holds for $\sqrt n\{\widehat W(K)-W(K)\}_{K\le\overline K}$ unconditionally.  Moreover,
\begin{equation*}
\sqrt n\bigl\{\rho^{(s)}(K)-\hat\rho(K)\bigr\}_{K\le\overline K}\ \dtow\ \Bigl\{-\tfrac1{W(1)}\Bigl[\G_0(g_{c^\star(K)})-\{1-\rho(K)\}\,\G_0(g_{c^\star(1)})\Bigr]\Bigr\}_{K\le\overline K},
\end{equation*}
a mean-zero Gaussian vector whose $K$th coordinate we denote $\mathbb H(K)$.  If in addition the variances $\Var\{\G_0(g_{c^\star(K)})\}$, $1\le K\le\overline K$, and $\sigma_K^2=\Var\{\mathbb H(K)\}$, $2\le K\le\overline K$, are positive, then the limit marginals are continuous, and posterior equal-tailed intervals for each $W(K)$ and $\rho(K)$, and posterior sup-$t$ bands for the whole path, are asymptotically valid.
\end{corollary}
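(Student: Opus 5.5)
The corollary follows by applying Theorem~\ref{thm:delta} to two maps: the vector map $T_W(\nu)=\{\iota_K(\nu)\}_{K\le\overline K}$ with $\iota_K(\nu)=\inf_{c\in\calC^K}\nu(g_c)$, and its composition with $\nu\mapsto\{1-\iota_K(\nu)/\iota_1(\nu)\}_K$. We take $\mathbb D_0$ to be the elements of $\ell^\infty(\calF)$ whose restriction to $\calF_{\mathrm{qt}}$ gives $\iota_1(\nu)$ bounded away from zero near $\Psi_0$. This is legitimate because $\iota_1(\Psi_0)=W(1)>0$ by Assumption~\ref{ass:bound}.

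\textbf{Step 1: differentiability of $T_W$.} Each $\iota_K$ is Hadamard differentiable at $\Psi_0$, tangentially to $C_\varsigma(\calF)$, with linear derivative $\zeta\mapsto\zeta(g_{c^\star(K)})$. This is the envelope Lemma~\ref{lem:envelope} together with Assumption~\ref{ass:quant}(i). The derivative is evaluation at a fixed element of $\calF$, so it is bounded and linear on all of $\ell^\infty(\calF)$, and its continuous extension is immediate. Since $\overline K$ is finite, the coordinatewise derivatives assemble into a Hadamard derivative of the vector map.

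The main obstacle is checking the envelope lemma's hypotheses at this stage. The argmin set must be stable: for $\nu_n=\Psi_0+t_n\zeta_n$ with $\zeta_n\to\zeta\in C_\varsigma$, the near-minimizers of $c\mapsto\nu_n(g_c)$ must converge to $c^\star(K)$. I would get this from uniqueness as a set, compactness of $\calC^K$, and continuity of $c\mapsto\Psi_0(g_c)$. I would then verify that $\zeta(g_c)$ is continuous in $c$ at $c^\star(K)$ under $\varsigma$. That continuity holds because $\|\phi_{g_c}-\phi_{g_{c'}}\|_{P_0,2}\to0$ as $c'\to c$, which follows from the margin condition (Assumption~\ref{ass:margin}) controlling the Voronoi-cell changes in the gradient term. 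Label permutations and repeated centres are harmless: the tuples achieving the minimum are permutations of one set and all give the same function $g_{c^\star(K)}$.

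\textbf{Step 2: the unconditional and conditional limits for $W$.} Theorem~\ref{thm:delta} now yields
\[
\sqrt n\{W^{(s)}(K)-\widehat W(K)\}_K\ \dtow\ \{\G_0(g_{c^\star(K)})\}_K,
\]
and the same limit unconditionally for $\sqrt n\{\widehat W(K)-W(K)\}_K$. Here $W^{(s)}=T_W(\Psi^{(s)})$ and $\widehat W=T_W(\hPsi)$. We only need $\Psi^{(s)}$ and $\hPsi$ to lie in $\mathbb D_0$ with probability tending to one, which follows from the uniform consistency implied by Theorem~\ref{thm:bvm}.

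\textbf{Step 3: the $\rho$ path.} Compose with $\Phi(w_1,\dots,w_{\overline K})=(1-w_K/w_1)_K$, which is differentiable on $\{w_1>0\}$ with $\partial_{w_K}=-1/w_1$ and $\partial_{w_1}=w_K/w_1^2$. The chain rule for Hadamard differentiability gives the derivative
\[
\zeta\mapsto -\frac{1}{W(1)}\Bigl[\zeta(g_{c^\star(K)})-\frac{W(K)}{W(1)}\,\zeta(g_{c^\star(1)})\Bigr].
\]
Since $W(K)/W(1)=1-\rho(K)$, this matches the displayed $\mathbb H(K)$. Note that the $K=1$ coordinate is identically zero.

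\textbf{Step 4: validity of intervals and bands.} The limits are Gaussian linear images of $\G_0$. Positive variances make each marginal continuous, so the equal-tailed statement in Theorem~\ref{thm:delta} applies. For the sup-$t$ band, take the maximum of $|\mathbb H(K)|/\sigma_K$ over $K\ge2$ and of $|\G_0(g_{c^\star(K)})|/\mathrm{sd}$, using consistent scale estimates from the draws. This is the maximum of finitely many nondegenerate Gaussians. Its distribution function is continuous and strictly increasing on $(0,\infty)$, because the maximum of absolute values of a nondegenerate Gaussian vector has positive density on the positive half-line. This justifies the quantile condition in Theorem~\ref{thm:delta}. Replacing the standard deviations by their estimates is handled by Slutsky's lemma.
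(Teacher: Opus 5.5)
Your proposal is correct and follows essentially the same route as the paper. Both proofs combine Lemma~\ref{lem:envelope} under Assumption~\ref{ass:quant}\textup{(i)}, the margin-driven $L_2(P_0)$ continuity of $c\mapsto\phi_{g_c}(\cdot;\eta_0)$ (Lemma~\ref{lem:donsker}\textup{(c)}) to certify the tangent directions, coordinatewise-to-joint Hadamard differentiability, Theorem~\ref{thm:delta}, and the chain rule through $(x_K)_K\mapsto(1-x_K/x_1)_K$.

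The one point to tighten is Step~4. Once the $\rho$- and $W$-coordinates are stacked, the limit vector has singular covariance, since the former are linear in the latter. Continuity and strict increase of the max-statistic's distribution function should therefore be argued on the linear support of the Gaussian law, as in the proof of Theorem~\ref{thm:profile}\textup{(iii)}, rather than from nondegeneracy of the vector.
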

The proof is in Supplementary Section~\ref{app:delta}. This corollary is the joint uncertainty statement for the quantization path. The posterior fluctuations of the residual dispersions $W(K)$ and the normalized causal heterogeneity $R^2$ curve $\rho(K)$ match their sampling fluctuations to first order, simultaneously over all reported $K$, so intervals and bands summarize uncertainty for the whole resolution path rather than for isolated choices of $K$.

The differentiability of the infimum map behind $W(K)$ follows from Supplementary Lemma~\ref{lem:envelope}. Under a unique optimal codebook, first-order perturbations of $W(K)$ depend only on the loss at $c^\star(K)$, giving the displayed limit. Without uniqueness the map is generally only directionally differentiable, the posterior is generally inconsistent for the limit, and the Gaussian delta method must be modified. Supplementary Remark~\ref{rem:nonunique} discusses what fails without uniqueness and possible remedies.

The theory is stated for exact minimizers, while the implementation of Section~\ref{sec:construction} returns approximate ones. Approximate minimizers with an $\opro(n^{-1/2})$ attained-value gap, together with a matching stationarity-gap condition for the mixture projection, leave every downstream conclusion unchanged, since each reported quantity depends on the computed codebooks only through the attained loss values, stated by Supplementary Proposition~\ref{prop:approxmin} precisely.

When literal finite response classes exist, the resolution profile is backward compatible with the classical target. If $P_U$ has $K_0$ separated atoms, then $K^\star(\gamma)=K_0$ for all sufficiently high resolutions below one, and the feature-law posterior recovers this behavior 
under nuisance consistency alone (Supplementary Section~\ref{sec:exact_classes}).

\subsubsection{Set-valued inference for the resolution profile}\label{sec:profiletheory}

The profile $K^\star(\cdot)$ is an integer-valued threshold functional of $\rho(\cdot)$.  Away from the knots $\mathcal R_0=\{\rho(K):K\in[\overline K]\}$ it is locally constant. At a knot, two adjacent answers are locally indistinguishable.  The next theorem states the impossibility result that motivates set-valued reporting.

\begin{theorem}[No locally uniformly consistent selection at a knot]\label{thm:impossibility}
Let Assumptions~\ref{ass:ident}--\ref{ass:quant} hold, and assume additionally that the compact feature set has a buffer around the true feature support, that is, $\{u:\dist(u,\calU)\le\eps_{\calC}\}\subset\calC$ for some $\eps_{\calC}>0$.  Let $\gamma=\rho_0(K_0)\in\mathcal R_0$ for some $2\le K_0<\overline K$ with $\rho_0(K_0-1)<\gamma<\rho_0(K_0+1)$ and $\sigma_{K_0}^2=\Var\{\mathbb H(K_0)\}>0$, and fix $h>0$ and $\eps\in(0,\sigma_{K_0})$.  Then there exist laws $P_n^{+}$ and $P_n^{-}$, each differing from $P_0$ by a density factor $1+O(n^{-1/2})$ in supremum norm and satisfying the identification, positivity, bounded-outcome, and buffered feature-support conditions uniformly, such that the following hold.
\begin{enumerate}[label=(\roman*),leftmargin=2em]
\item $\rho_{P_n^{\pm}}(K_0)=\gamma\pm h\,n^{-1/2}+o(n^{-1/2})$, so that $K^\star_{P_n^{+}}(\gamma)=K_0$ and $K^\star_{P_n^{-}}(\gamma)=K_0+1$ for all large $n$.
\item $(P_n^{+})^{\otimes n}$ and $(P_n^{-})^{\otimes n}$ are mutually contiguous.
\item Every possibly randomized selector $\widetilde K_n$ satisfies
\begin{equation*}
\liminf_{n\to\infty}\ \Bigl[\Pp^{+}_n\bigl(\widetilde K_n\ne K_0\bigr)+\Pp^{-}_n\bigl(\widetilde K_n\ne K_0+1\bigr)\Bigr]
\ \ge\ 2\,\Phi\!\Bigl(-\frac{h}{\sigma_{K_0}-\eps}\Bigr).
\end{equation*}
\end{enumerate}
\end{theorem}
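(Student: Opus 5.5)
The plan is a Le Cam two-point argument along a one-dimensional bounded-score submodel through $P_0$, tuned in the direction of the efficient influence function of $\rho(K_0)$. By Corollary~\ref{cor:path}, $\rho(K_0)$ is asymptotically linear with influence function
\[
\chi(O)=-W_0(1)^{-1}\bigl[\bar\phi_{g_{c^\star(K_0)}}(O)-\{1-\rho_0(K_0)\}\bar\phi_{g_{c^\star(1)}}(O)\bigr],
\]
where $\bar\phi_f=\phi_f(\cdot;\eta_0)-\Psi_f(P_0)$ and $\Var\chi=\sigma_{K_0}^2$. Since $\chi$ is bounded under Assumption~\ref{ass:bound}, I will still truncate and recenter it to a score $s$ with $\norm{s}_\infty<\infty$, $P_0s=0$, and
\[
\frac{\norm{s}_{P_0,2}}{P_0(s\chi)}\le \frac{1}{\sigma_{K_0}-\eps}.
\]
This inequality is the Cauchy--Schwarz slack, with equality at $s\propto\chi$, and it is where $\eps$ enters. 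Define $dP_t=(1+ts)\,dP_0$ and $P_n^{\pm}=P_{t_n^{\pm}}$, where $t_n^{\pm}=\pm h\,n^{-1/2}/P_0(s\chi)$. The density factor is then $1+O(n^{-1/2})$ in supremum norm. Positivity and the outcome bound survive for large $n$ because these are observed-data conditions and the factor is uniformly close to one. Identification is automatic because every quantity is a functional of the observed law. Finally, $\mu_{a,t}-\mu_a=O(t)$ uniformly, so the perturbed feature support stays inside the $\eps_{\calC}$-buffer and hence inside $\calC$.

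The key step is part (i), the expansion $\rho_{P_t}(K_0)=\rho_0(K_0)+t\,P_0(s\chi)+o(t)$. For each codebook $c$ I will write $\Psi_{g_c}(P_t)-\Psi_{g_c}(P_0)=t\,P_0\{s\,\bar\phi_{g_c}\}+\text{rem}_t(c)$ and show $\sup_{c}|\text{rem}_t(c)|=o(t)$. To do so I reuse the mechanism of Theorem~\ref{thm:eif}(ii) with $\bar\eta=\eta_t$, whose regression error is $r_\mu=O(t)$. The margin condition (Assumption~\ref{ass:margin}) then bounds the Voronoi-boundary contribution by $O(t^{2(1+\am)/(2+\am)})=o(t)$, uniformly over hyperplanes and hence over codebooks. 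This uniform-in-$c$ remainder control for the nonsmooth quantization losses is the main obstacle, because pathwise differentiability at each fixed $c$ (Theorem~\ref{thm:eif}(i)) is not enough to pass through the infimum.

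With uniformity in hand, the Danskin-type envelope result (Supplementary Lemma~\ref{lem:envelope}) applies, using uniqueness from Assumption~\ref{ass:quant}(i). It gives $W_{P_t}(K)=W_0(K)+t\,P_0(s\bar\phi_{g_{c^\star(K)}})+o(t)$, and differentiating the ratio yields the expansion of $\rho$. At $t=t_n^{\pm}$ this gives $\rho_{P_n^\pm}(K_0)=\gamma\pm hn^{-1/2}+o(n^{-1/2})$. The strict inequalities $\rho_0(K_0-1)<\gamma<\rho_0(K_0+1)$ persist under $O(n^{-1/2})$ perturbations, and monotonicity of $\rho$ then gives $K^\star_{P_n^+}(\gamma)=K_0$ and $K^\star_{P_n^-}(\gamma)=K_0+1$.

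For (ii) and (iii) I use LAN for bounded-score submodels. The log-likelihood ratio satisfies
\[
\log\frac{d(P_n^-)^{\otimes n}}{d(P_n^+)^{\otimes n}}=(t_n^--t_n^+)\sum_i s(O_i)-\tfrac12 n(t_n^--t_n^+)^2P_0s^2+\opro(1),
\]
and under $P_n^+$ it converges to $N(-\tau^2/2,\tau^2)$, where
\[
\tau=\frac{2h\,\norm{s}_{P_0,2}}{P_0(s\chi)}\le\frac{2h}{\sigma_{K_0}-\eps}.
\]
Le Cam's first lemma then gives mutual contiguity. For any randomized selector, apply the Neyman--Pearson bound to the test $1\{\widetilde K_n=K_0\}$. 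The sum of the two error probabilities is at least $1-\norm{(P_n^+)^{\otimes n}-(P_n^-)^{\otimes n}}_{TV}$, which converges to $1-\norm{N(0,1)-N(\tau,1)}_{TV}=2\Phi(-\tau/2)$. Since $\Phi(-\tau/2)\ge\Phi\bigl(-h/(\sigma_{K_0}-\eps)\bigr)$, this yields (iii).
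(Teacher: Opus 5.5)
Your proposal is correct and follows essentially the same route as the paper: a bounded tilt $dP_t=(1+ts)\,dP_0$ along a (truncated) influence function of $\rho(K_0)$, a drift expansion made uniform over codebooks by the Voronoi crossing geometry plus the margin condition, a Danskin/envelope step using uniqueness of $c^\star(K)$, LAN and Le Cam's first lemma for contiguity, and the two-point testing bound $2\Phi(-h/b)$ with $b\ge\sigma_{K_0}-\eps$. The paper uses $\norm{U_t-U}_\infty\le C|t|$ to get the sharper remainder $O(t^{1+\am})$ where you get $O(t^{2(1+\am)/(2+\am)})$, but both are $o(t)$. One small slip: with $t_n^-=-t_n^+$ the quadratic terms cancel in $\log d(P_n^-)^{\otimes n}/d(P_n^+)^{\otimes n}$ when $\sum_i s(O_i)$ is left uncentered, so your displayed expansion should either drop the $-\tfrac12 n(t_n^--t_n^+)^2P_0s^2$ term or center the sum at its $P_n^+$ mean; the limit $N(-\tau^2/2,\tau^2)$ you state is nevertheless the right one.
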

The full proof is in Supplementary Section~\ref{app:impossibility}. Take a bounded, mean-zero submodel $dP_t=(1+t\tilde s)\,dP_0$ whose score is a truncated, renormalized copy of the influence function of $\rho(K_0)$, with drift coefficient $b=\E_0[\mathrm{IF}\,\tilde s]\in[\sigma_{K_0}-\eps,\sigma_{K_0}]$. An envelope (Danskin) expansion of the minimized values $W_{P_t}(K)$, with the quantization remainder controlled by the margin condition (Assumption~\ref{ass:margin}), yields the profile drift $\rho_{P_t}(K_0)=\gamma+t\,b+o(t)$, so $t=\pm h\,b^{-1}n^{-1/2}$ produces laws $P_n^{\pm}$ that straddle the knot. Differentiability in quadratic mean makes these mutually contiguous by Le Cam's first lemma, and a Neyman--Pearson two-point bound on the summed selection error contributes $2\Phi(-h/b)\ge2\Phi(-h/(\sigma_{K_0}-\eps))$, so data at the $n^{-1/2}$ scale cannot resolve which side of the knot the truth lies on. As $h\downarrow0$ the lower bound tends to one, so in shrinking neighborhoods of a knot any single-valued rule must fail on at least one of two statistically indistinguishable sequences.

\begin{remark}[What the theorem does not exclude]\label{rem:marginselector}
The theorem constrains uniform behavior, not pointwise consistency. A margin selector $\widetilde K_n(\gamma)=\min\{K:\hat\rho(K)\ge\gamma-a_n\}$ with $a_n\downarrow0$ and $\sqrt n\,a_n\to\infty$ is consistent at any fixed law, including one lying exactly at a knot. Under the conditions of Corollary~\ref{cor:path}, $\hat\rho(K)=\rho_0(K)+\Op(n^{-1/2})$ for each $K$, so the deterministic slack $a_n$ eventually dominates the $\Op(n^{-1/2})$ estimation error while vanishing in the limit. What no selector can achieve is uniform consistency over the root-$n$ neighborhoods of Theorem~\ref{thm:impossibility}. The mechanism is that the profile $\gamma\mapsto K^\star(\gamma)$ is a discontinuous functional of the path at a knot, so the theorem instantiates the Hirano and Porter phenomenon for nondifferentiable functionals \citep{hirano2012}, here for the quantization paths of causal feature laws accessed through corrected scores.
\end{remark}

The recommended report is therefore the following band inversion.  Let $\hat\sigma(K)$ be the posterior interquartile range of $\rho^{(s)}(K)$ divided by $2\Phi^{-1}(0.75)$, let $q_{1-\alpha}$ be the posterior $(1-\alpha)$ quantile of $\max_{2\le K\le\overline K}|\rho^{(s)}(K)-\hat\rho(K)|/\hat\sigma(K)$, and set
\begin{equation}\label{eq:band}
L_\rho(K)=\hat\rho(K)-q_{1-\alpha}\hat\sigma(K),\qquad
U_\rho(K)=\hat\rho(K)+q_{1-\alpha}\hat\sigma(K),\qquad 2\le K\le\overline K,
\end{equation}
with $L_\rho(1)=U_\rho(1)=0$.  The bounds may be monotonized without harming coverage.  Define
\begin{equation}\label{eq:profileset}
\widehat C(\gamma)\;=\;\bigl\{K\in[\overline K]:\ U_\rho(K)\ge\gamma\ \text{ and }\ L_\rho(K')<\gamma\ \forall K'<K\bigr\},
\qquad \gamma\in(0,\rho_0(\overline K)).
\end{equation}
The defining formula makes sense for any $\gamma\in(0,1)$, which matters when the target law drifts in Theorem~\ref{thm:honest} below. If no $K\le\overline K$ satisfies $U_\rho(K)\ge\gamma$, the report is the empty set, indicating that resolution $\gamma$ is not supported as attainable with $\overline K$ groups. On the coverage event of Theorem~\ref{thm:profile}(iii) below, this does not occur for $\gamma<\rho_0(\overline K)$.

\begin{theorem}[Resolution-profile inference]\label{thm:profile}
Let Assumptions~\ref{ass:ident}--\ref{ass:quant} and \ref{ass:nuisance} hold, and suppose $\sigma_K^2=\Var\{\mathbb H(K)\}>0$ for $2\le K\le\overline K$.
\begin{enumerate}[label=(\roman*),leftmargin=2em]
\item \textup{(Off-knot concentration.)} If $\gamma\in(0,\rho_0(\overline K))$ is not a knot, then $\Pp_w\{K^{\star(s)}(\gamma)=K^\star_0(\gamma)\}\to1$ in probability and $\Pp\{\widehat K^\star(\gamma)=K^\star_0(\gamma)\}\to1$.
\item \textup{(Knot behavior.)} If $\gamma=\rho_0(K_0)\in\mathcal R_0$ for some $2\le K_0<\overline K$ with $\rho_0(K_0-1)<\gamma<\rho_0(K_0+1)$, then $\Pp_w\{K^{\star(s)}(\gamma)\in\{K_0,K_0+1\}\}\to1$ in probability, and $\Pp_w\bigl\{K^{\star(s)}(\gamma)=K_0\bigr\}\ \dto\ \mathrm{Uniform}(0,1)$
over the sampling law.
\item \textup{(Simultaneous set-valued coverage.)} With $\widehat C$ as in \eqref{eq:profileset}, $\liminf_{n\to\infty}\ \Pp\Bigl(K^\star_0(\gamma)\in\widehat C(\gamma)\ \text{ for every }\gamma\in\bigl(0,\rho_0(\overline K)\bigr)\Bigr)\ \ge\ 1-\alpha .$
\end{enumerate}
\end{theorem}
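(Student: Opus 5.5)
All three parts follow from Corollary~\ref{cor:path} through finite-dimensional arguments. The key inputs are the joint conditional and unconditional limits of $\sqrt n\{\rho^{(s)}(K)-\hat\rho(K)\}_{K}$ and $\sqrt n\{\hat\rho(K)-\rho_0(K)\}_{K}$, both equal to the Gaussian vector $\mathbb H$. By Assumption~\ref{ass:quant}(ii), $\rho_0$ is strictly increasing in $K$. Hence $\hat\rho$ and every $\rho^{(s)}$ are uniformly $\Op(n^{-1/2})$ perturbations of a strictly increasing vector, and the profile is the threshold map $K^\star(\gamma)=\min\{K:\rho(K)\ge\gamma\}$ applied to these vectors.

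\textbf{Part (i).} Let $\gamma$ not be a knot, and set $K_1=K^\star_0(\gamma)$. Then $\rho_0(K_1-1)<\gamma<\rho_0(K_1)$ with a fixed gap $\delta>0$ on both sides. The event $\max_K|\rho^{(s)}(K)-\rho_0(K)|<\delta$ forces $K^{\star(s)}(\gamma)=K_1$. Its conditional probability tends to one in probability, because $\max_K|\rho^{(s)}(K)-\hat\rho(K)|=\Op(n^{-1/2})$ conditionally, by tightness of the bounded-Lipschitz limit, and $\hat\rho-\rho_0=\Op(n^{-1/2})$. The same argument applied to $\hat\rho$ gives $\Pp\{\widehat K^\star(\gamma)=K_1\}\to1$.

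\textbf{Part (ii).} At a knot $\gamma=\rho_0(K_0)$, the neighbouring values $\rho_0(K_0-1)$ and $\rho_0(K_0+1)$ lie strictly below and strictly above $\gamma$. By the argument of part (i), with conditional probability tending to one, $\rho^{(s)}(K_0-1)<\gamma<\rho^{(s)}(K_0+1)$. So $K^{\star(s)}\in\{K_0,K_0+1\}$, and $K^{\star(s)}(\gamma)=K_0$ exactly when $\rho^{(s)}(K_0)\ge\gamma$. Writing $Z^{(s)}=\sqrt n\{\rho^{(s)}(K_0)-\hat\rho(K_0)\}$ and $\hat Z=\sqrt n\{\hat\rho(K_0)-\gamma\}$, this event is $\{Z^{(s)}\ge-\hat Z\}$. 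Since $Z^{(s)}\dtow\mathbb H(K_0)\sim N(0,\sigma_{K_0}^2)$ and the limit distribution function is continuous, Pólya's theorem gives
\[
\Pp_w\{Z^{(s)}\ge-\hat Z\}=\Phi(\hat Z/\sigma_{K_0})+\opro(1)
\]
uniformly in the threshold. Because $\hat Z\dto N(0,\sigma_{K_0}^2)$, the continuous mapping theorem makes $\Phi(\hat Z/\sigma_{K_0})$ asymptotically Uniform$(0,1)$. The step needing care is this uniformity in the data-dependent threshold. Conditional bounded-Lipschitz convergence to a continuous limit, together with a smoothing sandwich of indicators, should supply it.

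\textbf{Part (iii).} This is the main work. First show that the sup-$t$ band covers. The scale $\hat\sigma(K)$ is the posterior IQR divided by $2\Phi^{-1}(0.75)$, so $\sqrt n\,\hat\sigma(K)\to\sigma_K$ in probability, using quantile consistency under a continuous, strictly increasing Gaussian limit. Consequently $q_{1-\alpha}$ converges to the $(1-\alpha)$ quantile of $\max_K|\mathbb H(K)|/\sigma_K$, whose distribution function is continuous and strictly increasing there. Theorem~\ref{thm:delta} then gives
\[
\liminf\Pp\{L_\rho(K)\le\rho_0(K)\le U_\rho(K)\ \forall K\}\ge1-\alpha .
\]

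Next, on this coverage event, verify deterministically that $K^\star_0(\gamma)\in\widehat C(\gamma)$ for every $\gamma$ at once. Let $K_1=K^\star_0(\gamma)$. Then $U_\rho(K_1)\ge\rho_0(K_1)\ge\gamma$. For every $K'<K_1$, minimality gives $\rho_0(K')<\gamma$, and so $L_\rho(K')\le\rho_0(K')<\gamma$. For $K'=1$ the convention $L_\rho(1)=0<\gamma$ handles this directly. Because this inclusion is pointwise in $\gamma$ on a single event, the coverage is simultaneous over $\gamma$ with no further cost.
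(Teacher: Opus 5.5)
Your proposal is correct and follows essentially the same route as the paper's proof. Part (i) uses uniform $\Op(n^{-1/2})$ closeness of $\rho^{(s)}$ and $\hat\rho$ to $\rho_0$ away from the knot. Part (ii) reduces to $\Pp_w\{\rho^{(s)}(K_0)\ge\gamma\}$ and uses Kolmogorov-distance convergence of the conditional law to $N(0,\sigma_{K_0}^2)$. Part (iii) gets band coverage from $\sqrt n\,\hat\sigma(K)\pto\sigma_K$, convergence of $q_{1-\alpha}$ to the $(1-\alpha)$ quantile of $\max_K|\mathbb H(K)|/\sigma_K$, and Slutsky, then applies the same deterministic inversion on the single band event.

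The only differences are minor. The paper gets uniformity in the data-dependent threshold from a subsequence, almost-sure P\'olya argument (its conditional-quantile lemma) rather than your smoothing sandwich. It also proves that the distribution function of $\max_K|\mathbb H(K)|/\sigma_K$ is continuous and strictly increasing, using the Gaussian's positive density on the span of its support, whereas you only assert this.
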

The proof is in Supplementary Section~\ref{app:profile}. The posterior profile thus collapses to the population count away from knots and splits between admissible neighboring counts at a knot. The set $\widehat C(\gamma)$ is the main inferential object, a singleton when the path is separated enough and widening precisely where Theorem~\ref{thm:impossibility} says no single answer is uniformly reliable.

Two cautions calibrate the pairing of the two theorems. The objection that knots form a measure-zero set of resolutions misses the force of the impossibility, since contiguity is precisely the statement that the data cannot determine whether the truth is at a knot or $n^{-1/2}$-close to one, so an analyst scanning the profile can never confirm being in the easy regime. A second concern is that Theorem~\ref{thm:profile}(iii) is proved pointwise in $P_0$, whereas Theorem~\ref{thm:impossibility} operates over $n^{-1/2}$-neighborhoods, and a guarantee proved only at fixed $P_0$ would not answer the paper's own objection. The next theorem closes this gap, showing that the set-valued report is an honest confidence correspondence over the same local perturbation classes on which the impossibility operates, in the locally uniform sense of the post-selection literature \citep{leeb2005,li1989}.

\begin{theorem}[Locally uniform validity of the set-valued report]\label{thm:honest}
Let the assumptions of Theorem~\ref{thm:profile} hold, together with the buffered feature-support condition of Theorem~\ref{thm:impossibility}. Fix $\bar S<\infty$ and $B_u<\infty$, and let $\calS$ be a class of measurable functions of $O$ with $\E_0s=0$ and $\norm{s}_\infty\le\bar S$ for every $s\in\calS$, totally bounded in $L_2(P_0)$. For $|t|\le1/(2\bar S)$ and $s\in\calS$ let $dP_{t,s}=(1+ts)\,dP_0$, and for $n\ge4B_u^2\bar S^2$ write $P_{n,u,s}=P_{u/\sqrt n,\,s}$, so that every perturbation in the display is defined. Then
\begin{equation*}
\liminf_{n\to\infty}\ \inf_{s\in\calS,\,|u|\le B_u}\ \Pp_{n,u,s}\Bigl(K^\star_{P_{n,u,s}}(\gamma)\in\widehat C(\gamma)\ \text{ for every }\gamma\in\bigl(0,\rho_{P_{n,u,s}}(\overline K)\bigr)\Bigr)\ \ge\ 1-\alpha,
\end{equation*}
where $\Pp_{n,u,s}$ denotes probability under i.i.d.\ sampling from $P_{n,u,s}$.
\end{theorem}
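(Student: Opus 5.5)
The plan is to reduce locally uniform coverage to a deterministic containment statement plus a locally uniform version of the band's coverage. \emph{Deterministic step:} on the event
\[
E_n=\bigl\{L_\rho(K)\le\rho_{P_n}(K)\le U_\rho(K)\ \text{for all } 2\le K\le\overline K\bigr\},
\]
with $P_n=P_{n,u,s}$, the target count $K_n^\star=K^\star_{P_n}(\gamma)$ lies in $\widehat C(\gamma)$ for every $\gamma\in(0,\rho_{P_n}(\overline K))$. Indeed, $\rho_{P_n}(K_n^\star)\ge\gamma$ gives $U_\rho(K_n^\star)\ge\gamma$. Minimality of $K_n^\star$ gives $\rho_{P_n}(K')<\gamma$ for every $K'<K_n^\star$, hence $L_\rho(K')<\gamma$. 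For $K'=1$ this holds trivially because $L_\rho(1)=0<\gamma$. So the simultaneous statement over all $\gamma$ follows from the single event $E_n$, exactly as in Theorem~\ref{thm:profile}(iii). It then suffices to show
\[
\liminf_n\inf_{s,|u|\le B_u}\Pp_{n,u,s}(E_n)\ge1-\alpha .
\]

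\emph{Reduction to convergent sequences.} Because $\calS\times[-B_u,B_u]$ is totally bounded in $L_2(P_0)\times\R$, it is enough to prove that along every sequence $(u_n,s_n)$ with $u_n\to u$ and $s_n\to s$ in $L_2(P_0)$, we have $\liminf\Pp_{n,u_n,s_n}(E_n)\ge1-\alpha$. This is the usual subsequence argument: pick near-infimizing sequences and extract a convergent subsequence. Along such a sequence:
\begin{enumerate}[label=(\alph*),leftmargin=2em]
\item The likelihood ratios satisfy
\[
\log\prod_i(1+u_ns_n(O_i)/\sqrt n)=u\,\G_n s-\tfrac12u^2P_0s^2+\opro(1)
\]
under $P_0$, by boundedness of $s_n$ and the $L_2$ convergence. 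Le Cam's first lemma then gives mutual contiguity of $P_{n,u_n,s_n}^{\otimes n}$ and $P_0^{\otimes n}$.
\item The joint limit of $\bigl(\sqrt n(\hat\rho-\rho_0),\ \text{log-likelihood ratio}\bigr)$ under $P_0$ follows from Theorem~\ref{thm:bvm}(i) and Corollary~\ref{cor:path}.
\item Le Cam's third lemma gives, under $P_{n,u_n,s_n}$, that $\sqrt n(\hat\rho(K)-\rho_0(K))\dto\mathbb H(K)+u\,\Cov_0(\mathrm{IF}_K,s)$ jointly in $K$, where $\mathrm{IF}_K$ is the influence function of $\rho(K)$.
\end{enumerate}

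\emph{Drift of the target, then recentring.} Next I show that the target drifts by the same amount: uniformly in $K$,
\[
\sqrt n\{\rho_{P_n}(K)-\rho_0(K)\}\to u\,\Cov_0(\mathrm{IF}_K,s).
\]
This is the envelope/Danskin expansion already used for Theorem~\ref{thm:impossibility}. Write $\Psi_f(P_t)=\Psi_f(P_0)+t\,\E_0[\phi_f(O;\eta_0)s]+o(t)$ uniformly over $f\in\calF_{\mathrm{qt}}$. The quantization remainder is controlled by Assumption~\ref{ass:margin}. The buffered support keeps perturbed features inside $\calC$, and the perturbed regressions stay bounded since the density factor is $1+O(n^{-1/2})$. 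Lemma~\ref{lem:envelope}, under the uniqueness in Assumption~\ref{ass:quant}(i), converts this expansion into one for $W_{P_t}(K)$ and hence for $\rho_{P_t}(K)$. The drifts then cancel, so $\sqrt n(\hat\rho-\rho_{P_n})\dto\mathbb H$ under $P_{n,u_n,s_n}$.

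\emph{Band quantities and conclusion.} The posterior limit of Theorem~\ref{thm:bvm}(ii) is a statement in outer probability, so it transfers to $P_{n,u_n,s_n}$ by contiguity. Together with Corollary~\ref{cor:path} and $\sigma_K>0$, this gives $\sqrt n\hat\sigma(K)\to\sigma_K$ and $q_{1-\alpha}\to q^\ast$, the $(1-\alpha)$ quantile of $\max_K|\mathbb H(K)|/\sigma_K$. That distribution is continuous and strictly increasing at $q^\ast$. Slutsky's lemma and the portmanteau theorem then yield
\[
\Pp_{n,u_n,s_n}(E_n)\to\Pp\bigl(\max_K|\mathbb H(K)|/\sigma_K\le q^\ast\bigr)=1-\alpha .
\]

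\emph{Main obstacle.} I expect the hardest step to be the uniform first-order expansion of $\rho_{P_t}(K)$ in the nonsmooth quantization block. It needs two things. First, a remainder that is $o(t)$ uniformly over $s$ in the totally bounded class, not merely along a fixed direction. Second, control of how the perturbed regression $\bmu_{P_t}$ (which moves because the conditional law of $Y$ given $A,X$ changes) shifts Voronoi assignments. I would attack this by bounding $\norm{\bmu_{P_t}-\bmu_0}_\infty\le Ct\bar S$. Then I would apply the margin bound over hyperplanes to show the assignment-change contribution is $O(t^{1+\am})=o(t)$, with constants depending only on $\bar S$. This gives the uniformity automatically.
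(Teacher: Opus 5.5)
Your proposal is correct and follows essentially the same route as the paper's proof. Both arguments reduce to convergent sequences $(u_n,s_n)\to(u^\ast,s^\ast)$ by total boundedness, use triangular LAN and contiguity, and derive the uniform drift expansion \eqref{eq:driftclaim} with constants depending on the score only through $\bar S$, so that the envelope argument gives $\sqrt n\{\rho_{P_n}(K)-\rho_0(K)\}\to u^\ast\E_0[\mathrm{IF}_K s^\ast]$; Le Cam's third lemma applied to the asymptotically linear $\hat\rho$ makes the two drifts cancel, the posterior scale and quantile limits transfer by contiguity, and the deterministic band-to-profile inversion finishes. The differences are cosmetic: you invoke Lemma~\ref{lem:envelope} directly where the paper re-runs the same sandwich argument, and you transfer the conditional weak limit itself rather than its quantile consequences.
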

The proof is in Supplementary Section~\ref{app:honest}. Its inputs are the uniform drift expansion from the impossibility proof, the local regularity of the corrected path estimator, under which the drift of the moving target and the shift of the estimator cancel exactly by Theorem~\ref{thm:bvm}(i) and Le Cam's third lemma, and a contiguity transfer showing the posterior band quantiles are unaffected by the perturbations. The perturbations of Theorem~\ref{thm:impossibility} are of exactly this form, so the two results hold over the same neighborhoods, and Supplementary Remark~\ref{rem:honestscope} notes that bounded tilts lose no force at the root-$n$ scale because the least favorable construction is itself one. Over every such neighborhood no single-valued selector is uniformly reliable at a knot while the set-valued report keeps its nominal guarantee, so set-valued reporting is the attainable summary rather than a conservative retreat, and the pair of theorems delimits what can be learned about the profile at the root-$n$ scale. Supplementary Remark~\ref{rem:cardinality} adds that along these sequences the report eventually contains at most the two knot-adjacent counts, so honesty is not purchased with an uninformative set.

\subsection{Subgroup effects with partition uncertainty}\label{sec:effectstheory}

Fix a working resolution $K$ and assume Assumption~\ref{ass:proj}. The subgroup mean $\psi_{h,a}(K)$ is a composite functional. The feature law determines the projection $\beta^\star(K)$, which defines the soft memberships, and the subgroup mean is then a ratio of two structured moments, so uncertainty in $\psi_{h,a}(K)$ comes both from estimating the within-subgroup mean for a fixed partition and from estimating the subgroup definition itself.

\begin{theorem}[Subgroup effects]\label{thm:effects}
Under Assumptions~\ref{ass:ident}--\ref{ass:class} and \ref{ass:proj}, with the nuisance-rate condition restricted to $\calF_{\mathrm{eff}}$, and after label alignment on the local chart around the selected representative of $\beta^\star(K)$, the map defining $\psi_{h,a}(K)$ is Hadamard differentiable at $\Psi_0$.  Jointly over the finite collection of reported pairs $(h,a)$,
\[
\sqrt n\{\hat\psi_{h,a}(K)-\psi_{h,a}(K)\}_{h,a}
\ \dto\
N(0,\Sigma_\psi),
\qquad
\sqrt n\{\psi^{(s)}_{h,a}(K)-\hat\psi_{h,a}(K)\}_{h,a}
\ \dtow\
N(0,\Sigma_\psi),
\]
where $\Sigma_\psi$ is the covariance matrix of the composite influence function given in Supplementary Section~\ref{sec:app:expanded_effect_scores}.  Treatment contrasts within a subgroup and subgroup contrasts within an arm follow by linearity.
\end{theorem}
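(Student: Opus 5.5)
The plan is to write $\psi_{h,a}(K)$ as a composition $T=R\circ(\mathrm{id},Z)$ of maps on $\ell^\infty(\calF_{\mathrm{eff}})$ and then invoke Theorem~\ref{thm:delta} on the subclass $\calF_{\mathrm{eff}}$. Theorem~\ref{thm:bvm}, by the remark after its proof, holds there with the rate $R_{2,\mathrm{eff}}=r_\mu^2+r_\mu r_\pi$ and needs no margin condition. Concretely, let $Z:\nu\mapsto\beta(\nu)$ be the Z-estimation map that solves $\nu(\nabla_\beta\ell_\beta)=0$ on a neighborhood $N$ of the selected representative of $\beta^\star(K)$. The local chart fixes labels, which is where label alignment enters. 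The coordinates of $\nabla_\beta\ell_\beta$ belong to $\calF_{\mathrm{str}}$ by Assumption~\ref{ass:class}(iii). The ratio map is $R:(\nu,\beta)\mapsto\nu(f^N_{h,a;\beta})/\nu(f^D_{h;\beta})$. Because the reported collection of pairs $(h,a)$ is finite, it is enough to establish Hadamard differentiability of the vector-valued map $\nu\mapsto\{T_{h,a}(\nu)\}_{h,a}$ at $\Psi_0$ tangentially to $C_\varsigma(\calF_{\mathrm{eff}})$.

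\textbf{Step 1: the argmin map.} For $Z$ we use the standard Z-estimator Hadamard differentiability result (van der Vaart--Wellner, Thm.~3.9.34 type). Its hypotheses come from the assumptions as follows.
\begin{itemize}
\item Uniqueness and interiority of $\beta^\star(K)$, and nonsingularity of $V_\beta$, are Assumption~\ref{ass:proj}.
\item Fréchet differentiability of $\beta\mapsto\Psi_0(\nabla_\beta\ell_\beta)$ with derivative $V_\beta$ follows from the $C^2$ and Lipschitz conditions and dominated convergence.
\item Uniform convergence of perturbed criteria holds because $t\mapsto\nu_t(\nabla_\beta\ell_\beta)$ is evaluated at $\nu_t=\Psi_0+t\zeta_t$ with $\zeta_t\to\zeta$ in sup norm.
\end{itemize}
The derivative is $Z'(\zeta)=-V_\beta^{-1}\zeta(\nabla_\beta\ell_\beta)|_{\beta^\star}$. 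It is linear and continuous on all of $\ell^\infty$.

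There is one subtlety: the well-separatedness needed for the perturbed zero to stay in $N$. For this we use consistency of the minimizer of $\nu_t(\ell_\beta)$ over compact $\calB_K$, which holds by uniqueness up to permutation. Any minimizer then lies near some permutation of $\beta^\star$, and we select the representative in $N$.

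\textbf{Step 2: the ratio and the chain rule.} The map $(\nu,\beta)\mapsto\nu(f_\beta)$ with $f_\beta\in\{f^N,f^D\}$ is Hadamard differentiable at $(\Psi_0,\beta^\star)$ with derivative $(\zeta,\delta)\mapsto\zeta(f_{\beta^\star})+\nabla_\beta\Psi_0(f_\beta)|_{\beta^\star}^\top\delta$. This needs two facts:
\begin{itemize}
\item $\beta\mapsto f_\beta$ is Lipschitz in sup norm, so $\zeta_t(f_{\beta_t})\to\zeta(f_{\beta^\star})$ because $\zeta\in C_\varsigma$ and $\beta_t\to\beta^\star$;
\item $\beta\mapsto\Psi_0(f_\beta)$ is differentiable, because $r_h$ is smooth in $\beta$ given $m_\beta\ge\underline m$.
\end{itemize}
The denominator $\Psi_0(f^D_{h;\beta^\star})>0$ by Assumption~\ref{ass:proj}, so the quotient rule applies. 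The chain rule for Hadamard derivatives then gives $T'_{\Psi_0}$. It is linear and continuous on $\ell^\infty(\calF_{\mathrm{eff}})$, and $T'_{\Psi_0}(\G_0)$ is Gaussian $N(0,\Sigma_\psi)$. Here $\Sigma_\psi$ is the variance of the composite influence function $\phi_{f^N}-\psi\phi_{f^D}$ divided by $\Psi_0(f^D)$, plus the gradient term times $-V_\beta^{-1}\phi_{\nabla\ell}$.

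\textbf{Step 3: conclusion.} Theorem~\ref{thm:delta} now yields both displayed limits, the unconditional one for $\hat\psi$ and the conditional one for $\psi^{(s)}$. The posterior draws recompute $\beta^{(s)}$ inside each draw, which is exactly $T(\Psi^{(s)})$ once labels are aligned to the chart. We take as given that the computed minimizer is the one in $N$; approximate minimizers are covered by Supplementary Proposition~\ref{prop:approxmin}. Contrasts within a subgroup, and within an arm, then follow because they are linear maps of the jointly Gaussian vector.

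\textbf{Main obstacle.} I expect Step 1 to be the hardest: turning argmin-type differentiability into a statement about the locally selected root under label non-identifiability. This means showing that, with probability tending to one and uniformly in the weights, the global minimizer of the perturbed criterion lies in the chart around a single representative. My first approach would be an argmax-continuity argument on the quotient space $\calB_K/\text{permutations}$, which is compact and on which the minimizer is unique. From that I would deduce that some permutation of the perturbed minimizer converges, and define the alignment as that permutation.
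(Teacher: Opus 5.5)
Your architecture is the paper's: $\psi_{h,a}=\varrho\circ E_{h,a}\circ(\mathrm{id},T_\beta)$, a Hadamard-differentiable $Z$-map for the projection, the evaluation chain rule (Lemma~\ref{lem:chain}), the quotient rule with $\Psi_0(f^D_{h;\beta^\star})>0$, and Theorem~\ref{thm:delta} on $\calF_{\mathrm{eff}}$, which together give \eqref{eq:compositeIF}.

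\textbf{Gap 1: existence of the perturbed zero.} In Step~1 you obtain a zero of $\beta\mapsto\nu_t(\nabla_\beta\ell_\beta)$ in $N$ from a minimizer of $\beta\mapsto\nu_t(\ell_\beta)$. For a generic perturbation $\nu_t=\Psi_0+t\zeta_t\in\ell^\infty(\calF_{\mathrm{eff}})$, the map $\beta\mapsto\nu_t(\ell_\beta)$ is just an arbitrary bounded function read along an index path. It need not be continuous, let alone differentiable with gradient $\nu_t(\nabla_\beta\ell_\beta)$. So a minimizer need not exist, and if one exists it has no reason to be a zero of $\nu_t(\nabla_\beta\ell_\beta)$. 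Well-separatedness gives \emph{convergence} of zeros that exist, not their existence.

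The paper avoids this in Lemma~\ref{lem:zfun}. There $T_\beta(\nu)$ is defined for every $\nu$ as a near-minimizer over the compact ball $\bar B_0$ of $\norm{\nu(\nabla_\beta\ell_\beta)}$, with tolerance $\norm{\nu-\Psi_0}_\infty^2$. Along the path, the infimum is shown to be $o(t_m)$ by testing the explicit point $\beta^\star-t_mV_\beta^{-1}\zeta(\nabla_\beta\ell_{\beta^\star})$. Uniqueness of the zero of $G_0$ on $\bar B_0$ then gives consistency, and the derivative holds for every admissible selection. If you prefer the van der Vaart--Wellner $Z$-theorem, you must restrict its domain to those $\nu$ that have a zero in $N$, so that existence is part of the domain rather than a consequence of the criterion.

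\textbf{Gap 2: linking the algorithm to the $Z$-map.} The link is needed only for the realized processes, and Step~3 asserts it without proof. Because $\hPsi$ and $\Psi^{(s)}$ are weighted averages of corrected scores, $\partial_\beta\hphi_{\ell_\beta,i}=\hphi_{\nabla_\beta\ell_\beta,i}$. This needs $\partial_{\beta_j}\nabla_u\ell_\beta=\nabla_u\partial_{\beta_j}\ell_\beta$, which is why Assumption~\ref{ass:class}(iii) controls the mixed derivatives. Hence $\beta\mapsto\hPsi(\ell_\beta)$ and $\beta\mapsto\Psi^{(s)}(\ell_\beta)$ are $C^1$ with gradients $\hPsi(\nabla_\beta\ell_\beta)$ and $\Psi^{(s)}(\nabla_\beta\ell_\beta)$.

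Argmin consistency then places the aligned minimizers in the interior of $\bar B_0$ with probability, respectively conditional probability, tending to one. The point case uses $\sup_\beta|\hPsi(\ell_\beta)-\Psi_0(\ell_\beta)|=\opro(1)$ from Theorem~\ref{thm:bvm}(i) on $\calF_{\mathrm{eff}}$; the draw case uses conditional tightness from part~(ii). This holds in conditional probability, not ``uniformly in the weights'' as you wrote. On those events the minimizers are exact zeros, hence admissible values of $T_\beta(\hPsi)$ and $T_\beta(\Psi^{(s)})$, and $T_\beta$ can be modified off these events without changing the limits. Your quotient-space argument for label alignment is fine.

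\textbf{Minor point.} In Step~2, $\varsigma$-continuity of $\beta\mapsto f_\beta$ needs sup-norm Lipschitz continuity of $(f_\beta,\nabla_u f_\beta)$, not of $f_\beta$ alone. The reason is that $\varsigma$ sees the gradient through $\phi_f$. This is the Lipschitz-into-$\varsigma$ hypothesis of Lemma~\ref{lem:chain}.
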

The proof is in Supplementary Section~\ref{app:effects}. Recomputing $\beta^{(s)}(K)$ inside every draw targets the full composite limit law, whereas holding the partition fixed at $\hat\beta(K)$ targets only the fixed-partition component $\Phi^{\mathrm{fix}}_{h,a}$ of the composite influence function and is miscalibrated for the composite law whenever the projection term in Supplementary Section~\ref{sec:app:expanded_effect_scores} is nonzero.

The projection $\beta^\star(K)$ that defines the memberships has its own Bernstein--von Mises limit, supplying the partition-uncertainty component and making the subgroup-effect story self-contained. Supplementary Theorem~\ref{thm:projection} states the matched sampling and posterior limits for $\hat\beta(K)$ and $\beta^{(s)}(K)$ on the aligned label chart, with the sandwich covariance $V_\beta^{-1}\Sigma_\beta V_\beta^{-1}$, a $Z$-functional application of Theorem~\ref{thm:delta} in the tradition of weighted-bootstrap $M$-estimation \citep{ma2005,cheng2010}. The interpretive caveat bears repeating. The target $\beta^\star(K)$ is a KL projection, so the guarantee quantifies uncertainty about the best $K$-component description under misspecification, and the sandwich covariance uses no likelihood identity.

The subgroup effects above are stated at a fixed working resolution $K$, whereas in practice the resolution is often selected from the profile. The next corollary shows that selecting a prespecified nonknot resolution does not disturb the inference.

\begin{corollary}[Subgroup effects at a prespecified resolution]\label{cor:fixedgamma}
Fix $\gamma_0\in(0,\rho_0(\overline K))$ prespecified and not a knot, and let $K^\star_0=K^\star_0(\gamma_0)$. Let the hypotheses of Theorems~\ref{thm:profile} and~\ref{thm:effects} hold, with Assumption~\ref{ass:proj} holding at $K=K^\star_0$. Then $\widehat K^\star(\gamma_0)\pto K^\star_0$ by Theorem~\ref{thm:profile}\textup{(i)}. On the data event $\{\widehat K^\star(\gamma_0)=K^\star_0\}$, whose probability tends to one, the point subgroup-effect report at the selected resolution coincides with the fixed-resolution report at $K=K^\star_0$; each posterior draw coincides with its fixed-resolution counterpart on the per-draw event $\{K^{\star(s)}(\gamma_0)=K^\star_0\}$, whose conditional probability tends to one in probability. Consequently the sampling and conditional Gaussian limits of Theorem~\ref{thm:effects} at $K=K^\star_0$ apply verbatim to inference at the selected resolution.
\end{corollary}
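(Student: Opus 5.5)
The proof is an event-coincidence argument built on the off-knot concentration of Theorem~\ref{thm:profile}(i); no new limit theory is needed. Fix $\gamma_0$ in $(0,\rho_0(\overline K))$ that is not a knot, and write $K^\star_0=K^\star_0(\gamma_0)$.

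\textbf{Step 1: the selected resolution is consistent.} Because $\gamma_0\notin\mathcal R_0$, Theorem~\ref{thm:profile}(i) gives $\Pp\{\widehat K^\star(\gamma_0)=K^\star_0\}\to1$. Hence $\widehat K^\star(\gamma_0)\pto K^\star_0$, since the selector is integer-valued. The same part gives $\Pp_w\{K^{\star(s)}(\gamma_0)=K^\star_0\}\to1$ in probability. Concretely, $\rho_0(K^\star_0-1)<\gamma_0<\rho_0(K^\star_0)$ holds with strict gaps. Corollary~\ref{cor:path} makes $\hat\rho$ and $\rho^{(s)}$ root-$n$ close to $\rho_0$ jointly over $K\le\overline K$, so both the point and the draw profiles eventually return $K^\star_0$.

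\textbf{Step 2: point report.} The selected-resolution estimator is $\hat\psi_{h,a}(\widehat K^\star(\gamma_0))$. On the data event $E_n=\{\widehat K^\star(\gamma_0)=K^\star_0\}$ it equals $\hat\psi_{h,a}(K^\star_0)$, because both are the same deterministic functional of $\hPsi$ evaluated at the same index. Therefore
\[
\sqrt n\{\hat\psi_{h,a}(\widehat K^\star(\gamma_0))-\psi_{h,a}(K^\star_0)\}=\sqrt n\{\hat\psi_{h,a}(K^\star_0)-\psi_{h,a}(K^\star_0)\}+\opro(1).
\]
The remainder vanishes on $E_n$, and $\Pp(E_n^c)\to0$. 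Slutsky's lemma together with Theorem~\ref{thm:effects} at $K=K^\star_0$ then gives the $N(0,\Sigma_\psi)$ limit. Applying Theorem~\ref{thm:effects} requires Assumption~\ref{ass:proj} at $K^\star_0$, which is assumed. The nuisance rates needed are those of Theorem~\ref{thm:profile}, which include the quantization class, and these imply the rates restricted to $\calF_{\mathrm{eff}}$.

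\textbf{Step 3: posterior draws (the only delicate step).} Let $Z_n^{(s)}$ be the centered, scaled selected-resolution draw vector and $\bar Z_n^{(s)}$ its fixed-resolution counterpart. The two agree on the per-draw event $E_n^{(s)}=\{K^{\star(s)}(\gamma_0)=K^\star_0\}\cap E_n$. For any $\varphi\in\BL$,
\[
|\E_w\varphi(Z_n^{(s)})-\E_w\varphi(\bar Z_n^{(s)})|\le 2\,\Pp_w\{(E_n^{(s)})^c\}.
\]
On $E_n$ the right side equals $2\Pp_w\{K^{\star(s)}(\gamma_0)\ne K^\star_0\}$, which tends to zero in probability by Step 1. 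Since $\Pp(E_n)\to1$, taking the supremum over $\varphi$ and combining with the conditional limit $\bar Z_n^{(s)}\dtow N(0,\Sigma_\psi)$ from Theorem~\ref{thm:effects} yields $Z_n^{(s)}\dtow N(0,\Sigma_\psi)$ in the sense of Definition~\ref{def:condweak}.

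The main care is measurability. The bound above must hold uniformly over $\BL$, and the convergence is in outer probability. The inequality is uniform in $\varphi$ by construction, so only outer-probability bookkeeping remains.

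Prespecification of $\gamma_0$ is what makes this argument work. A data-chosen $\gamma_0$, or one at a knot, would break Step 1; that is exactly the regime covered by Theorem~\ref{thm:impossibility}.
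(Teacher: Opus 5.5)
Your proposal is correct and follows essentially the same route as the paper: off-knot concentration from Theorem~\ref{thm:profile}(i), coincidence of the point report with the fixed-resolution report on the data event, and a $\varphi$-free bounded-Lipschitz bound on the conditional side. Because $E_n$ is data-measurable, your bound $2\,\Pp_w\{(E_n^{(s)})^c\}$ is exactly the paper's $2\,\Pp_w\{K^{\star(s)}(\gamma_0)\ne K^\star_0\}+2\,\ind{E_n^c}$, and both arguments conclude by the triangle inequality in the bounded-Lipschitz metric.
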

The proof is in Supplementary Section~\ref{app:effects}. Because the resolution is selected from the profile, the full nuisance-rate condition of Assumption~\ref{ass:nuisance} is required here, not only its $\calF_{\mathrm{eff}}$ restriction. The argument is pointwise in $P_0$ and does not extend uniformly over root-$n$ neighborhoods of a knot, consistent with Theorem~\ref{thm:impossibility}, so when $\widehat C(\gamma_0)$ is not a singleton we recommend reporting subgroup effects at every supported $K$.

\section{Simulation studies}\label{sec:sims}

We report four simulation studies in the main text, one for each main claim family, together with a noise-floor study in Supplementary Section~\ref{sec:sims:study6_supp}. The four main-text studies examine resolution-profile inference near knots, the calibration and rate robustness of corrected path uncertainty, subgroup effects with propagated partition uncertainty, and the locally uniform validity of the set-valued report over the exact bounded density tilts of Theorem~\ref{thm:honest}, while the noise-floor study probes the corrected path in the regime of Remark~\ref{rem:noisefloor}, where the feature-estimation error rivals the heterogeneity. Supplementary Section~\ref{sec:app:additional_simulations} also reports optional penalized-profile diagnostics, atomic recovery with order-selection comparators, and an energy-scale sensitivity check for exponentiated order posteriors.

\subsection{Data-generating processes and implementation}\label{sec:sims:dgp}

The first design, DGP-A, is a one-dimensional causal-effect design in which the feature law $P_U$ is continuous, non-Gaussian, and deliberately non-mixture. It consists of three separated, skewed beta-shaped bumps, giving clear coarse structure although the population is not literally a finite Gaussian mixture. The feature is the CATE, $U(X)=\tau_0(X)$, and the outcome and propensity models are generated from smooth functions of $X\sim\mathrm{Unif}[0,1]^5$.  This design lets us study both easy regions of the path and difficult high-resolution regions close to knots.
The second design, DGP-B, keeps the bump shapes and weights fixed but varies the support gap between the middle bump $[0,0.8]$ and the right bump $[0.8+s,1.6+s]$.  The separation parameter $s$ controls how distinguishable the two right structures are, and as $s\downarrow0$ they meet at a common boundary. This controlled loss of separation moves $\rho_0(2)$ through the threshold $\gamma=0.9$, so $K^\star_0(0.9)$ changes from three to two groups through a knot, exactly the setting where a point choice of $K$ is intrinsically unstable and the set-valued report of Theorem~\ref{thm:profile} should widen.

Ground truth for the full quantization path is computed by dynamic programming on the population density, with closed-form checks where available. We use $S=1000$ posterior draws, $B=10$ cross-fitting folds, and ceiling $\overline K=8$ unless stated otherwise, with $R=500$ Monte Carlo replications for every study and their supplementary diagnostics, except the noise-floor study (Supplementary Section~\ref{sec:sims:study6_supp}), which uses $R=300$. Monte Carlo standard errors for all reported coverages are at most $0.029$.  The nuisance functions, arm-specific outcome regressions and propensity score, use a cross-fitted Super Learner ensemble \citep{vanderlaan2007super}.  Study~2 additionally varies the nuisance regime to illustrate the bias structure of Theorem~\ref{thm:eif}, especially the role of outcome-regression accuracy for feature-law and quantization-path inference. Full data-generating details, exact population paths, learner tuning, and implementation audits are given in Supplementary Section~\ref{sec:sims:dgp_supp}.

\subsection{Study 1. Resolution-profile inference}\label{sec:sims:profile}

Study~1 tests the set-valued report of Theorem~\ref{thm:profile}.  For DGP-A and the DGP-B separation path, we evaluate $\widehat C(\gamma)$ at $\gamma\in\{0.5,0.8,0.9,0.95,0.975,0.99\}$, covering easy resolutions, high-resolution near-knot regions, and the moving knot at $\gamma=0.9$.

\begin{figure}[t]
\centering
\includegraphics[width=0.82\textwidth]{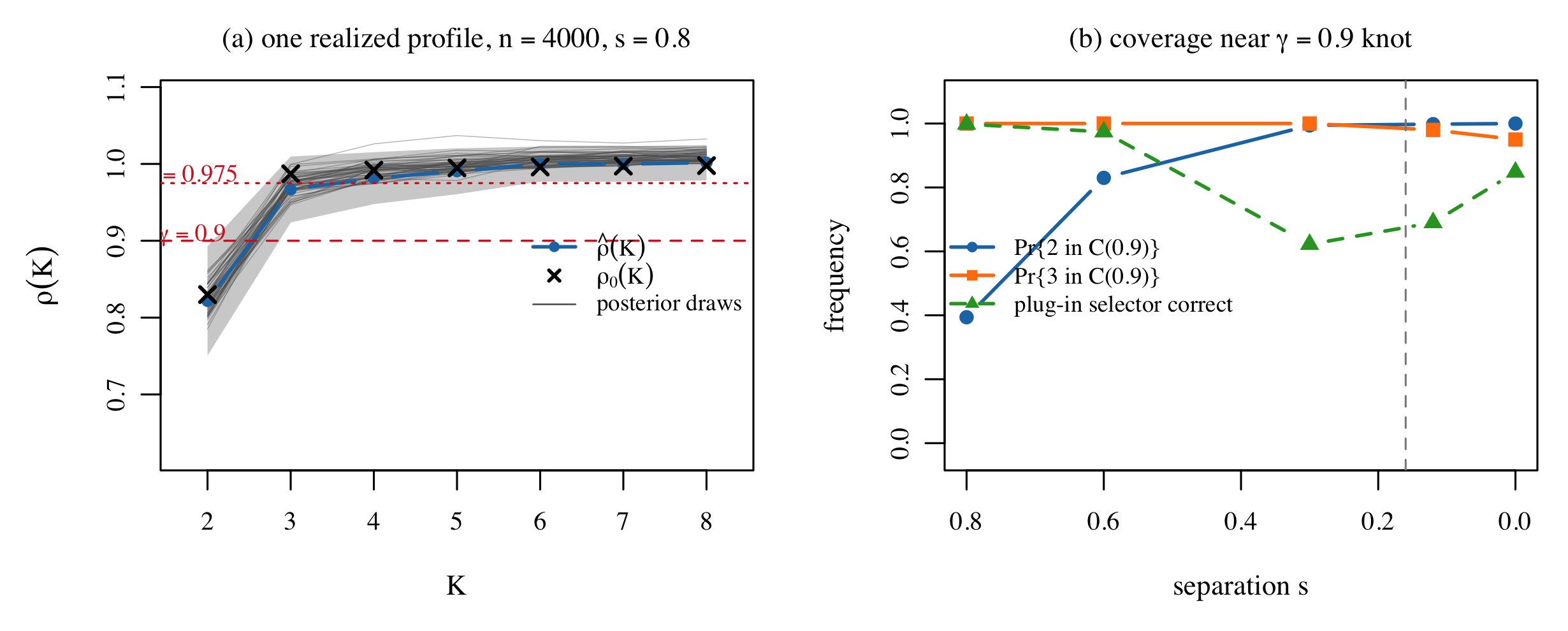}
\caption{Study~1.  (a) One replication at $n=4000$ in DGP-A, which is the DGP-B family at separation $s=0.8$. Point path $\hat\rho(K)$, simultaneous $95\%$ band, posterior draws, truth, and reading lines, with corrected posterior draws of $\rho$ occasionally exceeding one in finite samples.  (b) Along DGP-B, the population answer at $\gamma=0.9$ changes from $3$ to $2$ through a knot, the knot being approached as the separation $s$ decreases to the right.  The dashed curve is the frequency with which the plug-in selector $\widehat K^\star(0.9)$ equals $K^\star_0(0.9)$, and the two solid curves are the inclusion probabilities $\Pr\{2\in\widehat C(0.9)\}$ and $\Pr\{3\in\widehat C(0.9)\}$.}
\label{fig:sims:profile}
\end{figure}

The threshold report behaves as predicted. In the displayed DGP-A replication of Figure~\ref{fig:sims:profile}(a), the band at $\gamma=0.9$ returns the singleton $\{3\}$, in line with the average cardinality of $1.4$ at this threshold and separation, so away from the knot the report pins down the correct count and widens only as the knot approaches along the DGP-B path in panel (b). Supplementary Table~\ref{tab:sims:coverage_K_C} shows coverage of $K^\star_0(\gamma)$ essentially one through $\gamma=0.975$ and within Monte Carlo error of nominal at $\gamma=0.99$, with $0.96$ at the hardest cell, and the all-$\gamma$ event behaves similarly. The sets widen only where the path is locally ambiguous, mean cardinality one at $\gamma=0.5$, about two near the $\gamma=0.9$ knot, and five to six at $\gamma=0.99$. Coverage far above nominal at stable thresholds is expected, since away from knots the set collapses to the correct singleton by Theorem~\ref{thm:profile}(i), so the $1-\alpha$ guarantee binds only near knots.

Figure~\ref{fig:sims:profile}(b) isolates the moving-knot behavior along DGP-B, where the dashed curve is the correct-selection frequency of the single-valued rule, not the coverage of a set. Its drop near the crossing is expected because $\rho_0(2)$ crosses $\gamma=0.9$ between $s=0.30$ and $s=0.12$, so small fluctuations decide whether the point rule reports $2$ or $3$, as formalized by Theorem~\ref{thm:impossibility}. The inclusion probabilities are asymmetric as intended, $\Pr\{3\in\widehat C(0.9)\}$ staying high because $3$ is the true or adjacent order throughout the crossing while $\Pr\{2\in\widehat C(0.9)\}$ rises as the two right bumps become harder to distinguish and $2$ becomes nearly admissible, so the report keeps both neighboring orders near the knot and maintains coverage.

\subsection{Study 2. Calibration and rate robustness}\label{sec:sims:calibration}

Study~2 evaluates the corrected posterior in DGP-A for $n\in\{500,1000,2000,4000\}$ across oracle nuisances, both-flexible nuisances, flexible outcome with parametric or misspecified propensity, and misspecified outcome with flexible propensity.  Supplementary Figure~\ref{fig:sims:calibration} reports coverage for the protected coordinate $\rho(3)$ and the simultaneous band over $2\le K\le8$.

The coordinate $\rho(3)$ calibrates as the theory predicts. In every regime with a flexible outcome regression, its coverage in panel (a) reaches $0.90$ to $0.94$ by $n=1000$ and stays there, because when the outcome regression converges to the truth the centering is exactly unbiased for any bounded propensity limit, including a misspecified one. The single flexible-outcome regime that fails pairs a misspecified outcome regression with a flexible propensity and does not recover as the sample grows, the asymmetry the theory anticipates, since the estimand is a functional of the outcome regression and no propensity can repair a wrong regression limit. The apparent robustness to propensity misspecification is therefore one-sided, finite-sample support in a favorable regime rather than generic insensitivity. Under a persistently misspecified propensity, root-$n$ inference formally requires the regression error negligible at nearly parametric rates, namely $r_\mu=\opro(n^{-1/2})$ for the smooth classes, with the quantization boundary term additionally requiring $r_\mu=\opro(n^{-3/8})$ under $\am=1$. A separate rate-compliant learner restores $W(1)$ to nominal coverage while its honestly shorter intervals expose a distinct finite-sample empirical-minimum bias at fine resolution (Supplementary Table~\ref{tab:sims:study2cov}). The simultaneous band in panel (b) is well calibrated for flexible outcome models but its coverage drops sharply at large $n$ when the outcome model is not flexible enough.

\subsection{Study 3. Subgroup effects with uncertainty quantification}\label{sec:sims:subgroup_effects}

Study~3 supports the subgroup-effect reporting. In DGP-A at working order $K=3$ we estimate the subgroup treatment-effect contrast $\psi_{h,1}(3)-\psi_{h,0}(3)$ for each component $h$. The feature-law posterior recomputes the projection inside every draw, the naive two-stage comparator clusters $\widehat U$ and treats the resulting partition as fixed, and an oracle-cell AIPW procedure using the true population partition is a reference. 

Supplementary Figure \ref{fig:sims:subgroup_effects} reports the coverage of each procedure. The oracle-cell reference is close to nominal, so the failures of the naive two-stage method are not failures of AIPW effect estimation but arise from treating the learned partition as fixed. The feature-law posterior, which recomputes the subgroup definition inside each draw and so propagates the first-order projection uncertainty, is substantially better calibrated across all estimands, whereas the two-stage intervals undercover sharply. Full diagnostics of bias, RMSE, coverage, and interval length are in Supplementary Table~\ref{tab:sims:own_estimands}.

\subsection{Study 4. Locally uniform validity near knots}\label{sec:sims:honest}

Study~4 verifies Theorem~\ref{thm:honest} in its exact regime, the bounded density tilts over which Theorem~\ref{thm:impossibility} forces every single-valued rule to fail. The base law is DGP-B at the knot separation $s^\star$ solving $\rho_0(2;s^\star)=0.9$, paired with a smooth compactly supported outcome error so the tilted laws stay differentiable in quadratic mean, and the tilt score $s$ is the truncated, recentred, normalized influence function of $\rho(2)$, with achieved drift $b=0.920$ essentially equal to the efficiency bound $\sigma_{K_0}=0.924$ (Supplementary Section~\ref{sec:sims:study4b_supp}). For $n\in\{1000,2000,4000,8000\}$ and tilt magnitudes $u\in\{0,\pm0.5,\pm1,\pm2\}$ we draw exactly from $(1+u\,n^{-1/2}s)\,\mathrm dP_0$ by rejection sampling, under both flexible Super Learner and rate-clean oracle nuisances, with $R=500$ replications per cell, and evaluate $\widehat C(0.9)$ against the drifting truth. Coverage by $\widehat C(0.9)$ is at least $0.988$ in every one of the $56$ cells at nominal $0.95$, on both sides of the knot and in both nuisance arms, while the single-valued point selector degrades toward a coin toss along the tilts, correct in about $0.62$ of replications at the exact knot at $n=8000$ and about $0.50$ in the knot-adjacent tilt cells where the tilts render the two knot-adjacent counts indistinguishable, with mean report cardinality near two at the larger sample sizes (Supplementary Table~\ref{tab:sims:study4b}). An independently constructed structural perturbation family driving a separation parameter through the same knot gives the same behavior (Supplementary Section~\ref{sec:sims:honest_supp}).

A final design in Supplementary Section~\ref{sec:sims:study6_supp} matches the application regime directly, with reference-arm outcome $R^2$ near $0.05$ and total heterogeneity placed on a signal ladder around the estimated feature-error floor at samples up to $n=64000$, the regime of Remark~\ref{rem:noisefloor}. Below the floor there are no false detections at any $n$, the $W(1)$ band covers the truth honestly, and the $\rho$ scale is correctly never licensed. Detection then emerges with $n$ exactly as the floor logic predicts. The split-difference diagnostic $\widehat\Delta$ tracks the true half-sample feature error to within $0.82$ to $0.99$ across all cells, reading the floor without access to the truth. At large $n$ with strong signal the $W$-level coverage degrades exactly as the multiplicative identity $\rho_\delta(K)=\rho(K)\,W(1)/\{W(1)-\delta\}$ predicts, while evaluation resolutions placed away from the shifted knots keep their set-valued reports. The full design, implementation, diagnostics, and coverage table are reported there with Supplementary Table~\ref{tab:sims:study6}.

\section{Empirical application}\label{sec:application}

A recurring operational question for e-mail campaigns is whether they move a broad population uniformly or act on a few identifiable response segments. We analyze the MineThatData e-mail experiment \citep{hillstrom2008}, in which $N=64{,}000$ past customers were individually randomized in equal thirds to no e-mail (control, $n=21{,}306$), a men's merchandise e-mail ($n=21{,}307$), or a women's merchandise e-mail ($n=21{,}387$), with known and essentially flat arm shares near one third, so the feature correction uses the design propensities directly. The outcome is a site visit within two weeks of the campaign, with visit rates of $10.6\%$ under control, $18.3\%$ under the men's e-mail, and $15.1\%$ under the women's e-mail. We ask whether the visit response resolves into a few groups with distinct campaign profiles, and at what resolution the data support such a summary.

The causal feature is the $q=2$ vector of campaign benefits on the visit-rate scale, $U(x)=(\mu_{\mathrm{men}}(x)-\mu_{\mathrm{none}}(x),\,\mu_{\mathrm{women}}(x)-\mu_{\mathrm{none}}(x))$. Both coordinates are covariate-conditional visit-probability contrasts against control on a common scale, so the identity metric is the natural choice. The feature map itself is part of the estimand, and Supplementary Section~\ref{sec:app:hillstrom_anchored} reports a control-anchored variant that appends the baseline visit propensity to the two benefits, which yields a larger and clearly positive total heterogeneity and a finer resolution profile at the cost of mixing prognostic with causal variation. Arm-specific outcome regressions are estimated by a cross-fitted Super Learner ensemble \citep{vanderlaan2007super} over four base learners, a linear model, a ridge model, a random forest, and a gradient-boosted tree, formed within $5$ folds and averaged over $4$ repeats, with the correction applied at the known design propensities. All corrected summaries use ceiling $\overline K=6$ and $S=500$ posterior draws. The full specification and preprocessing are in Supplementary Section~\ref{sec:app:hillstrom_spec}.

Following the reporting protocol in Supplementary Section~\ref{sec:report}, we read the $W$-scale evidence first and apply the two-tier gate of Remark~\ref{rem:noisefloor}. The corrected total causal-feature heterogeneity is positive, $\widehat W(1)=9.69\times10^{-4}$ squared visit-rate points with $95\%$ simultaneous band $(1.57,\,17.80)\times10^{-4}$, and the parametric route agrees at $9.27\times10^{-4}$ with band $(2.01,\,16.53)\times10^{-4}$. The first tier, detection, is met since the band clears zero, and because the level shift biases downward the detection is one-sided conservative even though the lower band edge is small. The split-difference diagnostic returns $\widehat\Delta=5.85\times10^{-4}$ with bracket $[2.93,\,5.85]\times10^{-4}$, so the reliability ratio of the second tier is $\hat r=[\widehat\Delta/2,\widehat\Delta]/\widehat W(1)\approx[0.30,\,0.60]$, which is not small. Detection is therefore licensed within the maintained positive-heterogeneity regime of Assumption~\ref{ass:nuisance}, while the $\rho$ scale carries the shift-sensitivity qualification of the second tier, developed below.

\begin{figure}[!t]
\centering
\includegraphics[width=0.88\textwidth]{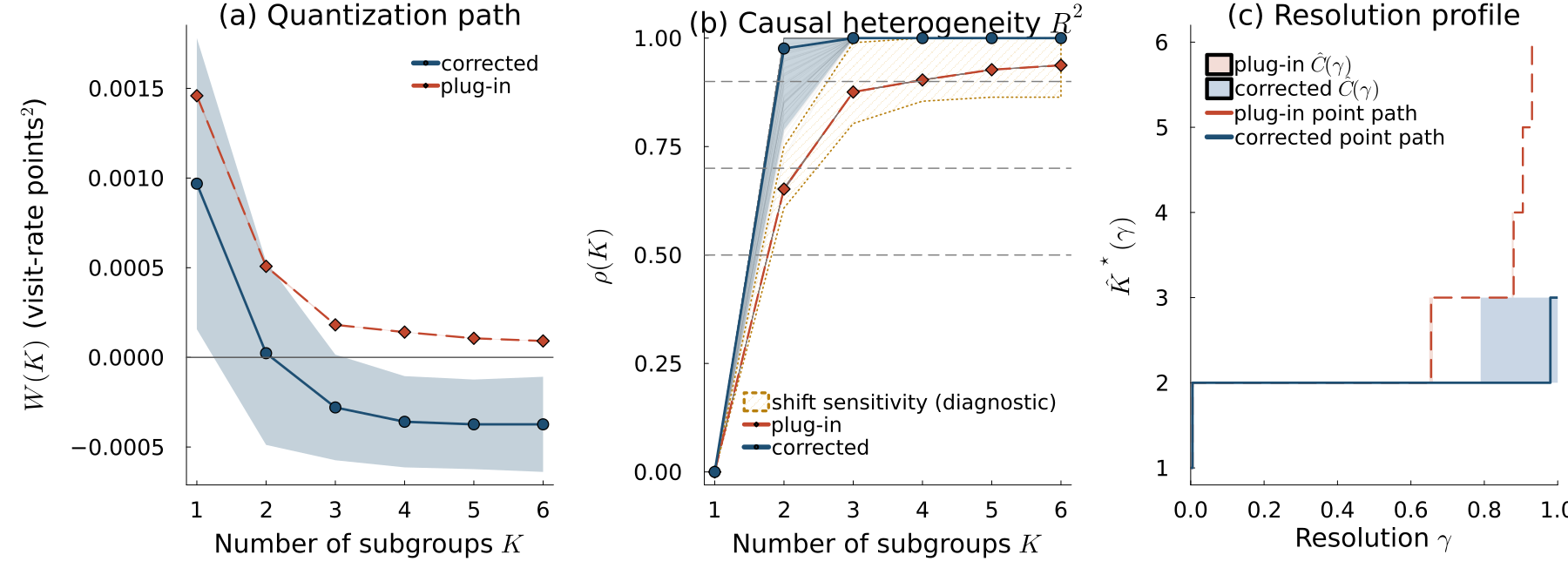}
\caption{MineThatData e-mail experiment corrected feature-law resolution summaries for the visit outcome. (a) Quantization path. Corrected causal-feature dispersion $\widehat W(K)$ with its $95\%$ simultaneous band, alongside the uncorrected plug-in path. (b) Causal heterogeneity $R^2$. Corrected $\hat\rho(K)$ with its $95\%$ band and the plug-in curve, with the gray fan showing the corrected and plug-in posterior draws at low opacity. The hatched overlay is the shift-sensitivity range $\rho_{\mathrm{adj}}(K;\delta)=1-\max\{\widehat W(K)+\delta,0\}/(\widehat W(1)+\delta)$ for $\delta\in[\widehat\Delta/2,\widehat\Delta]$, a diagnostic with no coverage claim attached rather than a confidence band. (c) Resolution profile. Point selector $\widehat K^\star(\gamma)$ with the set-valued report $\widehat C(\gamma)$ shaded, for the corrected and plug-in paths.}
\label{fig:hillstrom:resolution_summary}
\end{figure}

Figure~\ref{fig:hillstrom:resolution_summary} summarizes the corrected path. The raw corrected causal heterogeneity $R^2$ reads $\hat\rho(2)=0.976$ with band $(0.788,\,1.000)$ and $\hat\rho(3)=1.000$, which taken literally would place nearly all resolvable structure in two groups. Because the reliability ratio is not small, the second tier requires the shift-sensitivity reading, shown in panel (b) as a diagnostic overlay at the two floor levels of the bracket, the multiplicative inflation of Supplementary Corollary~\ref{cor:levelshift_rho}. On this range two groups explain about $0.61$ to $0.75$ of the heterogeneity, from $\widehat W(2)=2.3\times10^{-5}$, and three groups about $0.80$ to $0.99$, from $\widehat W(3)=-2.8\times10^{-4}$. No coverage claim attaches to it (Supplementary Section~\ref{sec:noisefloor_supp}). The set-valued report in panel (c) is the primary inferential summary and is valid under the maintained Assumption~\ref{ass:nuisance}. The corrected report reads $\widehat C(\gamma)=\{2\}$ at $\gamma=0.25$, $0.50$, and $0.70$, and $\{2,3\}$ at $\gamma=0.80$ and $0.90$, never touching the ceiling $\overline K=6$. Because $\rho_{\mathrm{adj}}(2)$ straddles $\gamma=0.70$, the two-group reading is not shift-robust there, so we do not advertise the singleton $\widehat C(0.70)=\{2\}$ as a standalone nominal statement. The defensible readings are the coarse-$\gamma$ confidence sets, the $\{2,3\}$ sets at fine $\gamma$, and their union, which together summarize the visit response at two to three groups. The uncorrected plug-in path overstates the base level, $\widehat W_{\mathrm{plug}}(1)=14.59\times10^{-4}$, and does not decline to the noise floor, so its point selector crosses at finer counts, $\widehat K^\star_{\mathrm{plug}}(0.70)=3$ and $\widehat K^\star_{\mathrm{plug}}(0.90)=4$, where the corrected report holds at $2$ and $\{2,3\}$.

\begin{figure}[!t]
\centering
\includegraphics[width=0.72\textwidth]{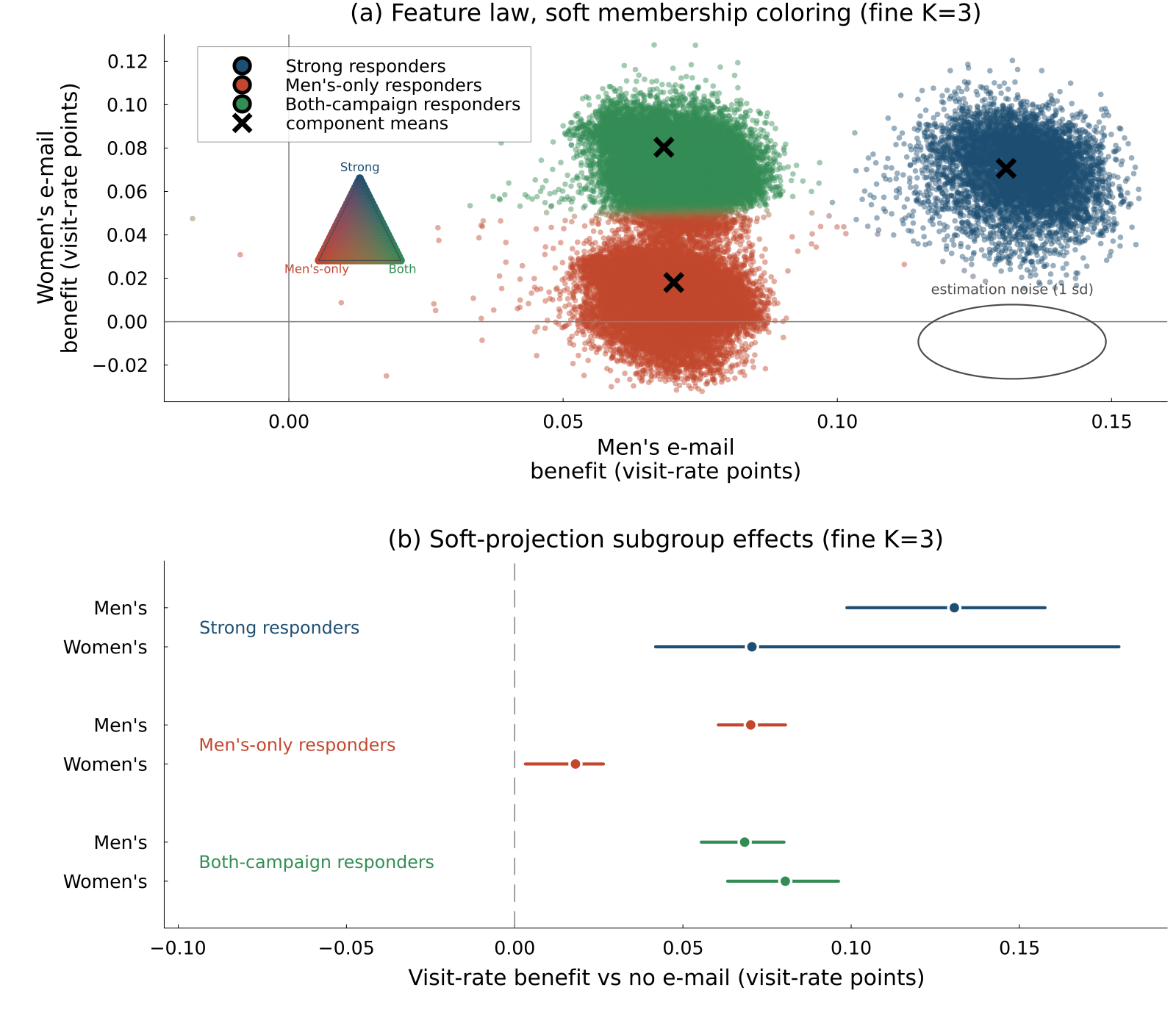}
\caption{MineThatData e-mail experiment fine $K=3$ soft-projection working summary for the visit outcome. (a) Estimated causal feature law with soft membership coloring. Each point is a cross-fitted campaign-benefit pair, the men's and women's e-mail visit benefits against control, colored by the membership-weighted blend of the three component colors, the weights being that unit's soft memberships $r_1,r_2,r_3$, so blended hues mark ambiguous membership. Blends are almost absent because the memberships are near binary, $98.9\%$ of units placing weight above $0.9$ on one component, which is the visual counterpart of that diagnostic. The three component means are marked by black crosses, the inset triangle is the membership simplex with vertices labeled by component, and the ellipse marks the per-coordinate estimation-noise scale implied by $\widehat\Delta$, radius $\sqrt{\widehat\Delta/q}$ per coordinate, so the within-component spread matches the estimation-noise scale while the between-component separations dwarf it. (b) Soft-projection subgroup effects, shown as a forest plot. Rows are the contrasts $\hat\psi_{h,a}(K)-\hat\psi_{h,0}(K)$ of \eqref{eq:effects} for arm $a$ the men's or women's campaign, the corrected visit-rate benefit against control within component $h$, with the rows grouped and colored by component and named by campaign on the vertical axis. The point marks the estimate and the single horizontal segment is the $95\%$ equal-tailed posterior interval from the per-draw recomputation of the projection. The vertical dashed line marks zero and the horizontal axis is in visit-rate points. The components are the strong responders, the men's-only responders, and the both-campaign responders, of soft shares $0.108$, $0.545$, and $0.347$.}
\label{fig:hillstrom:working_summary}
\end{figure}

The two subgroup summaries below, the coarse $K=2$ and fine $K=3$ reports, the latter at the fine end of $\widehat C(0.80)=\{2,3\}$, are soft mixture projections of the estimated feature law, the estimands of Section~\ref{sec:soft}. The fitted family is the $K$-component Gaussian mixtures on the benefit plane with a common spherical scale, mixing weights floored at $0.01$, and component means confined to a box around the estimated features. We declare the common scale at the feature-noise floor $\sigma=0.008$, a resolution constant of the description rather than a fitted quantity since the corrected criterion admits no interior scale, with the rationale and sensitivity profile in Supplementary Section~\ref{sec:app:hillstrom_k3}.

The fine $K=3$ projection in Figure~\ref{fig:hillstrom:working_summary} is the main interpretive summary. We report the fitted subgroup means $\hat\psi_{h,a}(K)$ of \eqref{eq:effects}, with arm $a=0$ the no e-mail control, $a=m$ the men's campaign, and $a=w$ the women's campaign, and read each component through its campaign contrasts against control. A small strong-responder component ($h=1$) raises visits under both campaigns, with $\hat\psi_{1,m}(3)-\hat\psi_{1,0}(3)=+0.131$, interval $(0.099,\,0.158)$, and $\hat\psi_{1,w}(3)-\hat\psi_{1,0}(3)=+0.071$, interval $(0.042,\,0.180)$. A large men's-only component ($h=2$) responds to the men's campaign alone, with $\hat\psi_{2,m}(3)-\hat\psi_{2,0}(3)=+0.070$, interval $(0.061,\,0.081)$, against $\hat\psi_{2,w}(3)-\hat\psi_{2,0}(3)=+0.018$, interval $(0.003,\,0.026)$. A both-responder component ($h=3$) responds to both, with $\hat\psi_{3,m}(3)-\hat\psi_{3,0}(3)=+0.068$, interval $(0.055,\,0.080)$, and $\hat\psi_{3,w}(3)-\hat\psi_{3,0}(3)=+0.080$, interval $(0.063,\,0.096)$. The soft shares of the three components are $0.108$, interval $(0.020,\,0.142)$, then $0.545$, interval $(0.383,\,0.668)$, then $0.347$, interval $(0.240,\,0.513)$. The men's-campaign effects of all three components and the women's-campaign effects of the men's-only and both-responder components are sharply determined, while the strong-responder component's women's-campaign effect and its share carry visibly wider intervals, reflecting draws in which that small component's weighted mass runs low. The mass split between the two adjacent large components remains the least determined quantity, their share intervals overlapping. The coarse $K=2$ projection spreads the strong-responder mass across the two remaining components, mostly into the both-responder one. It reads a men's-only component of share $0.579$, interval $(0.374,\,0.699)$, at $\hat\psi_{1,m}(2)-\hat\psi_{1,0}(2)=+0.076$ and $\hat\psi_{1,w}(2)-\hat\psi_{1,0}(2)=+0.019$, and a both-responder component of share $0.421$, interval $(0.301,\,0.626)$, at $+0.076$ and $+0.082$ (Supplementary Section~\ref{sec:app:hillstrom_k3}). The soft memberships are near binary, with $98.9\%$ of units placing weight above $0.9$ on one component at $K=3$ and $98.3\%$ at $K=2$, which Figure~\ref{fig:hillstrom:working_summary}(a) shows as the near-absence of blended color. Because these are the soft projections of Section~\ref{sec:soft}, the effect intervals are covered by Theorem~\ref{thm:effects} and the share intervals by Supplementary Theorem~\ref{thm:projection}. Corollary~\ref{cor:fixedgamma} licenses the coarse report at the prespecified thresholds, and because $\widehat C(0.80)$ is not a singleton we report both supported counts. Averaged over the population, the corrected arm contrasts against control are $+0.0761$ visit-rate points, interval $(0.0692,\,0.0823)$, for the men's e-mail and $+0.0454$, interval $(0.0392,\,0.0516)$, for the women's, both positive with the men's campaign the stronger.

The same experiment also records a two-week spending outcome in dollars, which behaves oppositely under the gate. On the spending scale the corrected total heterogeneity is $\widehat W(1)=-0.14$ with band $(-0.66,\,0.37)$ straddling zero, while the noise-floor diagnostic returns $\widehat\Delta=1.55$ with bracket $[0.77,\,1.55]$, so the band does not clear zero, the first tier of the gate fails, and no resolution beyond a single group is supported (Supplementary Section~\ref{sec:app:hillstrom_spend}). Within one trial the gate thus separates an informative outcome, the visit response, from an uninformative one, the noisy dollar response.

An informative resolution report is one in which the set-valued reports resolve and the exhaustion of resolvable structure is made explicit. Had the corrected $\widehat W$ path stayed positive through $K=\overline K$, the reading would be that six groups do not suffice, the set-valued reports would run to the ceiling, and no usable summary would emerge. The visit analysis delivers the complete arc instead. Detection clears the first tier, the reliability ratio keeps the $\rho$ scale under the second tier's qualification, and the confidence sets resolve at two to three groups without reaching the ceiling. The working structure is interpretable, a large group that responds to the men's campaign alone and the remainder that responds to both campaigns. The estimated feature law, the shift-sensitivity overlay, and the set-valued report tell one story, and the noise-floor diagnostic reconciles them. This is the reporting standard the paper argues for.


\section{Discussion}\label{sec:discussion}

We proposed to replace the question ``how many causal subgroups are there?'' by ``how many subgroups does each level of descriptive resolution require?'', a shift that makes the target a well-defined population functional and the feature-law posterior its inferential instrument, calibrated under the stated nuisance rates. The uniform conditional Bernstein--von Mises theorem converts posterior credible statements for the quantization path, the resolution profile, fixed-resolution summaries, and subgroup effects into asymptotically valid frequentist statements, and the set-valued profile report retains locally uniform validity over exactly the root-$n$ perturbation classes on which Theorem~\ref{thm:impossibility} shows single-valued selection must fail. The penalized report gives an optional price-indexed reading of the same path, but the primary estimand is the threshold resolution profile.

The practical reporting standard is therefore a resolution-profile plot, set-valued counts at scientifically chosen resolutions, and fixed-resolution summaries or subgroup effects only after the supported resolution has been stated, with a penalized report added when an effect-scale cost for one additional subgroup is interpretable. The MineThatData trial illustrates the standard. On the visit response the data indicate clear average campaign effects and a summary supported at two to three groups, a large group responding to the men's campaign alone, the remainder responding to both, and at fine resolution a small strong-responder group raising visits under both campaigns. Beyond three groups the split-difference diagnostic places any remaining structure below the noise floor, and no finer count is offered. The resolution-indexed report thus states which structure the data support and where that support ends.

Several limitations define the scope of the present theory. Study~2 separates two finite-sample mechanisms at high resolution, a total-heterogeneity bias from learners that miss the required regression rate, removed entirely by a rate-compliant learner, and the downward bias of empirical minima at fine resolution, which persists under every learner including oracle nuisances because the true risk decrements at overfitted resolutions are small relative to it. Inference should therefore lean on the set-valued profile, whose validity along drifting sequences Study~4 verifies directly. The main Gaussian path results assume the margin condition, and mixed laws with isolated atoms require localized versions, with root-$n$ path inference possibly depending on whether atoms lie on optimal cell boundaries. The fixed ceiling $\overline K$ could grow only with joint control of quantization entropy, margin constants, and shrinking knot gaps. Finally, distribution functions, quantiles, and threshold exceedance probabilities of treatment effects are irregular without smoothing, and extending resolution-profile inference to these mixed, adaptive, and irregular settings is a natural next step.
The level-shift structure of Remark~\ref{rem:noisefloor} points to a further direction. Estimators that recentre the quantization path, whether by the split-difference construction of Supplementary Section~\ref{sec:noisefloor_supp} or by higher-order U-statistic corrections in the manner of \cite{robins2008}, can remove the second-order level bias that a common feature-estimation error imposes on $W(\cdot)$. Such corrections do not touch the efficiency-bound sampling variance of the quadratic functionals, so their value is interpretive, sharpening the reading of the path level rather than providing a coverage guarantee at small samples. We leave a full development of recentred path inference to future work.

\section*{Acknowledgement}
Research in this article was supported by the United States National Institutes of Health (NIH), National Heart, Lung, and Blood Institute (NHLBI, grant number R01-HL168202). All statements in this report, including its findings and conclusions, are solely those of the authors and do not necessarily represent the views of the NIH.

\singlespacing
\bibliographystyle{Chicago}
\bibliography{literature}
\singlespacing

\newpage
\appendix
\setcounter{page}{1}
\setcounter{section}{0}
\renewcommand{\thesection}{S\arabic{section}}
\renewcommand{\appendixname}{Supplementary Material}

\section{Additional estimands and reporting protocol}
\subsection{Population geometry of the penalized profile}\label{sec:geometry_penalized_profile_supp}

The resolution profile reads the quantization path by fixing the desired approximation quality. The same path also admits an optional price-indexed profile.  
Here, we use the penalized profile only as a complementary interpretation of the same values $\{W(K)\}_{K\le\overline K}$. The threshold dial asks how many groups are needed to explain a prespecified fraction of causal heterogeneity, while the price dial asks whether the marginal reduction in residual dispersion is worth the scientific and inferential cost of another subgroup. 

For a price $\tau>0$ assigned to one additional reported subgroup, 
\begin{equation}\label{eq:Qdef}
    Q(\tau)=\min_{1\le K\le\overline K}\{W(K)+\tau K\},
    \quad
    \calS_0(\tau)=\argmin_{1\le K\le\overline K}\{W(K)+\tau K\},
    \quad
    K^\dagger(\tau)=\min\calS_0(\tau).
    \end{equation}
Here $\tau$ is best interpreted as a shadow price for complexity, not as the literal cost of adding another row to a trial report.  Although the mechanical cost of listing more clusters is often negligible, naming an additional subgroup carries scientific and inferential costs. The groups become smaller, subgroup effects are estimated with more uncertainty, interpretation becomes harder, and downstream users may treat the groups as clinically distinct.  Thus $\tau$ indexes how much reduction in expected within-group causal dispersion is required before an additional subgroup is treated as worth distinguishing.

\begin{definition}[Active orders, merge scales, persistence]\label{def:merge}
An order $K$ is active if $\calS_0(\tau)=\{K\}$ for some $\tau>0$.  Write the active set as $\calS=\{K_{(1)}<\cdots<K_{(J)}\}$.
For consecutive active orders, define the merge scale
\begin{equation*}
\tau_{(j)}
=
\frac{W(K_{(j)})-W(K_{(j+1)})}
{K_{(j+1)}-K_{(j)}}>0,
\qquad j=1,\dots,J-1.
\end{equation*}
This is the price at which the larger summary $K_{(j+1)}$ ceases to be worth its additional complexity and merges into the coarser summary $K_{(j)}$.  With the conventions $\tau_{(0)}=\infty$ and $\tau_{(J)}=0$, the persistence interval of $K_{(j)}$ is $(\tau_{(j)},\tau_{(j-1)})$, and its persistence is the length of this interval.  The adjacent decrements $d_K=W(K-1)-W(K)$, $2\le K\le\overline K$, are the usual elbow-plot ordinates; they coincide with merge scales only when consecutive integers are active.
\end{definition}

A concrete calibration comes from the outcome scale.  In a blood-pressure trial with scalar feature $U(X)$ equal to the treatment effect on systolic blood pressure, measured in mm Hg, $W(K)-W(K+1)$ is the population-averaged reduction in squared residual treatment-effect dispersion obtained by adding one subgroup.  If two subgroups with prevalences $\omega$ and $1-\omega$ differ in mean treatment effect by a clinically meaningful amount $\delta$, then merging them contributes approximately $\omega(1-\omega)\delta^2$ to the risk; for equally prevalent groups this is $\delta^2/4$.  Thus, when a minimal clinically important difference $\delta_{\mathrm{MCID}}$ is available, values of $\tau$ on the scale of $\omega(1-\omega)\delta_{\mathrm{MCID}}^2$ ask whether an additional subgroup explains heterogeneity comparable to a clinically meaningful between-group contrast, rather than merely improving fit by a statistically detectable but practically negligible amount.

\begin{remark}[Two dials on one path]\label{rem:twodials}
The threshold profile $K^\star(\gamma)$ and the penalized profile $K^\dagger(\tau)$ are complementary readings of the same quantization path $\{W(K)\}_{K\le\overline K}$.  The threshold dial is scale-free and asks how many subgroups are needed to explain fraction $\gamma$ of causal heterogeneity.  The price dial carries effect-size units and asks how many subgroups are worth reporting when one additional subgroup must reduce within-group dispersion by $\tau$.  The two readings differ in which orders they visit. The threshold profile visits every strictly improving order, whereas the penalized profile visits only envelope orders.  Both are deterministic functionals of the same path, so every posterior draw of the path delivers both at no additional computational cost.  Proposition~\ref{prop:envelope} shows that the penalized profile is the lower-envelope geometry of the path rather than an additional model-selection rule layered on top of it.
\end{remark}

The next proposition is elementary, but conceptually important.  It shows that the penalized profile is not an additional model-selection rule layered on top of the path.  It is the lower-envelope geometry of the same path $\{W(K)\}_{K\le\overline K}$.  Active orders are the slopes of this envelope, merge scales are its kinks, and persistence intervals are its facets.

\begin{proposition}[Population geometry of the penalized profile]\label{prop:envelope}
The function $Q$ is concave, nondecreasing, and piecewise linear on $(0,\infty)$, with at most $\overline K$ linear pieces.  The penalized profile $K^\dagger$ is a nonincreasing step function.  Moreover:
\begin{enumerate}[label=(\roman*),leftmargin=2em]
\item At every $\tau$ where $Q$ is differentiable, the optimal order is unique and equals the slope, $\calS_0(\tau)=\{Q'(\tau)\}$.
At a kink $\tau_{(j)}$, the superdifferential of $Q$ is the interval $[K_{(j)},K_{(j+1)}]$, whose integer endpoints are the two merging active orders.

\item An order $K$ is active if and only if $(K,W(K))$ is a strict vertex of the greatest convex minorant of the finite sequence $K\mapsto W(K)$ on $[\overline K]$.  Thus orders lying strictly above this minorant are optimal at no price.  In particular, $K=1$ is always active, and any order with $W(K)=W(K-1)$ is inactive.

\item The merge scales are strictly decreasing, $\tau_{(1)}>\cdots>\tau_{(J-1)}>0$,
and $\calS_0(\tau)=\{K_{(j)}\}$ if and only if $\tau\in(\tau_{(j)},\tau_{(j-1)})$.
Consequently, the penalized profile jumps exactly at the merge scales and skips inactive orders.  No convexity or strict-decrement condition on the raw sequence $W(1),\dots,W(\overline K)$ is required; flat or nonconvex stretches simply produce inactive orders.

\item Only the active orders determine the lower envelope, $Q(\tau)=\min_{K\in\calS}\{W(K)+\tau K\}.$
Conversely, on the active set, $W(K)=\sup_{\tau>0}\{Q(\tau)-\tau K\}$.
\end{enumerate}
\end{proposition}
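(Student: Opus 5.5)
The plan is to treat $Q$ as the pointwise minimum of the finitely many affine functions $\tau\mapsto W(K)+\tau K$, $K\in[\overline K]$, and read every claim off this lower-envelope structure. First, concavity and piecewise linearity are immediate, since a minimum of affine functions is concave and piecewise linear with at most $\overline K$ pieces. Each piece has slope $K\ge1>0$, so $Q$ is nondecreasing. The pieces appearing from left to right in $\tau$ have decreasing slopes, because $Q$ is concave. On the open interior of a piece the minimizer is unique and equals the slope. This gives the first half of (i) and shows that $K^\dagger$ is a nonincreasing step function. At a kink $\tau$, several lines tie. The superdifferential of a concave min of affine functions at $\tau$ is the convex hull of the slopes of the active lines, $\mathrm{conv}\,\calS_0(\tau)$. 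So I must show that $\min\calS_0(\tau)$ and $\max\calS_0(\tau)$ are the two consecutive active orders $K_{(j)},K_{(j+1)}$. This comes out of step (ii)--(iii) below.

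For (ii), I would use the standard Legendre duality between the sequence and its greatest convex minorant $\breve W$, the largest convex function on $[1,\overline K]$ lying below the points $(K,W(K))$. Note that $\min_K\{W(K)+\tau K\}=\min_K\{\breve W(K)+\tau K\}$, since a linear function attains its minimum over the convex hull of the epigraph at a vertex. For the forward direction, suppose $K$ is uniquely optimal at some $\tau$. Then the line of slope $-\tau$ through $(K,W(K))$ supports the point set strictly, so $(K,W(K))$ is an exposed, hence strict, vertex of the minorant. For the converse, a strict vertex has a nondegenerate interval of supporting slopes, and any interior slope exposes it uniquely. Then $K=1$ is active: it is the leftmost point and hence always a vertex, and it is optimal for every $\tau$ larger than all the differences $W(1)-W(K)$ divided by $K-1$. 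If $W(K)=W(K-1)$, then $K$ is dominated by $K-1$ at every $\tau>0$, so it is never uniquely optimal.

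For (iii), list the vertices $K_{(1)}<\dots<K_{(J)}$. The chord slopes of $\breve W$ between consecutive vertices are $-\tau_{(j)}$. Strict convexity at the vertices gives strictly increasing chord slopes, that is, $\tau_{(1)}>\dots>\tau_{(J-1)}$. Positivity of the $\tau_{(j)}$ needs $W(K_{(j)})>W(K_{(j+1)})$. This holds because $W$ is nonincreasing and a vertex with a nonnegative chord to its right could not be optimal for any $\tau>0$. The set of $\tau$ exposing $K_{(j)}$ is then exactly the open interval between the adjacent negated chord slopes, namely $(\tau_{(j)},\tau_{(j-1)})$, with the conventions $\tau_{(0)}=\infty$ and $\tau_{(J)}=0$. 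At $\tau_{(j)}$ the supporting line contains precisely the minorant edge from $K_{(j)}$ to $K_{(j+1)}$. So $\calS_0(\tau_{(j)})$ is the set of orders $K$ whose points lie on that edge. Its extremes are $K_{(j)}$ and $K_{(j+1)}$, which finishes the kink claim in (i). The jump claims for $K^\dagger$ follow. Orders lying strictly above the minorant are never optimal, and orders on an edge but not at a vertex are optimal only at that kink, so they are inactive. No assumptions on the raw sequence enter.

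For (iv), the first identity holds because every minimizer at a non-kink $\tau$ is active, and the equality extends to kinks by continuity of both sides. The second is the biconjugate statement. For $K$ active, $\sup_\tau\{Q(\tau)-\tau K\}\le W(K)$ since $Q(\tau)\le W(K)+\tau K$, and equality holds for $\tau$ in the persistence interval of $K$. For $K=1$ this requires taking a supremum as $\tau\to\infty$, which is fine since the value is attained on $(\tau_{(1)},\infty)$.

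I expect the main obstacle to be the bookkeeping in (ii)--(iii). This means carefully matching "unique optimizer at some $\tau$" with "strict vertex" when several points are collinear on a minorant edge, and getting positivity of the merge scales using monotonicity of $W$.
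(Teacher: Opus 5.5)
Your argument is correct in substance, but it works on the dual side from the paper's.

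The paper never forms the lower hull of the points $(K,W(K))$. It uses two elementary facts about the envelope in $\tau$. First, an optimal order is a supergradient of $Q$. Second, adding the optimality inequalities at $\tau<\tau'$ shows that every selection from $\calS_0$ is nonincreasing. It proves (ii) by writing activity as the explicit chord test $\max_{L>K}\{W(K)-W(L)\}/(L-K)<\min_{L<K}\{W(L)-W(K)\}/(K-L)$, with the conventions $\max_\emptyset=0$ and $\min_\emptyset=\infty$. It gets (iii) by tiling $(0,\infty)$ minus the kinks with the open persistence intervals produced in (i).

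Your route treats $Q$ as the lower support function of the hull in direction $(\tau,1)$. Activity becomes exposedness, and merge scales become negated edge slopes, so their strict decrease is just strict convexity at vertices. The tie set at $\tau_{(j)}$ is read off exactly as the set of orders lying on the hull edge. That is slightly sharper than the paper's $\calS_0(\tau_{(j)})\subseteq[K_{(j)},K_{(j+1)}]$.

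The price is that you import the formula $\partial Q(\tau)=\mathrm{conv}\,\calS_0(\tau)$ and the exposed-face facts. You must also check separately that the exposing slope is negative, i.e.\ $\tau>0$. The paper's inequalities handle this automatically, because monotonicity of $W$ makes every chord ratio nonnegative.

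That check is exactly where your converse in (ii) needs a convention. You treat the endpoints of $[1,\overline K]$ as automatically vertices, as you do for $K=1$. At $K=\overline K$, however, the supporting slopes form $[s_{\mathrm{left}},\infty)$. If $W(\overline K-1)=W(\overline K)$, then $s_{\mathrm{left}}=0$ and no exposing slope is negative. So $\overline K$ is a hull vertex that is inactive, which is the very case your own domination remark excludes.

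The fix is to adopt the paper's convention that the right endpoint carries an outgoing chord of slope $0$ (its $\max_\emptyset=0$). Equivalently, require exposure by a slope in $(-\infty,0)$. At interior strict vertices, monotonicity of $W$ already makes the interior supporting slopes negative. The same convention, together with the observation that the minorant is flat to the right of $K_{(J)}$, is what yields the last persistence interval $(0,\tau_{(J-1)})$.

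Finally, in (iii) you say a vertex with a nonnegative chord to its right cannot be optimal. In fact a flat chord between consecutive listed orders makes the right-hand order $K_{(j+1)}$ dominated by $K_{(j)}$, not the left-hand one. Your positivity conclusion is unaffected.
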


Proposition~\ref{prop:envelope} gives a population version of the elbow heuristic.  A naive elbow plot displays the adjacent decrements $d_K$, but the price-indexed profile is governed by the slopes of the greatest convex minorant of $K\mapsto W(K)$.  Orders above that minorant are never optimal for any price and should not be treated as stable subgroup resolutions.  A pronounced elbow at an active order is instead a long persistence interval: the order remains uniquely best over a wide range of prices.  Corollary~\ref{cor:penalized} below turns this geometry into merge-scale inference and simultaneous set-valued coverage for the penalized profile.

\subsection{Inference for the optional penalized profile}\label{sec:penalized_supp_inference}

The penalized profile inherits inference from the same path; this is the formal result supporting the optional report.  Write $\Sigma_W=\bigl[\Cov\{\G_0(g_{c^\star(K)}),\G_0(g_{c^\star(L)})\}\bigr]_{K,L\le\overline K}$ and fix a reporting range $[\tau_{\min},\tau_{\max}]\subset(0,\infty)$ whose endpoints are not merge scales.  Let $\calS_{\mathrm{rep}}$ collect the active orders whose persistence intervals meet the range, and let the hatted and per-draw versions be computed from $\widehat W(\cdot)$ and $W^{(s)}(\cdot)$.

\begin{corollary}[Penalized-profile inference]\label{cor:penalized}
Let the conditions of Corollary~\ref{cor:path} hold.
\begin{enumerate}[label=(\alph*),leftmargin=2em]
\item \textup{(Merge scales.)} If every inactive order lies strictly above the greatest convex minorant of $K\mapsto W(K)$, then $\Pp(\widehat\calS_{\mathrm{rep}}=\calS_{\mathrm{rep}})\to1$ and $\Pp_w(\calS^{(s)}_{\mathrm{rep}}=\calS_{\mathrm{rep}})\pto1$.  Jointly over the merge scales in the reporting range,
\begin{equation*}
\sqrt n\bigl(\hat\tau_{(j)}-\tau_{(j)}\bigr)_j\ \dto\ N\bigl(0,\,D\Sigma_WD^\top\bigr),
\qquad
\sqrt n\bigl(\tau^{(s)}_{(j)}-\hat\tau_{(j)}\bigr)_j\ \dtow\ N\bigl(0,\,D\Sigma_WD^\top\bigr),
\end{equation*}
where $D$ maps the active path values to the consecutive slopes in Definition~\ref{def:merge}.
\item \textup{(Simultaneous set-valued coverage.)} Assume $\Var\{\G_0(g_{c^\star(K)})\}>0$ for all $K\le\overline K$.  Let $(L_W,U_W)$ be the simultaneous posterior band for $W(\cdot)$ constructed as in \eqref{eq:band}, and define
\begin{equation}\label{eq:penset}
\widehat C^\dagger(\tau)\;=\;\Bigl\{K\le\overline K:\ L_W(K)+\tau K\ \le\ \min_{K'\le\overline K}\bigl\{U_W(K')+\tau K'\bigr\}\Bigr\}.
\end{equation}
Then
\begin{equation*}
\liminf_{n\to\infty}\ \Pp\Bigl\{\calS_0(\tau)\subseteq\widehat C^\dagger(\tau)\ \text{ for every }\tau\in(0,\infty)\Bigr\}\ \ge\ 1-\alpha .
\end{equation*}
\end{enumerate}
\end{corollary}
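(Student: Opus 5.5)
Both parts reduce to Corollary~\ref{cor:path} together with the lower-envelope geometry of Proposition~\ref{prop:envelope}. The objects $\widehat\calS_{\mathrm{rep}}$, $\hat\tau_{(j)}$ and $\widehat C^\dagger$ are deterministic functionals of the finite vector $(W(K))_{K\le\overline K}$. So the plan is to work in $\R^{\overline K}$ with the joint limits
\[
\sqrt n(\widehat W-W)\dto N(0,\Sigma_W),\qquad \sqrt n(W^{(s)}-\widehat W)\dtow N(0,\Sigma_W),
\]
and to apply finite-dimensional continuity and delta arguments. Theorem~\ref{thm:delta} is not needed in its process form, since the maps here act on $\R^{\overline K}$.

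For part (a), the first step is a stability claim for the active set. Under the hypothesis that every inactive order lies strictly above the greatest convex minorant, the active set is determined by finitely many strict inequalities in $W$. These say that each active $K$ is a strict vertex, and that each inactive $K$ lies a positive distance above the chord between its neighbouring active vertices. Strict inequalities are open conditions, so there is a neighbourhood $N$ of $W$ on which every vector has the same set of strict vertices. The active set restricted to the reporting range is also constant on $N$: its endpoints are not merge scales, and the merge scales depend continuously on $W$ on $N$.

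Since $\widehat W\pto W$, we get $\Pp(\widehat W\in N)\to1$. Since $W^{(s)}-\widehat W=\Op(n^{-1/2})$ conditionally in probability, we get $\Pp_w(W^{(s)}\in N)\pto1$. On these events each merge scale is the fixed linear map $D$ of Definition~\ref{def:merge} applied to the path vector. The joint normal limits then follow from linearity of $D$ and the continuous mapping theorem, and both the sampling and posterior statements inherit covariance $D\Sigma_WD^\top$. The one point needing care is a short argument that equality on an event of probability tending to one does not change conditional bounded-Lipschitz limits.

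For part (b), the first step is the simultaneous coverage of the band $(L_W,U_W)$ built as in \eqref{eq:band} from the $W$-draws. By Corollary~\ref{cor:path} and Theorem~\ref{thm:delta}, with the positive-variance assumption giving continuity and strict monotonicity of the sup-$t$ statistic's distribution, $\Pp(E_n)\to$ at least $1-\alpha$, where $E_n=\{L_W(K)\le W(K)\le U_W(K)\ \forall K\}$. A technical point is that the scale estimates $\hat\sigma$ computed from the posterior IQR must be consistent. That follows from the conditional weak convergence and the continuity of the Gaussian quantiles.

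The second step is deterministic. On $E_n$, take any $\tau>0$ and any $K\in\calS_0(\tau)$. Then
\[
L_W(K)+\tau K\le W(K)+\tau K=Q(\tau)\le W(K')+\tau K'\le U_W(K')+\tau K'
\]
for every $K'$. Hence $K\in\widehat C^\dagger(\tau)$ simultaneously for all $\tau$, so $E_n$ is contained in the coverage event. The liminf bound follows.

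The main obstacle is the stability step in (a): showing rigorously that strict-minorant conditions make the active set and its restriction to the reporting range locally constant. Once that holds, the rest is routine composition.
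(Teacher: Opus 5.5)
Your proposal is correct and matches the paper's proof essentially step for step. In (a) the paper also shows that every order passes or fails the strict-vertex test of Proposition~\ref{prop:envelope}(ii) with a positive margin: for an inactive $K$ between consecutive minorant vertices $K_1<K<K_2$, the chord computation gives $a(K)-b(K)\ge 2\eps_K/\overline K$. It then makes your openness argument quantitative through the Lipschitz continuity of the two test sides, and applies the fixed linear map $D$ on the selection event; (b) is exactly your band-event containment chain. The only difference is that the paper makes your neighbourhood explicit as the event $\{8\max_K|\widehat W(K)-W(K)|<m_0\}$, which also settles the stability step you flagged as the main obstacle.
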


The value process for $Q(\tau)$ is regular on compact sets avoiding merge scales, while it is not tight in neighborhoods of a merge scale; the formal statement is Corollary~\ref{cor:penalizedvalue} and Proposition~\ref{prop:nontight} below.  On the same band event used in Corollary~\ref{cor:penalized}\textup{(b)}, interval arithmetic gives simultaneous confidence intervals for all merge scales by bracketing the two path values that define each slope.  Pooling the max statistics for \eqref{eq:profileset} and \eqref{eq:penset} would make the threshold and penalized set-valued reports jointly valid; we report them separately for interpretability. We provide below further guarantees regarding the penalized profile.

\begin{corollary}[Penalized value process]\label{cor:penalizedvalue}
Let the conditions of Corollary~\ref{cor:path} hold.  For every compact $T\subset(0,\infty)$ containing no merge scale,
\begin{equation*}
\sqrt n\bigl\{\widehat Q(\tau)-Q(\tau)\bigr\}\ \dto\ \G_0\bigl(g_{c^\star(K^\dagger(\tau))}\bigr)
\quad\text{in }\ell^\infty(T),
\end{equation*}
and the conditional analogue holds for $\sqrt n\{Q^{(s)}(\tau)-\widehat Q(\tau)\}$.  At a merge scale $\tau_{(j)}$, under the nondegenerate envelope geometry of Corollary~\ref{cor:penalized}\textup{(a)},
\begin{equation*}
\sqrt n\{\widehat Q(\tau_{(j)})-Q(\tau_{(j)})\}\dto\min_{K\in\calS_0(\tau_{(j)})}\G_0(g_{c^\star(K)}),
\end{equation*}
where the tie set is the merging pair $\{K_{(j)},K_{(j+1)}\}$.  The limit at a merge scale is non-Gaussian and the posterior law is in general inconsistent for it \citep{fang2019}; this is why the main report uses the set-valued band inversion \eqref{eq:penset} rather than pointwise derivative inversion at kinks.
\end{corollary}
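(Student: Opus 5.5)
The plan is to reduce Corollary~\ref{cor:penalizedvalue} to a finite-dimensional statement about the path vector $\{W(K)\}_{K\le\overline K}$, and then apply the continuous mapping theorem and Theorem~\ref{thm:delta}. The map $\Lambda:\R^{\overline K}\to\ell^\infty(T)$, $w\mapsto\{\min_K(w_K+\tau K)\}_{\tau\in T}$, gives $\widehat Q=\Lambda(\widehat W)$, $Q^{(s)}=\Lambda(W^{(s)})$ and $Q=\Lambda(W)$. Corollary~\ref{cor:path} already supplies the joint limits $\sqrt n(\widehat W-W)\dto Z$ and $\sqrt n(W^{(s)}-\widehat W)\dtow Z$, where $Z=\{\G_0(g_{c^\star(K)})\}_K$. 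It therefore suffices to show that $\Lambda$ is Hadamard differentiable at $W$ with derivative $\zeta\mapsto\{\zeta_{K^\dagger(\tau)}\}_{\tau\in T}$ when $T$ avoids the merge scales, and only directionally differentiable, with derivative $\zeta\mapsto\min_{K\in\calS_0(\tau)}\zeta_K$, at a merge scale.

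For the compact $T$ without merge scales, I would use Proposition~\ref{prop:envelope}(iii): $\calS_0(\tau)$ is a singleton for every $\tau\in T$ and is locally constant in $\tau$. By compactness, $T$ meets only finitely many persistence intervals. On $T$ there is then a uniform gap
\[
\delta_T=\inf_{\tau\in T}\Bigl[\min_{K\ne K^\dagger(\tau)}\{W(K)+\tau K\}-\{W(K^\dagger(\tau))+\tau K^\dagger(\tau)\}\Bigr]>0 ,
\]
which is positive because the gap is continuous in $\tau$ and positive on each piece. The gap is lower semicontinuous as a minimum of continuous functions, and $T$ is compact with finitely many pieces.

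For $t_n\downarrow0$ and $\zeta_n\to\zeta$, once $t_n\|\zeta_n\|_\infty<\delta_T/2$ the minimizer of $W+t_n\zeta_n+\tau K$ equals $K^\dagger(\tau)$ simultaneously for all $\tau\in T$. Hence $\{\Lambda(W+t_n\zeta_n)-\Lambda(W)\}/t_n=\zeta_{n,K^\dagger(\tau)}$ exactly, and this converges uniformly on $T$. The derivative is linear and continuous on $\R^{\overline K}$. Theorem~\ref{thm:delta} (applied to $T\circ\iota$ composed with the path map, or the finite-dimensional delta method on top of Corollary~\ref{cor:path}) then yields both the unconditional limit in $\ell^\infty(T)$ and its conditional analogue.

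At a merge scale $\tau_{(j)}$ the same argument applies with the tie set $\calS_0(\tau_{(j)})=\{K_{(j)},K_{(j+1)}\}$. The nondegenerate envelope geometry of Corollary~\ref{cor:penalized}(a) ensures that every other order is strictly worse by a positive gap. For small perturbations the minimizer therefore lies in the tie set, giving the directional derivative $\zeta\mapsto\min_{K\in\{K_{(j)},K_{(j+1)}\}}\zeta_K$. The functional delta method for Hadamard directionally differentiable maps then gives $\sqrt n\{\widehat Q(\tau_{(j)})-Q(\tau_{(j)})\}\dto\min_K Z_K$ over the tie set.

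The non-Gaussianity of this limit is immediate when $\Var(Z_{K_{(j)}}-Z_{K_{(j+1)}})>0$. In that case the minimum of two jointly Gaussian variables that are not almost surely ordered is not Gaussian.

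The main obstacle is the claim that the posterior is inconsistent at the kink. For this step I would invoke \citet{fang2019}: the bootstrap is consistent only if the directional derivative is linear. Concretely, on the event where $\sqrt n(\widehat W_{K_{(j)}}-\widehat W_{K_{(j+1)}}-\tau_{(j)}(K_{(j+1)}-K_{(j)}))$ is at a nonzero limit value, the posterior of $\sqrt n(Q^{(s)}-\widehat Q)$ behaves like a shifted minimum of $Z$ whose law depends on that random limit. So the conditional law does not converge in probability to a fixed limit.

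The remaining bookkeeping is the sup-norm uniformity over $T$ and the conditional-in-probability mode. I would handle both by citing the deterministic continuity of $\Lambda$ together with the extended continuous mapping theorem for $\dtow$ (van der Vaart–Wellner, Section~3.9).
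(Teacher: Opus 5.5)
Your proposal is correct and rests on the same key fact as the paper's proof: a uniform positive gap $\inf_{\tau\in T}\bigl[\min_{K\ne K^\dagger(\tau)}\{W(K)+\tau K\}-Q(\tau)\bigr]>0$ on the compact set $T$ avoiding merge scales, together with a positive gap for orders outside the tie set at $\tau_{(j)}$, and the posterior inconsistency is likewise delegated to Fang and Santos. The difference is packaging: you feed the gap into deterministic perturbations $W+t_n\zeta_n$ to obtain (directional) Hadamard differentiability of $w\mapsto\min_K(w_K+\tau K)$ and then invoke the (bootstrap) delta method, whereas the paper applies the gap directly to $\widehat W$ and $W^{(s)}$ on selection events of probability tending to one, on which $\sqrt n\{\widehat Q-Q\}=Z_n(K^\dagger(\cdot))$ holds exactly, and concludes by the continuous mapping theorem.
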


\begin{proposition}[Non-tightness at merge scales]\label{prop:nontight}
Let the conditions of Corollary~\ref{cor:path} hold together with the nondegenerate envelope geometry of Corollary~\ref{cor:penalized}\textup{(a)}, let $\tau_{(j)}$ be a merge scale with $\Var\{\G_0(g_{c^\star(K_{(j)})})-\G_0(g_{c^\star(K_{(j+1)})})\}>0$, and write $\Delta K=K_{(j+1)}-K_{(j)}$. Then, for each fixed $\gamma>0$, along the drifting prices $\tau_n=\tau_{(j)}+\gamma n^{-1/2}$,
\begin{equation*}
\sqrt n\bigl\{\widehat Q(\tau_n)-Q(\tau_n)\bigr\}\ \dto\ \min\bigl\{\G_0(g_{c^\star(K_{(j)})}),\ \G_0(g_{c^\star(K_{(j+1)})})+\gamma\,\Delta K\bigr\},
\end{equation*}
a limit that depends on $\gamma$; consequently $\sqrt n\{\widehat Q-Q\}$ is not asymptotically tight in $\ell^\infty$ of any neighborhood of $\tau_{(j)}$, and no weak limit in $\ell^\infty$ exists over any set containing a one-sided neighborhood of a merge scale.
\end{proposition}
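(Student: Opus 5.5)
The plan is to make the min-over-orders structure of $\widehat Q$ explicit and then apply the joint path limit of Corollary~\ref{cor:path} together with the continuous mapping theorem. Write $\widehat Q(\tau)=\min_{K\le\overline K}\{\widehat W(K)+\tau K\}$ and $Z_n(K)=\sqrt n\{\widehat W(K)-W(K)\}$, so that $(Z_n(K))_{K\le\overline K}\dto(\G_0(g_{c^\star(K)}))_{K\le\overline K}$ jointly. For each order $K$ define the population gap $\Delta_n(K)=W(K)+\tau_nK-Q(\tau_n)\ge0$. Then
\begin{equation*}
\sqrt n\{\widehat Q(\tau_n)-Q(\tau_n)\}=\min_{K\le\overline K}\bigl\{Z_n(K)+\sqrt n\,\Delta_n(K)\bigr\}.
\end{equation*}
The whole first claim therefore reduces to the deterministic behavior of $\sqrt n\,\Delta_n(K)$.

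\textbf{Step 1: the deterministic gaps.} By Proposition~\ref{prop:envelope}(iii), $\calS_0(\tau)=\{K_{(j)}\}$ for $\tau\in(\tau_{(j)},\tau_{(j-1)})$. Since $\gamma>0$, for large $n$ we have $Q(\tau_n)=W(K_{(j)})+\tau_nK_{(j)}$, and hence $\Delta_n(K_{(j)})=0$. The merge-scale identity $W(K_{(j)})+\tau_{(j)}K_{(j)}=W(K_{(j+1)})+\tau_{(j)}K_{(j+1)}$ then gives $\Delta_n(K_{(j+1)})=(\tau_n-\tau_{(j)})\Delta K$, so $\sqrt n\,\Delta_n(K_{(j+1)})=\gamma\,\Delta K$ exactly.

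For every other order $K$, the value $W(K)+\tau_{(j)}K-Q(\tau_{(j)})$ is strictly positive. This holds for two reasons. Other active orders are not in $\calS_0(\tau_{(j)})$, because the merge scales are strictly decreasing. Inactive orders lie strictly above the greatest convex minorant, which is the nondegenerate geometry of Corollary~\ref{cor:penalized}(a). By continuity in $\tau$, $\Delta_n(K)\ge c>0$ for all large $n$, so $\sqrt n\,\Delta_n(K)\to\infty$.

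\textbf{Step 2: the limit.} Because the $Z_n(K)$ are $\Op(1)$, the orders from Step 1 with diverging gaps are eventually not the minimizer with probability tending to one. The map $(z_1,\dots,z_{\overline K})\mapsto\min\{z_{K_{(j)}},z_{K_{(j+1)}}+\gamma\Delta K\}$ is continuous. Combining this with Corollary~\ref{cor:path} and Slutsky's lemma yields the displayed limit
\begin{equation*}
\min\bigl\{\G_0(g_{c^\star(K_{(j)})}),\ \G_0(g_{c^\star(K_{(j+1)})})+\gamma\,\Delta K\bigr\}.
\end{equation*}
The same computation at $\gamma=0$ recovers Corollary~\ref{cor:penalizedvalue}. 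For $\gamma<0$ the roles of the two orders swap and the same argument applies. Write $L_\gamma$ for the law of the limit. I will check that $\gamma\mapsto L_\gamma$ is injective. With $G_1=\G_0(g_{c^\star(K_{(j)})})$ and $G_2=\G_0(g_{c^\star(K_{(j+1)})})$, the difference $G_1-G_2$ is a nondegenerate Gaussian by the variance hypothesis. Hence for $\gamma<\gamma'$ the two minima differ on the event $\{G_2+\gamma\Delta K<G_1\}$, which has positive probability. Moreover, $\min\{G_1,G_2+\gamma'\Delta K\}-\min\{G_1,G_2+\gamma\Delta K\}\ge0$ everywhere, so the two laws are stochastically ordered and distinct.

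\textbf{Step 3: non-tightness, the main obstacle.} The delicate part is turning this $\gamma$-dependence into a failure of asymptotic tightness, because tightness in $\ell^\infty(N)$ yields asymptotic equicontinuity only with respect to some semimetric $\varrho$ under which $N$ is totally bounded, and that semimetric need not be Euclidean. I would argue by contradiction, as follows.

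\begin{enumerate}[label=(\alph*),leftmargin=2em]
\item Suppose $X_n=\sqrt n(\widehat Q-Q)$ is asymptotically tight in $\ell^\infty(N)$, where $N$ contains $[\tau_{(j)},\tau_{(j)}+\epsilon)$. Then there is a semimetric $\varrho$, with $N$ totally bounded, such that for every $\eta>0$,
\begin{equation*}
\lim_{\delta\downarrow0}\limsup_n\Pp^*\Bigl(\sup_{\varrho(s,t)<\delta}|X_n(s)-X_n(t)|>\eta\Bigr)=0.
\end{equation*}
\item Fix $m$ distinct values $\gamma_1<\dots<\gamma_m$ and a $\delta>0$. Total boundedness covers $N$ by $M(\delta)$ balls of $\varrho$-radius $\delta/2$. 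Take $m>M(\delta)$. By pigeonhole, for every $n$ two of the points $\tau_{(j)}+\gamma_in^{-1/2}$ share a ball. Along a subsequence this pair $(i,i')$ is fixed.
\item By equicontinuity, $X_n(\tau_{(j)}+\gamma_in^{-1/2})-X_n(\tau_{(j)}+\gamma_{i'}n^{-1/2})$ is then small in probability as $\delta\downarrow0$.
\item By Step 2, applied jointly to both points, this difference converges in law to $\min\{G_1,G_2+\gamma_i\Delta K\}-\min\{G_1,G_2+\gamma_{i'}\Delta K\}$. That variable is nonzero with probability bounded below, uniformly over pairs from the fixed finite set $\{\gamma_1,\dots,\gamma_m\}$.
\end{enumerate}

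To make step (d) compatible with $\delta\downarrow0$, I would choose the $\gamma_i$ from a fixed countable set and run a diagonal argument over $\delta$. This contradiction shows that $X_n$ is not asymptotically tight. Since a weak limit in $\ell^\infty$ would have to be tight Borel and would force asymptotic tightness, no such weak limit exists.
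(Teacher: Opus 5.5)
Your Steps 1 and 2 are correct and are essentially the paper's argument. You use the exact identity $\sqrt n\{\widehat Q(\tau_n)-Q(\tau_n)\}=\min_K\{Z_n(K)+\sqrt n\,\Delta_n(K)\}$, eliminate every order outside the merging pair by its diverging gap, and apply continuous mapping to the pair.

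The gap is in Step 3, and it cannot be repaired by bookkeeping: points on the single scale $\tau_{(j)}+\gamma_i n^{-1/2}$ can never produce the contradiction. With $D=G_1-G_2$ and $0\le\gamma<\gamma'$, the limiting increment is
\begin{equation*}
\min\{G_1,G_2+\gamma'\Delta K\}-\min\{G_1,G_2+\gamma\Delta K\}=\min\bigl\{(D-\gamma\Delta K)_+,\ (\gamma'-\gamma)\Delta K\bigr\}.
\end{equation*}
So the probability that it exceeds $\eta$ is zero unless $(\gamma'-\gamma)\Delta K>\eta$, and in that case it equals $\Pp(D>\gamma\Delta K+\eta)$.

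To contradict equicontinuity you must fix $(\eta,\epsilon')$ first. That fixes $\delta$ and the covering number $M(\delta)$. You then need more than $M(\delta)$ values of $\gamma$ whose pairwise increments exceed $\eta$ with limiting probability at least $\epsilon'$. Such values must be spaced by more than $\eta/\Delta K$, and all but the largest must satisfy $\Pp(D>\gamma\Delta K+\eta)\ge\epsilon'$. Hence there are only finitely many of them, a number bounded in terms of $(\eta,\epsilon')$ and the law of $D$ alone, with no relation to $M(\delta)$.

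No countable set or diagonal argument over $\delta$ escapes this. A sequence of $\gamma$'s either accumulates, in which case the increments are at most $|\gamma-\gamma'|\Delta K\to0$, or escapes to infinity, in which case $\Pp(D>\gamma\Delta K+\eta)\to0$. Conceptually, on one $n^{-1/2}$ scale the rescaled limit $\gamma\mapsto\min\{G_1,G_2+\gamma\Delta K\}$ is $\Delta K$-Lipschitz and tight. The obstruction lives across scales.

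The missing idea is to compare a drifting point with points that do not drift.
\begin{enumerate}
\item Fix one $\gamma>0$ and $\eps>0$ with $p_1=\Pp\{D>\gamma\Delta K+2\eps\}>0$, and set $\tau_m=\tau_{(j)}+\gamma m^{-1/2}$.
\item For each fixed $m$, as $N\to\infty$ the price $\tau_m$ sits a fixed distance to the right of the merge scale. The competitor order is therefore eliminated and $X_N(\tau_m)=Z_N(K_{(j)})+\opro(1)$, while $X_N(\tau_N)$ has your Step 2 limit.
\item Hence $\liminf_N\Pp\{|X_N(\tau_N)-X_N(\tau_m)|>\eps\}\ge p_1$, with the same constants for every $m$.
\item Asymptotic equicontinuity at $(\eps,p_1/2)$ gives a radius $r$ such that, for each fixed $m$, $\varrho(\tau_N,\tau_m)\ge r$ for all large $N$.
\item Choosing $N_1<N_2<\cdots$ inductively then produces an infinite $r$-separated set $\{\tau_{N_k}\}$, contradicting total boundedness.
\end{enumerate}
This is the paper's argument. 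Your closing remark, that a tight weak limit would force asymptotic tightness, then finishes the statement.
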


\begin{remark}[The metric is part of the estimand]\label{rem:metric}
For multivariate causal features, the coordinates of $U$ may have different units or scientific meanings, so the metric in \eqref{eq:quantloss} is part of the estimand.  For a fixed symmetric positive-definite matrix $M$, one may replace $\norm{u-c_h}^2$ by $\norm{u-c_h}_M^2=(u-c_h)^\top M(u-c_h)$, so that $W(1)=\tr\{M\Var(U)\}$ and $\rho(K)$ is the fraction of $M$-weighted causal heterogeneity explained by the best $K$-group summary.  No new theory is needed. $M$-quantization of $P_U$ is ordinary Euclidean quantization of the law of $M^{1/2}U$, and the margin and posterior arguments carry over with constants depending on the eigenvalue bounds of $M$.  The choice of $M$ is a scientific normalization. Common choices include $M=\diag\{\Var(U)\}^{-1}$ for coordinatewise scale normalization and $M=\Var(U)^{-1}$ for affine equivariance, when these matrices are well defined.  For $q=1$, a positive scalar $M$ rescales $W(K)$ but cancels from causal heterogeneity $R^2$ and the resolution profile.  We treat $M$ as fixed and prespecified. Data-adaptive metrics would require augmenting the moment expansion with the estimation error of $M$, and we do not pursue that extension here.
\end{remark}

\subsection{Reporting protocol}\label{sec:report}

The output of Algorithm~\ref{alg:main} is a complete heterogeneity analysis. Table~\ref{tab:cast} of Section~\ref{sec:resolution} collects the population objects and the result delivering each one's inference. We recommend the following deliverables, the penalized report being optional, and each item carries the uncertainty statement that Section~\ref{sec:theory} establishes.

(i) The resolution-profile plot, the headline graphic: $\hat\rho(K)$ against $K$ with the simultaneous band \eqref{eq:band}, horizontal reading lines at the resolutions of scientific interest, and the induced sets $\widehat C(\gamma)$ of \eqref{eq:profileset} displayed as the highlighted runs of admissible $K$ at each such line.  Theorems state the population domain $\gamma\in(0,\rho_0(\overline K))$; in implementation the displayed reading lines are prespecified by the analyst and are truncated to the empirically attainable range, for example to $[0,\hat\rho(\overline K)]$ or to the upper simultaneous band limit. The plot subsumes the elbow heuristic (the elbow is the visible flattening of $\hat\rho$), replaces a point choice of $K$ by a set-valued report, and, because the band is simultaneous, supports free-form scanning across $\gamma$ without multiplicity corrections (Theorems~\ref{thm:profile} and~\ref{thm:honest}). This guarantee presumes the nuisance rates of Assumption~\ref{ass:nuisance}, including the margin-rate requirement for quantization scores. Posterior spaghetti of the draws $\rho^{(s)}(\cdot)$ overlaid at low opacity communicates the joint dependence across $K$ that the band alone hides.

(ii) Set-valued subgroup counts at the reported resolutions: $\widehat C(\gamma)$ collapses to a singleton away from resolution thresholds and widens to a pair at them, the widening being shown to be unavoidable rather than conservative (Theorems~\ref{thm:impossibility} and~\ref{thm:profile}).

(iii) The optional penalized sensitivity report: $\widehat Q(\tau)$ with its kinks, the merge scales with joint intervals, persistence intervals, and the set $\widehat C^\dagger(\tau)$ of \eqref{eq:penset} (Corollary~\ref{cor:penalized}). When a minimal clinically or economically important difference is available, the price grid is prespecifiable in its squared units (Section~\ref{sec:penalized_supp_inference}).

(iv) Subgroup effects with propagated partition uncertainty: the contrasts $\psi_{h,a}-\psi_{h,a'}$ with credible intervals in which the uncertainty of the partition itself propagates automatically through the per-draw recomputation (Theorem~\ref{thm:effects}), including the membership-gradient correction that the default ``cluster, then estimate within clusters'' pipeline omits (Supplementary Remark~\ref{rem:expandedscores}). When the working resolution is selected from the profile at a prespecified $\gamma$, inference is valid at the selected resolution provided $\gamma$ is not a knot (Corollary~\ref{cor:fixedgamma}); when $\widehat C(\gamma)$ is not a singleton, report the subgroup effects at every supported $K$ rather than at a single selected count.


\section{Feature-law posterior construction details}\label{app:posterior_details}

The full sampling procedure summarized in Section~\ref{sec:construction} is stated as Algorithm~\ref{alg:main}.

\begin{algorithm}[ht!]
\caption{Feature-law posterior for causal subgroup analysis}\label{alg:main}
\begin{algorithmic}[1]
\State \textbf{Input:} data $\{O_i\}_{i\le n}$; feature map $H$; ceiling $\overline K$; mixture family $k(\cdot;\theta)$; folds $B$; draws $S$.
\State Cross-fit $\hateta^{(-b)}$, form $\hU_i$, $\hR_i$, the corrected evaluations $\hphi_{f,i}$, and the point process $\hPsi(f)=n^{-1}\sum_i\hphi_{f,i}$; apply the estimand functionals to $\hPsi$ for all point estimates.
\For{$s=1,\dots,S$}
\State Draw $w^{(s)}\sim n\cdot\mathrm{Dirichlet}(1,\dots,1)$ and form $\Psi^{(s)}$ as in \eqref{eq:posteriorprocess}.
\State $W^{(s)}(K)\gets\inf_{c\in\calC^K}\Psi^{(s)}(g_c)$, $\rho^{(s)}(K)\gets 1-W^{(s)}(K)/W^{(s)}(1)$, $K^{\star(s)}(\gamma)\gets\min\{K:\rho^{(s)}(K)\ge\gamma\}$.
\State At selected working resolutions, $\beta^{(s)}(K)\gets\argmin_\beta\Psi^{(s)}(\ell_\beta)$ and $\psi^{(s)}_{h,a}(K)\gets\Psi^{(s)}\bigl(f^N_{h,a;\beta^{(s)}(K)}\bigr)\big/\Psi^{(s)}\bigl(f^D_{h;\beta^{(s)}(K)}\bigr)$.
\EndFor
\State From the draws, compute the scale estimates $\hat\sigma(K)$, the simultaneous quantile $q_{1-\alpha}$, the band \eqref{eq:band}, and the set-valued profile report $\widehat C(\gamma)$ by \eqref{eq:profileset}.
\State \textbf{Output:} simultaneous bands for $\rho(\cdot)$, the set-valued profile report $\widehat C(\gamma)$, and posterior intervals for selected fixed-resolution summaries and subgroup effects.
\end{algorithmic}
\end{algorithm}

\subsection{The sense in which this is Bayesian}\label{sec:bayesian}

The construction admits three complementary Bayesian-bootstrap readings, but the qualifier matters: it is not a full Bayesian posterior for the data-generating law.

\textbf{Dirichlet-process limit.} Conditionally on the corrected evaluations, \eqref{eq:posteriorprocess} is the posterior law of $\int\hphi_f\,dF$ when $F$, the distribution of the corrected score vector, carries a Dirichlet-process prior $\mathrm{DP}(a,F^*)$ in the noninformative limit $a\downarrow0$: the Bayesian bootstrap of \cite{rubin1981}, whose use for exactly identified moment functionals goes back to \cite{newton1994}. The posterior is over the empirical law of the corrected-score evaluations, not over an assumed mixture parameter, partition, or model order.

\textbf{Martingale posterior.} Equivalently, \eqref{eq:posteriorprocess} arises from predictive resampling with the empirical predictive: the simplest member of the martingale-posterior family of \cite{fong2023}, applied to the corrected scores. The general recipe, positing a predictive update and propagating it to functionals, licenses richer predictives; the theory of Section~\ref{sec:theory} covers the exchangeable-bootstrap weights used here.

\textbf{Corrected posterior.} Most directly, the scheme is the process-level extension of the semiparametric posterior corrections of \cite{yiu2025}: there, a Bayesian-bootstrap posterior for a scalar functional is recentered by a one-step influence-function correction; here, the correction is applied inside every evaluation $\hphi_{f,i}=\phi_f\bigl(O_i;\hateta^{(-b(i))}\bigr)$ and the resulting posterior is for an entire moment process, which is what cluster analysis consumes. The uniform Bernstein--von Mises theorem (Theorem~\ref{thm:bvm}) is the corresponding strengthening of their scalar matching result.

The scheme is not full Bayes for a generative model of $O$: no likelihood for $Y\mid A,X$ is specified, no prior is put on partitions or model order, and prior information enters only through $\hateta$ (or Remark~\ref{rem:informative}) and the choice of $\calF$. What is gained for that price is a calibrated posterior for the corrected moment process.  The object updated is that moment process itself, not a probability law on the feature space, and because a weighted corrected functional can take negative values in finite samples the draws need not be feature-space probability measures. The name feature-law posterior is thus a shorthand for the Bayesian-bootstrap posterior of the corrected feature-law moments.  Section~\ref{sec:app:energy_amplification} explains, more narrowly, why we do not replace this linear construction by an exponentiated order posterior when features are generated.

\begin{remark}[Nuisance uncertainty and informative priors]\label{rem:informative}
As stated, the nuisances are handled modularly. The fit $\hateta$ is computed once and held fixed across draws, a cut-model choice that the theory rewards, because Theorem~\ref{thm:eif} makes the posterior first-order insensitive to the nuisance fit. An analyst wishing to express prior information about $\eta$ may instead draw $\eta^{(s)}$ from a posterior for the regressions (e.g.\ a Gaussian-process or BART posterior) and form $\Psi^{(s)}(f)=n^{-1}\sum_i w^{(s)}_i\phi_f(O_i;\eta^{(s)})$: the correction term annihilates the first-order propagation of nuisance-prior bias, in the spirit of the prior corrections of \cite{ray2020} that depend on the propensity score. We record this as an extension. The theorems below are proved for the modular scheme, and the double-draw variant requires in addition that the nuisance posterior contract at the rates of Assumption~\ref{ass:nuisance}.
\end{remark}

\begin{remark}[Clustered sampling units]\label{rem:clustered}
When the data carry a cluster structure, for example families randomized jointly, the independent sampling unit is the cluster, and Definition~\ref{def:flposterior} should be applied at that level. Let $G_1,\dots,G_m$ partition $[n]$ into clusters with $g(i)$ the cluster of unit $i$, draw $v^{(s)}\sim m\cdot\mathrm{Dirichlet}(1,\dots,1)$ over clusters, and set $w^{(s)}_i=v^{(s)}_{g(i)}$, so that all members of a cluster share one weight. The point process $\hPsi$ is unchanged and each draw again recomputes every functional. With clusters as the independent sampling unit, the same arguments go through after replacing observations by cluster-level aggregates, the corrected scores summed within clusters, the number of clusters $m$ playing the role of $n$, and folds formed by cluster. This requires that cluster sizes be bounded, that the nuisance rates and entropy conditions be read at the cluster level, and that the target weighting, individual-weighted versus cluster-weighted feature law, be declared, since these differ when cluster sizes vary. A formal cluster-level development is beyond the scope of this paper. Individual-level weighting understates uncertainty when corrected scores are positively correlated within clusters, and the clustered weights restore the correct first-order variance.
\end{remark}

\subsection{Pseudo-feature identity and implementation details}\label{sec:pseudo_supp}

This section gives the formal identity behind the quantization computation in Section~\ref{sec:construction}.  Recall that
\begin{equation}\label{eq:pseudofeature}
\tU_i=\hU_i+H\hR_i .
\end{equation}

\begin{proposition}[Pseudo-feature identity]\label{prop:pseudo}
For every codebook $c\in\calC^K$ and every observation $i$,
\begin{equation}\label{eq:pseudoidentity}
\hphi_{g_c,i}
=
\bigl\|\tU_i-c_{h_c(\hU_i)}\bigr\|^2
-
\bigl\|H\hR_i\bigr\|^2,
\end{equation}
where $h_c(\hU_i)$ is the Voronoi cell of $\hU_i$, not of $\tU_i$.  Consequently, for any nonnegative weights $w=(w_1,\dots,w_n)$,
\begin{equation}\label{eq:pseudoweighted}
\frac1n\sum_{i=1}^n w_i\hphi_{g_c,i}
=
\frac1n\sum_{i=1}^n w_i
\bigl\|\tU_i-c_{h_c(\hU_i)}\bigr\|^2
-
\frac1n\sum_{i=1}^n w_i
\bigl\|H\hR_i\bigr\|^2 .
\end{equation}
The second term is free of $c$ and therefore can be dropped for optimization over codebooks, but it must be retained when reporting corrected risk values.
\end{proposition}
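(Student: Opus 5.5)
The plan is to expand the corrected score \eqref{eq:score} at $f=g_c$ and complete the square. Since $g_c$ depends only on $u$, the $\nabla_m$ term vanishes. With the convention fixed after \eqref{eq:score}, the gradient at $u=\hU_i$ is $\nabla_u g_c(\hU_i)=2\{\hU_i-c_{h}\}$, where $h=h_c(\hU_i)$ is the Voronoi label of $\hU_i$ under the deterministic tie rule. Because that rule also applies on Voronoi boundaries, the formula is a definition there rather than an almost-everywhere statement, and the identity will hold pointwise for every $i$.

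Substituting into the score gives
\[
\hphi_{g_c,i}=\|\hU_i-c_h\|^2+2(\hU_i-c_h)^\top H\hR_i .
\]
The key step is to add and subtract $\|H\hR_i\|^2$:
\[
\|\hU_i-c_h\|^2+2(\hU_i-c_h)^\top H\hR_i+\|H\hR_i\|^2=\|\hU_i+H\hR_i-c_h\|^2=\|\tU_i-c_h\|^2 ,
\]
using the definition \eqref{eq:pseudofeature}. Subtracting $\|H\hR_i\|^2$ then yields \eqref{eq:pseudoidentity}. Throughout, the label $h$ is the one attached to $\hU_i$, because the gradient was evaluated at $\hU_i$ and not at $\tU_i$. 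This is the only point needing care: one must not replace $\min_h\|\tU_i-c_h\|^2$ for the fixed-label term, since $\tU_i$ may lie in a different cell.

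The weighted version \eqref{eq:pseudoweighted} then follows by multiplying \eqref{eq:pseudoidentity} by $w_i/n$ and summing, which is linear. Nonnegativity of the weights is not actually needed for this step. Finally, $\|H\hR_i\|^2$ depends only on the data and the cross-fitted nuisances, not on $c$. So the second sum is a constant over codebooks and can be dropped for optimization, but it shifts the reported risk level and must be kept there.

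There is no real obstacle in this argument. The only subtle point is the label bookkeeping at Voronoi boundaries, and it is handled by the deterministic tie rule.
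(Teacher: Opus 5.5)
Your proof is correct and follows the paper's own proof: expand the corrected score at $g_c$ with the gradient $2\{\hU_i-c_{h_c(\hU_i)}\}$, complete the square using $\norm{a-c}^2+2(a-c)^\top v=\norm{a+v-c}^2-\norm{v}^2$ with $v=H\hR_i$, and sum linearly with the weights. Your remark that the deterministic tie rule makes the identity hold pointwise even on Voronoi boundaries is a slight sharpening of the paper's ``off Voronoi boundaries'' phrasing. It is consistent with the convention fixed after \eqref{eq:score}.
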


\begin{proof}
For the quantization loss $g_c(u)=\min_{h\le K}\norm{u-c_h}^2$, off Voronoi boundaries,
\[
\nabla g_c(u)=2\{u-c_{h_c(u)}\}.
\]
Thus the corrected evaluation at observation $i$ is
\[
\hphi_{g_c,i}
=
\norm{\hU_i-c_{h_c(\hU_i)}}^2
+
2\{\hU_i-c_{h_c(\hU_i)}\}^\top H\hR_i .
\]
Apply the identity
\[
\norm{a-c}^2+2(a-c)^\top v
=
\norm{a+v-c}^2-\norm{v}^2
\]
with $a=\hU_i$
, $v=H\hR_i$
, and $c=c_{h_c (\hU_i)}$.
This proves \eqref{eq:pseudoidentity}. Multiplying by weights and summing gives \eqref{eq:pseudoweighted}.
\end{proof}

\begin{remark}[Assignment and target]\label{rem:traps}
The assignment rule in Proposition~\ref{prop:pseudo} is essential.  Plain $k$-means on the pseudo-features $\{\tU_i\}$ assigns each point by the cell of $\tU_i$ and optimizes
\[
\sum_i w_i\min_h\norm{\tU_i-c_h}^2,
\]
which is not the corrected quantization criterion.  It targets the codebook of a noise-convolved pseudo-feature law rather than the codebook of $P_U$ when the residual correction is nonnegligible relative to cell separations.  The corrected objective instead assigns by $\hU_i$ and updates centers using $\tU_i$.
\end{remark}

\begin{remark}[Lloyd-type implementation]\label{rem:lloyd}
For fixed assignments $h_i$, the weighted objective in \eqref{eq:pseudoweighted} is minimized over centers in $\calC$ by
\[
c_h
=
\Pi_{\calC}\Bigl(\frac{\sum_{i:h_i=h}w_i\tU_i}
{\sum_{i:h_i=h}w_i}\Bigr),
\]
where $\Pi_{\calC}$ is the Euclidean projection onto the convex set $\calC$. The projection is needed because the pseudo-features $\tU_i=\hU_i+H\hR_i$ are not confined to $\calC$, so the weighted mean may leave it. In practice $\calC$ can be chosen large enough that the constraint never binds, in which case $\Pi_{\calC}$ acts as the identity and implementations may record whether it ever binds.
A natural iteration alternates this pseudo-feature mean update with reassignment
\[
h_i\gets h_c(\hU_i).
\]
Because assignments are driven by $\hU_i$ while centers average $\tU_i$, this iteration is a computational heuristic rather than classical Lloyd descent.  We therefore use it within a multi-start scheme, evaluate the exact corrected objective \eqref{eq:pseudoweighted} at every candidate, and retain the best visited value.
\end{remark}

\begin{remark}[No pseudo-feature shortcut for nonquadratic losses]\label{rem:nopseudo}
The identity is specific to squared-distance quantization losses.  For nonquadratic losses, including mixture projection losses,
\[
\E\{\ell_\beta(U+HR)\mid X\}
\ne
\ell_\beta(U)
\]
in general.  Thus fitting a mixture directly to pseudo-features is not a valid shortcut for the projection $\beta^\star(K)$.  The valid construction is to correct the loss evaluation $\hphi_{f,i}=\phi_f\bigl(O_i;\hateta^{(-b(i))}\bigr)$ and optimize the resulting corrected criterion.
\end{remark}

\subsubsection{Computation}\label{sec:computation_supp}

\paragraph{Cost structure}
Algorithm~\ref{alg:main} requires one cross-fitted nuisance fit, with cost $B\cdot\mathrm{fit}(\hateta)$.  Conditional on the fitted nuisances, the per-draw work is purely arithmetic: each $W^{(s)}(K)$ is a corrected weighted quantization problem on $\{(\hU_i,\tU_i)\}$, and each $\beta^{(s)}(K)$ is a low-dimensional minimization of the corrected projection criterion.  The total cost is approximately
\[
O\{B\cdot\mathrm{fit}(\hateta)+S\,\overline K^2\,n\,q\,T_{\mathrm{it}}\},
\]
where $T_{\mathrm{it}}$ is the number of quantization iterations.  Draws parallelize directly.

\paragraph{Monte Carlo error in the number of draws}
The band quantile $q_{1-\alpha}$, the scale estimates, and the equal-tailed intervals are computed from $S$ Dirichlet draws. Conditional on the data, the empirical $S$-draw quantiles converge to the exact conditional quantiles as $S\to\infty$, by the conditional Glivenko--Cantelli property together with the continuity of Supplementary Lemma~\ref{lem:condquant}, so the theorems, which concern the exact conditional weighted law, are matched as $S_n\to\infty$. The quantile Monte Carlo error is of order $S^{-1/2}$, about one percentage point at the nominal $95\%$ level for the draw counts used here. The simulation studies use $S=1000$ and the application uses $S=500$.

\paragraph{Quantization with the correct assignment rule}
For each draw and $K$, we minimize $\Psi^{(s)}(g_c)$ using a multi-start Lloyd-type scheme.  We initialize $c$ by weighted $k$-means++ on $\{\tU_i\}$; alternate
\[
h_i\gets \argmin_h\norm{\hU_i-c_h},
\qquad
c_h\gets \Pi_{\calC}\Bigl(\frac{\sum_{i:h_i=h}w_i\tU_i}{\sum_{i:h_i=h}w_i}\Bigr);
\]
evaluate the exact corrected objective \eqref{eq:pseudoidentity} at each candidate; and retain the best visited solution across restarts. The center update is followed by the Euclidean projection $\Pi_{\calC}$ of Remark~\ref{rem:lloyd}, since the weighted pseudo-feature mean can leave $\calC$.  Because assignments are driven by $\hU_i$ while centers average $\tU_i$, the iteration is not guaranteed to be monotone in the corrected objective.  Exact objective evaluation is therefore used as the descent record.

\paragraph{Projection fitting}
For $\beta^{(s)}(K)$, we minimize the corrected criterion $\Psi^{(s)}(\ell_\beta)$ directly.  Weighted EM fits on $\{\hU_i\}$ are useful only as initializers, because the influence-function gradient correction generally breaks the EM minorization.

\paragraph{Label alignment}
Codebooks and mixture components are identified only up to labels.  For componentwise summaries, each draw is aligned to the point estimate by minimum-cost bipartite matching, using Euclidean costs for codebooks and component-location costs for mixture projections.  Under the uniqueness assumption used for Theorem~\ref{thm:projection}, and the analogous codebook uniqueness of Assumption~\ref{ass:quant}\textup{(i)}, this alignment is eventually the identity with probability tending to one. The reason is elementary. Let $\Delta_{\min}$ be the smallest pairwise distance among the population component locations, which is positive under uniqueness with distinct components. Consistency places every fitted location, in the point estimate and in the draw being matched, within $\Delta_{\min}/8$ of its population counterpart on an event of probability tending to one. On that event, compare any candidate matching with the identity edge by edge. An edge pairing locations attached to the same population component costs at most $\Delta_{\min}/4$, while an edge pairing locations attached to different population components costs at least $\Delta_{\min}-\Delta_{\min}/4=3\Delta_{\min}/4$. A nonidentity matching contains at least one displaced edge, and every displaced edge replaces an identity edge, so its total cost strictly exceeds the identity cost. The minimum-cost matching is therefore the identity.  The path, the profiles, and the set-valued reports are invariant to labels and require no alignment.

\paragraph{Numerical safeguards}
We truncate $\hat\pi$ according to Assumption~\ref{ass:bound} and report the fraction of truncated observations.  We monitor the empirical distribution of $\norm{H\hR_i}$, whose scale relative to cell separations governs the practical difference between the correct assignment rule and plain $k$-means on $\tU_i$.  We also recompute $\widehat W(1)$ in closed form, as a weighted variance of $\tU_i$ minus the weighted mean of $\norm{H\hR_i}^2$, as a unit test of the quantization implementation.  As optimization diagnostics we report the best and second-best restart objectives, the restart-to-restart dispersion, the monotonicity of $K\mapsto\widehat W(K)$ enforced by taking running minima, and the frequency of empty or collapsed cells, together with near-nonuniqueness diagnostics, namely the objective gaps among distinct codebooks achieving near-optimal values and the label-matching failure rates across draws.

\subsection{The level-shift identity and the noise-floor diagnostic}\label{sec:noisefloor_supp}

This section makes precise the noise-floor reading of Remark~\ref{rem:noisefloor}. Throughout, the fitted outcome regressions $\hat\bmu$ are a fixed function, treated as an argument, and every conditional expectation $\E\{\cdot\mid X\}$ integrates over $(A,Y)$ at $P_0$ with $X$ and the fit held fixed. Write $\hU(x)=H\hat\bmu(x)$ and $U(x)=H\bmu(x)$.

\begin{proposition}[Level-shift identity]\label{prop:levelshift}
Suppose the design propensities are known and used, so that $\hat\pi_a(x)=\pi_a(x)$ for every arm $a$ and $P_{0,X}$-almost every $x$. Fix a codebook $c\in\calC^K$. Then, for $P_{0,X}$-almost every $x$, with $\Delta(x)=\hU(x)-U(x)$,
\begin{equation}\label{eq:levelshift_supp}
\E\{\phi_{g_c}(O;\hateta)\mid X=x\}
=g_c\{U(x)\}-\norm{\Delta(x)}^2+B_c(x),
\end{equation}
where the boundary term $B_c(x)$ equals zero whenever $U(x)$ and $\hU(x)$ lie in the same Voronoi cell of $c$, and otherwise obeys
\begin{equation*}
0\le B_c(x)\le\sum_{(h,j):c_h\ne c_j}2\,\norm{\Delta(x)}\,\norm{c_h-c_j}\,\ind{E_{hj}(x)},
\qquad
E_{hj}(x)\subseteq\bigl\{\dist(U(x),B_{hj})\le2\norm{\Delta(x)}\bigr\},
\end{equation*}
with $B_{hj}$ the bisector of $c_h,c_j$ and at most one indicator active, exactly as in the proof of Theorem~\ref{thm:eif}\textup{(ii)}. Consequently, taking expectations over $X$ and applying Assumption~\ref{ass:margin},
\begin{equation}\label{eq:levelshift_agg}
\E\{\phi_{g_c}(O;\hateta)\}
=P_U(g_c)-\E\norm{\hU-U}^2+O\bigl(r_\mu^{2(1+\am)/(2+\am)}\bigr),
\end{equation}
uniformly over $c\in\calC^K$ and $K\le\overline K$, where $r_\mu=\max_a\norm{\hat\mu_a-\mu_a}_{P_0,2}$.
\end{proposition}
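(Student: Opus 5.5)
The plan is a pointwise computation conditional on $X=x$, followed by a geometric bound on the boundary term, and then an integration over $X$ using the margin condition. Fix $x$ and write $\hat h=h_c\{\hU(x)\}$, $h=h_c\{U(x)\}$. With $\hat\pi=\pi$ known, the conditional mean of the residual is
\[
\E\{R_a(O;\hateta)\mid X=x\}=\pi_a(x)^{-1}\pi_a(x)\{\mu_a(x)-\hat\mu_a(x)\}=\mu_a(x)-\hat\mu_a(x),
\]
so $\E\{HR(O;\hateta)\mid X=x\}=-\Delta(x)$ exactly; this is the only place the known propensities are used. By the definition of the corrected score, the score is linear in $R$ given $X$, so
\[
\E\{\phi_{g_c}\mid X=x\}=\norm{\hU-c_{\hat h}}^2-2(\hU-c_{\hat h})^\top\Delta .
\]
Applying the elementary identity $\norm{a-c}^2+2(a-c)^\top v=\norm{a+v-c}^2-\norm v^2$ (as in the proof of Proposition~\ref{prop:pseudo}) with $a=\hU$ and $v=-\Delta$ turns this into $\norm{U-c_{\hat h}}^2-\norm\Delta^2$. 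Hence \eqref{eq:levelshift_supp} holds with
\[
B_c(x)=\norm{U-c_{\hat h}}^2-\min_j\norm{U-c_j}^2\ \ge 0,
\]
and $B_c(x)=0$ when $\hat h=h$, or more generally when $c_{\hat h}=c_h$. The Voronoi-boundary convention is immaterial off a null set under Assumption~\ref{ass:margin}.

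For the upper bound, when $c_{\hat h}\ne c_h$ write
\[
B_c=\norm{U-c_{\hat h}}^2-\norm{U-c_h}^2=2(c_h-c_{\hat h})^\top\{U-(c_h+c_{\hat h})/2\},
\]
which is $2\norm{c_h-c_{\hat h}}$ times the signed distance from $U$ to the bisector $B_{h\hat h}$. Since $\hU$ lies on the $\hat h$ side of that bisector (it prefers $c_{\hat h}$) while $U$ lies on the $h$ side, the segment from $U$ to $\hU$ crosses $B_{h\hat h}$. The distance is therefore at most $\norm\Delta$, which gives $B_c\le2\norm\Delta\norm{c_h-c_{\hat h}}$ on the event $E_{h\hat h}(x)\subseteq\{\dist(U,B_{h\hat h})\le\norm\Delta\}\subseteq\{\dist(U,B_{h\hat h})\le2\norm\Delta\}$. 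Only the single pair $(h,\hat h)$ is active, which yields the stated sum bound. This matches the boundary analysis invoked in the proof of Theorem~\ref{thm:eif}(ii).

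Aggregation is the step requiring care. I will reuse the margin-splitting argument from Theorem~\ref{thm:eif}(ii) rather than redo it. Take expectations, note $\E\norm{\hU-U}^2\lesssim r_\mu^2$ via $\norm H$, and bound $\E B_c\le 2\,\diam(\calC)\sum_{(h,j)}\E[\norm\Delta\,\ind{\dist(U,B_{hj})\le2\norm\Delta}]$. For a threshold $t$, split on $\{\norm\Delta\le t\}$ and its complement. On the first event the margin condition gives a contribution $\le t\cdot C_M(2t)^{\am}$. On the complement, Cauchy--Schwarz gives $\E[\norm\Delta\ind{\norm\Delta>t}]\le r_\mu^2/t$, since $\norm\Delta^2/t\ge\norm\Delta$ there. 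Balancing $t^{1+\am}=r_\mu^2/t$, i.e.\ $t=r_\mu^{2/(2+\am)}$, produces $r_\mu^{2(1+\am)/(2+\am)}$. The bound is uniform in $c$ and $K\le\overline K$ because the margin condition holds over all hyperplanes and the number of pairs is at most $\overline K^2$, with constants depending only on $C_M$, $\diam(\calC)$, $\overline K$ and $\norm H$.
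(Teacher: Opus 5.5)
Your proof is correct and follows essentially the paper's route: the paper computes the same conditional mean $g_c(\hU)+\nabla g_c(\hU)^\top(U-\hU)$ under $\hat\pi=\pi$, sets $B_c=\norm{\Delta}^2-\mathrm{rem}$ via the crossing-geometry Lemma~\ref{lem:geometry}, and aggregates with the same margin truncation from Theorem~\ref{thm:eif}(ii); the lemma is the same algebra as your pseudo-feature manipulation, as the paper itself remarks. Your segment-crossing argument gives the slightly sharper bound $\dist(U,B_{h\hat h})\le\norm{\Delta}$, and the step you attribute to Cauchy--Schwarz is really just the pointwise bound $\norm{\Delta}\le\norm{\Delta}^2/t$ on $\{\norm{\Delta}>t\}$, which is all the argument needs.
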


\begin{proof}
By the corrected-score form \eqref{eq:score} for $g_c$, using $\nabla_mg_c=0$ and $U_{\hateta}(X)=H\hat\bmu(X)=\hU(X)$,
\[
\phi_{g_c}(O;\hateta)=g_c(\hU)+\nabla g_c(\hU)^\top H\,R(O;\hateta).
\]
At known design propensities the identity $\E\{R_a(O;\hateta)\mid X\}=(\pi_a/\hat\pi_a)(\mu_a-\hat\mu_a)(X)$ from the proof of Theorem~\ref{thm:eif}\textup{(ii)} reduces, under $\hat\pi=\pi$, to $\E\{R_a(O;\hateta)\mid X\}=(\mu_a-\hat\mu_a)(X)$, so $\E\{HR\mid X\}=H(\bmu-\hat\bmu)=U-\hU$ and
\[
\E\{\phi_{g_c}(O;\hateta)\mid X\}=g_c(\hU)+\nabla g_c(\hU)^\top(U-\hU).
\]
Apply Lemma~\ref{lem:geometry} with $\bar u=\hU$ and $u=U$, so that $\mathrm{rem}=g_c(U)-g_c(\hU)-\nabla g_c(\hU)^\top(U-\hU)$ and $\Delta=\hU-U$. The right-hand side above equals $g_c(U)-\mathrm{rem}$, and setting $B_c(x)=\norm{\Delta(x)}^2-\mathrm{rem}$ gives \eqref{eq:levelshift_supp}. Part~(a) of the lemma gives $\mathrm{rem}=\norm{\Delta}^2$, hence $B_c=0$, when $U$ and $\hU$ share a cell. Part~(b) gives $\mathrm{rem}=\norm{\Delta}^2+s(U)$ with $-2\norm{\Delta}\,\norm{c_h-c_{h^\star}}\le s(U)\le0$, so $B_c=-s(U)$ lies in the stated range, the containment $E_{hj}\subseteq\{\dist(U,B_{hj})\le2\norm{\Delta}\}$ and the at-most-one-active property being exactly those of the proof of Theorem~\ref{thm:eif}\textup{(ii)}. This is the same algebra as Proposition~\ref{prop:pseudo} carried to the population conditional mean rather than to a single observation. The aggregate bound \eqref{eq:levelshift_agg} then follows by taking expectations over $X$ and applying the margin truncation of that proof to $\E B_c$.
\end{proof}

The level-shift identity propagates to the normalized $\rho$ scale as a multiplicative inflation, which is the reliability content of the second tier of the gate in Remark~\ref{rem:noisefloor}.

\begin{corollary}[Multiplicative inflation of the resolution curve]\label{cor:levelshift_rho}
Write $W(K)=P_U(g_{c^\star(K)})$ for the population path and let $\delta=\E\norm{\hU-U}^2$ be the common floor of Proposition~\ref{prop:levelshift}. Ignoring the boundary terms, the corrected path reads $W_\delta(K)=W(K)-\delta$, and its normalized curve $\rho_\delta(K)=1-W_\delta(K)/W_\delta(1)$ satisfies
\begin{equation}\label{eq:levelshift_rho}
\rho_\delta(K)=\rho(K)\,\frac{W(1)}{W(1)-\delta},\qquad \rho(K)=1-\frac{W(K)}{W(1)} .
\end{equation}
\end{corollary}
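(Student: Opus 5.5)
The corollary is a direct algebraic consequence of Proposition~\ref{prop:levelshift} once the boundary terms are set aside. The plan is to first identify the corrected path as $W_\delta(K)=W(K)-\delta$, and then substitute into the definition of $\rho_\delta$.

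\textbf{Step 1: the corrected path.} Fix $K\le\overline K$. By \eqref{eq:levelshift_agg}, ignoring the $O(r_\mu^{2(1+\am)/(2+\am)})$ boundary term, the corrected risk at every codebook is
\[
\E\{\phi_{g_c}(O;\hateta)\}=P_U(g_c)-\delta .
\]
Here $\delta=\E\norm{\hU-U}^2$ does not depend on $c$. The infimum over $c\in\calC^K$ therefore commutes with subtracting this constant. The corrected path is then
\[
W_\delta(K)=\inf_{c}P_U(g_c)-\delta=W(K)-\delta,
\]
and the same codebook $c^\star(K)$ attains it. This is the one point needing care: the shift must be uniform over codebooks, which the proposition gives uniformly over $c$ and $K$. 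In particular $W_\delta(1)=W(1)-\delta$.

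\textbf{Step 2: the normalized curve.} Assuming $W(1)\ne\delta$ so that the normalization is defined, compute
\[
1-\frac{W(K)-\delta}{W(1)-\delta}=\frac{W(1)-W(K)}{W(1)-\delta}.
\]
Since $W(1)-W(K)=W(1)\rho(K)$, this equals $\rho(K)\,W(1)/\{W(1)-\delta\}$, which is \eqref{eq:levelshift_rho}.

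\textbf{Main obstacle.} There is no real obstacle, only the bookkeeping that the statement is explicitly modulo boundary terms. If one wanted to keep them, the same computation would carry an additive $O(r_\mu^{2(1+\am)/(2+\am)})/\{W(1)-\delta\}$ error in each coordinate. That error is uniform in $K$ by the uniformity in Proposition~\ref{prop:levelshift}.
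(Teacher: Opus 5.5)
Your proposal is correct and follows the paper's proof. That proof is exactly your Step~2: substitute $W(K)=W(1)\{1-\rho(K)\}$ into $1-\{W(K)-\delta\}/\{W(1)-\delta\}$, so the numerator collapses to $W(1)\rho(K)$. Your Step~1, that a shift not depending on the codebook commutes with the infimum over $\calC^K$, only makes explicit what the corollary takes as its premise, and your assumption $W(1)\ne\delta$ is left implicit in the paper.
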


\begin{proof}
Substitute $W(K)=W(1)\{1-\rho(K)\}$ into $\rho_\delta(K)=1-\{W(K)-\delta\}/\{W(1)-\delta\}$. The numerator becomes $\{W(1)-\delta\}-\{W(1)(1-\rho(K))-\delta\}=W(1)\,\rho(K)$, so $\rho_\delta(K)=W(1)\,\rho(K)/\{W(1)-\delta\}$.
\end{proof}

Because $W(1)/\{W(1)-\delta\}>1$, a downward level shift inflates every $\rho_\delta(K)$ toward one, so nothing cancels in the normalization and the inflation grows with $\delta/W(1)$, which is what the reliability ratio $\hat r=[\widehat\Delta/2,\widehat\Delta]/\widehat W(1)$ measures. Contrasts $W(K)-W(K')$, by contrast, are exactly floor-invariant up to the boundary terms, since $\delta$ cancels in the difference.

\paragraph{Split-difference estimate of the floor}
The floor $\E\norm{\hU-U}^2$ in \eqref{eq:levelshift_agg} is not directly observable, since $U$ is unknown. A split-difference construction estimates its variance-dominated part. Partition the sample into two arm-stratified halves $S_A$ and $S_B$ of equal size, and fit the outcome regressions separately on each half with the same learner stack, giving $\hat\bmu_A$ and $\hat\bmu_B$. Define
\begin{equation}\label{eq:splitdiff_supp}
\widehat\Delta=\frac12\,\frac1n\sum_{i=1}^n\bigl\|H\{\hat\bmu_A(X_i)-\hat\bmu_B(X_i)\}\bigr\|^2 .
\end{equation}
Condition on the covariates $\{X_i\}$ and on the two training halves, and decompose each half-sample fit into its conditional mean and a stochastic component,
\[
\hat\bmu_m(x)=\{\bmu(x)+b_m(x)\}+v_m(x),\qquad m\in\{A,B\},
\]
where $b_m(x)=\E\{\hat\bmu_m(x)\mid X\}-\bmu(x)$ is the conditional bias given the training half $S_m$ and $v_m(x)=\hat\bmu_m(x)-\E\{\hat\bmu_m(x)\mid X\}$ is its mean-zero fluctuation. Then $\hat\bmu_A-\hat\bmu_B=(v_A-v_B)+(b_A-b_B)$. Because $S_A$ and $S_B$ are disjoint, $v_A$ and $v_B$ are independent and mean zero, so the cross term $\E\{(Hv_A)^\top(Hv_B)\}$ vanishes and
\begin{equation}\label{eq:splitdiff_decomp}
\E\widehat\Delta
=\tfrac12\bigl\{\E\norm{Hv_A}^2+\E\norm{Hv_B}^2\bigr\}
+\tfrac12\,\E\norm{H(b_A-b_B)}^2 .
\end{equation}
Thus $\widehat\Delta$ estimates the variance-dominated part of the half-sample feature error $\E\norm{Hv_m}^2$, inflated by half the squared bias difference $\E\norm{H(b_A-b_B)}^2$, which is small relative to the variance whenever the two halves carry comparable bias.

The passage from the half-sample variance to the full-sample floor requires a scaling assumption on the learner rather than a theorem valid for all learners.

\begin{assumption}[Learner variance scaling]\label{ass:varscaling}
There is an exponent $\gamma\in(0,1]$ and a constant $c_v$, common to the halves and to the full sample, such that the stochastic component of the fitted regression obeys $\E\norm{v_m}^2=c_v\,m^{-\gamma}\{1+o(1)\}$ in the training size $m$.
\end{assumption}

Under Assumption~\ref{ass:varscaling} each half is trained on $n/2$ units while the full-sample fit uses $n$, so the half-sample variance exceeds the full-sample variance by the factor $2^\gamma\in[1,2]$. Ignoring the bias term in \eqref{eq:splitdiff_decomp}, the display \eqref{eq:splitdiff_decomp} has $\E\widehat\Delta$ equal to the half-sample variance $\E\norm{Hv_m}^2$, so the variance part of the full-sample floor equals $2^{-\gamma}\E\widehat\Delta$ in expectation. As $\gamma$ ranges over $(0,1]$ the factor $2^{-\gamma}$ ranges over $[\tfrac12,1)$, the endpoint $\widehat\Delta/2$ at $\gamma=1$ and the endpoint $\widehat\Delta$ as $\gamma\downarrow0$. Hence $[\widehat\Delta/2,\widehat\Delta]$ brackets the variance part of the full-sample noise floor whatever the unknown scaling exponent. Both the bias contribution in \eqref{eq:splitdiff_decomp} and the in-sample evaluation act in the conservative direction. Each half-sample fit is evaluated in \eqref{eq:splitdiff_supp} on its own training units, which rewards in-sample fit and, if anything, inflates $\widehat\Delta$, so the diagnostic overstates the floor rather than understating it.

\paragraph{Scope of the diagnostic}
The construction informs about the level bias of the path only. No coverage claim attaches to $\widehat W(1)+\widehat\Delta$, to the shift-adjusted range $\rho_\delta(\cdot)$ for $\delta\in[\widehat\Delta/2,\widehat\Delta]$ of Corollary~\ref{cor:levelshift_rho}, or to any recentred path. The recommended report applies the two-tier gate of Remark~\ref{rem:noisefloor}, reading detection from whether the band for $\widehat W(1)$ clears zero and reliability of the $\rho$ scale from the reliability ratio $\hat r=[\widehat\Delta/2,\widehat\Delta]/\widehat W(1)$, rather than adding the diagnostic to the path. The sampling variance of $\widehat W(1)$ is a separate quantity, fixed at the efficiency bound by the influence function of a quadratic functional of $P_U$, and no recentring of the path reduces it. Formal inference below the noise floor, where the level bias is comparable to the signal, would require higher-order corrections to that quadratic functional, which we do not pursue here.

\subsection{Linearity principle and energy-scale sensitivity}
\label{sec:app:energy_amplification}
The feature-law posterior uses linearity in the corrected score array as a design rationale.

\begin{remark}[Linearity for risk contrasts]\label{rem:whynot}
    The feature-law posterior \eqref{eq:posteriorprocess} is linear in the corrected-score array.  This matters because the main comparisons in the paper are risk contrasts along the quantization path.  For feature losses $g,g'\in\calG$, the influence function of $\Psi_0(g)-\Psi_0(g')$ is $\{g(U)-g'(U)\} + \{\nabla g(U)-\nabla g'(U)\}^{\top}HR - \{\Psi_0(g)-\Psi_0(g')\}$.
    Thus, the first-order law depends only on the difference between the losses and the difference between their feature-gradient fields. Any component common to the two corrected risks cancels in both $\hPsi(g)-\hPsi(g')$ and every posterior draw $\Psi^{(s)}(g)-\Psi^{(s)}(g')$.
\end{remark}

This subsection records the formal cancellation statement and the narrower energy-scale calculation for exponentiated order comparisons.  The purpose is not to make a general critique of generalized Bayes \cite{bissiri2016} or Bayesian clustering.  The point is specific: when losses are computed from generated causal features, exponentiating risk differences to compare complexities can amplify perturbations that are small on the risk scale.

\begin{proposition}[Exact common-shift cancellation]\label{prop:cancellation}
Let $g,g'\in\calG$ be feature losses.  Under the conditions of Theorem~\ref{thm:eif}, the influence function of the contrast $\Psi_0(g)-\Psi_0(g')$ is
\[
\{g(U)-g'(U)\}
+
\{\nabla g(U)-\nabla g'(U)\}^\top HR
-
\{\Psi_0(g)-\Psi_0(g')\}.
\]
Thus the first-order law of the contrast depends on $(g,g')$ only through their difference and the difference of their feature-gradient fields.  In particular, any generated-feature fluctuation common to the two corrected risks cancels identically in both the one-step contrast $\hPsi(g)-\hPsi(g')$ and every posterior draw $\Psi^{(s)}(g)-\Psi^{(s)}(g')$.
\end{proposition}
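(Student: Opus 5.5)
The plan is to apply Theorem~\ref{thm:eif}(i) to $g$ and to $g'$ separately, use linearity of pathwise derivatives to get the influence function of the contrast, and then use linearity of the corrected score array in $f$ to obtain the exact finite-sample cancellation.

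\emph{Step 1: influence function of the contrast.} For $g\in\calG$, Theorem~\ref{thm:eif}(i) gives $\phi_g(O;\eta_0)-\Psi_0(g)$ as the efficient influence function of $P\mapsto\Psi_g(P)$. For a feature loss we have $\nabla_m g=0$, so the score reduces to
\[
\phi_g(O;\eta_0)=g(U)+\nabla g(U)^\top HR(O;\eta_0).
\]
When $g$ is a quantization loss, I will restrict, as in that theorem, to codebooks whose Voronoi boundaries are $P_U$-null; this is guaranteed under Assumption~\ref{ass:margin}. Pathwise derivatives are linear in the functional. Hence the contrast $\Psi(g)-\Psi(g')$ is pathwise differentiable, and its influence function is the difference of the two influence functions. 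Grouping terms gives exactly the displayed expression. That expression depends on $(g,g')$ only through $g-g'$ and $\nabla g-\nabla g'$.

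\emph{Step 2: exact cancellation for the estimator and every draw.} For fixed $(O,\eta)$, the map $f\mapsto\phi_f(O;\eta)$ in \eqref{eq:score} is linear in $f$, because both $f$ and its gradient enter linearly. Therefore $\hphi_{g,i}-\hphi_{g',i}=\phi_{g-g'}(O_i;\hateta^{(-b(i))})$ holds observation by observation, with $\nabla(g-g')=\nabla g-\nabla g'$ defined off boundary null sets.

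By definition, both $\hPsi$ and $\Psi^{(s)}$ in \eqref{eq:posteriorprocess} are weighted averages of the array $\hphi_{f,i}$, with the same weights used for every $f$. It follows that $\hPsi(g)-\hPsi(g')$ and $\Psi^{(s)}(g)-\Psi^{(s)}(g')$ equal $n^{-1}\sum_i w_i\,\phi_{g-g'}(O_i;\hateta^{(-b(i))})$, taking $w_i\equiv1$ for the one-step estimator.

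Now take any component common to the two corrected evaluations, meaning a term $\xi_i$ with $\hphi_{g,i}=a_i+\xi_i$ and $\hphi_{g',i}=a'_i+\xi_i$. Examples are the shared $\|H\hR_i\|^2$ term of Proposition~\ref{prop:pseudo}, or identical loss and gradient values on the region where $g=g'$. Such a $\xi_i$ is subtracted exactly, observation by observation. It is therefore absent from the contrast for every weight vector, and not merely to first order.

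\emph{Main obstacle.} The only delicate point is that the quantization losses are not differentiable everywhere, so Step~1 needs the boundary-null condition. I will handle this by invoking part~(i) of Theorem~\ref{thm:eif} only under its stated hypotheses. I will also note that the finite-sample identity in Step~2 holds pointwise under the fixed deterministic tie rule, and so needs no differentiability at all.
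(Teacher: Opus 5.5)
Your proposal is correct and follows essentially the same route as the paper: apply Theorem~\ref{thm:eif}(i) to each feature loss and subtract, using linearity of the pathwise derivative and of $f\mapsto\phi_f$ in $(f,\nabla f)$, then note that $\hPsi$ and $\Psi^{(s)}$ are linear in the corrected evaluations with weights that do not depend on $f$, so any common additive term cancels observation by observation. Your added care about the boundary-null hypothesis for quantization losses, and your observation that the finite-sample identity holds for the conventional tie-broken gradients without any differentiability, are consistent with the paper's argument and make explicit points the paper leaves implicit.
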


\begin{proof}
The result follows by applying Theorem~\ref{thm:eif} to the difference functional $\Psi_0(g)-\Psi_0(g')$.  Since the corrected score is linear in the loss and its gradient,
\[
\phi_g(O;\eta_0)-\phi_{g'}(O;\eta_0)
=
\{g(U)-g'(U)\}
+
\{\nabla g(U)-\nabla g'(U)\}^\top HR .
\]
Centering by $\Psi_0(g)-\Psi_0(g')$ gives the displayed influence function.  The same subtraction is exact in the empirical one-step process and in the weighted process \eqref{eq:posteriorprocess}, because both are linear in the corrected evaluations.
\end{proof}

The resolution path $W(\cdot)$, the causal heterogeneity $R^2$ curve $\rho(\cdot)$, and the resolution profile are built from such risk contrasts.  The feature-law posterior therefore measures uncertainty directly on the corrected risk scale.  By contrast, an exponentiated order posterior compares risks after multiplying them by $n$ and exponentiating.  The next calculation records the resulting scale issue.

Consider a generalized-Bayes posterior \cite{bissiri2016} over models $M\in\calM$ and parameters $\beta\in B_M$,
\begin{equation}\label{eq:gibbs}
\Pi_n(M,d\beta\mid O_{1:n})
\propto
q(M)\exp\{-\lambda_n n\widehat L_n(\beta)\}\Pi_M(d\beta).
\end{equation}
Let
\[
Z_M
=
\int_{B_M}\exp\{-\lambda_n n L_n^U(\beta)\}\Pi_M(d\beta),
\qquad
\widehat Z_M
=
\int_{B_M}\exp\{-\lambda_n n \widehat L_n(\beta)\}\Pi_M(d\beta),
\]
where $L_n^U$ is the same empirical loss computed with oracle features and $\widehat L_n$ is the loss computed with generated features or corrected generated-feature evaluations.

\begin{proposition}[Energy-scale sensitivity of exponentiated order comparisons]\label{prop:fragility}
Let
\[
\Delta_n=\sup_{\beta}|\widehat L_n(\beta)-L_n^U(\beta)|.
\]
Then, for any two models $M,M'$,
\[
\left|
\log
\frac{\widehat Z_M/\widehat Z_{M'}}
     {Z_M/Z_{M'}}
\right|
\le
2\lambda_n n\Delta_n .
\]
Consequently, a perturbation that is negligible on the risk scale need not be negligible on the posterior-odds scale.  In particular, if $\Delta_n\asymp n^{-1/2}$ and $\lambda_n\asymp1$, the bound permits order-$\sqrt n$ perturbations of log odds.

Generated features naturally create perturbations on this scale.  If
\[
\widehat L_n(\beta)=P_n\ell_\beta(\hU),
\qquad
\hU_i=U_i+b_n\xi_i,
\]
then, uniformly over regular finite-dimensional regions of $\beta$,
\[
\widehat L_n(\beta)-L_n^U(\beta)
=
b_nP_n\{\xi^\top\nabla_u\ell_\beta(U)\}+O_p(b_n^2).
\]
When the corresponding score fields differ across competing complexities, the induced log-odds shift is on the $\lambda_n n b_n$ scale.  Influence-function correction removes the deterministic mean shift, but if the corrected losses are subsequently exponentiated, the remaining mean-zero $O_p(n^{-1/2})$ score fluctuation is still multiplied by $n$ in order comparisons.
\end{proposition}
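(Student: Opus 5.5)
The plan is to prove the two parts separately; both are elementary. For the first part, fix $M,M'$ and compare $\widehat Z_M$ with $Z_M$ integrand by integrand. For every $\beta\in B_M$ we have $|\widehat L_n(\beta)-L_n^U(\beta)|\le\Delta_n$, so
\[
\exp\{-\lambda_n nL_n^U(\beta)\}e^{-\lambda_n n\Delta_n}\le\exp\{-\lambda_n n\widehat L_n(\beta)\}\le\exp\{-\lambda_n nL_n^U(\beta)\}e^{\lambda_n n\Delta_n}.
\]
Integrating against $\Pi_M$ preserves both inequalities, because the prior is a positive measure and $e^{\pm\lambda_n n\Delta_n}$ does not depend on $\beta$. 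Hence $|\log\widehat Z_M-\log Z_M|\le\lambda_n n\Delta_n$, and the same bound holds for $M'$. The log of the ratio of odds equals $(\log\widehat Z_M-\log Z_M)-(\log\widehat Z_{M'}-\log Z_{M'})$, so the triangle inequality gives the bound $2\lambda_n n\Delta_n$. The consequence for $\Delta_n\asymp n^{-1/2}$ and $\lambda_n\asymp1$ is then just arithmetic. I will assume $Z_M,Z_{M'}$ are finite and positive, which is implicit in the statement.

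For the generated-feature expansion, I will Taylor-expand $u\mapsto\ell_\beta(u)$ at $U_i$ with increment $b_n\xi_i$:
\[
\ell_\beta(\hU_i)-\ell_\beta(U_i)=b_n\xi_i^\top\nabla_u\ell_\beta(U_i)+\tfrac12b_n^2\xi_i^\top\nabla_u^2\ell_\beta(\tilde u_i)\xi_i .
\]
Averaging over $i$ gives the leading term $b_nP_n\{\xi^\top\nabla_u\ell_\beta(U)\}$. The remainder is bounded by $\tfrac12b_n^2\sup_{\beta,u}\norm{\nabla_u^2\ell_\beta}\,P_n\norm{\xi}^2$. Over a regular compact region of $\beta$, Assumption~\ref{ass:class}(iii) makes this Hessian bound uniform. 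So the remainder is $O_p(b_n^2)$ provided $P_n\norm{\xi}^2=O_p(1)$; for example, $\E\norm{\xi}^2<\infty$ suffices by Markov's inequality. This uniformity is the only point needing care. The main obstacle is that $\hU_i$ may leave the region on which the derivative bounds hold. I would handle it by truncation to $\calC$, as in Assumption~\ref{ass:bound}, together with convexity of $\calC$, which keeps the whole segment $[U_i,\hU_i]$ inside $\calC$.

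The log-odds statement then follows by substituting this expansion into the first part, or more precisely into a Laplace-type comparison. When the score fields $\nabla_u\ell_\beta$ differ across $M,M'$, the two log partition functions shift by different amounts of order $\lambda_n n b_n$. The concluding sentence about correction is a remark about scale rather than a new bound. Theorem~\ref{thm:bvm}(i) gives an $O_p(n^{-1/2})$ mean-zero fluctuation for corrected losses, and inserting $\Delta_n\asymp n^{-1/2}$ into the first part shows that it is multiplied by $n$.
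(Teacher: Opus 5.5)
Your sandwich argument for the odds-ratio bound and your second-order Taylor expansion are the paper's argument. One small point: under the stated hypothesis $\hU_i=U_i+b_n\xi_i$ you cannot truncate $\hU_i$ to $\calC$ without changing the increment. The paper instead takes $\xi_i$ bounded, so that for small $b_n$ the segment $[U_i,\hU_i]$ stays in a fixed neighborhood of $\calC$ where the Hessian bound of Assumption~\ref{ass:class}\textup{(iii)} holds. Bounded $\xi$ also supplies the envelope needed below.

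The gap is in the log-odds claim. Substituting the expansion into the first part gives only
\[
\Bigl|\log\frac{\widehat Z_M/\widehat Z_{M'}}{Z_M/Z_{M'}}\Bigr|\le2\lambda_n n\Bigl[b_n\sup_\beta\bigl|P_n\{\xi^\top\nabla_u\ell_\beta(U)\}\bigr|+\Op(b_n^2)\Bigr].
\]
This is an upper bound. It shows the shift is \emph{at most} of order $\lambda_n nb_n$, which the first part already permits. It does not show that generated features actually create a shift of that size, and that is the content of ``naturally create perturbations on this scale.'' Your sentence that the two log partition functions shift by different amounts of order $\lambda_n nb_n$ is only asserted. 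The Laplace comparison is not carried out; it would need concentration of each Gibbs measure near a minimizer plus an envelope argument.

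The missing step is to sandwich the \emph{recentred} perturbation.
\begin{itemize}
\item Formalize ``score fields differ'' as $\sup_{\beta\in B_M}|P_0\{\xi^\top\nabla_u\ell_\beta(U)\}-a|\le\eps$, likewise with $(B_{M'},a')$, and set $\kappa=a-a'\ne0$.
\item The class $\{\xi^\top\nabla_u\ell_\beta(U)\}$ is bounded and Lipschitz in a finite-dimensional index, hence Donsker. So $G_n=\sup_\beta|(P_n-P_0)\{\xi^\top\nabla_u\ell_\beta(U)\}|=\Op(n^{-1/2})$.
\item Consequently $\sup_{\beta\in B_M}|\widehat L_n-L_n^U-b_na|\le b_n(\eps+G_n)+Cb_n^2$.
\item Pull the constant factor $e^{-\lambda_n nb_na}$ out of $\widehat Z_M$, and $e^{-\lambda_n nb_na'}$ out of $\widehat Z_{M'}$, before applying your sandwich. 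This gives
\[
\Bigl|\log\frac{\widehat Z_M/\widehat Z_{M'}}{Z_M/Z_{M'}}+\lambda_n nb_n\kappa\Bigr|\le\lambda_n nb_n\bigl\{2\eps+\Op(n^{-1/2})+O(b_n)\bigr\}.
\]
\end{itemize}
The log-odds shift is therefore bounded below in absolute value by $\lambda_n nb_n\{|\kappa|-2\eps-o_p(1)\}$. It is genuinely of order $\lambda_n nb_n$ once $2\eps<|\kappa|$, which is a two-sided statement rather than a permission.

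The same recentring is needed for the final sentence about corrected losses. Your reduction, via Theorem~\ref{thm:bvm}\textup{(i)}, of the corrected loss to $L_n^U(\beta)+P_n\{\nabla_u\ell_\beta(U)^\top HR\}+o_p(n^{-1/2})$ is fine. But what makes the residual term ``amplified'' is applying the same recentred sandwich to it across models with different score fields. Re-inserting $\Delta_n\asymp n^{-1/2}$ into the upper bound does not do this.
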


Proposition~\ref{prop:fragility} is a scale calculation for one possible construction, not a statement about all generalized-Bayes procedures and not a criticism of fixed-complexity generalized Bayes.  Temperature choices can be useful for calibrating within-model posterior spread, but they do not by themselves change the fact that order odds operate on an exponentiated energy scale.  The feature-law posterior avoids this issue by keeping uncertainty statements linear in corrected risk contrasts.  This is a supporting reason for the construction; the inferential target of the paper remains the resolution profile of the causal feature law, not an order posterior.

\begin{remark}[Sharpness of the scale bound]\label{rem:energy_sharp}
The bound in Proposition~\ref{prop:fragility} is sharp at the stated scale.  For any $c_n>0$, there exist perturbations with $\Delta_n=c_n$, supported on one model, for which
\[
\widehat Z_M/\widehat Z_{M'}
=
e^{\lambda_n n c_n}Z_M/Z_{M'}.
\]
The proof is given in Supplementary Section~\ref{app:props}.
\end{remark}

\section{Additional theoretical guarantees}\label{sec:app:additional_theory}

This section collects theoretical details that support the main report but are not needed to read the headline profile guarantees in Section~\ref{sec:theory}.  The notation and assumptions are those of the main text unless stated otherwise.

\subsection{Additional details on the feature-law posterior}\label{sec:app:bvm_details}

The proof of Theorem~\ref{thm:bvm}\textup{(ii)} uses the decomposition
\begin{equation}\label{eq:bvmdecomp}
\sqrt n\bigl(\Psi^{(s)}(f)-\hPsi(f)\bigr)
=\underbrace{\frac1{\sqrt n}\sum_i(w_i^{(s)}-1)\,\phi_f(O_i;\eta_0)}_{\text{oracle multiplier process}}
+\underbrace{\frac1{\sqrt n}\sum_i(w_i^{(s)}-1)\,\Delta_i(f)}_{\text{increment term}},
\end{equation}
with $\Delta_i(f)=\hphi_{f,i}-\phi_f(O_i;\eta_0)$.
The first term is a standard exchangeable-bootstrap process; the second is the data-dependent estimated-score increment that must be negligible uniformly over $\calF$. The mode of convergence in \eqref{eq:bvm} is conditional weak convergence in probability (Definition~\ref{def:condweak}).

\begin{remark}[What the theorem does and does not assert]\label{rem:bvmscope}
\eqref{eq:bvm} is conditional convergence in probability, the same mode in which bootstrap validity is ordinarily stated; we make no almost-sure claim. It asserts merging with the efficient limit: the posterior neither inflates for nuisance uncertainty (the correction has removed its first-order effect) nor ignores any first-order term. And it is genuinely uniform: the single statement \eqref{eq:bvm} is what licenses, through the delta method below, simultaneous inference across all $K$, all codebooks, all memberships, and all effect contrasts at once; the form of inference cluster analysis actually requires.
\end{remark}

\begin{remark}[Relation to scalar posterior corrections and the exchangeable bootstrap]\label{rem:bvmcredit}
It is worth recording exactly which ingredients of Theorem~\ref{thm:bvm} are classical and which are new. The limiting engine for the oracle term in \eqref{eq:bvmdecomp} is the exchangeable-bootstrap central limit theorem of \cite{praestgaard1993} over a fixed Donsker class; this part is borrowed, and it is the same engine behind the scalar corrected posteriors of \cite{yiu2025}. The content specific to the present setting is threefold. First, the uniform negligibility of the estimated-score increment, whereby the corrected evaluations $\hphi_{f,i}$ differ from their oracle versions by a data-dependent array, and Lemma~\ref{lem:multmax} with Assumption~\ref{ass:nuisance}(ii) shows the Dirichlet-weighted increment process vanishes uniformly over $\calF$; the step at which cross-fitting, the entropy structure of Assumption~\ref{ass:class}, and the sub-exponential representation of the weights interact. Second, the uniform second-order bias of Theorem~\ref{thm:eif}(ii), which must hold over a class containing the nonsmooth quantization losses; the margin-derived exponent $2(1+\am)/(2+\am)$ has no antecedent in the scalar posterior-correction literature, and relative to the fixed-$K$ margin analysis of corrected quantization risks in \cite{kim2026causal}, the bound here quantifies the boundary mechanism uniformly over codebooks and over $K\le\overline K$. Third, centering exactness, meaning that because $\sum_i(w^{(s)}_i-1)=0$, the posterior is centered at the one-step process $\hPsi$ identically, not merely asymptotically, so no recentering of the scalar kind is needed at any $f$. The theorem is stated for Dirichlet weights; we have not pursued the general exchangeable-weight family, for which the increment-term analysis would have to be redone.
\end{remark}

\subsection{Expanded subgroup-effect scores}\label{sec:app:expanded_effect_scores}

This section records the composite expansion behind Theorem~\ref{thm:effects}.  Fix a working resolution $K$ satisfying Assumption~\ref{ass:proj}, and write $\beta^\star=\beta^\star(K)$.  Let
\[
D_h=\Psi_0(f^D_{h;\beta^\star}),
\qquad
\psi_{h,a}
=
\frac{\Psi_0(f^N_{h,a;\beta^\star})}{D_h}.
\]
The projection influence function is
\[
\Phi_\beta(O)
=
-
V_\beta^{-1}\,
\phi_{\nabla_\beta\ell_{\beta^\star}}(O;\eta_0),
\]
where
\[
V_\beta
=
\nabla_\beta^2\Psi_0(\ell_\beta)\big|_{\beta=\beta^\star}.
\]
The composite influence function for the subgroup mean is
\begin{equation}\label{eq:compositeIF}
\Phi_{h,a}(O)
=
\underbrace{
\frac{1}{D_h}
\Bigl[
\{\phi_{f^N_{h,a;\beta^\star}}(O;\eta_0)-\Psi_0(f^N_{h,a;\beta^\star})\}
-
\psi_{h,a}
\{\phi_{f^D_{h;\beta^\star}}(O;\eta_0)-D_h\}
\Bigr]
}_{\displaystyle \Phi^{\mathrm{fix}}_{h,a}(O)}
+
\{\partial_\beta\psi_{h,a}(\beta^\star)\}^\top
\Phi_\beta(O).
\end{equation}
The first term, $\Phi^{\mathrm{fix}}_{h,a}$, is the influence function for the numerator--denominator ratio when the soft subgroup definition is held fixed at $\beta^\star$.  The second term is the additional contribution from estimating the projection that defines the memberships.  Thus recomputing $\beta^{(s)}(K)$ in every posterior draw is what carries partition uncertainty into the posterior law of $\psi_{h,a}(K)$; fixing $\beta$ at $\hat\beta(K)$ removes the second term and targets only the fixed-partition component.
\begin{remark}[Expanded scores]\label{rem:expandedscores}
Unpacking \eqref{eq:score} at $f^N$ and $f^D$ gives the explicit corrected scores
\begin{align*}
\phi_{f^N_{h,a;\beta}}(O;\eta)&=r_h\{U_\eta(X);\beta\}\,\mu_{\eta,a}(X)\;+\;r_h\{U_\eta(X);\beta\}\,R_a(O;\eta)\;+\;\mu_{\eta,a}(X)\,\nabla_ur_h\{U_\eta(X);\beta\}^\top HR(O;\eta),\\
\phi_{f^D_{h;\beta}}(O;\eta)&=r_h\{U_\eta(X);\beta\}\;+\;\nabla_ur_h\{U_\eta(X);\beta\}^\top HR(O;\eta),
\end{align*}
i.e.\ the within-group AIPW term plus a membership-gradient correction accounting for the group definition's dependence on the estimated features, the term a naive ``cluster, then run AIPW within clusters'' pipeline omits. Hard-cell analogues replace $r_h$ by Voronoi indicators. Unlike the quantization loss, whose two pieces agree to first order at a Voronoi boundary and so contribute the extra $\norm{\Delta}$ factor with bias exponent $2(1+\am)/(2+\am)$, the indicator's two pieces differ by $O(1)$ across the boundary, so its crossing contribution is first order in the feature error. Under the margin condition the resulting bias is of order $r_\mu^{2\am/(2+\am)}$, which is $r_\mu^{2/3}$ at $\am=1$, so root-$n$ control of a hard-cell contrast would require $r_\mu=\opro(n^{-3/4})$, far stronger than Assumption~\ref{ass:nuisance}. This is why the theory treats the soft projections and why the hard-cell displays of Supplementary Section~\ref{sec:app:hillstrom_k3} carry no nominal coverage.
\end{remark}

\subsection{Projection Bernstein--von Mises}\label{sec:app:projection_supp}

The following theorem, referenced from Section~\ref{sec:effectstheory} of the main text, supplies the partition-uncertainty component of the subgroup-effect limits.

\begin{theorem}[Projection Bernstein--von Mises]\label{thm:projection}
Fix $K$ and let Assumptions~\ref{ass:ident}--\ref{ass:class} and \ref{ass:proj} hold, together with the nuisance-rate condition restricted to $\calF_{\mathrm{eff}}$. Work on the local chart around the selected label representative of $\beta^\star(K)$, with all displayed vectors taken after the minimum-cost label alignment used in the algorithm. Let $\hat\beta(K)$ and $\beta^{(s)}(K)$ be measurable minimizers of $\hPsi(\ell_\cdot)$ and $\Psi^{(s)}(\ell_\cdot)$ over $\calB_K$, which exist by the image-admissible Suslin structure of the criterion classes \citep[Section~8.2]{kosorok2008}. Then
\begin{equation*}
\sqrt n\,\{\hat\beta(K)-\beta^\star(K)\}\ \dto\ N\bigl(0,\ V_\beta^{-1}\Sigma_\beta V_\beta^{-1}\bigr),
\quad
\sqrt n\,\{\beta^{(s)}(K)-\hat\beta(K)\}\ \dtow\ N\bigl(0,\ V_\beta^{-1}\Sigma_\beta V_\beta^{-1}\bigr),
\end{equation*}
with $\Sigma_\beta=\Var\{\phi_{\nabla_\beta\ell_{\beta^\star}}(O;\eta_0)\}$, where $\phi_{\nabla_\beta\ell_{\beta^\star}}$ denotes the finite vector obtained by applying \eqref{eq:score} to each coordinate of $\nabla_\beta\ell_{\beta^\star}$. If in addition $\Sigma_\beta$ is nonsingular, posterior credible sets for $\beta^\star(K)$, and for the membership surfaces $u\mapsto r_h(u;\beta^\star(K))$ through a further smooth map, are asymptotically valid on the aligned local chart, and the same holds jointly across the $K$ in any finite set.
\end{theorem}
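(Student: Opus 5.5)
We treat $\beta^\star(K)$ as a $Z$-functional of the moment process and reduce both limits to Theorem~\ref{thm:bvm} restricted to $\calF_{\mathrm{eff}}$, via the posterior delta method of Theorem~\ref{thm:delta}. The key point is that Assumption~\ref{ass:class}(iii) places the coordinates of $\nabla_\beta\ell_\beta$ in $\calF_{\mathrm{str}}$, and $\ell_\beta$ itself is a smooth loss. Consequently, for any $\nu\in\ell^\infty(\calF_{\mathrm{eff}})$, the criterion $\beta\mapsto\nu(\ell_\beta)$ and the estimating map $\beta\mapsto\nu(\nabla_\beta\ell_\beta)$ are coordinate projections of $\nu$. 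Apply Theorem~\ref{thm:bvm} to the VC-type subclass $\calF_{\mathrm{eff}}$. Its nuisance-rate condition is $R_{2,\mathrm{eff}}$ alone, with no quantization margin. This gives $\sqrt n(\hPsi-\Psi_0)\dto\G_0$ and $\sqrt n(\Psi^{(s)}-\hPsi)\dtow\G_0$ in $\ell^\infty(\calF_{\mathrm{eff}})$.

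The proof then has three steps.

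\emph{Consistency on the chart.} First establish $\hat\beta(K)\to\beta^\star(K)$ and $\beta^{(s)}(K)\to\beta^\star(K)$, modulo labels. The process limits give $\sup_\beta|\hPsi(\ell_\beta)-\Psi_0(\ell_\beta)|=\Op(n^{-1/2})$, and the same bound for $\Psi^{(s)}$ conditionally. By Assumption~\ref{ass:proj}, $\beta^\star(K)$ is the unique minimizer up to label permutation on the compact set $\calB_K$, and the criterion is continuous. The standard argmin argument then places both minimizers in any neighborhood of the permutation orbit of $\beta^\star(K)$ with probability tending to one. On that event, the minimum-cost alignment argument of Section~\ref{sec:computation_supp} makes the label representative the identity, which reduces everything to the local chart.

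\emph{Hadamard differentiability of the $Z$-map.} On the chart, define $T(\nu)$ as the zero of $\beta\mapsto\nu(\nabla_\beta\ell_\beta)$ near $\beta^\star$. By interiority, the minimizers are zeros of these estimating equations. The $Z$-estimator theorem of van der Vaart and Wellner (Section~3.3) gives Hadamard differentiability of $T$ at $\Psi_0$, tangentially to $C_\varsigma(\calF_{\mathrm{eff}})$, with derivative
\[
\zeta\mapsto -V_\beta^{-1}\zeta(\nabla_\beta\ell_{\beta^\star}).
\]
This step needs four inputs:
\begin{enumerate}
\item the map $\beta\mapsto\Psi_0(\nabla_\beta\ell_\beta)$ is Fréchet differentiable at $\beta^\star$ with derivative $V_\beta$, which follows from the bounded, Lipschitz second derivatives in Assumption~\ref{ass:class}(iii) and dominated convergence;
\item $V_\beta$ is nonsingular, by Assumption~\ref{ass:proj};
\item the zero is well separated, from uniqueness, compactness, and nonsingularity;
\item the tangent directions are equicontinuous in $\beta$ under $\varsigma$, which holds because $\beta\mapsto\nabla_\beta\ell_\beta$ is Lipschitz in supremum norm, so $\varsigma$-continuity transfers to $\beta$-continuity.
\end{enumerate}
The derivative is linear and continuous on all of $\ell^\infty(\calF_{\mathrm{eff}})$, so it extends as Theorem~\ref{thm:delta} requires.

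The main obstacle is that $\hat\beta$ and $\beta^{(s)}$ are exact minimizers rather than exact zeros of $T$ applied to a perturbed process. I would handle this by showing they are $\opro(n^{-1/2})$-approximate zeros; conditionally, the corresponding statement is in probability. On the consistency event the minimizer is interior, so the gradient of $\nu(\ell_\beta)$ vanishes. That gradient equals $\nu(\nabla_\beta\ell_\beta)$ because $\hPsi$ and $\Psi^{(s)}$ are finite weighted averages of the scores $\phi_f$, and $\phi_f$ is differentiable in $\beta$ with derivative $\phi_{\nabla_\beta f}$. This derivative identity holds by linearity of \eqref{eq:score} in $(f,\nabla_u f)$ together with the mixed-derivative conditions of Assumption~\ref{ass:class}(iii). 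Measurability of the selected minimizers follows from the Suslin structure cited in the statement.

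\emph{Conclusion.} Theorem~\ref{thm:delta} now yields both displayed limits, with limit $-V_\beta^{-1}\G_0(\nabla_\beta\ell_{\beta^\star})$, whose covariance is $V_\beta^{-1}\Sigma_\beta V_\beta^{-1}$. If $\Sigma_\beta$ is nonsingular, this Gaussian limit is nondegenerate, so equal-tailed and ellipsoidal credible sets are asymptotically valid. For the membership surfaces $u\mapsto r_h(u;\beta)$, compose with a further delta step: the map $\beta\mapsto r_h(\cdot;\beta)\in\ell^\infty(\calC)$ is differentiable because $m_\beta\ge\underline m$ and its derivatives are bounded. Joint validity over a finite set of values of $K$ follows by stacking the maps $T_K$ on the single process $\ell^\infty(\calF_{\mathrm{eff}})$, since that process already carries the joint limit.
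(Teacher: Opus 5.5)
Your proposal is correct and follows essentially the same route as the paper. It obtains consistency by the argmin argument on the process restricted to $\calF_{\mathrm{eff}}$, gets exact stationarity of the interior minimizers from the identity $\partial_\beta\hphi_{\ell_\beta,i}=\hphi_{\nabla_\beta\ell_\beta,i}$, and uses Hadamard differentiability of the $Z$-map with derivative $-V_\beta^{-1}\zeta(\nabla_\beta\ell_{\beta^\star})$ fed into Theorem~\ref{thm:delta}, plus a chain-rule step for the membership surfaces. The only difference is cosmetic: you cite the van der Vaart--Wellner $Z$-map lemma, whereas the paper proves its own Lemma~\ref{lem:zfun}, whose near-minimal-norm tolerance defines $T_\beta$ on all of $\ell^\infty(\calF_{\mathrm{eff}})$, so the delta method applies after modifying $T_\beta$ off events of vanishing probability.
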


The proof is provided in Supplementary Section~\ref{app:effects}.

\subsection{Exact finite response classes}\label{sec:exact_classes}

The resolution profile is descriptive rather than latent-class based, but it is backward compatible with genuine finite-class populations.  Suppose the causal feature law has exactly $K_0$ separated support points, $P_U=\sum_{h=1}^{K_0}\omega_h\delta_{u_h}$, $\omega_h\ge\omega_{\min}>0$, $\Delta=\min_{h\ne l}\norm{u_h-u_l}>0$.
This is the case in which individuals fall into exact finite response classes: every covariate profile has one of the $K_0$ causal response profiles $u_1,\ldots,u_{K_0}$.  Such a law is outside the margin-continuous regime used for the Gaussian path theory, but it is the regime in which a classical ``true number of response classes'' is meaningful.

\begin{proposition}[Atomic recovery]\label{prop:atomic}
Let $P_U$ be as displayed with $2\le K_0\le\overline K$, and set $\bar\tau=\omega_{\min}\Delta^2/(2K_0)$.
\begin{enumerate}[label=(\roman*),leftmargin=2em]
\item For every $\tau\in(0,\bar\tau)$ the penalized minimizer is unique up to labels and equals the atom set: $\calS_0(\tau)=\{K_0\}$ with optimal codebook $\{u_1,\dots,u_{K_0}\}$. Moreover $W(K)=0$ for all $K\ge K_0$, every order above $K_0$ is inactive, and $\calS\subseteq\{1,\dots,K_0\}$ with largest element $K_0$; for every cap $\overline K\ge K_0$, the order $K_0$ is optimal for all prices in $(0,\bar\tau)$.
\item $\rho(K)<1=\rho(K_0)$ for every $K<K_0$, so the threshold profile satisfies $K^\star(\gamma)=K_0$ for every $\gamma\in(\rho(K_0-1),1)$.
\item Under Assumptions~\ref{ass:ident}--\ref{ass:class} and $r_\mu\vee r_\pi=\opro(1)$ alone (no margin, no uniqueness, no Hessian, and no rate conditions)
\begin{equation*}
\sup_{K\le\overline K}\,\sup_{c\in\calC^K}\bigl|\hPsi(g_c)-\Psi_0(g_c)\bigr|=\opro(1),
\qquad
\Pp_w\Bigl(\,\sup_{K,c}\bigl|\Psi^{(s)}(g_c)-\Psi_0(g_c)\bigr|>\eps\Bigr)\pto0\quad\text{for every }\eps>0 .
\end{equation*}
Consequently $\Pp\{\widehat K^\dagger(\tau)=K_0\}\to1$ and $\Pp_w\{K^{\dagger(s)}(\tau)=K_0\}\pto1$ for every fixed $\tau\in(0,\bar\tau)$, and likewise $\widehat K^\star(\gamma)$ and $K^{\star(s)}(\gamma)$ recover $K_0$, in the same senses, for every fixed $\gamma\in(\rho(K_0-1),1)$.
\end{enumerate}
\end{proposition}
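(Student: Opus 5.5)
Parts (i) and (ii) are deterministic statements about the atomic law, and part (iii) is a uniform law of large numbers followed by a continuous-mapping argument on a finite path. We treat them in that order.

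\textbf{Population geometry, parts (i) and (ii).} For $K\ge K_0$, take the codebook $\{u_1,\dots,u_{K_0}\}$, padded with repeats; then $g_c(u_h)=0$ for every atom, so $W(K)=0$. For $K<K_0$, some cell must contain two distinct atoms $u_h,u_l$. The best single center for that cell costs at least $\tfrac12\omega_{\min}\norm{u_h-u_l}^2/2\cdot$(a constant). More crudely, every codebook with fewer than $K_0$ centers leaves some atom at distance at least $\Delta/2$ from all centers, giving $W(K)\ge\omega_{\min}\Delta^2/4$.

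With this bound in hand, $W(K)+\tau K\ge\omega_{\min}\Delta^2/4+\tau K$. This exceeds $W(K_0)+\tau K_0=\tau K_0$ whenever $\tau K_0<\omega_{\min}\Delta^2/4$, which holds for $\tau<\bar\tau=\omega_{\min}\Delta^2/(2K_0)$ after checking constants. Orders above $K_0$ cost strictly more, since $W=0$ and $\tau K>\tau K_0$. Hence $\calS_0(\tau)=\{K_0\}$, and Proposition~\ref{prop:envelope} gives the active-set statements. Uniqueness of the codebook follows because any zero-loss codebook with $K_0$ centers must contain every atom.

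For part (ii), $W(1)>0$ since there are at least two atoms, and $W(K)>0$ for $K<K_0$, so $\rho(K)<1=\rho(K_0)$. Monotonicity of $\rho$ then gives $K^\star(\gamma)=K_0$ on $(\rho(K_0-1),1)$.

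\textbf{Uniform consistency, part (iii).} Write
\[
\hPsi(g_c)-\Psi_0(g_c)=(P_n-P_0)\phi_{g_c}(\cdot;\eta_0)+\frac1n\sum_i\Delta_i(g_c)
\]
with $\Delta_i(g_c)=\hphi_{g_c,i}-\phi_{g_c}(O_i;\eta_0)$. The first term vanishes uniformly because $\{\phi_{g_c}(\cdot;\eta_0)\}$ is Glivenko--Cantelli: $g_c$ is Lipschitz in $c$ on compact $\calC$, and the gradient part $2(U-c_{h_c(U)})^\top HR$ is indexed by a VC-type Voronoi-cell class with bounded envelope.

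The second term is the main obstacle, because with no margin condition we cannot use the bias bound of Theorem~\ref{thm:eif}(ii). Instead we use crude Lipschitz bounds. First, $|g_c(\hU)-g_c(U)|\le C\norm{\hU-U}$. Second, in
\[
\nabla g_c(\hU)^\top H\hR-\nabla g_c(U)^\top HR
\]
the residual difference is controlled by $\norm{\hR-R}\le C(\norm{\hat\bmu-\bmu}+\norm{\hat\pi-\pi})$. The gradient difference $\nabla g_c(\hU)-\nabla g_c(U)$ is bounded by $2\norm{\hU-U}$ plus a cell-switch term of size at most $2\,\mathrm{diam}(\calC)$, multiplied by $|HR|$. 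Conditioning on the training folds, $\E\{HR\mid X\}=0$, so the cell-switch term times $R$ has conditional mean zero; only its fold-wise empirical fluctuation, of order $n^{-1/2}$, and the bounded-difference pieces remain. This is uniform over $c$ by the same VC structure applied to the cell-switch indicators within each fold. The net bound is
\[
\sup_c\Bigl|\frac1n\sum_i\Delta_i(g_c)\Bigr|\le C\bigl(r_\mu+r_\pi\bigr)+\opro(1)=\opro(1).
\]

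\textbf{Posterior statement and conclusions.} The weighted version follows identically. Write $\frac1n\sum_i w_i\hphi_{g_c,i}$ and use that $\E_w w_i=1$, $\max_i w_i=\Op(\log n)$, and a conditional multiplier Glivenko--Cantelli theorem; the Lipschitz bounds above carry the weights through because each piece is an average of nonnegative bounded terms times $w_i$.

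Since $W$ is an infimum of uniformly approximated values, $\sup_K|\widehat W(K)-W(K)|=\opro(1)$, and similarly for $W^{(s)}$. For $\tau<\bar\tau$, the strict gap between $W(K_0)+\tau K_0$ and every other order yields $\widehat K^\dagger(\tau)=K_0$ with probability tending to one, and likewise per draw. For $\gamma\in(\rho(K_0-1),1)$, $W(1)>0$ makes $\rho$ continuous in the path; $\hat\rho(K_0)\to1>\gamma$ while $\hat\rho(K)\to\rho(K)<\gamma$ for $K<K_0$, which gives recovery of $K_0$ by $\widehat K^\star(\gamma)$ and $K^{\star(s)}(\gamma)$.
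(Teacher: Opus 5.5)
The gap is in part (i). Your bound $W(K)\ge\omega_{\min}\Delta^2/4$ for $K<K_0$ is correct, but it is a factor of two too weak for the stated threshold. The condition $\tau<\bar\tau=\omega_{\min}\Delta^2/(2K_0)$ gives only $\tau K_0<\omega_{\min}\Delta^2/2$, so your claim that $\tau K_0<\omega_{\min}\Delta^2/4$ holds on all of $(0,\bar\tau)$ is false.

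The sharper requirement $W(K)>\tau(K_0-K)$ does not rescue it either. The worst case $K=1$ needs $\tau<\omega_{\min}\Delta^2/\{4(K_0-1)\}$, which is smaller than $\bar\tau$ as soon as $K_0\ge3$. For $K_0=3$, for example, your argument covers $\tau<\omega_{\min}\Delta^2/8$ but not $[\omega_{\min}\Delta^2/8,\,\omega_{\min}\Delta^2/6)$. The recovery $\widehat K^\dagger(\tau)=K_0$ in (iii) rests on the strict gap from (i), so it is unproved on the same subinterval.

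The repair is the idea you began and then dropped: charge both atoms $u_h\ne u_{h'}$ that share a nearest center $c_l$, not only the farther one. Then
\[
\omega_h\norm{u_h-c_l}^2+\omega_{h'}\norm{u_{h'}-c_l}^2\ \ge\ \frac{\omega_h\omega_{h'}}{\omega_h+\omega_{h'}}\norm{u_h-u_{h'}}^2\ \ge\ \frac{\omega_{\min}\Delta^2}{2}\;=\;\bar\tau K_0 .
\]
Hence $W(K)+\tau K\ge\bar\tau K_0>\tau K_0$ for every $\tau\in(0,\bar\tau)$ and every $K<K_0$. This is exactly the paper's argument.

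A smaller point concerns the draws. The cell-switch piece $2(c_{h_c(U)}-c_{h_c(\hU)})^\top HR(O;\eta_0)$ is not an average of nonnegative terms; it is negligible only through cancellation. Its absolute-value bound need not vanish uniformly in $c$ for an atomic law, because a bisector of $c$ can be placed through an atom, so the switch event need not have vanishing probability. Your $\E_w w_i=1$ plus conditional Markov step therefore handles only the nonnegative pieces. For the signed piece you need a conditional multiplier bound on $\sup_c|n^{-1}\sum_i(w_i-1)d_i(c)|$ for the data-dependent cross-fitted array. The paper obtains this from the representation $w_i=nE_i/S_n$ and Lemma~\ref{lem:multmax}, giving conditional order $n^{-1/2}$.

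Otherwise your proposal is sound and follows the paper's structure. Your treatment of the increment bias is a valid and slightly more elementary variant of the paper's: you split $\Delta_i(g_c)$ pointwise and kill the cell-switch piece by $\E_0\{R(O;\eta_0)\mid X\}=0$. The paper instead uses \eqref{eq:biasdecomp} with the crude remainder bound $|\mathrm{rem}|\le\norm{\Delta}^2+2\norm{\Delta}\diam\calC$ from Lemma~\ref{lem:geometry}.
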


Thus, when exact response classes exist, they are recovered as the largest active order that remains optimal at sufficiently small positive prices, with centers equal to the class response profiles.  No latent class likelihood is required.  Conversely, when $P_U$ is not exactly finite, the profiles do not impose a spurious true class count; every reported order remains indexed by the resolution or price at which it is optimal.  An extension caveat for mixed atomic-continuous laws is given in Supplementary Section~\ref{sec:atomic_supp_theory}.

The next lemma quantifies the near-mixture backward compatibility asserted in Section~\ref{sec:resolution}.

\begin{lemma}[Near-mixture backward compatibility]\label{lem:nearmixture}
Let $P_U=\sum_{h=1}^{K_0}\omega_h Q_h$ with $2\le K_0\le\overline K$, component means $m_h=\E_{Q_h}U$, weights $\omega_h\ge\omega_{\min}>0$, separation $\Delta=\min_{h\ne l}\norm{m_h-m_l}>0$, and spreads $\E_{Q_h}\norm{U-m_h}^2\le\sigma^2$ for every $h$. Then
\begin{equation*}
W(K_0)\;\le\;\sigma^2,
\qquad
W(K)\;\ge\;\omega_{\min}\bigl(\Delta^2/8-\sigma^2\bigr)\quad\text{for every }K<K_0 .
\end{equation*}
Consequently $1-\rho(K_0)\le\sigma^2/W(1)$ and $1-\rho(K)\ge\omega_{\min}(\Delta^2/8-\sigma^2)/W(1)$ for $K<K_0$. If in addition $\sigma^2<\omega_{\min}\Delta^2/16$, then $K^\star(\gamma)=K_0$ for every
\begin{equation*}
\gamma\in\Bigl(1-\frac{\omega_{\min}\Delta^2}{16\,W(1)},\ 1-\frac{\sigma^2}{W(1)}\Bigr].
\end{equation*}
\end{lemma}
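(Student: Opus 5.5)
The plan is to bound $W(K_0)$ from above with one explicit codebook, to bound $W(K)$ from below for every $K<K_0$ by a pigeonhole argument on the component means, and then to read off the profile statement from the definition \eqref{eq:profile}.

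\emph{Upper bound.} Take the codebook $c=(m_1,\dots,m_{K_0})$. Its entries are admissible: each $m_h$ is a mean of a law supported in $\calU\subset\calC$, and $\calC$ is convex, so $m_h\in\calC$. Since $g_c(u)\le\norm{u-m_h}^2$ for every $h$, integrating over each component gives
\[
W(K_0)\le P_U(g_c)\le\sum_h\omega_h\,\E_{Q_h}\norm{U-m_h}^2\le\sigma^2 .
\]
Dividing by $W(1)>0$ gives $1-\rho(K_0)\le\sigma^2/W(1)$.

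\emph{Lower bound.} Fix $K<K_0$ and any codebook $c\in\calC^K$.
\begin{itemize}
\item[1.] Some mean $m_h$ satisfies $\min_j\norm{m_h-c_j}\ge\Delta/2$. Otherwise each of the $K_0>K$ means lies strictly within $\Delta/2$ of some center. By pigeonhole, two distinct means $m_h,m_l$ then share a center $c_j$, and the triangle inequality gives $\norm{m_h-m_l}<\Delta$, contradicting the separation.
\item[2.] For this $h$ and any $u$, the triangle inequality gives $\dist(u,c)\ge\Delta/2-\norm{u-m_h}$. This is only useful when the right side is nonnegative, so I would not square it directly. Instead I would use the elementary inequality $(a-b)_+^2\ge a^2/2-b^2$ for $a,b\ge0$, which follows from $a^2\le 2(a-b)^2+2b^2$ when $a\ge b$ and is trivial otherwise. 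It yields
\[
g_c(u)\ge\Delta^2/8-\norm{u-m_h}^2 .
\]
\item[3.] Since $g_c\ge0$, integrating only over component $h$ gives
\[
P_U(g_c)\ge\omega_h\,\E_{Q_h}g_c(U)\ge\omega_{\min}\bigl(\Delta^2/8-\sigma^2\bigr).
\]
This step uses $\Delta^2/8-\sigma^2\ge0$; if that quantity is negative the claimed bound is trivial because $W(K)\ge0$.
\item[4.] Taking the infimum over $c$ gives the bound on $W(K)$, and dividing by $W(1)$ gives the bound on $1-\rho(K)$.
\end{itemize}

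\emph{Profile.} Assume $\sigma^2<\omega_{\min}\Delta^2/16$. Then $\Delta^2/8-\sigma^2>\Delta^2/16$, so for every $K<K_0$,
\[
\rho(K)\le1-\frac{\omega_{\min}(\Delta^2/8-\sigma^2)}{W(1)}<1-\frac{\omega_{\min}\Delta^2}{16\,W(1)} .
\]
Now take $\gamma$ in the stated interval. Then $\rho(K)<\gamma$ for all $K<K_0$, and $\rho(K_0)\ge1-\sigma^2/W(1)\ge\gamma$. The definition \eqref{eq:profile} therefore gives $K^\star(\gamma)=K_0$. The interval is nonempty precisely because $\sigma^2<\omega_{\min}\Delta^2/16$. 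Also $\gamma\le\rho(K_0)\le\rho(\overline K)$, so $\gamma$ lies in the domain of the profile.

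The only non-routine step is the pigeonhole argument in step 1 of the lower bound, together with the care needed to turn the triangle inequality into a squared bound (step 2). The rest is direct bookkeeping.
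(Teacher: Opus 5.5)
Your proposal is correct and follows the paper's proof essentially step for step. Both use the codebook of component means for the upper bound, a pigeonhole argument that yields a mean at distance at least $\Delta/2$ from every center, the inequality $(a-b)_+^2\ge a^2/2-b^2$ integrated over that single component, and a direct reading of the profile from its definition. The only difference is cosmetic: you phrase the pigeonhole step as a contradiction, whereas the paper uses the nearest-center map $j(h)$ and $d_h+d_l\ge\Delta$.
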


\begin{proof}
For the upper bound, the component means lie in $\calC$ because $\calC$ is convex and contains the feature support, so $c=(m_1,\dots,m_{K_0})$ is an admissible codebook and $W(K_0)\le P_U(g_c)\le\sum_h\omega_h\E_{Q_h}\norm{U-m_h}^2\le\sigma^2$. For the lower bound, fix $c\in\calC^K$ with $K<K_0$ and map each component $h$ to a nearest center index $j(h)\in\argmin_j\norm{m_h-c_j}$, writing $d_h=\norm{m_h-c_{j(h)}}$. By the pigeonhole principle there are $h\ne l$ with $j(h)=j(l)$, whence $d_h+d_l\ge\norm{m_h-m_l}\ge\Delta$, so $d_h\ge\Delta/2$ for at least one of the two. For $u$ in the support of $Q_h$, $\min_j\norm{u-c_j}\ge d_h-\norm{u-m_h}$, and the elementary inequality $(a-b)_+^2\ge a^2/2-b^2$ gives
\begin{equation*}
P_U(g_c)\ \ge\ \omega_h\,\E_{Q_h}\bigl(d_h-\norm{U-m_h}\bigr)_+^2\ \ge\ \omega_h\bigl(d_h^2/2-\sigma^2\bigr)\ \ge\ \omega_{\min}\bigl(\Delta^2/8-\sigma^2\bigr),
\end{equation*}
where the final step uses $\omega_h\ge\omega_{\min}$ and is stated for $\sigma^2\le\Delta^2/8$; when $\sigma^2>\Delta^2/8$ the claimed bound on $W(K)$ is nonpositive and holds trivially. Taking the infimum over $c$ bounds $W(K)$. The displays for $1-\rho$ follow from the definitions. For $\gamma$ in the stated interval, every $K<K_0$ has $\rho(K)\le1-\omega_{\min}(\Delta^2/8-\sigma^2)/W(1)\le1-\omega_{\min}\Delta^2/\{16W(1)\}<\gamma$, using $\Delta^2/8-\sigma^2\ge\Delta^2/16$, while $\rho(K_0)\ge1-\sigma^2/W(1)\ge\gamma$, so $K^\star(\gamma)=K_0$.
\end{proof}

\subsubsection{A caveat for exact finite response classes}
\label{sec:atomic_supp_theory}

Between margin-continuous laws with the full $\sqrt n$ theory and purely atomic laws with assumption-free recovery lies the important mixed case: for example, a zero-effect atom embedded in a continuum of responders. Any atom defeats Assumption~\ref{ass:margin} as stated, since a hyperplane through it carries mass at every distance. Three facts calibrate the damage. First, consistency survives: Proposition~\ref{prop:atomic}(iii) uses neither margin nor uniqueness and applies to every $P_U$, so estimated and per-draw paths, the resolution profile, optional penalized reports, and set-valued summaries remain consistent for any mixed law. Second, the obstruction is local. If no atom lies on an optimal Voronoi boundary, one can impose the hyperplane margin only away from atoms and localize the uniform arguments; we expect the path, profile, and band limits to carry over, with the atom contributing its cell mass, but leave the proof to future work. Third, if an atom lies exactly on an optimal boundary, $W(K)$ is only directionally differentiable at that order \citep{fang2019}, and set-valued reporting is the correct target rather than conservatism.

\subsection{Further theoretical remarks}\label{sec:furtherremarks}

\begin{remark}[In what sense doubly robust]\label{rem:ratedr}
For the smooth and structured classes the bias is the product and squares form $r_\mu^2+r_\mu r_\pi$.  Thus the correction is orthogonal to first-order regression error when the propensity is accurate, but a second-order regression term remains.  Conversely, when $\bar\bmu=\bmu$, the centering is exactly unbiased under any bounded $\bar\pi$, since the residual correction is conditionally mean zero.  The reverse exact robustness does not hold. With $\bar\bmu\ne\bmu$ fixed, no propensity estimator, however accurate, repairs the bias, because the estimand itself is a functional of $\bmu$.  For root-$n$ inference in the smooth and structured classes one needs $\sqrt n(r_\mu^2+r_\mu r_\pi)=o(1)$. For the quantization class the boundary term $r_\mu^{2(1+\am)/(2+\am)}$ is likewise irreparable by propensity accuracy and strengthens the regression-rate requirement to $r_\mu=\opro(n^{-(2+\am)/(4(1+\am))})$, i.e.\ $\opro(n^{-3/8})$ at $\am=1$, strictly stronger than the usual $n^{-1/4}$ (Assumption~\ref{ass:nuisance}).
\end{remark}

\begin{remark}[Growing ceiling]\label{rem:growingK}
All results fix $\overline K$. Allowing $\overline K_n\to\infty$ raises three separate issues, namely entropy of the quantization class growing with $K$ (manageable, as the half-space bound in Lemma~\ref{lem:donsker} is linear in $K$), uniformity of the margin and uniqueness assumptions over $K$, and the vanishing of knot gaps $\rho_0(K+1)-\rho_0(K)$, and it is left to future work. We regard fixed $\overline K$ as scientifically natural, since the ceiling expresses the maximal complexity of summaries the analyst is willing to interpret, not a belief about the population.
\end{remark}

\begin{remark}[Relation to fixed-$K$ causal $k$-means]\label{rem:kkk}
At a fixed and known $K$, Kim et al. \cite{kim2026causal} estimate the causal $k$-means codebook of $P_U$ and establish inference for its centers. That target is the single cross-section of our path at that order. The resolution profile adds the whole path with uniformity over $K$, the feature-law posterior, set-valued profile inference at the resolution thresholds with the matched impossibility result, and the composite subgroup-effect limits of Theorem~\ref{thm:effects}, none of which a single-$K$ codebook analysis provides.
\end{remark}

\section{Additional simulation studies}\label{sec:app:additional_simulations}

\subsection{Data-generating processes and implementation details}\label{sec:sims:dgp_supp}

This section gives the details suppressed from the main simulation narrative, including the bounded-tilt construction that verifies Theorem~\ref{thm:honest} in Study~4 (Supplementary Section~\ref{sec:sims:study4b_supp}) and the DGP-E noise-floor study (Supplementary Section~\ref{sec:sims:study6_supp}).  In DGP-A and DGP-B, covariates are generated as $X\sim\mathrm{Unif}[0,1]^5$ and treatment as
\[
A\mid X\sim\mathrm{Bern}\{e(X)\},
\qquad
e(x)=P(A=1\mid X=x)=\mathrm{expit}(0.6x_1-0.6x_2+0.3x_3-0.15).
\]
Outcomes satisfy
\[
Y=\mu_A(X)+\eps,\qquad \eps\sim N(0,1)\ \text{truncated to }[-3,3],
\]
with
\[
\mu_0(x)=0.5+0.5\sin(2\pi x_3)+0.25x_4-0.25x_5,
\qquad
\mu_1(x)=\mu_0(x)+\tau_0(x).
\]
The feature map is $H=(-1,1)$, so $U(X)=\tau_0(X)$.

The CATE is constructed so that its population law is exactly controlled.  Let
\[
V=\frac{x_1+x_2}{2}.
\]
Then $V$ has triangular distribution function
\[
F_{\mathrm{tri}}(v)
=
\begin{cases}
2v^2, & 0\le v\le 1/2,\\
1-2(1-v)^2, & 1/2<v\le1.
\end{cases}
\]
Thus $F_{\mathrm{tri}}(V)\sim\mathrm{Unif}(0,1)$ exactly.  We set
\[
\tau_0(x)=Q_U\{F_{\mathrm{tri}}(V)\},
\]
where $Q_U$ is the quantile function of the desired target feature law.  For DGP-A and DGP-B this target law is
\[
P_U
=
0.5\cdot\mathrm{Beta}(2,4)\vert_{[-2,-1]}
+
0.3\cdot\mathrm{Beta}(2,2)\vert_{[0,\,0.8]}
+
0.2\cdot\mathrm{Beta}(4,2)\vert_{[0.8+s,\,1.6+s]},
\]
where $\mathrm{Beta}(a,b)\vert_{[l,u]}$ denotes a beta distribution with shape parameters $a,b$, affinely rescaled to support $[l,u]$.  DGP-A uses $s=0.8$.  DGP-B uses $s\in\{0.6,0.3,0.12,0\}$.
Here $s$ is the support gap between the two right bumps; varying it changes their distinguishability while holding all component shapes and weights fixed.
At $s=0$, the two right bumps meet at a common endpoint with continuous density, since both beta densities vanish at the junction.  The law has bounded density, so Assumption~\ref{ass:margin} holds with $\am=1$, but it is not a finite Gaussian mixture because the component shapes are skewed and compactly supported.

Population quantization truth is computed by exact dynamic programming for one-dimensional quantization after discretizing the density on a $4001$-point grid.  Doubling the grid gives relative error below $1.2\times10^{-5}$ at all orders $K\le8$ and below $7\times10^{-6}$ at the reported orders.  Closed forms are used where available.  In DGP-A, the gaps imply that the optimal order-3 codebook is the vector of bump means,
\[
c^\star(3)=(-5/3,\,2/5,\,32/15),
\]
with
\[
W(3)=0.02954,
\qquad
W(1)=2.2945.
\]
The resulting causal heterogeneity $R^2$ values are
\[
\rho_0(2)=0.830,\qquad
\rho_0(3)=0.9871,\qquad
\rho_0(4)=0.9919,\qquad
\rho_0(5)=0.9948.
\]
Hence $K^\star_0(\gamma)=2$ for $\gamma\in\{0.5,0.8\}$, $K^\star_0(\gamma)=3$ for $\gamma\in\{0.9,0.95,0.975\}$, and $K^\star_0(0.99)=4$.  The readings $\gamma=0.975$ and $\gamma=0.99$ are deliberately difficult because they lie within $0.012$ and $0.002$ of knots.  Along DGP-B, $\rho_0(2)$ rises from $0.830$ to $0.917$ as $s$ decreases, crossing $\gamma=0.9$ between $s=0.3$ and $s=0.12$.

Nuisances are cross-fitted with $B=10$ folds.  The flexible outcome regression is a Super Learner ensemble \citep{vanderlaan2007super} fitted separately within each treatment arm on the raw covariates, with no basis expansion, so the arm-specific conditional means $\mu_a(x)=E(Y\mid A=a,X=x)$ are estimated by a per-arm T-learner.  The library holds four base learners, an ordinary least squares linear regression, a ridge regression with $\ell_2$ penalty $0.1$, a random forest of $300$ trees, and a gradient-boosted regression tree of depth two under squared-error loss with learning rate $0.001$, minimum child weight $35$, row subsampling $0.9$, column subsampling $0.85$, an $\ell_2$ leaf penalty $4$, an $\ell_1$ leaf penalty $1$, and minimum split gain $0.04$, whose number of boosting rounds is capped at $3000$ and chosen by early stopping with patience $50$ against a held-out validation fraction of $0.2$.  The base learners are aggregated by the classical Super Learner rule, a convex combiner that assigns nonnegative weights summing to one and minimizes the squared error of the stacked prediction over the probability simplex by projected gradient descent, trained on the held-out predictions of an internal three-fold cross-validation of the library.  The outcome fit uses a single cross-fitting repeat per Monte Carlo replication, and every fitted conditional mean is truncated to $|\hat\mu_a|\le10$.  The propensity score is a correctly specified logistic regression of the treatment indicator on $(1,X)$, fitted by iteratively reweighted least squares with a ridge term $10^{-8}$ for numerical conditioning and at most $50$ Newton iterations, with fitted values truncated to $\hat\pi\in[0.05,0.95]$.

Study~2 varies the nuisance regime.  The correctly specified parametric propensity model is logistic regression.  The misspecified outcome model is linear in $X$, and the misspecified propensity model is the constant $\bar A$.  These regimes distinguish the ordinary product-bias term for smooth scores from the stronger outcome-regression requirement induced by the quantization margin term.

Quantization uses the pseudo-feature identity of Proposition~\ref{prop:pseudo}.  We use weighted $k$-means$++$ initialization on $\tU$, assignments from $\hU$, exact corrected-objective evaluation at every iterate, $20$ restarts for point fits, and $3$ restarts per posterior draw.  The $K=1$ closed form is used as a unit test.  For $q=1$, we also audit against a segment dynamic-programming lower bound in $\hU$-order and double the number of restarts whenever the relative audit gap exceeds $10^{-7}$.  Bands use \eqref{eq:band} and profile sets use \eqref{eq:profileset}.  Projections use Nelder--Mead initialized from a weighted EM fit, and subgroup effects use the structured scores \eqref{eq:effectscores}.  Unless otherwise stated, simulation studies use $S=1000$ posterior draws and ceiling $\overline K=8$, with $R=500$ replications for every study and their supplementary diagnostics.

The atomic supplementary studies use a separate design, DGP-C, with $q=2$.  Four latent response classes are determined by rectangles of $(x_1,x_2)$ with masses
\[
\omega=(0.4,0.3,0.2,0.1).
\]
Within class $h$ the arm means are constants, so with
\[
H=
\begin{pmatrix}
-1 & 1\\
.5 & .5
\end{pmatrix},
\]
the feature, interpreted as effect and level, is purely atomic at
\[
u_1=(0,0.2),\qquad
u_2=(0.5,0.4),\qquad
u_3=(2,1.5),\qquad
u_4=(2.6,1.9).
\]
The treatment probability is $e(x)=P(A=1\mid X=x)=\mathrm{expit}(0.5x_1-0.5x_2)$ and $\eps\sim N(0,1)$ truncated to $[-3,3]$.  The geometry is hierarchical: a close pair $\{u_1,u_2\}$ and a far pair $\{u_3,u_4\}$, with merge scales
\[
\tau_{(3)}=0.0347,\qquad
\tau_{(2)}=0.0497,\qquad
\tau_{(1)}=1.209,
\]
computed by enumerating all $15$ set partitions of the atoms.  Here $\Delta^2=0.29$ and
\[
\bar\tau=\omega_{\min}\Delta^2/(2K_0)=0.003625.
\]
This design violates Assumption~\ref{ass:margin} and the Hessian condition above $K_0$, and is used only for the atomic-recovery and energy-scale diagnostics.

\begin{table}[t]
\centering
\caption{Study~1. Coverage of $K^\star_0(\gamma)$ by $\widehat C(\gamma)$, with mean cardinality in parentheses, $n=4000$ and $R=500$.  The $\gamma=0.9$ column crosses a knot along $s\downarrow0$.  ``Simul.'' is the all-$\gamma$ event.}
\label{tab:sims:coverage_K_C}
\begin{tabular}{l cccccc c}
\toprule
$s$ & $\gamma=.5$ & $.8$ & $.9$ & $.95$ & $.975$ & $.99$ & Simul.\\
\midrule
0.80 & 1.00 (1.0) & 1.00 (1.9) & 1.00 (1.4) & 1.00 (1.5) & 1.00 (3.3) & 1.00 (5.5) & 1.00\\
0.60 & 1.00 (1.0) & 1.00 (1.5) & 1.00 (1.8) & 1.00 (1.7) & 1.00 (3.5) & 1.00 (5.5) & 1.00\\
0.30 & 1.00 (1.0) & 1.00 (1.1) & 1.00 (2.0) & 1.00 (2.3) & 1.00 (3.8) & 0.96 (5.5) & 0.96\\
0.12 & 1.00 (1.0) & 1.00 (1.0) & 1.00 (2.0) & 1.00 (2.6) & 1.00 (4.3) & 0.97 (5.9) & 0.97\\
0.00 & 1.00 (1.0) & 1.00 (1.0) & 1.00 (1.9) & 1.00 (2.7) & 1.00 (4.6) & 0.98 (6.1) & 0.98\\
\bottomrule
\end{tabular}
\end{table}

\subsection{Study 2 coverage tables and per-coordinate audit}\label{sec:sims:study2_supp}

\begin{figure}[t]
\centering
\includegraphics[width=0.92\textwidth]{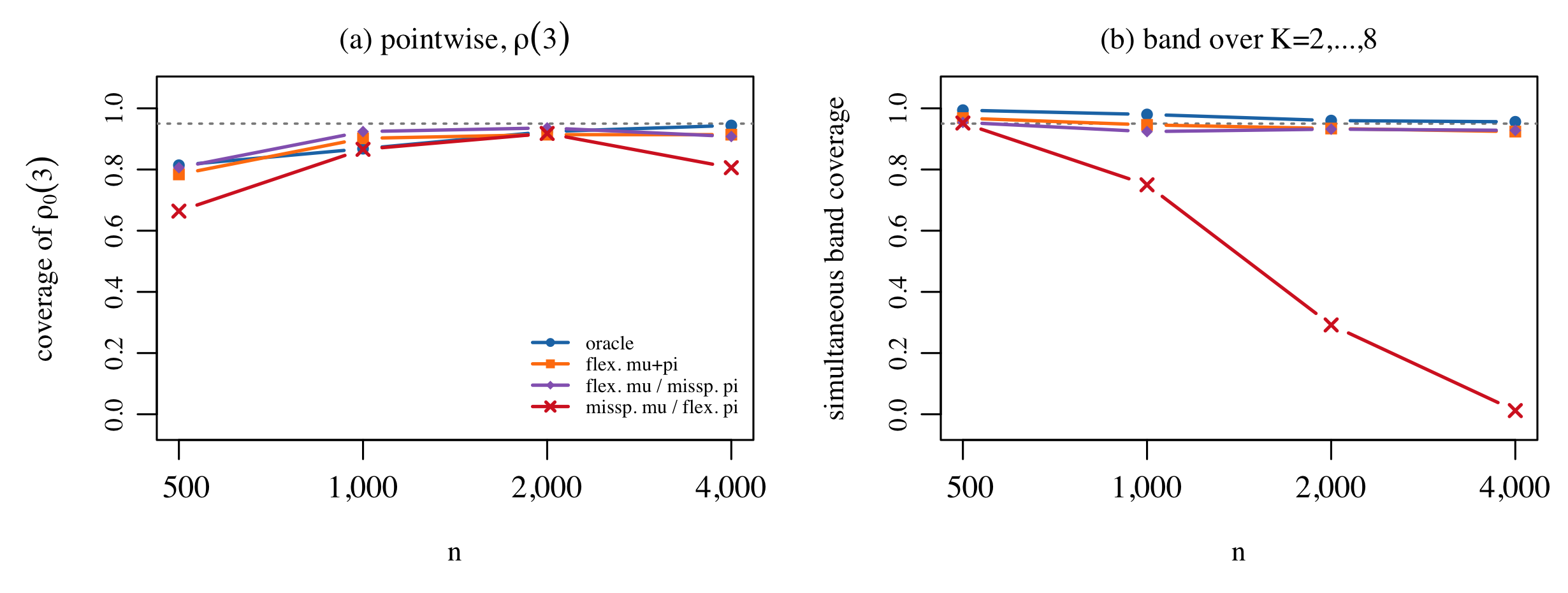}
\caption{Study~2. Empirical coverage of nominal $95\%$ procedures against $n$, by nuisance regime. The figure displays four of the six regimes of Table~\ref{tab:sims:study2cov}, omitting the flexible-outcome parametric-propensity and rate-compliant rows for legibility, with the full six-regime coverage in that table. The corrected posterior calibrates $\rho(3)$ under flexible outcome regression at moderate and large sample sizes. In panel (b) the simultaneous band over $2\le K\le8$ is more demanding, and its dominant feature is the first-order collapse of the misspecified-outcome regime to coverage near $0.01$, while the flexible regimes show only a mild second-order degradation at high resolutions.}
\label{fig:sims:calibration}
\end{figure}

Table~\ref{tab:sims:study2cov} reports the numerical coverage behind Figure~\ref{fig:sims:calibration}, giving, for each nuisance regime and sample size, the empirical coverage of the pointwise $95\%$ interval for the protected coordinate $\rho(3)$ and of the $95\%$ simultaneous band over $2\le K\le8$. A per-coordinate audit at $n=4000$, summarized in prose here, records pointwise interval coverage of $\rho(K)$ for every $K$, under oracle nuisances and under flexible Super Learner nuisances, together with the coverage of the total-heterogeneity coordinate $W(1)$. The oracle column isolates the finite-sample downward bias of empirical minima at fine resolution, which is present with no nuisance error at all. The rate-compliant row is the constructive counterpart described in the main text, a per-arm least squares on the exact five-term structural basis of the design, with the flexible propensity retained. Its outcome regression is correctly specified, so it attains the parametric rate and satisfies Assumption~\ref{ass:nuisance} with room to spare, with out-of-fold $L_2$ errors of $0.070$ at $n=2000$ and $0.030$ at $n=8000$, measured on independent draws of the design as a learner audit, the parametric halving under a quadrupled sample. The flexible nuisances leave $W(1)$ undercovering at $0.59$, the direct footprint of the slow flexible outcome-regression rate on the total-heterogeneity coordinate, even though the protected coordinates $\rho(2)$ and $\rho(3)$ stay near nominal at $0.92$ and $0.91$ because the shift largely offsets in the ratio when it is small relative to the level, the attenuation quantified by Supplementary Corollary~\ref{cor:levelshift_rho}. As Table~\ref{tab:sims:study2cov} and the audit show, the rate-compliant learner restores $W(1)$ to nominal coverage at $0.96$ and improves the accuracy of every coordinate, while pointwise coverage at $K\ge3$ remains low because the honestly shorter intervals expose the same empirical-minimum bias that the oracle row carries. The rate-compliant row sits somewhat below the oracle row at fine resolutions, so the parametric feature noise still contributes a residual boundary term to the empirical minimum, on top of the learner-free component the oracle row isolates. These tables are the audit that the reporting protocol of Supplementary Section~\ref{sec:report} instructs analysts to consult before quoting fine-resolution statements.

\begin{table}[t]
\centering
\caption{Study~2 numerical coverage. Empirical coverage of the nominal $95\%$ pointwise interval for $\rho(3)$ and of the nominal $95\%$ simultaneous band over $2\le K\le8$, by nuisance regime and sample size, DGP-A with $R=500$ replications. Monte Carlo standard errors are at most $0.023$.}
\label{tab:sims:study2cov}
\begin{adjustbox}{max width=\textwidth}
\begin{tabular}{l cc cc cc cc}
\toprule
& \multicolumn{2}{c}{$n=500$} & \multicolumn{2}{c}{$n=1000$} & \multicolumn{2}{c}{$n=2000$} & \multicolumn{2}{c}{$n=4000$}\\
\cmidrule(lr){2-3}\cmidrule(lr){4-5}\cmidrule(lr){6-7}\cmidrule(lr){8-9}
Nuisance regime & $\rho(3)$ & band & $\rho(3)$ & band & $\rho(3)$ & band & $\rho(3)$ & band\\
\midrule
Oracle nuisances & 0.81 & 0.99 & 0.87 & 0.98 & 0.92 & 0.96 & 0.94 & 0.96\\
Flexible outcome and propensity & 0.78 & 0.97 & 0.90 & 0.95 & 0.91 & 0.93 & 0.91 & 0.92\\
Flexible outcome, parametric propensity & 0.81 & 0.97 & 0.90 & 0.94 & 0.92 & 0.93 & 0.91 & 0.90\\
Flexible outcome, misspecified propensity & 0.81 & 0.95 & 0.92 & 0.92 & 0.94 & 0.93 & 0.91 & 0.93\\
Misspecified outcome, flexible propensity & 0.66 & 0.95 & 0.87 & 0.75 & 0.92 & 0.29 & 0.81 & 0.01\\
Rate-compliant outcome, flexible propensity & 0.47 & 0.95 & 0.62 & 0.89 & 0.71 & 0.84 & 0.81 & 0.85\\
\bottomrule
\end{tabular}
\end{adjustbox}
\end{table}

\subsection{Study 4 tilt construction and exact-tilt truth}\label{sec:sims:study4b_supp}

Study~4 samples directly from the bounded density tilts of Theorem~\ref{thm:honest}, using the least favorable construction of Supplementary Section~\ref{app:impossibility}. The base law $P_0$ is DGP-B at the knot separation $s^\star=0.1657$ solving $\rho_0(2;s^\star)=0.9$, found by the same bisection used for ground truth. Because the feature law does not depend on the outcome-error family, the entire feature-law truth $\{W(K),\rho_0(K),c^\star(K)\}$ is that of DGP-B at $s^\star$. The outcome error is the smooth compactly supported family $\eps=6B-3$ with $B\sim\mathrm{Beta}(4,4)$, so $\Var(\eps)=1$, the support is $[-3,3]$, and the density is twice continuously differentiable and vanishes together with its first two derivatives at $\pm3$. A bounded support with a smoothly vanishing boundary density means the tilted laws move no hard edge and remain differentiable in quadratic mean, inside the bounded-outcome assumption.

The tilt score is built from the influence function of $\rho(2)$ at $P_0$,
\[
\phi_{g_c}(o)=g_c\{U\}+g_c'\{U\}\,\mathrm{HR}(o),\qquad g_c(u)=\min_h(u-c_h)^2,
\]
\[
\mathrm{IF}(o)=-\frac1{W(1)}\Bigl[\bigl(\phi_{g_{c^\star(2)}}(o)-W(2)\bigr)-\bigl(1-\rho_0(2)\bigr)\bigl(\phi_{g_{c^\star(1)}}(o)-W(1)\bigr)\Bigr],
\]
where $\mathrm{HR}(o)=(A/\pi)(Y-\mu_1)-\{(1-A)/(1-\pi)\}(Y-\mu_0)$ is the AIPW feature correction evaluated at the population nuisances. The bounded, recentred, normalized score is $s_M=\mathrm{IF}\,\ind{|\mathrm{IF}|\le M}-\E_0[\mathrm{IF}\,\ind{|\mathrm{IF}|\le M}]$ and $s=s_M/\|s_M\|_{P_0,2}$, with truncation $M=5$, so $\E_0 s=0$ and $\E_0 s^2=1$. The recentering constant $c_1=\E_0[\mathrm{IF}\,\ind{|\mathrm{IF}|\le M}]=0.0011$, the norm $c_2=\|s_M\|_{P_0,2}=0.920$, and the achieved drift $b=\E_0[\mathrm{IF}\,s]=0.920$ are precomputed by a fixed-seed Monte Carlo of size $10^7$, and the sup-norm $\bar S=(M+|c_1|)/c_2=5.44$ is the constant $\bar S$ of the theorem. The drift $b$ is essentially the efficiency bound $\sigma_{K_0}=0.924$, the small residual being the truncation loss.

Data from $P_{n,u}$ are drawn by rejection. A proposal $O\sim P_0$ is accepted with probability $(1+u\,n^{-1/2}s(o))/(1+|u|\,n^{-1/2}\bar S)$, which is valid under the guard $|u|\bar S\,n^{-1/2}<1$, satisfied with room to spare in every cell, where the tighter theorem condition $|u|\bar S\,n^{-1/2}\le1/2$ also holds. A built-in unit test confirms $\E_{P_{n,u}}[s(O)]=u\,n^{-1/2}$ to Monte Carlo error at the tightest feasible cell. The drifting truth is reported to first order as $\rho_{P_{n,u}}(2)=0.9+u\,b\,n^{-1/2}$, exact to $o(n^{-1/2})$, the order at which the theorem operates. It is cross-checked by an exact both-channel computation in which the tilt both reweights the $X$ marginal and shifts the feature map to $U_t(x)=\tau_0(x)+t\,\E_0[\mathrm{HR}\,s\mid x]$, after which a large fixed-seed sample is passed to the same one-dimensional quantization dynamic program used for ground truth. The small-tilt central-difference slope of this exact resolution agrees with $b$, checked at setup, validating the whole score build, and each cell records both the first-order and the exact drifting resolution together with $K^\star_{P_{n,u}}(0.9)$. The grid uses $u\in\{-2,-1,-0.5,0,0.5,1,2\}$ and $n\in\{1000,2000,4000,8000\}$, with $R=500$ replications, $S=1000$ draws, ceiling $\overline K=8$, threshold $\gamma=0.9$, and both the flexible Super Learner and the oracle nuisance arms.

\begin{table}[ht!]
\centering
\caption{Study~4. Behavior along the exact bounded tilts $(1+u\,n^{-1/2}s)\,\mathrm dP_0$ at nominal $0.95$. Rows index the tilt magnitude $u$, with the knot law at $u=0$. The first two blocks give the coverage of the drifting truth $K^\star_{P_{n,u}}(0.9)$ by the set-valued report $\widehat C(0.9)$ under the flexible Super Learner and oracle nuisance arms. The last two blocks, under the flexible nuisances, give the correctness rate of the single-valued point selector $\min\{K:\hat\rho(K)\ge0.9\}$ and the mean cardinality of $\widehat C(0.9)$. Each cell uses $R=500$ replications.}
\label{tab:sims:study4b}
\begin{adjustbox}{max width=\textwidth}
\begin{tabular}{rrrrr}
\toprule
$u$ & $n=1000$ & $n=2000$ & $n=4000$ & $n=8000$\\
\midrule
\multicolumn{5}{l}{\textit{coverage of $K^\star_{P_{n,u}}(0.9)$, flex nuisances}}\\
$-2.0$ & 0.998 & 1.000 & 1.000 & 1.000\\
$-1.0$ & 0.994 & 0.996 & 1.000 & 1.000\\
$-0.5$ & 0.994 & 0.996 & 0.988 & 0.994\\
$+0.0$ & 1.000 & 1.000 & 0.998 & 0.998\\
$+0.5$ & 1.000 & 1.000 & 1.000 & 1.000\\
$+1.0$ & 1.000 & 1.000 & 1.000 & 1.000\\
$+2.0$ & 1.000 & 1.000 & 1.000 & 1.000\\
\midrule
\multicolumn{5}{l}{\textit{coverage of $K^\star_{P_{n,u}}(0.9)$, oracle nuisances}}\\
$-2.0$ & 1.000 & 1.000 & 1.000 & 1.000\\
$-1.0$ & 1.000 & 1.000 & 1.000 & 1.000\\
$-0.5$ & 1.000 & 0.998 & 1.000 & 1.000\\
$+0.0$ & 1.000 & 0.998 & 1.000 & 0.998\\
$+0.5$ & 1.000 & 1.000 & 0.998 & 1.000\\
$+1.0$ & 1.000 & 1.000 & 1.000 & 1.000\\
$+2.0$ & 1.000 & 1.000 & 1.000 & 1.000\\
\midrule
\multicolumn{5}{l}{\textit{point selector correct, flex nuisances}}\\
$-2.0$ & 0.612 & 0.734 & 0.852 & 0.850\\
$-1.0$ & 0.434 & 0.524 & 0.614 & 0.646\\
$-0.5$ & 0.308 & 0.452 & 0.502 & 0.524\\
$+0.0$ & 0.758 & 0.680 & 0.624 & 0.622\\
$+0.5$ & 0.848 & 0.792 & 0.776 & 0.766\\
$+1.0$ & 0.910 & 0.850 & 0.840 & 0.850\\
$+2.0$ & 0.962 & 0.946 & 0.944 & 0.942\\
\midrule
\multicolumn{5}{l}{\textit{mean cardinality of $\widehat C(0.9)$, flex nuisances}}\\
$-2.0$ & 2.99 & 2.29 & 2.00 & 1.95\\
$-1.0$ & 2.87 & 2.25 & 2.01 & 1.99\\
$-0.5$ & 2.82 & 2.21 & 1.99 & 1.99\\
$+0.0$ & 2.75 & 2.18 & 1.99 & 1.98\\
$+0.5$ & 2.63 & 2.11 & 1.98 & 1.97\\
$+1.0$ & 2.46 & 2.05 & 1.97 & 1.96\\
$+2.0$ & 2.33 & 1.92 & 1.85 & 1.80\\
\bottomrule
\end{tabular}
\end{adjustbox}
\end{table}

\subsection{Study 4 structural drifting-sequence companion}\label{sec:sims:honest_supp}

The structural companion to Study~4, reported here, drives the DGP-B separation path through the $K^\star(0.9)$ knot at the root-$n$ scale rather than tilting the density. For each cell we solve, by bisection against the same population dynamic program used for ground truth, the separation $s_n$ at which $\rho_{P_n}(2)=0.9+\mathrm{sign}\cdot h\,n^{-1/2}$, to tolerance $10^{-8}$. The map $s\mapsto\rho_0(2;s)$ is strictly decreasing on the bracket and $\rho_0(3;s)$ exceeds $0.98$ throughout, so the root is unique and the drifting truth satisfies $K^\star_{P_n}(0.9)\in\{2,3\}$ in every cell, equal to $2$ when $\rho_{P_n}(2)\ge0.9$ and $3$ otherwise. Positive drifts beyond the $s=0$ resolution ceiling are reached by allowing the two right bumps to overlap, which preserves a valid three-bump feature law. The drifting design matches the root-$n$ magnitude of the profile drift, $|\rho_{P_n}(2)-0.9|=h\,n^{-1/2}$, but it is a structural-parameter path rather than a bounded density tilt, and we do not claim that the laws are contiguous to the knot law. It is therefore an independently constructed illustration of the same near-knot operating characteristics, complementing the exact bounded-tilt verification reported as Study~4 in the main text (Supplementary Section~\ref{sec:sims:study4b_supp}), which exercises the regime of Theorem~\ref{thm:honest} directly. Each cell uses $R=500$ replications, $S=1000$ draws, ceiling $\overline K=8$, and the flexible Super Learner nuisances of the main studies.

Table~\ref{tab:sims:study5_addtional} reports, for every cell, the coverage of the drifting truth $K^\star_{P_n}(0.9)$ by the set-valued report $\widehat C(0.9)$, the correctness rate of the single-valued point selector $\min\{K:\hat\rho(K)\ge0.9\}$, and the mean cardinality of the report. Coverage is at least $0.98$ everywhere, above the nominal $0.95$ of Theorem~\ref{thm:honest}, and the excess is expected rather than conservative in the pejorative sense, because near the knot the report typically contains both admissible counts. The point selector is correct in only $60\%$ to $77\%$ of replications at the exact knot and drops below one half on the negative-drift side, the empirical footprint of Theorem~\ref{thm:impossibility} sharpened by the small feature-noise bias of the flexible learner, while the report concentrates on the two knot-adjacent counts as $n$ grows.

\begin{table}[t]
\centering
\caption{Study~4. Set-valued report along root-$n$ drifting sequences near the knot. Columns index the drift of $\rho_{P_n}(2)$ around the threshold $0.9$, namely the exact knot and $\pm h\,n^{-1/2}$ for $h\in\{0.5,1,2\}$. Blocks report coverage of the drifting truth $K^\star_{P_n}(0.9)$ by $\widehat C(0.9)$, the correctness rate of the single-valued point selector, and the mean cardinality of $\widehat C(0.9)$.}
\label{tab:sims:study5_addtional}
\begin{adjustbox}{max width=\textwidth}
\begin{tabular}{l ccccccc}
\toprule
& knot & $+0.5$ & $-0.5$ & $+1$ & $-1$ & $+2$ & $-2$\\
\midrule
\multicolumn{8}{l}{Coverage of $K^\star_{P_n}(0.9)$ by $\widehat C(0.9)$}\\
$n=1000$ & 1.000 & 1.000 & 0.994 & 1.000 & 0.990 & 1.000 & 0.998\\
$n=2000$ & 1.000 & 1.000 & 0.986 & 1.000 & 0.996 & 1.000 & 1.000\\
$n=4000$ & 0.998 & 1.000 & 0.990 & 1.000 & 1.000 & 1.000 & 1.000\\
$n=8000$ & 1.000 & 1.000 & 0.998 & 1.000 & 1.000 & 1.000 & 1.000\\
\midrule
\multicolumn{8}{l}{Point selector correct}\\
$n=1000$ & 0.77 & 0.87 & 0.33 & 0.91 & 0.44 & 0.97 & 0.69\\
$n=2000$ & 0.67 & 0.79 & 0.48 & 0.89 & 0.58 & 0.97 & 0.81\\
$n=4000$ & 0.62 & 0.75 & 0.54 & 0.88 & 0.63 & 0.96 & 0.86\\
$n=8000$ & 0.60 & 0.74 & 0.53 & 0.87 & 0.70 & 0.97 & 0.91\\
\midrule
\multicolumn{8}{l}{Mean cardinality of $\widehat C(0.9)$}\\
$n=1000$ & 2.71 & 2.62 & 2.74 & 2.79 & 2.68 & 2.59 & 2.54\\
$n=2000$ & 2.12 & 2.12 & 2.14 & 2.09 & 2.12 & 2.03 & 2.10\\
$n=4000$ & 1.99 & 1.98 & 1.99 & 1.96 & 1.99 & 1.83 & 1.95\\
$n=8000$ & 1.99 & 1.98 & 1.99 & 1.92 & 1.99 & 1.78 & 1.92\\
\bottomrule
\end{tabular}
\end{adjustbox}
\end{table}

\subsection{The noise-floor regime study}\label{sec:sims:study6_supp}

This study grounds the noise-floor reading of Remark~\ref{rem:noisefloor} and the level-shift structure of Proposition~\ref{prop:levelshift} in the application regime, where the outcome is weakly predictable and the feature-estimation error rivals the heterogeneity. It uses a three-arm design, DGP-E, calibrated to that regime. The arms $A\in\{0,1,2\}$ have a known flat design propensity $\pi_a=1/3$, so no propensity is estimated. Covariates are $X\sim\mathrm{Unif}[0,1]^5$ and the outcome is $Y=\mu_A(X)+\eps$ with $\eps=\sigma_\eps(6B-3)$, $B\sim\mathrm{Beta}(4,4)$, the smooth compactly supported error of Supplementary Section~\ref{sec:sims:study4b_supp}. The baseline surface is $\mu_0(x)=\sin(2\pi x_3)+0.7x_4-0.7x_5+0.5\sin(2\pi x_2)+0.4(x_1-0.5)$, and $\sigma_\eps=3.70$ is set so that the oracle arm-$0$ outcome $R^2$ equals $0.05$. The $q=2$ feature is the pair of active-arm contrasts $U(x)=(\mu_1-\mu_0,\,\mu_2-\mu_0)=\mathrm{scale}\cdot\upsilon(x)$, where $\upsilon$ is a fixed two-group unit shape. A smooth logistic gate $g(x_1)=\mathrm{expit}\{10(x_1-\tfrac12)\}$ interpolates between two well-separated centers $c_A=(-1.1,-0.7)$ at low $x_1$ and $c_B=(1.1,0.9)$ at high $x_1$, with a small within-group gradient of amplitude $0.45$ in $x_2,\dots,x_5$, so $U$ is a genuine continuous generated feature rather than an atomic law. Since every $W(K)$ scales as $\mathrm{scale}^2$, the resolution profile $\rho(K)=1-W(K)/W(1)$ is scale invariant, with $\rho_0(2)=0.79$ and $K^\star_0(0.5)=2$ fixed by the shape, while the amplitude slides $W(1)$ freely along the signal ladder.

The amplitude is set so that the true heterogeneity is a known multiple of the feature-error floor, $W(1)=\theta F_{n_0}$ for $\theta\in\{0.25,1,4\}$, giving $\mathrm{scale}=\sqrt{\theta F_{n_0}/W_{\mathrm{shape}}(1)}$ with $W_{\mathrm{shape}}$ the once-computed unit-shape path. The floor $F_{n_0}$ is the Monte Carlo feature error of this study's cross-fitted three-arm Super Learner at the reference size $n_0=4000$, estimated at zero effect amplitude. With the amplitude set to zero all three arms share the baseline, so the true feature is $U\equiv0$, and $F_{n_0}=\E_n\|\hU-0\|^2$ averaged over $R_0=20$ replicate datasets equals $0.426$. Zero amplitude isolates the irreducible feature-estimation variance from the outcome noise, which at $R^2\approx0.05$ dominates the amplitude-dependent smoothing bias, so $F_{n_0}$ is a clean amplitude-independent reference, exactly the quantity the ladder is measured against.

Each replication records the corrected $\widehat W(1)$ and its $95\%$ band with the coverage and width of the true $W(1)$, the plug-in $\widehat W(1)$ for contrast, an indicator that the corrected point sits at or below zero, read as a rate-condition failure, a detection indicator that the band excludes zero, the simultaneous coverage of the true $W$-path by its band, and the licensed $\rho$-scale report $\widehat C(0.5)$ with its cardinality and coverage of $K^\star_0(0.5)$, recorded only on the replications where the $W(1)$ band clears zero. The noise-floor diagnostic is the split difference
\[
\widehat\Delta=\tfrac12\,\frac1n\sum_{i=1}^n\bigl\|H\{\hat\bmu_A(X_i)-\hat\bmu_B(X_i)\}\bigr\|^2,
\]
computed from two disjoint arm-stratified half-sample outcome fits, and it is cross-read against the replication's own true feature error at both the full-sample and half-sample scale. The grid is $n\in\{1000,4000,16000,64000\}$ against $\theta\in\{0.25,1,4\}$, with $R=300$ replications, $S=500$ draws, ceiling $\overline K=6$, known $\pi=1/3$, and the AIPW $1/\pi$ residual.

\begin{table}[t]
\centering
\caption{The noise-floor protocol in DGP-E along the signal ladder $W(1)=\theta F_{n_0}$, with $\theta=0.25$ below the floor, $\theta=1$ at the floor, and $\theta=4$ above it. Blocks report coverage of the true $W(1)$ by its $95\%$ band, the detection rate at which the band clears zero, which is also the rate at which the $\rho$-scale report is licensed, and the simultaneous coverage of the true $W$-path by its band. Each cell uses $R=300$ replications.}
\label{tab:sims:study6}
\begin{adjustbox}{max width=\textwidth}
\begin{tabular}{rlrrrr}
\toprule
& $\theta$ & $n=1000$ & $n=4000$ & $n=16000$ & $n=64000$\\
\midrule
\multirow{3}{*}{\textit{W(1) band coverage}}
 & $0.25$ & 1.000 & 0.997 & 0.970 & 0.963\\
 & $1.00$ & 0.997 & 0.997 & 1.000 & 0.840\\
 & $4.00$ & 0.993 & 1.000 & 0.960 & 0.830\\
\midrule
\multirow{3}{*}{\textit{detection (band $>0$)}}
 & $0.25$ & 0.000 & 0.000 & 0.000 & 0.000\\
 & $1.00$ & 0.000 & 0.000 & 0.007 & 0.920\\
 & $4.00$ & 0.000 & 0.060 & 1.000 & 1.000\\
\midrule
\multirow{3}{*}{\textit{rho-report licensed}}
 & $0.25$ & 0.000 & 0.000 & 0.000 & 0.000\\
 & $1.00$ & 0.000 & 0.000 & 0.007 & 0.920\\
 & $4.00$ & 0.000 & 0.060 & 1.000 & 1.000\\
\midrule
\multirow{3}{*}{\textit{W-path simult. coverage}}
 & $0.25$ & 0.997 & 0.987 & 0.967 & 0.963\\
 & $1.00$ & 0.993 & 0.997 & 1.000 & 0.253\\
 & $4.00$ & 0.993 & 0.963 & 0.430 & 0.150\\
\bottomrule
\end{tabular}
\end{adjustbox}
\end{table}

The protocol behaves as designed across the ladder (Table~\ref{tab:sims:study6}). Below the floor, at $\theta=0.25$, there are no false detections at any $n$, the $W(1)$ band covers the truth, and the resolution report is correctly never licensed. Detection then emerges with $n$ as the floor logic predicts, $\theta=1$ becoming detectable only at $n=64000$ at rate $0.92$ and $\theta=4$ from $n=16000$ on. The split-difference diagnostic $\widehat\Delta$ tracks the true half-sample feature error to within $0.82$ to $0.99$ across all twelve cells, reading the floor without access to the truth. At large $n$ with strong signal the $W(1)$ band coverage falls to $0.83$ to $0.84$ at $n=64000$, and the path bands lower still, because the level-shift bias shrinks more slowly than $n^{-1/2}$, the regime Assumption~\ref{ass:nuisance} excludes and the diagnostic flags. In those same cells the licensed $\rho$-scale set report at $\gamma=0.5$ retains coverage $1.00$, but this reflects the favorable knot geometry rather than any cancellation. By the multiplicative identity $\rho_\delta(K)=\rho(K)\,W(1)/\{W(1)-\delta\}$ of Corollary~\ref{cor:levelshift_rho}, a common shift inflates the whole $\rho$ path toward coarser counts, and nothing cancels in the normalization. What survives is the shift-invariance of the $W$-contrasts $W(K)-W(K')$, together with the placement of $\gamma=0.5$ far below the shifted knots, since $\rho_0(2)=0.79$ with no intermediate count, so the inflation cannot evict the true count there. The general lesson is that level statements degrade as the identity predicts, $W$-contrasts are shift-insensitive, and $\rho$-scale sets survive only when the evaluation resolutions are separated from the shifted knots by more than the inflation, which the reliability ratio $\hat r$ measures. Placing $\gamma$ between the shifted and unshifted knots would instead produce undercoverage, and the diagnostic flags the rate violation in every such cell. The formal level-shift analysis is in Supplementary Section~\ref{sec:noisefloor_supp}.

\subsection{Optional penalized-profile diagnostics}\label{sec:sims:penalized_profile_supp}

The optional penalized report uses DGP-A and the DGP-B path described in Supplementary Section~\ref{sec:sims:dgp_supp}.  The two structural merge scales are $\tau_{(1)}=1.904$ and $\tau_{(2)}=0.361$ at $s=0.8$; the next merge scale is $0.011$, a thirty-fold drop.  Along the path, these shrink to $\tau_{(1)}=1.488$ and $\tau_{(2)}=0.105$ at $s=0$.  We evaluate $\widehat C^\dagger(\tau)$ on a twelve-point price grid and report merge-scale intervals on $[0.05,4]$, using both matched posterior-quantile and interval-arithmetic constructions.

The penalized set-valued report mirrors the resolution-profile pattern.  Across the twelve prices, $\widehat C^\dagger(\tau)$ attains simultaneous coverage of $1.000$ at every separation, widening at sub-structural prices and collapsing to a singleton at the largest structural price.  The optional merge-scale diagnostics, reported in Table~\ref{tab:sims:merge}, show the expected distinction between a sharper conditional interval and the unconditional fallback.  For the structural scale $\tau_{(2)}$, the matched posterior-quantile interval is sharp and near nominal across the path.  The fitted pair structure matches the population pair in $0.85$--$0.90$ of posterior draws on average, and matched posterior-quantile coverage ranges from $0.94$ to $0.98$.  The interval-arithmetic intervals require no matching and cover at $1.000$ throughout, at roughly $2.4$ to $2.7$ times the matched length.

\begin{table}[t]
\centering
\caption{Supplementary penalized-profile diagnostic: the structural merge scale $\tau_{(2)}$ along the degeneracy path, $n=4000$ and $R=500$.  Bias and RMSE are for the fitted slope at the population pair.  Coverage and mean length are shown for the matched posterior-quantile interval, with the frequency of the matching event, and for the unconditional band-induced interval-arithmetic interval.  True $\tau_{(2)}=0.361$, $0.183$, $0.105$ at $s=0.8$, $0.3$, $0$.}
\label{tab:sims:merge}
\begin{tabular}{l cc ccc cc}
\toprule
& & & \multicolumn{3}{c}{Posterior quantile} & \multicolumn{2}{c}{Interval arithmetic}\\
\cmidrule{4-6}\cmidrule{7-8}
$s$ & Bias & RMSE & Cov. & Len. & Match & Cov. & Len.\\
\midrule
0.80 & $-0.025$ & 0.054 & 0.94 & 0.18 & 0.89 & 1.00 & 0.43\\
0.30 & $-0.011$ & 0.034 & 0.96 & 0.14 & 0.90 & 1.00 & 0.36\\
0.00 & $-0.002$ & 0.029 & 0.98 & 0.10 & 0.85 & 1.00 & 0.28\\
\bottomrule
\end{tabular}
\end{table}

\subsection{Subgroup-effect numeric audit}\label{sec:sims:own_estimands_supp}

\begin{figure}[t]
\centering
\includegraphics[width=0.92\textwidth]{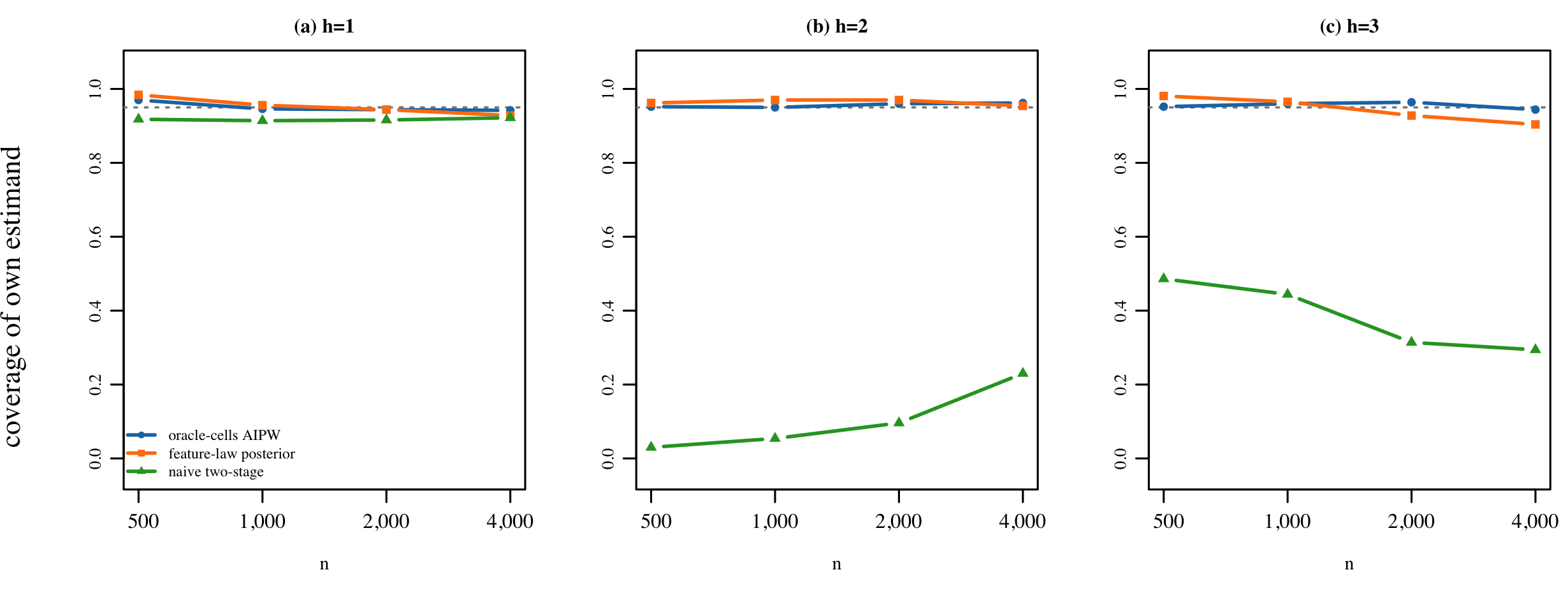}
\caption{Study~3. Coverage of each procedure's own population estimand for the subgroup treatment-effect contrast $\psi_{h,1}(3)-\psi_{h,0}(3)$, by component $h=1,2,3$, against $n$.  The oracle-cell reference is near nominal, isolating partition estimation as the failure channel.  The naive two-stage intervals under-cover sharply for some components, while the feature-law posterior remains substantially better calibrated.}
\label{fig:sims:subgroup_effects}
\end{figure}

Table~\ref{tab:sims:own_estimands} gives the numerical audit behind the middle panel of Figure~\ref{fig:sims:subgroup_effects}.  Each method is evaluated against its own population estimand: the feature-law posterior targets the soft-projection contrast, whereas the two-stage and oracle-cell procedures target hard-cell contrasts.  The naive two-stage intervals under-cover their own middle-component estimand even as $n$ increases, while the feature-law posterior remains substantially closer to nominal coverage in this rate-violating regime.  

\begin{table}[t]
\centering
\caption{Supplementary audit for Study~3 (DGP-A, $K=3$, Super Learner nuisances, $R=500$). Middle-component effect contrast, with each method evaluated against its own population estimand, namely the soft projection for the posterior and hard cells for the two-stage and oracle-cell procedures.}
\label{tab:sims:own_estimands}
\begin{adjustbox}{max width=\textwidth}
\begin{tabular}{l cccc c cccc}
\toprule
& \multicolumn{4}{c}{$n=1000$} & & \multicolumn{4}{c}{$n=4000$}\\
\cmidrule{2-5}\cmidrule{7-10}
Method & Bias & RMSE & Cov. & Len. & & Bias & RMSE & Cov. & Len.\\
\midrule
feature-law posterior (soft) & $-0.12$ & 0.50 & 0.970 & 1.67 & & $-0.11$ & 0.19 & 0.954 & 0.57\\
cluster then estimate (partition fixed) & $-0.51$ & 0.54 & 0.054 & 0.49 & & $-0.17$ & 0.19 & 0.230 & 0.24\\
oracle-cells AIPW & $+0.00$ & 0.13 & 0.950 & 0.47 & & $-0.002$ & 0.06 & 0.962 & 0.23\\
\bottomrule
\end{tabular}
\end{adjustbox}
\end{table}

\subsection{Atomic recovery and order-selection comparators}\label{sec:sims:atomic_supp}

This subsection reports the atomic-recovery diagnostics for Proposition~\ref{prop:atomic} and a head-to-head comparison with standard order selectors applied to the same estimated features. The design is DGP-C with $K_0=4$ atoms, Super Learner nuisances, $n\in\{1000,4000\}$, and $R=500$. Prices are read on the grid $\bar\tau\cdot\{0.1,0.25,0.5,0.75,1.5,3\}$ with $\bar\tau=\omega_{\min}\Delta^2/(2K_0)$.

Two findings support the theory. First, the set-valued penalized report covers the population optimum at every grid price with simultaneous frequency $1.00$ at both sample sizes, with mean set cardinality between six and seven of a possible eight, so honesty is achieved by widening rather than by selection. The point selector $\widehat K^\dagger(\tau)$ recovers $K_0$ in $7\%$ of replications at the smallest price and $86\%$ at the largest at $n=4000$, reflecting the small spurious decrements that feature noise leaves in the corrected path. Second, the wrong-assignment variant, plain $k$-means on the pseudo-features, recovers the population order in $0\%$ of replications at every price, which is the practical content of Remark~\ref{rem:traps}.

Table~\ref{tab:sims:comparators} reports order selectors on the same replications. Applied to the estimated features, the Dirichlet-process mixture posterior mode selects $K_0=4$ in $32\%$ of replications at $n=1000$ but only $11\%$ at $n=4000$, drifting toward six components as $n$ grows because feature-estimation noise scales into the order dimension. Gaussian-mixture BIC degrades from $31\%$ to $0\%$ with mode at eight, while the gap statistic improves from $58\%$ to $96\%$. Applied to lightly jittered oracle features, BIC is essentially exact and the DPM mode still concentrates on two components. No selector behaves stably across methods and sample sizes once the features are estimated, and posterior mass over $K$ is not a substitute for calibrated inference on a population functional. The set-valued report keeps its coverage guarantee in the same runs.

\begin{table}[t]
\centering
\caption{Supplementary order-selection comparison (DGP-C, $K_0=4$, Super Learner features, $R=500$): frequency of selecting $K=4$, with the modal selection in parentheses when it differs from $4$.}
\label{tab:sims:comparators}
\begin{tabular}{l cc}
\toprule
Selector & $n=1000$ & $n=4000$\\
\midrule
DPM posterior mode, estimated features & 0.32 & 0.11 (mode 6)\\
DPM posterior mode, oracle features & 0.00 (mode 2) & 0.00 (mode 2)\\
Gaussian-mixture BIC, estimated features & 0.31 & 0.00 (mode 8)\\
Gaussian-mixture BIC, oracle features & 0.96 & 1.00\\
Gap statistic, estimated features & 0.58 & 0.96\\
$\widehat C^\dagger$ covers $K_0$, simultaneous over prices & 1.00 & 1.00\\
\bottomrule
\end{tabular}
\end{table}

\subsection{Energy-scale sensitivity of an order posterior}\label{sec:sims:fragility_supp}

This supplementary energy-scale study illustrates Proposition~\ref{prop:fragility} in DGP-C, using $n\in\{1000,4000\}$ and $R=500$.  The exponentiated order posterior
\[
\pi_c(K)\propto\exp\{-c\sqrt n\,\widehat W(K)\},
\qquad c\in\{1,5,25\},
\]
is reported through the log odds of the atomic order $4$ against the overfitted order $5$.  Since $W(4)=W(5)=0$ in the population, this is the comparison most exposed to sampling and refit noise.  We also report the information-matched temperature
\[
\hat\lambda=
\frac{\tr(V_{K_0}^{-1})}
{\tr(V_{K_0}^{-1}\Sigma_{K_0}V_{K_0}^{-1})},
\]
which calibrates within-order Gibbs spread but is not designed to stabilize between-order odds.

The sensitivity remains visible against the calibrated comparator, as Figure~\ref{fig:sims:fragility} and Table~\ref{tab:sims:fragility_supp} report.  Under Super Learner nuisances, the across-replication SD of $\log\{\pi(4)/\pi(5)\}$ scales with the energy temperature at both sample sizes: $0.21$, $1.06$, and $5.28$ at $n=1000$, and $0.35$, $1.76$, and $8.78$ at $n=4000$ for $c=1,5,25$.  The fixed-temperature mode of the exponentiated order posterior is order $4$ in only $0.05$ and $0.02$ of replications.  

\begin{figure}[t]
\centering
\includegraphics[width=0.95\textwidth]{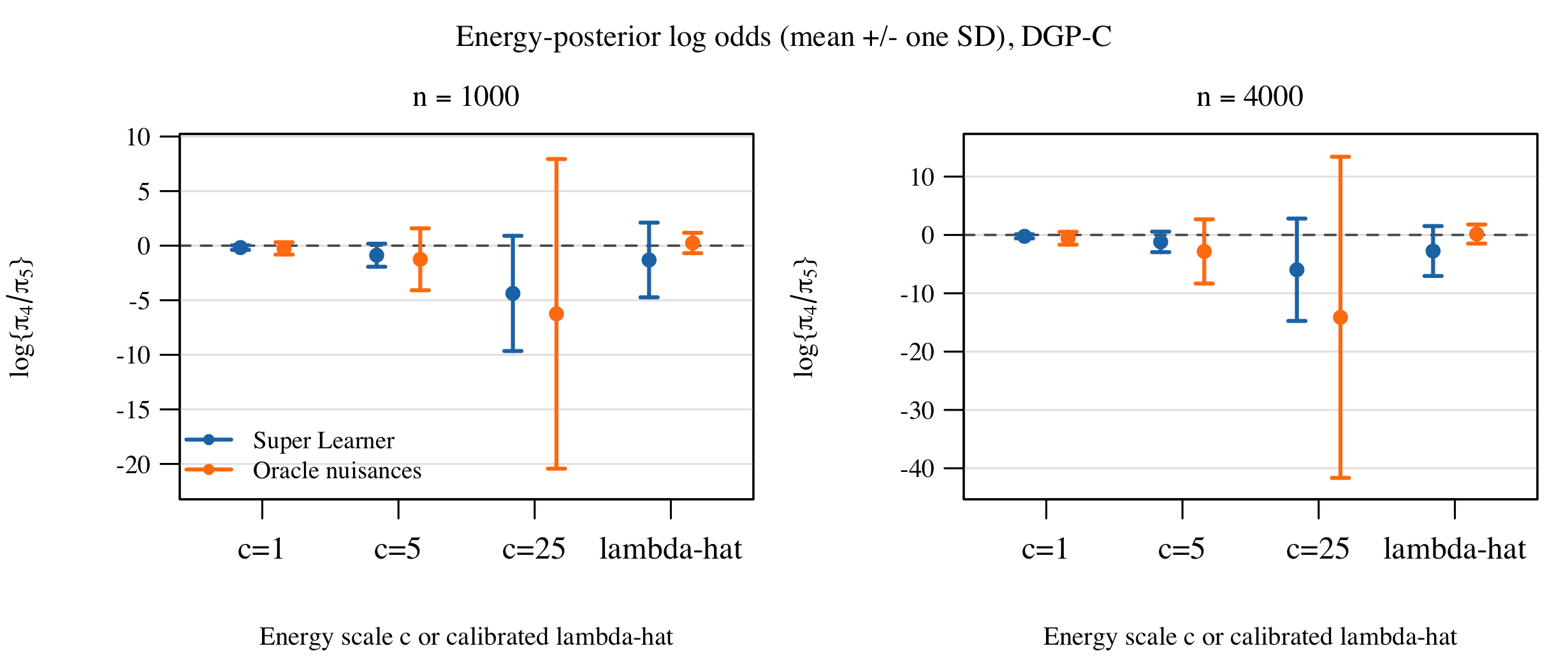}
\caption{Supplementary energy-scale study (DGP-C, $R=500$). Mean and one-SD spread of the log odds of the exponentiated order posterior for order $4$ against order $5$, at fixed temperatures $c\in\{1,5,25\}$ and at the information-matched calibrated temperature $\hat\lambda$.  The population risks satisfy $W(4)=W(5)=0$.  Under Super Learner nuisances the calibrated odds remain highly dispersed and rarely select order $4$, with SD $3.42$ and mode frequency $0.05$ at $n=1000$, and SD $4.28$ and mode frequency $0.02$ at $n=4000$.  Oracle nuisances remove the generated-feature distortion for the calibrated row, where the mode frequency is $1.00$ at both sample sizes.}
\label{fig:sims:fragility}
\end{figure}

\begin{table}[t]
\centering
\caption{Supplementary energy-scale study (DGP-C, $R=500$). Energy-scale sensitivity, reported as the across-replication SD of $\log\{\pi(4)/\pi(5)\}$ with the frequency of posterior mode $=4$ in parentheses.  Calibrated rows use the replications with nonempty reference cells.}
\label{tab:sims:fragility_supp}
\begin{adjustbox}{max width=\textwidth}
\begin{tabular}{l cccc}
\toprule
& \multicolumn{2}{c}{Super Learner} & \multicolumn{2}{c}{Oracle nuisances}\\
\cmidrule{2-3}\cmidrule{4-5}
Method & $n=1000$ & $n=4000$ & $n=1000$ & $n=4000$\\
\midrule
Gibbs $c=1$: SD log-odds (mode $=4$) & 0.21 (0.05) & 0.35 (0.02) & 0.57 (0.36) & 1.10 (0.32)\\
Gibbs $c=5$: SD log-odds (mode $=4$) & 1.06 (0.05) & 1.76 (0.02) & 2.84 (0.36) & 5.51 (0.32)\\
Gibbs $c=25$: SD log-odds (mode $=4$) & 5.28 (0.05) & 8.78 (0.02) & 14.18 (0.36) & 27.53 (0.32)\\
Gibbs calibrated $\hat\lambda$: SD log-odds (mode $=4$) & 3.42 (0.05) & 4.28 (0.02) & 0.93 (1.00) & 1.63 (1.00)\\
\bottomrule
\end{tabular}
\end{adjustbox}
\end{table}

\section{Additional empirical analyses}\label{sec:app:additional_analysis}

\subsection{MineThatData e-mail experiment specification and preprocessing}\label{sec:app:hillstrom_spec}

This section gives the complete empirical specification. The data are the public MineThatData e-mail experiment released by Hillstrom \cite{hillstrom2008} for an open analytics challenge and available without restriction. The $N=64{,}000$ customers were individually randomized in equal thirds to no e-mail (control, $n=21{,}306$), a men's merchandise e-mail ($n=21{,}307$), or a women's merchandise e-mail ($n=21{,}387$). The primary outcome is the indicator of a site visit within two weeks of the campaign. The secondary outcome, analyzed in Supplementary Section~\ref{sec:app:hillstrom_spend}, is two-week spending in dollars.

The causal feature used in the main text is the two-dimensional campaign-benefit profile
\[
U(x)=\bigl(\mu_{\mathrm{men}}(x)-\mu_{\mathrm{none}}(x),\ \mu_{\mathrm{women}}(x)-\mu_{\mathrm{none}}(x)\bigr),
\]
where each coordinate is the covariate-conditional visit-probability contrast of a campaign against control. Because both coordinates live on the common visit-rate scale, the metric $H$ is the identity.

The nuisance adjustment set is the pretreatment customer record. It includes recency in months since the last purchase, the dollar value of purchases over the prior year together with its ordinal history segment, indicators for prior men's and women's merchandise purchases, an indicator for a customer new in the past year, zip-code type entered as suburban and urban indicators with rural as the reference, and acquisition channel entered as phone and web indicators with multichannel as the reference.

The arm-specific outcome regressions are cross-fitted Super Learner ensembles over a library of four base learners, a linear model, a ridge model, a random forest, and a gradient-boosted tree with early stopping. The library predictions are combined by nonnegative least squares on the simplex, the classical Super Learner rule, with the combiner trained on an internal split. Cross-fitted predictions are truncated to the observed outcome range of the corresponding arm before use. The out-of-fold predictions are formed within $B=5$ cross-fitting folds and averaged over $4$ independent repeats. Out-of-fold $R^2$ values are modest, between $0.026$ and $0.029$ across arms, consistent with a weakly predictable individual outcome. The gradient-boosted learner uses early stopping, and across all $60$ fits, one per arm, fold, and repeat, none fell back to a constant predictor, so the learner fit nontrivially within the round cap in every fold. Because assignment was randomized with known flat shares, the correction uses the design propensities, taken as the empirical arm shares $(0.3329,\,0.3329,\,0.3342)$. A check of the estimated propensities against the design propensities returns a mean absolute difference below $2\times10^{-5}$ on assignment probabilities near one third, so the Super Learner propensity route and the design-propensity route coincide. A parametric route runs alongside the primary analysis as a low-complexity benchmark, replacing the Super Learner ensemble by arm-specific linear outcome regressions on the same adjustment set, with the same design propensities, cross-fitting, and truncation, and it is this route that supplies the corroborating $\widehat W(1)=9.27\times10^{-4}$ reported in the main text.

The split-difference floor diagnostic $\widehat\Delta$ of Remark~\ref{rem:noisefloor} is computed from a single arm-stratified half split of the sample at a fixed seed, refitting the same learner stack on each half and evaluating the squared feature difference over all units. The two half-sample fits are read in-sample, which inflates $\widehat\Delta$ in the conservative direction, so the diagnostic overstates rather than understates the floor. The noise-scale ellipse drawn in the feature-law panels is the circle of radius $\sqrt{\widehat\Delta/q}$ in the $q=2$ benefit coordinates, the conservative per-coordinate one-standard-deviation feature-estimation noise on the shared visit-rate scale. The Monte Carlo error of the $S=500$ draws behind every band and set-valued report is discussed in Supplementary Section~\ref{sec:pseudo_supp}. The experiment also records a conversion outcome, an indicator of a purchase, with a base rate near $0.9$ percent. At that base rate the feature-estimation floor dwarfs any plausible causal-feature signal, so the conversion outcome is not analyzed, and its gate reasoning parallels the spending outcome.

All corrected summaries use the feature-law criterion with ceiling $\overline K=6$ and $S=500$ posterior draws. The main-text working summary is the fine three-group projection at the upper end of the supported set $\widehat C(0.80)=\{2,3\}$, and the coarse two-group summary, the count supported at every displayed threshold, is reported alongside it, a display choice within the supported set rather than a selector crossing. For the per-draw subgroup summaries, each draw's centers are matched to the point-estimate centers by minimum-cost label alignment, and in this analysis all $500$ draws produced finite summaries for every group and contrast at both displayed resolutions.

The corrected path minimizations are initialized from the plug-in codebook, and the approximate-minimization diagnostics of Proposition~\ref{prop:approxmin} were recorded at $K=3$. A cold multistart attains a best corrected objective of $2.32\times10^{-5}$ there, a collapsed-center configuration that ties the $K=2$ optimum, whereas the plug-in-codebook warm start attains $-2.79\times10^{-4}$, the genuine three-center optimum. The cross-warm pass, in which each resolution is also restarted from the neighboring resolutions' solutions, is deterministic and only lowers the attained values, and the final path is monotone in $K$ by construction. This is precisely the attained-value gap that Proposition~\ref{prop:approxmin} conditions on, and the warm-start value is weakly smaller than the best cold value at every $K$. These are the corrected quantization-path minimizations behind the resolution profile. The soft mixture projection that supplies the reported subgroup summaries is a separate fit, specified in Supplementary Section~\ref{sec:app:hillstrom_k3}.

\subsection{Soft mixture projection and the coarse-to-fine relationship}\label{sec:app:hillstrom_k3}

The subgroup summaries reported in the main text are corrected soft mixture projections of the estimated feature law onto $K$-component Gaussian mixtures, the estimands of Section~\ref{sec:soft} whose inference is Theorem~\ref{thm:effects}. This section gives the full specification. The family is the mixtures $m_\beta(u)=\sum_{h}\omega_h\,\phi(u;\mu_h,\sigma^2 I_2)$ on the two-dimensional benefit plane, sharing one spherical scale $\sigma^2 I_2$, with mixing weights $\omega_h$ floored at $0.01$ so every component stays identifiable and means $\mu_h$ confined to the range of the estimated features inflated by $20\%$. The scale is a resolution constant of the description rather than a fitted quantity. The family is declared over $\sigma\in[0.008,\,0.05]$ and the reported fit sits at the lower endpoint $\sigma=0.008$, the feature-noise floor, near half the per-coordinate estimation-noise standard deviation $\sqrt{\widehat\Delta/q}=0.017$. The corrected projection criterion improves monotonically as $\sigma$ decreases across the declared range, with no interior minimizer, so the data offer no interior scale to select. The sigma profile, the criterion re-optimized at fixed $\sigma$ over the grid from $0.004$ to $0.030$, is monotone with no interior turn at both $K=2$ and $K=3$. Per-draw refits leave the scale at the floor across the central $95\%$ of draws at $K=3$, while at $K=2$ the draw-level scale moves slightly above the floor in a minority of draws, its posterior $97.5\%$ quantile reaching $0.011$.

The point projection minimizes the corrected criterion by a two-stage search. Ten weighted EM restarts on the estimated features supply candidate configurations, and each candidate, together with the plug-in codebook and boundary-hugging starts, is polished by Nelder-Mead applied directly to the corrected criterion. The corrected minimizer is compared against the plug-in Kullback--Leibler projection, the weighted-EM fit that omits the correction. The two differ materially here, the corrected scale binding at the declared floor and the corrected shares moving by about $0.09$, so the reported fit is the corrected minimizer. The $S=500$ posterior draws minimize the weighted corrected criterion by Nelder-Mead warm-started at the point fit, each draw run to the simplex tolerance under a ceiling of $3000$ function evaluations, with a single retry from the weighted EM initializer whenever a draw reaches the ceiling without meeting the tolerance or a component's weighted soft mass falls below $0.005$. A retried draw is retained and flagged rather than dropped, so no draw is discarded. At $K=2$ no draw was retried or flagged, and at $K=3$ $14$ of the $500$ draws were retried and $10$ remain flagged. The flagged draws are those in which the small strong-responder component's weighted soft mass nearly vanishes, so the ratio defining $\psi_{h,a}(K)$ in \eqref{eq:effects} becomes unstable, the empirical counterpart of the positivity requirement $\Psi_0(f^D_{h;\beta^\star(K)})>0$ of Assumption~\ref{ass:proj}. Each draw's components are aligned to the point fit by minimum-cost matching on the means, all $500$ draws produced finite summaries at both resolutions, and the equal-tailed quantiles of the aligned draws are the reported intervals. A confirmation audit re-optimizes $50$ of the posterior draws at a doubled $6000$-evaluation budget. At $K=2$ the reported and re-optimized draws agree to machine precision, with zero share, effect, and objective deviation. At $K=3$ the deviations are negligible, a maximum share deviation of $1.2\times10^{-4}$, a maximum effect deviation of $6.4\times10^{-5}$ visit-rate points, and a mean objective gain of $4\times10^{-9}$, so the reported ensemble is converged at both resolutions. The flagged low-mass draws are what widen the small strong-responder component's share and women's-campaign intervals, consistent with the confidence set placing the fine three-group reading at the supported upper end of $\widehat C(0.80)=\{2,3\}$ while the coarse two-component projection stays supported at every displayed threshold. Figure~\ref{fig:hillstrom:working_summary_K2} shows the coarse $K=2$ soft summary.

\begin{figure}[!t]
\centering
\includegraphics[width=\textwidth]{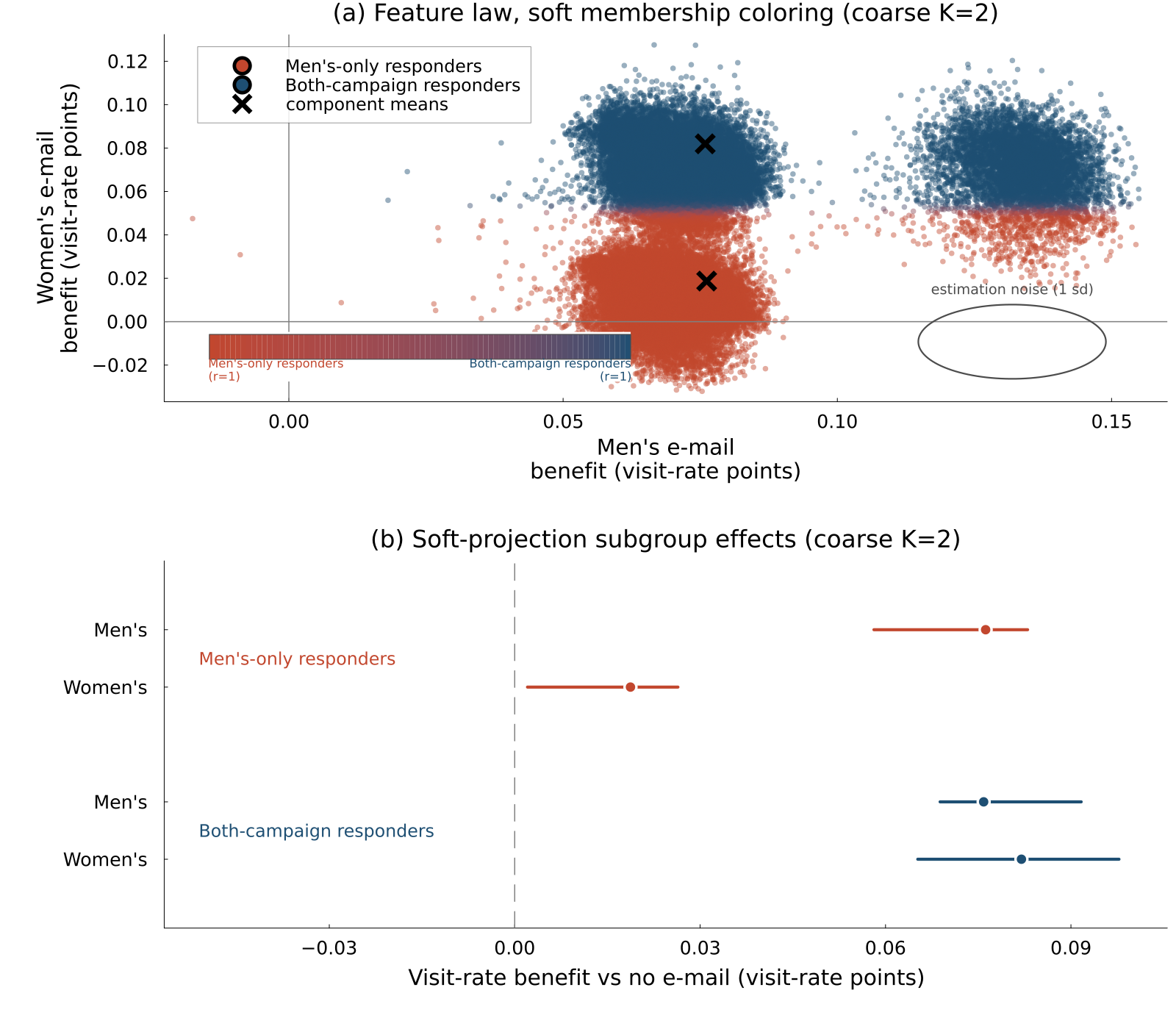}
\caption{MineThatData e-mail experiment coarse $K=2$ soft-projection working summary for the visit outcome. (a) Estimated causal feature law with soft membership coloring, each point a cross-fitted campaign-benefit pair colored by the membership-weighted blend of the two component colors, the weights being that unit's soft memberships $r_1,r_2$, so intermediate hues mark soft membership. Blends are almost absent because the memberships are near binary, $98.3\%$ of units placing weight above $0.9$ on one component. The two component means are marked by black crosses, the inset gradient bar is the membership scale running from one component to the other, and the ellipse, of radius $\sqrt{\widehat\Delta/q}$ per coordinate, marks the per-coordinate estimation-noise scale implied by $\widehat\Delta$. (b) Soft-projection subgroup effects, shown as a forest plot. Rows are the contrasts $\hat\psi_{h,a}(K)-\hat\psi_{h,0}(K)$ of \eqref{eq:effects} for arm $a$ the men's or women's campaign, the corrected visit-rate benefit against control within component $h$, with the rows grouped and colored by component and named by campaign on the vertical axis. The point marks the estimate and the single horizontal segment is the $95\%$ equal-tailed posterior interval from the per-draw recomputation of the projection. The vertical dashed line marks zero and the horizontal axis is in visit-rate points. The components are the men's-only responders, of soft share $0.579$, and the both-campaign responders, of soft share $0.421$.}
\label{fig:hillstrom:working_summary_K2}
\end{figure}

At $K=3$ the projection places component means at $(0.131,\,0.071)$ for the strong responders, $(0.070,\,0.018)$ for the men's-only responders, and $(0.068,\,0.080)$ for the both-campaign responders, in men's and women's benefit, with soft shares $0.108$, $0.545$, and $0.347$. The coarse $K=2$ projection places means at $(0.076,\,0.019)$ for the men's-only component and $(0.076,\,0.082)$ for the both-campaign component, with soft shares $0.579$ and $0.421$, spreading the strong-responder mass of the $K=3$ fit across the two components, mostly into the both-campaign one. The three $K=3$ components are well separated in Figure~\ref{fig:hillstrom:working_summary}(a), with between-component gaps several times the per-coordinate estimation-noise scale marked by the ellipse, which is why the confidence set $\widehat C(0.80)=\{2,3\}$ admits the third component at fine resolution while the coarse two-component summary stays supported at every displayed threshold. The soft memberships are near binary at both resolutions, $98.3\%$ of units above weight $0.9$ at $K=2$ and $98.9\%$ at $K=3$, so the membership coloring of the feature-law panels shows almost no blended hues.

A scale-sensitivity exhibit re-optimizes the criterion at the smaller scale $\sigma=0.004$, below the declared floor. On this outcome that fit is noise-floor degenerate. The feature-space residual correction is large relative to the feature spread, so at $\sigma=0.004$ the criterion rewards ever-smaller scales and its minimizer tracks estimation noise rather than the feature law, the masses drifting under a negligible objective change. Declaring $\sigma$ at the noise floor $0.008$ removes this degeneracy, and the monotone sigma profile confirms there is no interior scale to select.

\subsection{Spending outcome and the in-trial noise-floor contrast}\label{sec:app:hillstrom_spend}

The two-week spending outcome, in dollars, illustrates the noise-floor gate within the same trial. The arm mean spends are $0.65$ under control, $1.42$ under the men's e-mail, and $1.08$ under the women's e-mail, and the augmented inverse-probability arm contrasts against control are $+0.77$ dollars, interval $(0.50,\,1.04)$, for the men's e-mail and $+0.43$, interval $(0.18,\,0.66)$, for the women's e-mail, so the campaigns raise spending on average. The heterogeneity of that response, however, is not resolvable. On the spending scale the corrected total heterogeneity is $\widehat W(1)=-0.14$ with $95\%$ simultaneous band $(-0.66,\,0.37)$, which straddles zero, while the split-difference diagnostic returns $\widehat\Delta=1.55$ with bracket $[0.77,\,1.55]$. The band for $\widehat W(1)$ does not clear zero, so the first tier of the gate of Remark~\ref{rem:noisefloor} fails and the analysis stops there. The data cannot support resolution analysis for this outcome at this sample size, and the corrected resolution sets remain the full menu $\{2,3,4,5,6\}$ at every $\gamma$ with the $\rho$ scale not licensed. That the point $\widehat W(1)$ also sits below the bracket $[0.77,\,1.55]$ is secondary descriptive evidence to the same effect. The dollar outcome is far noisier per unit than the visit indicator, and its causal-feature error swamps whatever spending heterogeneity is present. The contrast with the visit analysis, drawn from the identical randomization and pipeline, shows the diagnostic separating an informative outcome from an uninformative one rather than reflecting any property of the design.

\subsection{Control-anchored feature variant}\label{sec:app:hillstrom_anchored}

The feature map is part of the estimand, as Section~\ref{sec:setup} declares, and a different but natural feature reads the same experiment at a different resolution. The control-anchored feature appends the baseline visit propensity to the two campaign benefits,
\[
U(x)=\bigl(\mu_{\mathrm{none}}(x),\ \mu_{\mathrm{men}}(x)-\mu_{\mathrm{none}}(x),\ \mu_{\mathrm{women}}(x)-\mu_{\mathrm{none}}(x)\bigr),
\]
a $q=3$ feature that grades customers by both their untreated visit rate and their campaign benefits. Under this feature the corrected total heterogeneity is larger and clearly positive, $\widehat W(1)=3.68\times10^{-3}$ with band $(2.63,\,4.73)\times10^{-3}$. The band clears zero, so the first tier of the gate of Remark~\ref{rem:noisefloor} is met, and with $\widehat\Delta=7.43\times10^{-4}$, bracket $[3.71,\,7.43]\times10^{-4}$, the reliability ratio $\hat r=[\widehat\Delta/2,\widehat\Delta]/\widehat W(1)\approx[0.10,\,0.20]$ is small, so the $\rho$ scale here carries its nominal reading. The corrected path stays positive across the whole profile, and the causal heterogeneity $R^2$ climbs gradually, $\hat\rho(2)=0.52$, $\hat\rho(3)=0.65$, $\hat\rho(4)=0.82$, $\hat\rho(5)=0.92$, and $\hat\rho(6)=0.94$, so the anchored feature supports finer resolution than the benefit-only feature. Figure~\ref{fig:app:hillstrom:anchored} shows the corrected quantization path, the causal heterogeneity $R^2$ with the shift-sensitivity overlay, and the resolution profile for this anchored feature. The set-valued reports are correspondingly finer and wider, $\widehat C(0.25)=\{2\}$, $\widehat C(0.50)=\{2,3,4\}$, $\widehat C(0.70)=\{3,4,5\}$, $\widehat C(0.80)=\{3,4,5,6\}$, and $\widehat C(0.90)=\{4,5,6\}$, the last two touching the ceiling. Because these fine sets run to the ceiling $\overline K=6$ and the anchored analysis is conducted only at this ceiling, by the main text's own guidance the fine anchored readings at $\gamma=0.80$ and $0.90$ should be treated as ceiling-limited, since a larger ceiling could extend them further. The added resolution has a clear source and a clear cost. The baseline coordinate mixes prognostic variation in the untreated visit rate with causal variation in the campaign benefits, so a portion of the anchored heterogeneity reflects who visits absent any e-mail rather than who responds to it. Because our target is the heterogeneity of the causal response, we keep the benefit-contrast feature as the primary analysis and report the anchored feature as a variant, which is precisely the sense in which the choice of feature map is a modeling decision internal to the estimand.

\begin{figure}[!t]
\centering
\includegraphics[width=0.92\textwidth]{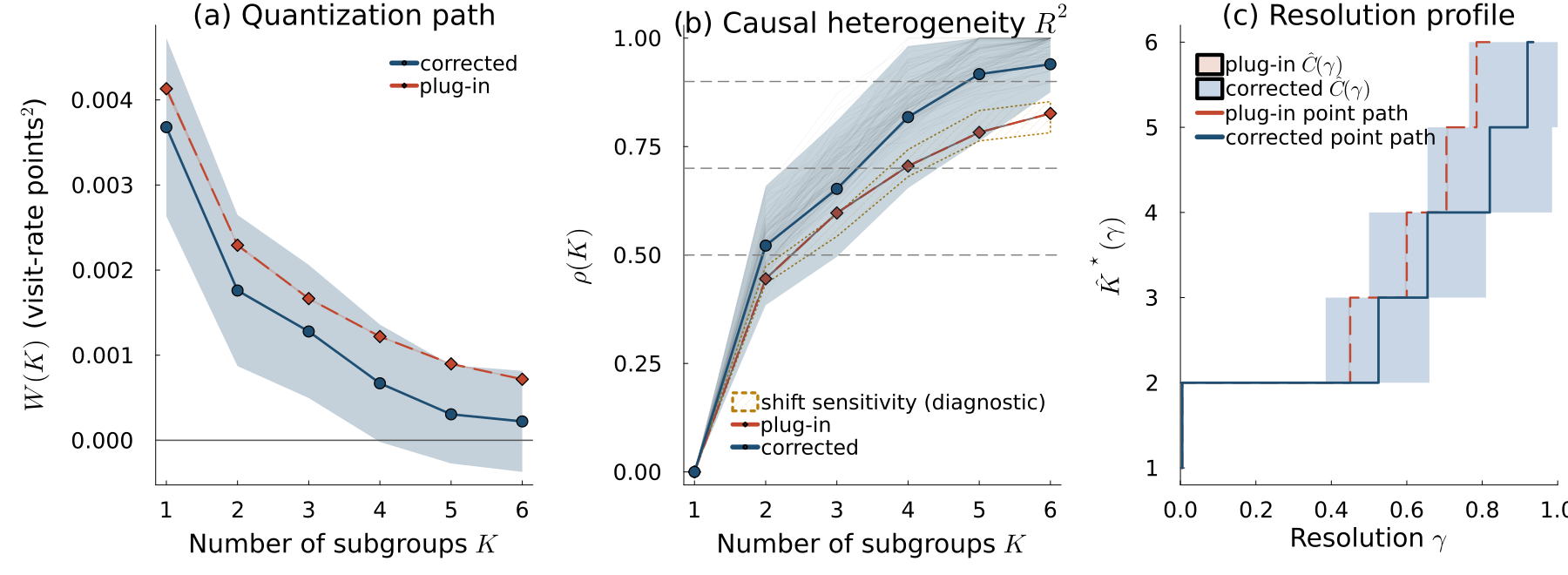}
\caption{MineThatData e-mail experiment control-anchored feature-law resolution summaries for the visit outcome, under the $q=3$ feature that appends the baseline visit propensity to the two campaign benefits. (a) Quantization path. Corrected causal-feature dispersion $\widehat W(K)$ with its $95\%$ simultaneous band, alongside the uncorrected plug-in path. (b) Causal heterogeneity $R^2$. Corrected $\hat\rho(K)$ with its $95\%$ band and the plug-in curve, with the gray fan showing the corrected and plug-in posterior draws at low opacity. The hatched overlay is the shift-sensitivity range $\rho_{\mathrm{adj}}(K;\delta)=1-\max\{\widehat W(K)+\delta,0\}/(\widehat W(1)+\delta)$ for $\delta\in[\widehat\Delta/2,\widehat\Delta]$, a diagnostic with no coverage claim attached rather than a confidence band. (c) Resolution profile. Point selector $\widehat K^\star(\gamma)$ with the set-valued report $\widehat C(\gamma)$ shaded, for the corrected and plug-in paths. The fine anchored sets run up to the ceiling $\overline K=6$.}
\label{fig:app:hillstrom:anchored}
\end{figure}

\subsection{Penalized profile}\label{sec:app:hillstrom_penalized}

The penalized profile gives a price-indexed reading of the same corrected quantization path used in the main application. It asks how many subgroups are worth reporting when one additional subgroup must reduce residual causal-feature dispersion by at least $\tau$ on the squared visit-rate scale, and unlike the causal heterogeneity $R^2$ curve it does not divide by $W(1)$.

\begin{figure}[!t]
\centering
\includegraphics[width=0.92\textwidth]{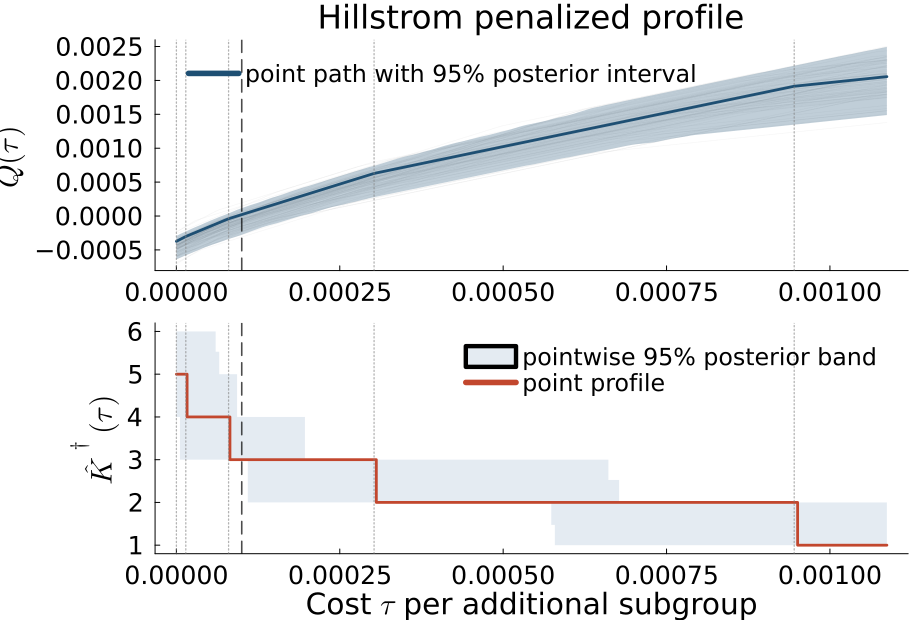}
\caption{MineThatData e-mail experiment penalized profile for the visit outcome under the corrected feature-law criterion. The top panel shows the point path of the penalized value $Q(\tau)$ with a pointwise $95\%$ posterior interval from the weighted path draws overlaid at low opacity. The bottom panel shows the point selector $\widehat K^\dagger(\tau)$ with a pointwise $95\%$ posterior band. Dotted vertical lines mark the fitted merge scales and the dashed line marks the reference price $\tau=\delta^2/4=1\times10^{-4}$ at $\delta=0.02$ visit-rate points.}
\label{fig:app:hillstrom:penalized}
\end{figure}
 Figure~\ref{fig:app:hillstrom:penalized} displays the price-indexed path. The fitted merge scales for the successive splits are about $9.5\times10^{-4}$ for one to two groups, $3.0\times10^{-4}$ for two to three, $8.0\times10^{-5}$ for three to four, and $1.4\times10^{-5}$ for four to five, so the first split dominates the path. At the reference price $\tau=\delta^2/4=1\times10^{-4}$, a between-group visit-rate gap of $\delta=0.02$ for two equally prevalent groups, the point selector is $\widehat K^\dagger(\tau)=3$ and the simultaneous set is $\{1,2,3,4,5,6\}$, so at that price the dial is compatible with anything from a single group to the ceiling. At smaller prices the selector rises to $\widehat K^\dagger=4$ and the simultaneous set retains at least two groups, $\widehat C^\dagger(\tau)=\{2,3,4,5,6\}$ at $\tau=2.5\times10^{-5}$ and $5\times10^{-5}$. The penalized dial thus keeps three to four groups at the small prices set by the finer merge scales and admits a single group from the price $1\times10^{-4}$ onward, already below the fitted merge scale for the split from two to three groups, and excludes it only at the two smallest displayed prices, a reading consistent with the coarse resolvable structure found on the $\rho$ scale.

\subsection{A below-floor benchmark. Project STAR}\label{sec:app:star_null}

This section complements the main application with a below-floor benchmark. It exercises the recommended behavior of the pipeline on a trial in which the total causal-feature heterogeneity $W(1)$ is statistically indistinguishable from zero, the regime flagged by the caveat in Section~\ref{sec:resolution}. The data are the kindergarten cohort of the Tennessee Student/Teacher Achievement Ratio (STAR) experiment \citep{finn1990}, in which entering students were randomized within $79$ schools to three class types, a regular class ($n=1{,}999$, taken as the reference), a small class ($n=1{,}731$), and a regular class with a full-time teacher's aide ($n=2{,}035$). Restricting to kindergarten entrants with complete kindergarten records leaves an analysis sample of $n=5{,}765$ children. The outcome is the log of the total kindergarten score, the sum of the scaled reading and mathematics scores, $\log(\text{read}+\text{math})$. The arm means of the log score are $6.819$, $6.834$, and $6.820$ for the regular, small, and aide classes.

The causal feature is the $q=2$ vector of class-type benefits $U(x)=(\mu_{\mathrm{small}}(x)-\mu_{\mathrm{reg}}(x),\,\mu_{\mathrm{aide}}(x)-\mu_{\mathrm{reg}}(x))$, the covariate-conditional log-score advantages of the small class and of the aide class over the regular class. The adjustment set holds gender, ethnicity indicators, birth year and quarter, free lunch eligibility, indicators for inner city, suburban, and urban schools with rural as the reference, and the school system identifier. The pipeline follows the main application, differing in that the outcome regressions are averaged over $8$ repeats rather than $4$ and the primary route uses an estimated Super Learner propensity rather than a design propensity. The arm-specific outcome regressions are cross-fitted Super Learner ensembles over a library of linear, ridge, random forest, and gradient-boosted learners with early stopping, combined by nonnegative least squares on the simplex, formed within $B=5$ folds and averaged over $8$ repeats. All corrected summaries use ceiling $\overline K=6$, $S=500$ posterior draws, and simultaneous bands. The primary route corrects with a cross-fitted Super Learner propensity fitted for each arm against the rest, and a parametric route runs alongside it, using arm-specific linear outcome models and the per-school empirical assignment shares as the design propensity. An estimated against design propensity diagnostic shows a mean absolute difference of about $0.06$ on assignment probabilities near one third.

The recommended protocol reports the $W$-scale evidence first. Table~\ref{tab:app:star_null} gives the corrected quantization path on the primary route. The total heterogeneity is $\widehat W(1)=2.40\times10^{-4}\,(\text{log points})^2$ with $95\%$ simultaneous band $(-0.54,\,5.35)\times10^{-4}$, which straddles zero. The parametric route gives $\widehat W(1)=-1.78\times10^{-4}$ with band $(-4.25,\,0.68)\times10^{-4}$, also straddling zero. On both routes the band for the total heterogeneity covers zero, so the first tier of the gate of Remark~\ref{rem:noisefloor} fails and the data cannot support resolution analysis at this feature dimension and sample size, consistent with the below-floor presentation. As secondary descriptive evidence, the split-difference diagnostic returns $\widehat\Delta=3.94\times10^{-4}$ with bracket $[1.97,\,3.94]\times10^{-4}$, and the corrected $\widehat W(1)=2.40\times10^{-4}$ falls inside this bracket, so even the point heterogeneity sits at the floor. No resolution beyond a single group is supported and reporting on the $\rho$ scale is not licensed. The small negative corrected values of $\widehat W(K)$ at the larger orders are the correction operating at the noise floor, where corrected risks can be negative in finite samples as noted in the main text, the regime characterized by the level-shift identity of Remark~\ref{rem:noisefloor} and Supplementary Section~\ref{sec:noisefloor_supp}.

\begin{table}[!t]
\centering
\caption{Project STAR below-floor benchmark. Corrected quantization path on the primary Super Learner propensity route, with $95\%$ simultaneous bands. Entries are in units of $10^{-4}\,(\text{log points})^2$. The band for the total heterogeneity $\widehat W(1)$ straddles zero, so no resolution beyond one group is supported and the $\rho$ scale is not reported.}
\label{tab:app:star_null}
\footnotesize
\setlength{\tabcolsep}{3pt}
\begin{tabular}{l cccccc}
\toprule
$K$ & 1 & 2 & 3 & 4 & 5 & 6\\
\midrule
$\widehat W(K)$ & $2.40$ & $0.22$ & $-0.34$ & $-0.98$ & $-1.68$ & $-1.87$\\
band & $(-0.54,\,5.35)$ & $(-1.62,\,2.06)$ & $(-2.08,\,1.41)$ & $(-2.62,\,0.65)$ & $(-3.06,\,-0.30)$ & $(-3.27,\,-0.48)$\\
\midrule
\multicolumn{7}{l}{The parametric route gives $\widehat W(1)=-1.78$ with band $(-4.25,\,0.68)$.}\\
\bottomrule
\end{tabular}
\end{table}

Ignoring the gate makes the failure concrete. Because $\widehat W(1)$ sits at the edge of zero, the projected per-draw normalization behind $\rho$ is unstable. On the primary route $7$ of the $500$ draws return a nonpositive $\widehat W(1)$, for which $\rho$ is forced to zero, the point $\hat\rho$ jumps to $0.908$ at $K=2$ and clamps to one for $K\ge3$, and the internal path diagnostic marks $K\ge3$ as not reportable, which is the near-zero denominator artifact the caveat anticipates. On the parametric route $\widehat W(1)$ is not positive, so the corrected resolution sets are empty at every $\gamma$, $\hat\rho$ is returned as zero throughout, and $\rho$-scale reporting is flagged unavailable. The pedagogically important contrast is the uncorrected plug-in path. Its causal heterogeneity $R^2$ climbs through $0.50$, $0.65$, $0.72$, $0.76$, and $0.80$ across $K=2,\dots,6$ on the primary route and to $0.81$ on the parametric route, and its resolution sets resolve to definite counts, for example $\widehat K^\star_{\mathrm{plug}}(0.70)=4$. An uncorrected pipeline would therefore manufacture subgroup structure out of feature-estimation noise where the corrected and gated analysis finds none. Figure~\ref{fig:app:star_null} displays the split.

\begin{figure}[!t]
\centering
\includegraphics[width=\textwidth]{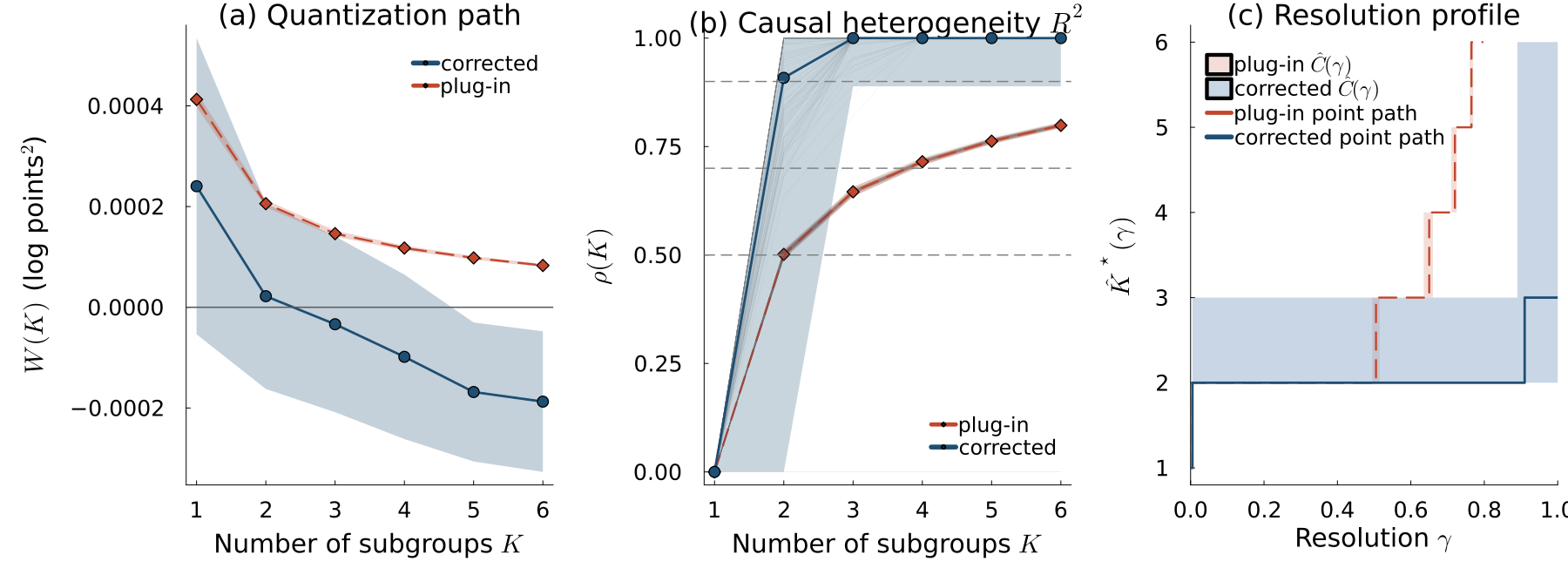}
\caption{Project STAR below-floor benchmark under the primary Super Learner propensity route. (a) Corrected quantization path $\widehat W(K)$ (solid) with its $95\%$ simultaneous band, and the uncorrected plug-in path (dashed) overlaid. The band for the total heterogeneity straddles zero. (b) Corrected and plug-in causal heterogeneity $R^2$. The corrected curve is erratic at the near-zero denominator, jumping to about $0.91$ and clamping at one, while the plug-in curve climbs smoothly to about $0.80$, with the gray fan showing the posterior draws at low opacity. (c) Resolution profile $\widehat K^\star(\gamma)$ with the set-valued report, corrected against plug-in. The $W$-scale band in panel (a) is the operative summary and reports no heterogeneity, whereas the plug-in path in panels (b) and (c) manufactures resolution.}
\label{fig:app:star_null}
\end{figure}

The penalized profile reads the same path on the unnormalized dial, which never divides by $W(1)$ and so stays well defined at the null. On the primary route the point selector $\widehat K^\dagger(\tau)$ equals one at the reference price $\tau=\delta^2/4=6.25\times10^{-4}$, a benefit gap of $\delta=0.05$ log points, about $5$ percent, between two equally prevalent groups, and at every larger price, with the simultaneous set $\widehat C^\dagger(\tau)$ collapsing to the singleton $\{1\}$ at twice the reference price, the next displayed point of the grid. On the parametric route $\widehat K^\dagger(\tau)=1$ at every displayed price. Both routes read as one group at any interpretable price.

The heterogeneity null is not an effect null. The augmented inverse-probability arm contrasts against the regular class are $+0.0154$ log points with interval $(0.0107,\,0.0197)$ for the small class, about a $1.5$ percent gain in scores, and $+0.0016$ with interval $(-0.0026,\,0.0059)$ for the aide class, indistinguishable from zero. The parametric route agrees, at $+0.0181$ with interval $(0.0134,\,0.0228)$ and $+0.0020$ with interval $(-0.0027,\,0.0062)$. Class type is delivered at the classroom level, so students in a class share teacher and classroom shocks, and the individual-level Dirichlet weights used here understate uncertainty in the manner of Supplementary Remark~\ref{rem:clustered}. Classroom-clustered weights would widen these intervals, which only reinforces the below-floor heterogeneity reading, since the $\widehat W(1)$ band already covers zero, while the average-effect intervals should be read as anti-conservative. The experiment thus carries a real small-class average effect while its heterogeneity across covariate profiles sits at the noise level, the below-floor regime for the resolution analysis. Effect heterogeneity across student subgroups reported in the broader STAR literature is thus evidently below this pipeline's floor at this feature dimension and $n=5{,}765$, rather than absent, which is why we read the benchmark as below-floor rather than as a demonstration that $W(1)=0$. The forced two-group working display underscores the point, with cells of shares $0.473$, interval $(0.455,\,0.497)$, and $0.527$, interval $(0.503,\,0.545)$, whose center coordinates all lie within $0.026$ of zero, consistent with an essentially unstructured feature law.

Under this null the pipeline reports one group at every interpretable price and heterogeneity indistinguishable from zero, rather than manufacturing subgroups from estimation noise, which is the intended honest behavior when $W(1)$ cannot be separated from the degenerate law.

\section{Proofs of the main results}\label{app:proofs}

Throughout the supplement, $C$ denotes a finite constant depending only on the constants declared in Section~\ref{sec:setup} and the assumptions, namely $B_Y$, $B_\mu$, $\eps_\pi$, $\norm{H}$, $\diam\calC$, $B_F$, $\underline m$, $\overline K$, $p$, $q$, $C_M$, $\am$, and $t_0$, and may change from line to line. Since any margin probability is at most one, the bound of Assumption~\ref{ass:margin} extends from $t\in(0,t_0]$ to all $t>0$ after enlarging $C_M$ to $C_M\vee t_0^{-\am}$, and this enlarged constant is used without further comment. 
Supporting empirical-process lemmas (entropy and Donsker properties, the multiplier maximal inequality, admissibility of the Dirichlet weights, and the automatic increment rate on the quantization class) are collected in Supplementary Section~\ref{app:lemmas}. 
We write $\bar F$ for the common envelope of the score classes: by Assumptions~\ref{ass:bound}--\ref{ass:class}, $\sup_{f\in\calF}\sup_\eta|\phi_f(o;\eta)|\le\bar F<\infty$ for all $o$ in the support and all $\eta$ in the truncated range, since $|R_a|\le2(B_Y+B_\mu)/\eps_\pi$ and all integrands and gradients are bounded.

\subsection{Proof of Theorem~\ref{thm:eif}}\label{app:eif}
The proof is a direct submodel-score computation for (i), and for (ii) a joint second-order Taylor expansion in $(u,m)$ whose remainder, for the kinked losses $g_c$, splits into an exact quadratic on the event that $H\bar\bmu(X)$ and $H\bmu(X)$ fall in the same Voronoi cell and a boundary-crossing event whose probability the margin condition controls; the exponent $2(1+\am)/(2+\am)$ arises from optimizing a truncation level.
\subsubsection{A geometric lemma}
We isolate the elementary geometry that powers the quantization-class bounds.

\begin{lemma}[Crossing geometry]\label{lem:geometry}
Let $c\in\calC^K$, $u,\bar u\in\calC$, $\Delta=\bar u-u$, and let $h=h_c(u)$, $h^\star=h_c(\bar u)$ be (any) nearest-center labels. Write $\mathrm{rem}=g_c(u)-g_c(\bar u)-\nabla g_c(\bar u)^\top(u-\bar u)$, where $\nabla g_c(\bar u)=2(\bar u-c_{h^\star})$.
\begin{enumerate}[label=(\alph*),leftmargin=2em]
\item If $c_h=c_{h^\star}$ as points of $\R^q$, then $\mathrm{rem}=\norm{\Delta}^2$ exactly.
\item If $c_h\ne c_{h^\star}$, let $d=\norm{c_h-c_{h^\star}}>0$ and let $B$ be the bisector hyperplane of $c_h$ and $c_{h^\star}$. Then
\begin{equation*}
|\mathrm{rem}|\le\norm{\Delta}^2+2\norm{\Delta}\,d,
\qquad
\dist(\bar u,B)\le\norm{\Delta},
\qquad
\dist(u,B)\le2\norm{\Delta} .
\end{equation*}
\end{enumerate}
\end{lemma}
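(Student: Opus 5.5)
The plan is to reduce everything to the explicit quadratic form of $g_c$ near the two nearest-center labels, and to compare the actual remainder with the remainder one would get if $u$ used the same center as $\bar u$. Throughout, $g_c(u)=\norm{u-c_h}^2$ and $g_c(\bar u)=\norm{\bar u-c_{h^\star}}^2$ for the chosen nearest-center labels. The value $g_c$ does not depend on how ties are broken, and $\nabla g_c(\bar u)=2(\bar u-c_{h^\star})$ is taken as given in the statement. So only the weak inequalities $\norm{u-c_h}\le\norm{u-c_{h^\star}}$ and $\norm{\bar u-c_{h^\star}}\le\norm{\bar u-c_h}$ will be used, and ties cause no trouble.

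For part (a), write $a=c_h=c_{h^\star}$. Then
\[
\mathrm{rem}=\norm{u-a}^2-\norm{\bar u-a}^2-2(\bar u-a)^\top(u-\bar u).
\]
Setting $u-a=(\bar u-a)-\Delta$ and expanding the first square cancels the linear term, leaving exactly $\norm{\Delta}^2$. This is the same polarization identity as in Proposition~\ref{prop:pseudo}.

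For part (b), add and subtract $\norm{u-c_{h^\star}}^2$. By the computation in (a) applied with the center $c_{h^\star}$, this gives
\[
\mathrm{rem}=\norm{\Delta}^2+s(u),\qquad s(v)=\norm{v-c_h}^2-\norm{v-c_{h^\star}}^2 .
\]
The key observation is that
\[
s(v)=2v^\top(c_{h^\star}-c_h)+\norm{c_h}^2-\norm{c_{h^\star}}^2
\]
is affine in $v$ with gradient of norm $2d$. It vanishes exactly on the bisector $B$, so $|s(v)|=2d\,\dist(v,B)$. The nearest-center property gives $s(u)\le0$ and $s(\bar u)\ge0$, so $u$ and $\bar u$ lie in opposite closed half-spaces of $B$. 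Hence
\[
2d\{\dist(u,B)+\dist(\bar u,B)\}=s(\bar u)-s(u)\le2d\norm{\Delta}.
\]
This yields $\dist(\bar u,B)\le\norm{\Delta}$ and $\dist(u,B)\le\norm{\Delta}\le2\norm{\Delta}$. It also gives $-2d\norm{\Delta}\le s(u)\le0$, and therefore $|\mathrm{rem}|\le\norm{\Delta}^2+2d\norm{\Delta}$. The refined sign information $s(u)\in[-2d\norm{\Delta},0]$ is recorded for use in Proposition~\ref{prop:levelshift}.

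The only delicate step is the sign bookkeeping in (b): identifying $s$ as $2d$ times a signed distance, and checking that the two nearest-center inequalities put $u$ and $\bar u$ on opposite sides of $B$ even under ties. Once that is in place the bounds follow from the Lipschitz property of an affine function, and no use of $\calC$ beyond $u,\bar u,c_j\in\R^q$ is needed.
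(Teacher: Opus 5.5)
Your proof is correct and follows the paper's argument: the same decomposition $\mathrm{rem}=\norm{\Delta}^2+s(u)$, with $s$ the affine bisector function whose gradient has norm $2d$ and whose signs at $u$ and $\bar u$ come from the two nearest-center inequalities. The one difference is that your opposite-sides observation gives $\dist(u,B)+\dist(\bar u,B)\le\norm{\Delta}$, hence $\dist(u,B)\le\norm{\Delta}$; this is slightly sharper than the paper's triangle-inequality bound $\dist(u,B)\le\dist(\bar u,B)+\norm{\Delta}\le2\norm{\Delta}$.
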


\begin{proof}
(a) With the common center point $c$, $g_c$ agrees with $x\mapsto\norm{x-c}^2$ at both $u$ and $\bar u$ and $\nabla g_c(\bar u)=2(\bar u-c)$, so $\mathrm{rem}=\norm{u-c}^2-\norm{\bar u-c}^2-2(\bar u-c)^\top(u-\bar u)=\norm{u-\bar u}^2$.

(b) Expanding $\norm{u-c_{h^\star}}^2=\norm{\bar u-c_{h^\star}}^2+2(\bar u-c_{h^\star})^\top(u-\bar u)+\norm{\Delta}^2$ gives
$\mathrm{rem}=\bigl\{\norm{u-c_h}^2-\norm{u-c_{h^\star}}^2\bigr\}+\norm{\Delta}^2$.
Consider the affine function $s(x)=\norm{x-c_h}^2-\norm{x-c_{h^\star}}^2=-2x^\top(c_h-c_{h^\star})+\norm{c_h}^2-\norm{c_{h^\star}}^2$. Optimality of the labels gives $s(u)\le0\le s(\bar u)$, while $|s(\bar u)-s(u)|\le2\norm{\Delta} d$; hence $-2\norm{\Delta} d\le s(u)\le0$ and $0\le s(\bar u)\le2\norm{\Delta} d$, which yields the bound on $|\mathrm{rem}|$. Finally $B=\{s=0\}$ with $\norm{\nabla s}=2d$, so $\dist(x,B)=|s(x)|/(2d)$, giving $\dist(\bar u,B)\le\norm{\Delta}$ and $\dist(u,B)\le\dist(\bar u,B)+\norm{\Delta}\le2\norm{\Delta}$.
\end{proof}

Note also that under Assumption~\ref{ass:margin}, for every codebook $c$ the set of Voronoi-boundary points of $c$ charged by $P_U$ is null: boundaries between distinct center points lie in finitely many hyperplanes $B$, and $P_0\{\dist(U,B)\le t\}\le C_Mt^\am \downarrow0$.

We record the elementary regularity of the bounded linear submodels used to identify the gradient.

\begin{lemma}[Bounded linear submodels]\label{lem:submodel}
Let Assumption~\ref{ass:ident} hold. Fix $\bar S<\infty$ and let $S$ be measurable with $\E_0S=0$ and $\norm{S}_\infty\le\bar S$. For $|t|<1/(2\bar S)$ let $dP_t=(1+tS)\,dP_0$, and fix versions of the conditional expectations appearing below. Then for every $a\in\calA$ and $P_{0,X}$-almost every $x$,
\begin{equation*}
\mu_{a,t}(x)=\E_t(Y\mid A=a,X=x)=\frac{\E_0[Y(1+tS)\mid A=a,X=x]}{\E_0[(1+tS)\mid A=a,X=x]},
\end{equation*}
a ratio of affine functions of $t$ whose denominator lies in $[1/2,3/2]$. Hence $t\mapsto\mu_{a,t}(x)$ is infinitely differentiable on $\bigl(-1/(2\bar S),1/(2\bar S)\bigr)$ for $P_{0,X}$-almost every $x$, with
\begin{equation*}
\partial_t\mu_{a,t}(x)\big|_{t=0}=\E_0\bigl[\{Y-\mu_a(X)\}S\mid A=a,X=x\bigr],
\qquad
\bigl|\partial_t\mu_{a,t}(x)\bigr|\le C(B_Y+B_\mu)\bar S
\end{equation*}
uniformly in $(t,x)$. Consequently $U_t(x)=H\bmu_t(x)$ is differentiable in $t$ for $P_{0,X}$-almost every $x$, with $\sup_x\norm{\partial_tU_t(x)}\le C\norm{H}(B_Y+B_\mu)\bar S$ and $\sup_x\norm{U_t(x)-U(x)}\le C|t|$ holding for the fixed versions. Moreover, for such bounded mean-zero $S$ the path $t\mapsto P_t$ is differentiable in quadratic mean at $t=0$ with score $S$, because $t^{-1}\{(1+tS)^{1/2}-1\}\to S/2$ pointwise with the difference quotient bounded in absolute value by $|S|\le\bar S$, which lies in $L_2(P_0)$, so the $L_2(P_0)$ limit defining quadratic-mean differentiability holds by dominated convergence \citep[Chapter~25]{vandervaart1998}.
\end{lemma}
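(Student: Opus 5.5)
The argument is a direct computation in three steps: the conditional law of $Y$, then differentiation of the resulting ratio, then quadratic-mean differentiability of the path.

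\emph{Conditional law.} Since $dP_t=(1+tS)\,dP_0$ with $|tS|\le1/2$, $P_t$ and $P_0$ are mutually absolutely continuous and $P_t$ is a probability measure, because $\E_0S=0$. Bayes' rule for densities then gives the conditional law of $Y$ given $(A,X)$ under $P_t$: its density relative to $P_0(\cdot\mid A,X)$ is $(1+tS)/\E_0[1+tS\mid A,X]$. This yields the displayed ratio formula for $\mu_{a,t}(x)$ at every $(a,x)$ with $P_0(A=a,X\in\cdot)$ positive. Positivity (Assumption~\ref{ass:ident}(iii)) makes the statement hold $P_{0,X}$-almost everywhere for each arm.

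\emph{Differentiability and bounds.} Having fixed versions of $\E_0[Y\mid A,X]$, $\E_0[YS\mid A,X]$ and $\E_0[S\mid A,X]$, we may choose them bounded by $B_Y$, $B_Y\bar S$ and $\bar S$. The numerator $N(t)=\mu_a+t\,\E_0[YS\mid\cdot]$ and the denominator $D(t)=1+t\,\E_0[S\mid\cdot]$ are then affine in $t$, and $D(t)\in[1/2,3/2]$ for $|t|<1/(2\bar S)$. A ratio of affine functions with denominator bounded away from zero is $C^\infty$. Its derivative is $(N'D-ND')/D^2$, so
\[
|\partial_t\mu_{a,t}|\le 4\bigl(B_Y\bar S\cdot\tfrac32+(B_Y+|t|B_Y\bar S)\bar S\bigr)\le C(B_Y+B_\mu)\bar S.
\]
At $t=0$ this derivative equals $\E_0[YS\mid\cdot]-\mu_a\E_0[S\mid\cdot]=\E_0[(Y-\mu_a)S\mid A=a,X=x]$. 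Linearity of $H$ transfers these bounds to $U_t=H\bmu_t$. The mean value theorem then gives $\sup_x\norm{U_t(x)-U(x)}\le C|t|$ for the fixed versions.

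\emph{Quadratic-mean differentiability.} Write $r_t=t^{-1}\{(1+tS)^{1/2}-1\}$, which equals $S/\{(1+tS)^{1/2}+1\}$. Thus $r_t\to S/2$ pointwise, and $|r_t|\le|S|\le\bar S$ because the denominator is at least $1$. Dominated convergence in $L_2(P_0)$ then gives differentiability in quadratic mean with score $S$.

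No step is a real obstacle. The only care needed is measure-theoretic: the fixed versions of the conditional expectations must be chosen bounded, so that the bounds hold uniformly in $x$ rather than merely almost everywhere.
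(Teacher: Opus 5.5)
Your proposal is correct and follows the paper's proof essentially step for step. You use Bayes' rule for the tilted conditional law, conditioning on $(A,X)$ directly where the paper conditions on $X$ first, which is immaterial; the quotient rule with the denominator in $[1/2,3/2]$; linearity of $H$ plus the mean value theorem for $U_t$; and dominated convergence for quadratic-mean differentiability. Your explicit choice of bounded versions and the identity $t^{-1}\{(1+tS)^{1/2}-1\}=S/\{(1+tS)^{1/2}+1\}$ just make precise details that the paper leaves implicit.
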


\begin{proof}
Under $P_t$ the conditional law of $(A,Y)$ given $X$ has density proportional to $1+tS$ against that under $P_0$, so Bayes' rule gives the displayed ratio, whose numerator and denominator are affine in $t$. The denominator $1+t\,\E_0[S\mid A=a,X=x]$ lies in $[1-\bar S|t|,\,1+\bar S|t|]\subseteq[1/2,3/2]$ for $|t|<1/(2\bar S)$, so the ratio is a smooth function of $t$ there. Differentiating at $t=0$ gives $\partial_t\mu_{a,t}(x)|_{0}=\E_0[YS\mid A=a,X=x]-\mu_a(x)\,\E_0[S\mid A=a,X=x]=\E_0[\{Y-\mu_a(X)\}S\mid A=a,X=x]$. The uniform bound follows from $|Y|\le B_Y$, $|\mu_a|\le B_\mu$, $\norm{S}_\infty\le\bar S$, and the denominator bound. The statements for $U_t=H\bmu_t$ follow by linearity of $H$ and the mean value theorem.
\end{proof}

\textbf{Part (i) (pathwise differentiability).}
It suffices to identify the gradient along the bounded linear submodels $dP_t=(1+tS)\,dP_0$ of Lemma~\ref{lem:submodel}, with $S$ bounded and mean zero, because their scores are dense in $L_2^0(P_0)$ as shown at the close of the argument. Write $S=S_X+S_{A\mid X}+S_{Y\mid A,X}$ along the factorization $P=P_X\,P_{A\mid X}\,P_{Y\mid A,X}$, each component having conditional mean zero given the preceding variables. By Lemma~\ref{lem:submodel} the path values obey $\sup_x\norm{U_t(x)-U(x)}\le C|t|$ and $\max_a\sup_x|\mu_{a,t}(x)-\mu_a(x)|\le C|t|$, so for all small $|t|$ the pair $(U_t(x),\bmu_t(x))$ lies in the compact unit neighborhood of $\calC\times[-B_\mu,B_\mu]^{p+1}$; there $g_c$, which is defined on all of $\R^q$, is Lipschitz uniformly over codebooks, and each $C^2$ loss of Assumption~\ref{ass:class} is the restriction of a $C^2$ function on an open neighborhood of $\calC\times[-B_\mu,B_\mu]^{p+1}$, so $\Psi_f(P_t)$ is well defined along the path. Then $\Psi_f(P_t)=\int f\{U_t(x),\bmu_t(x)\}\,dP_{t,X}(x)$ and, differentiating at $t=0$,
\begin{equation}\label{eq:pathderiv}
\frac{d}{dt}\Psi_f(P_t)\Big|_{t=0}
=\E_0\bigl[f(U,\bmu)\,S_X\bigr]
+\E_0\Bigl[\bigl\{\nabla_uf(U,\bmu)^\top H+\nabla_mf(U,\bmu)^\top\bigr\}\,\dot\bmu(X)\Bigr],
\end{equation}
where, by Lemma~\ref{lem:submodel}, $\dot\mu_a(x)=\partial_t\mu_{a,t}(x)|_{t=0}=\E_0[\{Y-\mu_a(X)\}\,S\mid A=a,X=x]=\E_0[R_a(O;\eta_0)\,S\mid X=x]$, the last equality by Assumption~\ref{ass:ident}(iii) and the definition \eqref{eq:residual}. Substituting into \eqref{eq:pathderiv} and moving the $X$-measurable gradient factor inside the conditional expectation, the second term equals $\E_0[\{\nabla_uf^\top H+\nabla_mf^\top\}R\,S]$. For the first term, since $f(U,\bmu)-\Psi_f(P_0)$ is $X$-measurable and the conditional scores integrate to zero given $X$,
$\E_0[fS_X]=\E_0[\{f-\Psi_f\}S_X]=\E_0[\{f-\Psi_f\}S]$. Hence $\frac{d}{dt}\Psi_f(P_t)|_0=\E_0[\{\phi_f(O;\eta_0)-\Psi_f(P_0)\}S]$. The candidate gradient $\phi_f(O;\eta_0)-\Psi_f(P_0)$ is bounded, hence in $L_2^0(P_0)$, using $\E_0\{R\mid X\}=0$. For $f=g_c\in\calF_{\mathrm{qt}}$ the chain-rule step in \eqref{eq:pathderiv} is justified as follows. By Lemma~\ref{lem:submodel} the curve $t\mapsto U_t(x)$ is differentiable at every $x$ with $\sup_x\norm{U_t(x)-U(x)}\le C|t|$. The loss $g_c$ is Lipschitz on $\calC$ and differentiable off the union of Voronoi boundaries, a set that is $P_U$-null by the boundary-null hypothesis of part (i), and automatically so under Assumption~\ref{ass:margin} as noted above, so for $P_0$-almost every $x$ the base point $U(x)$ is a point of differentiability of $g_c$. At such an $x$ the ordinary chain rule for a Lipschitz function composed with a curve differentiable in $t$ gives $t^{-1}\{g_c(U_t(x))-g_c(U(x))\}\to\nabla g_c\{U(x)\}^\top\partial_tU_t(x)|_{t=0}$, while the difference quotient is dominated by $\mathrm{Lip}(g_c)\,\sup_x\norm{U_t(x)-U(x)}/|t|\le C$ uniformly in $(t,x)$, so dominated convergence yields \eqref{eq:pathderiv} for $g_c$. This uses only the pointwise convergence $U_t(x)\to U(x)$ supplied by the lemma and never assumes pointwise differentiability of $\Psi_f$ along a general regular submodel. Finally, the bounded scores $S$ form a dense subset of $L_2^0(P_0)$, so the tangent set generated by the submodels of Lemma~\ref{lem:submodel} has closure equal to the full nonparametric tangent space $L_2^0(P_0)$, and the bounded candidate gradient is the unique gradient relative to this tangent set, hence the efficient influence function \citep[Section~25.3]{vandervaart1998}.

\textbf{Part (ii) (uniform second-order bias).}
Fix $\bar\eta$ in the truncated range and write $\bar U=H\bar\bmu(X)$, $\Delta=\bar U-U$, so $\norm{\Delta}\le\norm{H}\,\norm{\bar\bmu-\bmu}$ pointwise. Using $\E_0\{R_a(O;\bar\eta)\mid X\}=(\pi_a/\bar\pi_a)(\mu_a-\bar\mu_a)(X)$ and writing $(\pi_a/\bar\pi_a)(\mu_a-\bar\mu_a)=(\mu_a-\bar\mu_a)+e_a$ with $e_a=\{(\pi_a-\bar\pi_a)/\bar\pi_a\}(\mu_a-\bar\mu_a)$,
\begin{equation}\label{eq:biasdecomp}
\begin{split}
\E_0\{\phi_f(O;\bar\eta)\}-\Psi_f(P_0)
&=\E_0\bigl[f(\bar U,\bar\bmu)-f(U,\bmu)+\{\nabla_uf(\bar U,\bar\bmu)^\top H+\nabla_mf(\bar U,\bar\bmu)^\top\}(\bmu-\bar\bmu)\bigr]\\
&\qquad+\E_0\bigl[\{\nabla_uf^\top H+\nabla_mf^\top\}e\bigr].
\end{split}
\end{equation}
The last term is bounded, uniformly over $f$ with bounded gradients, by $C\sum_a\E_0|\,\pi_a-\bar\pi_a||\mu_a-\bar\mu_a|\cdot(2/\eps_\pi)\le C\,r_\mu r_\pi$ by Cauchy--Schwarz. For the first term: when $f\in\calF_{\mathrm{sm}}\cup\calF_{\mathrm{str}}$, a second-order Taylor expansion of $f$ at $(\bar U,\bar\bmu)$ in the direction $(U-\bar U,\bmu-\bar\bmu)=(-H(\bar\bmu-\bmu),-(\bar\bmu-\bmu))$ shows the bracketed integrand equals minus the Taylor remainder, bounded by $CB_F\norm{\bar\bmu-\bmu}^2$ pointwise, whence the contribution $Cr_\mu^2$. When $f=g_c\in\calF_{\mathrm{qt}}$, the bracketed integrand is $-\mathrm{rem}$ in the notation of Lemma~\ref{lem:geometry} (with $\nabla_mg_c=0$), so by the lemma, pointwise,
\begin{gather*}
|\mathrm{rem}|\le\norm{\Delta}^2+\sum_{(h,j):c_h\ne c_j}2\norm{\Delta}\,\norm{c_h-c_j}\,\ind{E_{hj}},\\
E_{hj}=\bigl\{c_{h_c(U)}=c_j,\ c_{h_c(\bar U)}=c_h\bigr\}\subseteq\bigl\{\dist(U,B_{hj})\le2\norm{\Delta}\bigr\},
\end{gather*}
where $B_{hj}$ is the bisector of $c_h,c_j$ and at most one indicator is active. For any pair and any $t\in(0,t_0]$, splitting on $\{\norm{\Delta}>t\}$,
\begin{equation*}
\E_0\bigl[\norm{\Delta}\,\ind{\dist(U,B_{hj})\le2\norm{\Delta}}\bigr]
\le t^{-1}\E_0\norm{\Delta}^2+t\,P_0\bigl(\dist(U,B_{hj})\le2t\bigr)
\le t^{-1}\,C r_\mu^2+2^\am  C_M\,t^{1+\am},
\end{equation*}
by Assumption~\ref{ass:margin}. Choosing $t=(Cr_\mu^2)^{1/(2+\am)}$ when this value is at most $t_0$ yields the bound $C\,r_\mu^{2(1+\am)/(2+\am)}$ for each of the at most $\overline K^2$ pairs, with $\norm{c_h-c_j}\le\diam\calC$ absorbed into $C$. When $(Cr_\mu^2)^{1/(2+\am)}>t_0$, the ratio $(Cr_\mu^2/t_0^{2+\am})^{(1+\am)/(2+\am)}$ exceeds one, so the left side of \eqref{eq:bias}, trivially bounded by $2\bar F$, is also bounded by $2\bar F\,t_0^{-(1+\am)}(Cr_\mu^2)^{(1+\am)/(2+\am)}$, and the same bound holds for every candidate pair after enlarging $C$, which may depend on $t_0$. Collecting terms gives \eqref{eq:bias}--\eqref{eq:R2}; every constant depends only on the declared quantities, so the bound is uniform over $\calF$. \hfill$\square$

\subsection{Proof of Theorem~\ref{thm:bvm}, part (i)}\label{app:onestep}

Write $\Delta_i(f)=\hphi_{f,i}-\phi_f(O_i;\eta_0)$ and, for fold $b$, $\Delta^{(b)}(f)(o)=\phi_f(o;\hateta^{(-b)})-\phi_f(o;\eta_0)$, so that
\begin{equation}\label{eq:foldsplit}
\hPsi(f)-P_n\phi_f(\cdot;\eta_0)
=P_n\Delta_\cdot(f)
=\sum_{b=1}^B\frac{n_b}{n}\Bigl[\underbrace{P_0\Delta^{(b)}(f)}_{\text{bias}}
+\underbrace{(P_{n_b}^{(b)}-P_0)\Delta^{(b)}(f)}_{\text{empirical process}}\Bigr],
\end{equation}
with $P_{n_b}^{(b)}$ the empirical measure of fold $b$. By Theorem~\ref{thm:eif}(ii), $\sup_f|P_0\Delta^{(b)}(f)|\le CRem_2(\hateta^{(-b)})=\opro(n^{-1/2})$ under Assumption~\ref{ass:nuisance}(i).

For the empirical-process term, condition on the training data $\calT_b$ of fold $b$: the fold-$b$ observations are i.i.d.\ $P_0$ and independent of $\calT_b$, and $\{\Delta^{(b)}(f):f\in\calF\}$ is then a fixed class with envelope $2\bar F$, $L_2(P_0)$ radius at most $\delta_n$, and uniform polynomial entropy by Lemma~\ref{lem:incrementclass}\textup{(a)}. The localized maximal inequality for uniformly bounded VC-type classes \citep{vdvwellner2011}, together with the unlocalized bound of \cite[Theorem~2.14.1]{vdvwellner1996}, then gives, on the event $\{\delta_n\le\delta\}$,
\begin{equation}\label{eq:epbound}
\E\Bigl[\sup_{f}\bigl|(P_{n_b}^{(b)}-P_0)\Delta^{(b)}(f)\bigr|\,\Big|\,\calT_b\Bigr]
\;\le\;\frac{C}{\sqrt{n_b}}\,\delta\sqrt{\log(A/\delta)}+\frac{C}{n_b}\log(A/\delta).
\end{equation}
Choosing $\delta=\delta_n\vee n^{-1/2}$ and using Assumption~\ref{ass:nuisance}(ii) (which implies $\delta_n\sqrt{\log(1/\delta_n)}=\opro(1)$ since $\delta_n\le1$ may be assumed), the right side is $\opro(n^{-1/2})$; Markov's inequality conditionally on $\calT_b$, then unconditionally, and a union over the $B$ folds give $\sup_f|P_n\Delta_\cdot(f)|=\opro(n^{-1/2})$, i.e.\ \eqref{eq:linearity}.

Finally, by Lemma~\ref{lem:donsker}(b) the class $\{\phi_f(\cdot;\eta_0):f\in\calF\}$ is $P_0$-Donsker with bounded envelope, so $\sqrt n(P_n\phi_\cdot(\eta_0)-\Psi_0)\dto\G_0$ in $\ell^\infty(\calF)$; combining with \eqref{eq:linearity} proves the weak convergence. Coordinatewise, the influence function equals the efficient influence function of Theorem~\ref{thm:eif}(i), and asymptotic linearity with the EIF implies regularity and efficiency \citep[Section~25.3]{vandervaart1998}. \hfill$\square$

\subsection{Proof of Theorem~\ref{thm:bvm}, part (ii)}\label{app:bvm}

For weight-and-data random variables $\zeta_n$ write $\zeta_n=\opro^{\mathrm w}(1)$
if $\E_w(|\zeta_n|\wedge1)\to0$ in outer probability; 
if $Z_n^{(s)}\dtow Z$ in $\ell^\infty(\calF)$ and $\sup_\calF|\tilde Z_n^{(s)}-Z_n^{(s)}|=\opro^{\mathrm w}(1)$, then $\tilde Z_n^{(s)}\dtow Z$, by the elementary bound $|\varphi(\tilde Z)-\varphi(Z)|\le\sup|\tilde Z-Z|\wedge2$ for $\varphi\in\BL$.

We also record a joint equicontinuity fact used below at weight-dependent random indices. Once the theorem is proved, the conditional weak convergence $\sqrt n(\Psi^{(s)}-\hPsi)\dtow\G_0$ to the tight limit $\G_0$, together with the unconditional convergence $\sqrt n(\hPsi-\Psi_0)\dto\G_0$ of part~\textup{(i)}, makes the weighted process $\sqrt n(\Psi^{(s)}-\Psi_0)$ jointly asymptotically equicontinuous with respect to the covariance semimetric $\varsigma$, because $\G_0$ has $\varsigma$-uniformly-continuous paths. Its increment $\sqrt n(\Psi^{(s)}-\Psi_0)(f_n-f_n')$ is therefore $\opro^{\mathrm w}(1)$ for any indices with $\varsigma(f_n,f_n')\pto0$, and in particular when evaluated at any jointly consistent random index. This is the standard equicontinuity step of weighted-bootstrap $Z$-estimation \citep{cheng2010}.

Decompose as in \eqref{eq:bvmdecomp}. \textbf{Oracle term.} The class $\Phi_0=\{\phi_f(\cdot;\eta_0):f\in\calF\}$ is $P_0$-Donsker with bounded envelope (Lemma~\ref{lem:donsker}(b)), and by Lemma~\ref{lem:pwcheck} the Dirichlet weights satisfy the conditions of the exchangeable-bootstrap central limit theorem \citep[Theorem~2.2]{praestgaard1993} and \citep[Theorem~3.6.13]{vdvwellner1996} with limiting multiplier variance $c^2=1$. Hence
\begin{equation}\label{eq:oracleconv}
\Bigl\{n^{-1/2}\textstyle\sum_i(w^{(s)}_i-1)\,\phi_f(O_i;\eta_0)\Bigr\}_{f\in\calF}\ \dtow\ \G_0
\qquad\text{in }\ell^\infty(\calF).
\end{equation}
(Because $\sum_i(w_i^{(s)}-1)=0$, the display is unchanged if $\phi_f$ is replaced by $\phi_f-Pf$ or any other recentering; this is why $\hPsi$ is the exact natural posterior center.)

\textbf{Increment term.} Using $w^{(s)}_i=nE_i/S_n$ with $E_1,\dots,E_n$ i.i.d.\ standard exponential, $S_n=\sum_jE_j$, $\bar E_n=S_n/n$, algebra gives
\begin{equation}\label{eq:increpr}
\frac1{\sqrt n}\sum_i(w^{(s)}_i-1)\Delta_i(f)
=\frac{n}{S_n}\cdot\frac1{\sqrt n}\sum_i(E_i-1)\Delta_i(f)
-\frac{n}{S_n}\,\sqrt n\,(\bar E_n-1)\cdot P_n\Delta_\cdot(f).
\end{equation}
For the second term, on $A_n=\{S_n\ge n/2\}$ (with $\Pp_w(A_n^c)\le e^{-cn}$),
$\E_w\bigl[\sup_f|\cdot|\wedge1\bigr]\le2\,\E_w|\sqrt n(\bar E_n-1)|\cdot\sup_f|P_n\Delta_\cdot(f)|+e^{-cn}
\le2\sup_f|P_n\Delta_\cdot(f)|+e^{-cn}$,
and $\sup_f|P_n\Delta_\cdot(f)|=\opro(1)$ by the proof of Theorem~\ref{thm:bvm}\textup{(i)}; hence this term is $\opro^{\mathrm w}(1)$.
For the first term, condition on the data and the training folds: $\xi_i=E_i-1$ are i.i.d.\ mean-zero with $\norm{\xi_1}_{\psi_1}\le\norm{E_1}_{\psi_1}+\norm{1}_{\psi_1}=2+1/\log2\le4$, independent of the fixed array $\{\Delta_i(f)\}$. By Lemma~\ref{lem:incrementclass}\textup{(b)--(c)}, the cross-fitted increment array class $\calD_n=\{(\Delta_1(f),\dots,\Delta_n(f)):f\in\calF\}$ has envelope $2\bar F$ and polynomial entropy under the empirical $L_2(P_n)$ norm, with constants independent of the realized nuisance fits. Lemma~\ref{lem:multmax} therefore gives
\begin{equation*}
\E_\xi\sup_{f\in\calF}\Bigl|\frac1{\sqrt n}\sum_i\xi_i\Delta_i(f)\Bigr|
\;\le\;C\Bigl\{\delta_n'\sqrt{\log(A/\delta_n')}+\frac{\log n}{\sqrt n}\Bigr\},
\qquad
\delta_n'=\sup_f\bigl\{P_n\Delta_\cdot(f)^2\bigr\}^{1/2}.
\end{equation*}
Lemma~\ref{lem:incrementclass}\textup{(d)} gives $\delta_n'\le\delta_n+\Op(n^{-1/4})$, where the proof uses the squared-increment entropy and the bounded Lipschitz transformation $|a^2-b^2|\le 2\bar F_D|a-b|$ with $\bar F_D=2\bar F$. Hence, by Assumption~\ref{ass:nuisance}\textup{(ii)}, $\delta_n'\sqrt{\log(A/\delta_n')}=\opro(1)$. Combining, the conditional expectation displayed above is $\opro(1)$; together with the $n/S_n$ factor handled on $A_n$ as before, the first term of \eqref{eq:increpr} is $\opro^{\mathrm w}(1)$ uniformly over $\calF$.

Adding \eqref{eq:oracleconv} and the two $\opro^{\mathrm w}(1)$ terms proves \eqref{eq:bvm}. \hfill$\square$

\subsection{Proof of Theorem \ref{thm:delta}}\label{sec:proof:thm:delta}
\begin{proof}
Theorem~3.9.11 of \cite{vdvwellner1996} (the delta method for the bootstrap) is stated for exchangeably weighted empirical processes. Beyond Hadamard differentiability tangential to a set supporting the limit, it requires the derivative to be defined and continuous on the whole space, which is exactly the extension hypothesis assumed here, and it consumes the two convergences supplied by Theorem~\ref{thm:bvm}, namely $\sqrt n(\hPsi-\Psi_0)\dto\G_0$ and $\sqrt n(\Psi^{(s)}-\hPsi)\dtow\G_0$. In every application in this paper the derivative is a continuous linear evaluation map, such as $\zeta\mapsto\zeta(g_{c^\star(K)})$ or $\zeta\mapsto-V_\beta^{-1}\zeta(\nabla_\beta\ell_{\beta^\star})$, so the extension hypothesis holds automatically. Since $\Phi_0$ is Donsker, $\G_0$ is tight with paths in $C_\varsigma(\calF)$ \citep[Section~1.5]{vdvwellner1996}, so the tangentiality hypothesis is met, and both conclusions follow. Validity of quantile-based credible sets follows from conditional weak convergence plus continuity of the limit distribution by the standard argument \citep[Lemma~23.3]{vandervaart1998}.
\end{proof}

\subsection{Proofs of Proposition~\ref{prop:cancellation} and the energy-scale sensitivity result}\label{app:props}

\subsubsection{Proposition~\ref{prop:cancellation}}
The influence-function formula is immediate from Theorem~\ref{thm:eif}\textup{(i)} and the linearity of $f\mapsto\phi_f$ in $(f,\nabla f)$: subtract the two corrected scores and center by $\Psi_0(g)-\Psi_0(g')$.  The same subtraction is exact in the finite sums defining $\hPsi$ and $\Psi^{(s)}$, because both processes are linear in the corrected evaluations $\hphi_{f,i}$.  Hence any generated-feature fluctuation that appears as a common additive term in the two corrected risks cancels term by term in the point contrast and in every weighted draw. \hfill$\square$

\subsubsection{Proposition~\ref{prop:fragility}}
For the odds-ratio bound, pointwise in $\beta$, $e^{-\lambda_nn\Delta_n}\le e^{-\lambda_nn\{\widehat L_n(\beta)-L_n^U(\beta)\}}\le e^{\lambda_nn\Delta_n}$, so integrating against $e^{-\lambda_nnL_n^U}d\Pi_M$ gives $\widehat Z_M\in[e^{-\lambda_nn\Delta_n}Z_M,\ e^{\lambda_nn\Delta_n}Z_M]$ for each model, and the odds-ratio bound follows. For attainment, take $\widehat L_n=L^U_n-c_n$ on $B_M$ and $\widehat L_n=L^U_n$ on $B_{M'}$: then $\Delta_n=c_n$, $\widehat Z_M=e^{\lambda_nnc_n}Z_M$, and $\widehat Z_{M'}=Z_{M'}$.

For the local generated-feature expansion, by Assumption~\ref{ass:class}(iii)-type smoothness ($\ell_\beta\in C^2$ in $u$, uniformly), for bounded $\xi_i$,
$\widehat L_n(\beta)-L^U_n(\beta)=b_n\,P_n\{\xi^\top\nabla_u\ell_\beta(U)\}+r_n(\beta)$ with $\sup_\beta|r_n(\beta)|\le Cb_n^2$. Suppose $\sup_{\beta\in B_M}|P_0\{\xi^\top\nabla_u\ell_\beta(U)\}-a|\le\eps$ and likewise with $(B_{M'},a')$, and set $\kappa=a-a'$. The class $\{\xi^\top\nabla_u\ell_\beta(U)\}$ is bounded and Lipschitz in the finite-dimensional index, hence Donsker, so $G_n:=\sup_\beta|(P_n-P_0)\{\xi^\top\nabla_u\ell_\beta(U)\}|=\Op(n^{-1/2})$. Therefore $\sup_{\beta\in B_M}|\widehat L_n-L^U_n-b_na|\le b_n(\eps+G_n)+Cb_n^2$, and the preceding odds-ratio sandwich applied modelwise with the model-specific shifts $b_na$, $b_na'$ gives
\begin{equation*}
\Bigl|\log\frac{\widehat Z_M/\widehat Z_{M'}}{Z_M/Z_{M'}}+\lambda_nnb_n\kappa\Bigr|
\;\le\;2\lambda_nn\bigl\{b_n(\eps+G_n)+Cb_n^2\bigr\}
\;=\;\lambda_nnb_n\bigl\{2\eps+\Op(n^{-1/2})+O(b_n)\bigr\},
\end{equation*}
so for $\eps$ small the log-odds shift is $-\lambda_nnb_n\kappa\{1+o_p(1)\}$, diverging at $b_n=n^{-1/2}$ with $\lambda_n\asymp1$ whenever $\kappa\ne0$. For the corrected loss, the decomposition \eqref{eq:foldsplit} together with Theorem~\ref{thm:eif}(ii) gives, uniformly in $\beta$,
$P_n\hphi_{\ell_\beta,\cdot}-L^U_n(\beta)=P_n\{\nabla_u\ell_\beta(U)^\top HR(\cdot;\eta_0)\}+\opro(n^{-1/2})$:
the deterministic shift is removed, but the displayed term is a mean-zero $\Op(n^{-1/2})$ random functional of the score field $\nabla_u\ell_\beta$, which differs across models with different fields and is amplified by the identical mechanism. \hfill$\square$

\subsection{Proofs for Section~\ref{sec:delta}}\label{app:delta}

The posterior delta method (Theorem~\ref{thm:delta}) is stated in the main text, Section~\ref{sec:delta}, and proved in Supplementary Section~\ref{sec:proof:thm:delta}. Here we prove the supporting quantization envelope lemma and the path corollary.

\subsubsection{The quantization envelope lemma}

\begin{lemma}[Differentiability of the quantization functional]\label{lem:envelope}
Fix $K\in[\overline K]$ and define $\iota_K:\ell^\infty(\calF_{\mathrm{qt}})\to\R$ by $\iota_K(\nu)=\inf_{c\in\calC^K}\nu(g_c)$. Then $\iota_K$ is concave and, at $\nu=\Psi_0$, Hadamard directionally differentiable tangentially to the directions $\zeta$ for which $c\mapsto\zeta(g_c)$ is continuous on the ordered tuples $(\calC^K,d_\infty)$, with derivative $\iota_K'(\zeta)\;=\;\inf_{c\in\calC^\star(K)}\zeta(g_c)$.
Under Assumption~\ref{ass:quant}\textup{(i)} the infimum is over a singleton and $\iota_K$ is fully Hadamard differentiable with linear derivative $\zeta\mapsto\zeta(g_{c^\star(K)})$.
\end{lemma}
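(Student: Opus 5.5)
This is a Danskin-type envelope argument for an infimum of a linear evaluation over a compact index set.

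\textbf{Concavity.} This is immediate. For each fixed $c$, the map $\nu\mapsto\nu(g_c)$ is linear on $\ell^\infty(\calF_{\mathrm{qt}})$, and an infimum of linear maps is concave. Because $g_c\in\calF_{\mathrm{qt}}$, $\iota_K$ is moreover $1$-Lipschitz in the supremum norm: $|\iota_K(\nu)-\iota_K(\nu')|\le\sup_c|\nu(g_c)-\nu'(g_c)|$.

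\textbf{Preliminary facts.} The set $\calC^K$ with $d_\infty$ is compact. The map $c\mapsto\Psi_0(g_c)=P_U(g_c)$ is continuous, indeed Lipschitz, because
\[
|g_c(u)-g_{c'}(u)|\le 2\diam(\calC)\,d_\infty(c,c') \quad\text{for } u\in\calC.
\]
Hence $\calC^\star(K)$ is nonempty and compact.

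\textbf{Directional derivative.} Take $t_n\downarrow0$ and $\zeta_n\to\zeta$ in $\ell^\infty$, with $\zeta$ in the tangent set, so that $c\mapsto\zeta(g_c)$ is continuous. Set $\nu_n=\Psi_0+t_n\zeta_n$; I prove matching bounds on $t_n^{-1}\{\iota_K(\nu_n)-\iota_K(\Psi_0)\}$.

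For the upper bound, fix any $c^\star\in\calC^\star(K)$. Then $\iota_K(\nu_n)\le\Psi_0(g_{c^\star})+t_n\zeta_n(g_{c^\star})$, so
\[
t_n^{-1}\{\iota_K(\nu_n)-W(K)\}\le\zeta_n(g_{c^\star})\to\zeta(g_{c^\star}).
\]
Taking the infimum over $c^\star$ gives $\limsup\le\inf_{\calC^\star(K)}\zeta(g_c)$.

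For the lower bound, pick near-minimizers $c_n$ with $\nu_n(g_{c_n})\le\iota_K(\nu_n)+t_n^2$. Then
\[
\Psi_0(g_{c_n})-W(K)\le \iota_K(\nu_n)-W(K)+t_n\norm{\zeta_n}_\infty+t_n^2=O(t_n),
\]
using the upper bound for the first difference. By compactness and continuity of $c\mapsto\Psi_0(g_c)$, every limit point of $(c_n)$ lies in $\calC^\star(K)$, so $\dist_{d_\infty}(c_n,\calC^\star(K))\to0$. Now
\[
t_n^{-1}\{\iota_K(\nu_n)-W(K)\}\ge t_n^{-1}\{\Psi_0(g_{c_n})-W(K)\}+\zeta_n(g_{c_n})-t_n\ge\zeta(g_{c_n})-\norm{\zeta_n-\zeta}_\infty-t_n .
\]
Along any subsequence, pass to a further subsequence with $c_n\to\bar c\in\calC^\star(K)$. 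Continuity of $c\mapsto\zeta(g_c)$ then gives $\liminf\ge\zeta(g_{\bar c})\ge\inf_{\calC^\star(K)}\zeta(g_c)$.

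\textbf{Full differentiability under Assumption~\ref{ass:quant}(i).} Here $\calC^\star(K)$ is a single set. As ordered tuples it is a finite orbit of label permutations of one codebook. Since $g_c$ depends only on the set $\{c_1,\dots,c_K\}$, $\zeta(g_c)$ is constant on that orbit. The derivative is therefore $\zeta\mapsto\zeta(g_{c^\star(K)})$, which is linear and is a continuous evaluation map on all of $\ell^\infty$.

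\textbf{Expected obstacle.} The delicate step is the lower bound. It needs the tangential continuity of $\zeta$ in $c$ to pass from $\zeta(g_{c_n})$ to $\zeta(g_{\bar c})$, because $\zeta_n$ itself need not be continuous in $c$. I would remark that $\G_0$ has $\varsigma$-continuous paths. In addition, $c\mapsto\phi_{g_c}(\cdot;\eta_0)$ is $L_2(P_0)$-continuous, since under the margin condition Voronoi assignments change only on sets of vanishing mass. Together these place $\G_0$ in the tangent set, which is what the posterior delta method (Theorem~\ref{thm:delta}) consumes.
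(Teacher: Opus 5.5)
Your proof is correct and follows the paper's Danskin-type argument essentially step for step. You get the upper bound by evaluating at an optimal codebook, the lower bound from $t_n^2$-near-minimizers whose limit points lie in $\calC^\star(K)$, and you pass to the limit using continuity of $c\mapsto\zeta(g_c)$; your remark that $\calC^\star(K)$, read as ordered tuples, is a finite label-permutation orbit on which $\zeta(g_c)$ is constant is a slightly more careful treatment of the paper's ``infimum over a singleton'' step.
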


\textbf{Proof.} Concavity is clear ($\iota_K$ is an infimum of linear functionals $\nu\mapsto\nu(g_c)$). Equip $\calC^K$ with the pseudometric $d(c,c')=\norm{g_c-g_{c'}}_\infty$, under which $c\mapsto\Psi_0(g_c)$ is $1$-Lipschitz, since $|\Psi_0(g_c)-\Psi_0(g_{c'})|\le\norm{g_c-g_{c'}}_\infty$, hence continuous. Note that $d(c,c')\le4\,\diam\calC\cdot d_\infty(c,c')$ for ordered tuples, so Euclidean convergence of tuples implies $d$-convergence, and continuity of a direction on $(\calC^K,d_\infty)$ is exactly what the compactness extraction below consumes. Let $t_m\downarrow0$ and $\zeta_m\to\zeta$ uniformly with $\zeta$ continuous on $(\calC^K,d_\infty)$, and write $\nu_m=\Psi_0+t_m\zeta_m$.

\textbf{Upper bound.} $\iota_K(\nu_m)\le\nu_m(g_c)=\Psi_0(g_c)+t_m\zeta_m(g_c)$ for every $c\in\calC^\star(K)$, so
\begin{equation*}
\limsup_m t_m^{-1}\{\iota_K(\nu_m)-W(K)\}\le\inf_{c\in\calC^\star(K)}\zeta(g_c).
\end{equation*}

\textbf{Lower bound.} Choose $c_m$ with $\nu_m(g_{c_m})\le\iota_K(\nu_m)+t_m^2$. Then
$\Psi_0(g_{c_m})\le\iota_K(\nu_m)+t_m^2+t_m\norm{\zeta_m}_\infty\le W(K)+O(t_m)$,
so $\Psi_0(g_{c_m})\to W(K)$; by Euclidean compactness of $\calC^K$, continuity of $c\mapsto\Psi_0(g_c)$, and the definition of $\calC^\star(K)$, every Euclidean subsequential limit point of $(c_m)$ lies in $\calC^\star(K)$ (argmin upper hemicontinuity). Hence
\begin{equation*}
t_m^{-1}\bigl\{\iota_K(\nu_m)-W(K)\bigr\}
\ \ge\ t_m^{-1}\bigl\{\Psi_0(g_{c_m})-W(K)\bigr\}+\zeta_m(g_{c_m})-t_m
\ \ge\ \zeta_m(g_{c_m})-t_m ,
\end{equation*}
and along any subsequence with $c_m\to\tilde c\in\calC^\star(K)$, $\zeta_m(g_{c_m})\to\zeta(g_{\tilde c})\ge\inf_{c\in\calC^\star(K)}\zeta(g_c)$ by continuity of $\zeta$ on $(\calC^K,d_\infty)$ and $\norm{\zeta_m-\zeta}_\infty\to0$. Thus $\liminf_m t_m^{-1}\{\iota_K(\nu_m)-W(K)\}\ge\inf_{\calC^\star(K)}\zeta(g_c)$, proving Hadamard directional differentiability with the stated derivative; under Assumption~\ref{ass:quant}(i) the infimum is over the singleton $\{c^\star(K)\}$, the derivative is linear and defined on all $d$-continuous $\zeta$, and full (tangential) Hadamard differentiability follows since the derivative formula is linear and continuous. \hfill$\square$

\begin{remark}[Nonunique codebooks]\label{rem:nonunique}
Without Assumption~\ref{ass:quant}\textup{(i)}, Lemma~\ref{lem:envelope} still gives Hadamard directional differentiability with the concave derivative $\zeta\mapsto\inf_{c\in\calC^\star(K)}\zeta(g_c)$, a phenomenon classical in stochastic programming \citep{shapiro1991,dumbgen1993}. The plug-in path estimators remain consistent and $\sqrt n$-tight, but the limit is non-Gaussian and the bootstrap and posterior are in general inconsistent for it \citep{fang2019}. Exactly symmetric feature laws are the canonical violation. We flag set-valued and directional extensions as future work and note again that the resolution-profile inference depends on the path only through the band of Corollary~\ref{cor:path}, whose validity under directional differentiability can be restored by the rescaled or numerical-derivative constructions of Fang and Santos \cite{fang2019} and Hong and Li \cite{hong2018}. We do not pursue this here.
\end{remark}

\subsubsection{Corollary~\ref{cor:path}}
By Lemma~\ref{lem:donsker}(c), after label matching $c\mapsto\phi_{g_c}(\cdot;\eta_0)$ is H\"older continuous from ordered tuples $(\calC^K,d_\infty)$ into $L_2(P_0)$, hence into the covariance semimetric $\varsigma$; since $\G_0$ has $\varsigma$-uniformly-continuous paths, $c\mapsto\G_0(g_c)$ is continuous, i.e.\ the paths of $\G_0$ restricted to $\calF_{\mathrm{qt}}$ are $d$-continuous tangent directions for Lemma~\ref{lem:envelope}. The map $\nu\mapsto(\iota_1(\nu),\dots,\iota_{\overline K}(\nu))$ is then Hadamard differentiable at $\Psi_0$ (coordinatewise differentiability with linear derivatives implies joint), with derivative $\zeta\mapsto\{\zeta(g_{c^\star(K)})\}_K$; Theorem~\ref{thm:delta} gives both displayed convergences for $W$. For $\rho(\cdot)$, the map $(x_1,\dots,x_{\overline K})\mapsto(1-x_K/x_1)_K$ is continuously differentiable at the point $(W(1),\dots,W(\overline K))$ with $W(1)>0$, so the chain rule for Hadamard derivatives \citep[Lemma~3.9.3]{vdvwellner1996} yields the stated derivative
$\zeta\mapsto-W(1)^{-1}[\zeta(g_{c^\star(K)})-\{1-\rho(K)\}\zeta(g_{c^\star(1)})]$
and the joint limits. Validity of the credible sets follows from Theorem~\ref{thm:delta} once the limit laws are continuous, which holds whenever the relevant $\sigma_K>0$. \hfill$\square$

\subsection{Proof of Theorem~\ref{thm:profile}}\label{app:profile}

Throughout, fix the joint convergences of Corollary~\ref{cor:path}:
$\sqrt n\{\hat\rho(\cdot)-\rho_0(\cdot)\}\dto\mathbb H(\cdot)$ in $\R^{\overline K-1}$ and, conditionally,
$\sqrt n\{\rho^{(s)}(\cdot)-\hat\rho(\cdot)\}\dtow\mathbb H(\cdot)$, with $\mathbb H$ mean-zero Gaussian, $\sigma_K^2=\Var\,\mathbb H(K)>0$.

We use repeatedly the following fact, also consumed by the proof of Theorem~\ref{thm:honest}.

\begin{lemma}[Conditional quantile convergence]\label{lem:condquant}
If real random variables $T_n^{(s)}$ satisfy $T_n^{(s)}\dtow T$ with $\Law(T)$ having a continuous distribution function $F$, then $D_n:=\sup_x|F_n^{\mathrm w}(x)-F(x)|\to0$ in probability, where $F_n^{\mathrm w}(x)=\Pp_w(T_n^{(s)}\le x)$. Consequently, for any $\tau\in(0,1)$ at which $F$ is strictly increasing, the conditional $\tau$-quantile of $T_n^{(s)}$ converges in probability to $F^{-1}(\tau)$.
\end{lemma}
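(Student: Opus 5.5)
The plan is to reduce the statement to the classical fact that weak convergence to a limit with continuous distribution function upgrades to uniform convergence of distribution functions (Pólya's theorem), applied conditionally on the data. Convergence will be in probability rather than almost surely.

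\textbf{Step 1: pointwise convergence.} For each fixed $x$ and $\epsilon>0$, I will sandwich the indicator $\ind{t\le x}$ between two bounded Lipschitz functions. The lower one, $\varphi^-_{x,\epsilon}$, equals one on $(-\infty,x-\epsilon]$, zero on $[x,\infty)$, and is linear in between. The upper one, $\varphi^+_{x,\epsilon}$, is defined the same way on $[x,x+\epsilon]$. Both are $\epsilon^{-1}$-Lipschitz and bounded by one, so $(\epsilon\wedge1)\varphi^\pm$ lies in $\BL(\R)$. Definition~\ref{def:condweak} then gives $\E_w\varphi^\pm(T_n^{(s)})\to\E\varphi^\pm(T)$ in outer probability. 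The sandwich yields
\[
F(x-\epsilon)-o_{\Pp}(1)\le F_n^{\mathrm w}(x)\le F(x+\epsilon)+o_{\Pp}(1).
\]
Continuity of $F$ lets $\epsilon\downarrow0$, so $F_n^{\mathrm w}(x)\to F(x)$ in probability for each fixed $x$.

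\textbf{Step 2: uniformity.} Fix $\eta>0$. Continuity of $F$ together with its limits $0$ and $1$ at $\mp\infty$ provides finitely many points $x_0<\dots<x_J$ such that $F(x_0)\le\eta$, $F(x_J)\ge1-\eta$, and $F(x_{j})-F(x_{j-1})\le\eta$. Monotonicity of $F_n^{\mathrm w}$ and $F$ gives the deterministic bound
\[
D_n\le\max_j|F_n^{\mathrm w}(x_j)-F(x_j)|+2\eta .
\]
Since the maximum is over finitely many points, Step~1 and a union bound make it $o_{\Pp}(1)$. Hence $\Pp^*(D_n>3\eta)\to0$, and as $\eta$ is arbitrary, $D_n\to0$ in probability.

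\textbf{Step 3: quantiles.} Let $q=F^{-1}(\tau)$. Strict increase of $F$ at $q$ means $F(q-\epsilon)<\tau<F(q+\epsilon)$ for each $\epsilon>0$. On the event $D_n<\min\{\tau-F(q-\epsilon),\,F(q+\epsilon)-\tau\}$, whose probability tends to one by Step~2, we have $F_n^{\mathrm w}(q-\epsilon)<\tau\le F_n^{\mathrm w}(q+\epsilon)$. This forces the conditional quantile $\inf\{x:F_n^{\mathrm w}(x)\ge\tau\}$ into $(q-\epsilon,q+\epsilon]$, which proves the quantile claim.

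\textbf{Main obstacle.} The only delicate point is measurability and outer probability. $F_n^{\mathrm w}(x)$ may fail to be measurable in the data, so all "in probability" statements should be read as statements in outer probability. The finite union bound and the sandwich argument remain valid for outer probability because outer measure is subadditive and monotone, so I would flag this rather than treat it in detail.
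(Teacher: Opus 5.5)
Your proof is correct, but it handles the ``in probability'' mode differently from the paper. The paper argues by subsequences. Along any subsequence it extracts a further subsequence on which the bounded-Lipschitz distance between the conditional law of $T_n^{(s)}$ and $\Law(T)$ tends to zero almost surely. On that event it applies the deterministic P\'olya theorem and the deterministic quantile argument pathwise, and then returns to convergence in probability through the subsequence criterion.

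You instead run P\'olya's argument directly in probability, in three steps:
\begin{itemize}
\item Lipschitz sandwiches of $\ind{t\le x}$ give pointwise convergence in outer probability.
\item A finite grid together with monotonicity of $F_n^{\mathrm w}$ and $F$ upgrades this to uniform convergence.
\item The quantile claim follows because the deviation event $\{|\hat q-q|>\epsilon\}$ is contained in $\{D_n\ge\min(\tau-F(q-\epsilon),\,F(q+\epsilon)-\tau)\}$.
\end{itemize}

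The paper's version is shorter because it delegates everything to deterministic results. It does, however, tacitly rely on two facts: that the bounded-Lipschitz metric metrizes weak convergence on $\R$, and that almost-sure subsequence extraction is legitimate for possibly nonmeasurable quantities, which requires passing to measurable covers.

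Your version proves the needed direction of the first fact directly through the sandwich. It handles outer probability using only monotonicity and subadditivity, as you note, so nothing beyond what you wrote is required.
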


\textbf{Proof.} Along any subsequence there is a further subsequence on which the bounded-Lipschitz distance to $\Law(T)$ converges almost surely to zero. On that event, weak convergence plus continuity of $F$ gives $\sup_x|F_n^{\mathrm w}(x)-F(x)|\to0$ by P\'olya's argument, and quantile convergence follows from strict increase at $F^{-1}(\tau)$. Since every subsequence has such a further subsequence, both convergences hold in probability. \hfill$\square$

\textbf{Part (i).} Let $K_0^\star=K^\star_0(\gamma)$, so $\rho_0(K)<\gamma$ for $K<K_0^\star$ and $\rho_0(K_0^\star)>\gamma$ ($\gamma\notin\mathcal R_0$). For each $K$, $\rho^{(s)}(K)-\rho_0(K)=\{\rho^{(s)}(K)-\hat\rho(K)\}+\{\hat\rho(K)-\rho_0(K)\}$; the first is $\Op^{\mathrm w}(n^{-1/2})$ by conditional tightness, the second $\opro(1)$. Hence for $\eps=\min_K|\rho_0(K)-\gamma|/2>0$, $\Pp_w\{|\rho^{(s)}(K)-\rho_0(K)|\le\eps\ \forall K\}\to1$ in probability, on which event $K^{\star(s)}(\gamma)=K_0^\star$ by definition of the threshold functional. The statement for $\widehat K^\star(\gamma)$ uses only $\hat\rho(K)\pto\rho_0(K)$.

\textbf{Part (ii).} For $K<K_0$, $\rho_0(K)<\gamma$, so $\Pp_w\{\rho^{(s)}(K)<\gamma\}\to1$ in probability as in part (i); likewise $\Pp_w\{\rho^{(s)}(K_0+1)\ge\gamma\}\to1$ since $\rho_0(K_0+1)>\gamma$. On the intersection, $K^{\star(s)}(\gamma)=K_0$ if $\rho^{(s)}(K_0)\ge\gamma$ and $=K_0+1$ otherwise; this proves the two-point concentration, and moreover
\begin{equation*}
\Pp_w\bigl(K^{\star(s)}(\gamma)=K_0\bigr)
=\Pp_w\bigl(\rho^{(s)}(K_0)\ge\gamma\bigr)+\opro(1)
=1-G_n^{\mathrm w}\bigl(Z_n^-\bigr)+\opro(1),
\end{equation*}
where $G_n^{\mathrm w}(x)=\Pp_w[\sqrt n\{\rho^{(s)}(K_0)-\hat\rho(K_0)\}\le x]$ and $Z_n=\sqrt n\{\gamma-\hat\rho(K_0)\}$. By Lemma~\ref{lem:condquant} with $T=N(0,\sigma_{K_0}^2)$, $\sup_x|G_n^{\mathrm w}(x)-\Phi(x/\sigma_{K_0})|\pto0$, and since $\Phi$ is continuous the left limit costs nothing:
$\Pp_w\{K^{\star(s)}(\gamma)=K_0\}=1-\Phi(Z_n/\sigma_{K_0})+\opro(1)$.
Because $\gamma=\rho_0(K_0)$, $Z_n=-\sqrt n\{\hat\rho(K_0)-\rho_0(K_0)\}\dto N(0,\sigma_{K_0}^2)$, so $\Phi(Z_n/\sigma_{K_0})\dto\Phi(N(0,1))=\mathrm{Uniform}(0,1)$, and $1-\mathrm{Uniform}(0,1)\overset d=\mathrm{Uniform}(0,1)$.

\textbf{Part (iii).} \textbf{Scale.} Conditionally, $\sqrt n\{\rho^{(s)}(K)-\hat\rho(K)\}\dtow N(0,\sigma_K^2)$, whose distribution function is continuous and strictly increasing; by Lemma~\ref{lem:condquant} applied at $\tau=0.25,0.75$, the conditional quartiles converge, so $\sqrt n\,\hat\sigma(K)\pto\sigma_K$ for each $K$.

\textbf{Posterior quantile.} Write the band statistic as
$M_n^{(s)}=\max_K\,\bigl|\sqrt n\{\rho^{(s)}(K)-\hat\rho(K)\}\bigr|\big/\{\sqrt n\,\hat\sigma(K)\}$.
Since $\sqrt n\hat\sigma(K)\pto\sigma_K>0$ and the vector $\sqrt n(\rho^{(s)}-\hat\rho)\dtow\mathbb H$, the conditional law of $M_n^{(s)}$ converges weakly in probability to that of $M=\max_K|\mathbb H(K)|/\sigma_K$ (replace $\hat\sigma$ by $\sigma$ at the cost of an $\opro^{\mathrm w}(1)$ perturbation, then apply the continuous mapping $x\mapsto\max_K|x_K|/\sigma_K$ to bounded-Lipschitz test functions). The distribution function $F_M$ is continuous and strictly increasing on $(0,\infty)$. To see this, write $B_x=\prod_K[-x\sigma_K,x\sigma_K]$, so $F_M(x)=\Pp(\mathbb H\in B_x)$, and let $L$ denote the linear span of the support of $\mathbb H$, on which the Gaussian law of $\mathbb H$ has a positive density relative to Lebesgue measure on $L$. Then $B_x\cap L=x(B_1\cap L)$, and $B_1\cap L$ is a compact convex neighborhood of the origin in $L$ because $B_1$ contains an open ball around the origin, as each $\sigma_K>0$. For $x'>x>0$ the set $x'(B_1\cap L)\setminus x(B_1\cap L)$ has positive Lebesgue measure in $L$, so $F_M(x')>F_M(x)$, and continuity holds because the boundary of $x(B_1\cap L)$ in $L$ is Lebesgue null for every $x>0$. Lemma~\ref{lem:condquant} gives $q_{1-\alpha}\pto m_{1-\alpha}:=F_M^{-1}(1-\alpha)>0$.

\textbf{Coverage of the band.} The event $B_n=\{L_\rho(K)\le\rho_0(K)\le U_\rho(K)\ \forall K\le\overline K\}$ equals $\{T_n\le q_{1-\alpha}\}$ with
$T_n=\max_K|\hat\rho(K)-\rho_0(K)|/\hat\sigma(K)$ (and $K=1$ contributing zero). By Corollary~\ref{cor:path} and $\sqrt n\hat\sigma\pto\sigma$, $T_n\dto M$ via the continuous-mapping theorem; with $q_{1-\alpha}\pto m_{1-\alpha}$ and $F_M$ continuous at $m_{1-\alpha}$, Slutsky gives
$\Pp(B_n)=\Pp(T_n-q_{1-\alpha}\le0)\to F_M(m_{1-\alpha})=1-\alpha$.

\textbf{From band to profile, simultaneously in $\gamma$.} On $B_n$, fix any $\gamma\in(0,\rho_0(\overline K))$ and let $K_0^\star=K_0^\star(\gamma)$. Then $U_\rho(K_0^\star)\ge\rho_0(K_0^\star)\ge\gamma$, and for every $K'<K_0^\star$, $L_\rho(K')\le\rho_0(K')<\gamma$; hence $K_0^\star\in\widehat C(\gamma)$ by \eqref{eq:profileset}. This holds for all $\gamma$ on the single event $B_n$, so
$\Pp\{K^\star_0(\gamma)\in\widehat C(\gamma)\ \forall\gamma\}\ge\Pp(B_n)\to1-\alpha$.
Monotonization replaces $L_\rho$ by $\tilde L_\rho(K)=\max_{K'\le K}L_\rho(K')$ and $U_\rho$ by $\tilde U_\rho(K)=\min_{K'\ge K}U_\rho(K')$; on $B_n$, monotonicity of $\rho_0$ gives $\tilde L_\rho(K)\le\rho_0(K)\le\tilde U_\rho(K)$ for all $K$, so the containment argument is unchanged. \hfill$\square$

\subsection{Approximate minimization}\label{app:approxmin}
The theory of Section~\ref{sec:path_theory} is stated for exact minimizers, while the implementation of Section~\ref{sec:construction} returns approximate ones. The following proposition shows that an optimization gap that is negligible at the inferential scale changes nothing, because every downstream report depends on the computed codebooks only through the attained loss values.

\begin{proposition}[Approximate minimization suffices]\label{prop:approxmin}
Assume the conditions of Corollary~\ref{cor:path}, Theorem~\ref{thm:profile}, and Theorem~\ref{thm:honest} for the value statements, and those of Supplementary Theorem~\ref{thm:projection} and Theorem~\ref{thm:effects} for the mixture-projection statement. Let $\tilde c_K$ be the codebook computed from $\hPsi$ and $\tilde c^{(s)}_K$ the codebook computed from $\Psi^{(s)}$, and write the attained values $\widetilde W(K)=\hPsi(g_{\tilde c_K})$ and $\widetilde W^{(s)}(K)=\Psi^{(s)}(g_{\tilde c^{(s)}_K})$. Suppose
\begin{equation*}
\max_{K\le\overline K}\Bigl\{\widetilde W(K)-\inf_{c\in\calC^K}\hPsi(g_c)\Bigr\}=\opro(n^{-1/2}),
\qquad
\max_{K\le\overline K}\Bigl\{\widetilde W^{(s)}(K)-\inf_{c\in\calC^K}\Psi^{(s)}(g_c)\Bigr\}=\opro^{\mathrm w}(n^{-1/2}),
\end{equation*}
the second in the conditional-in-probability mode $\opro^{\mathrm w}$, meaning $\E_w(|\cdot|\wedge1)\to0$ in outer probability, as defined at the start of the proof of Theorem~\ref{thm:bvm}\textup{(ii)} in Supplementary Section~\ref{app:bvm}. The unconditional hypotheses are stated under sampling from $P_0$. Then the conclusions of Corollary~\ref{cor:path}, Theorem~\ref{thm:profile}, and Theorem~\ref{thm:honest} hold verbatim with the exact minimized values replaced by the attained values $\widetilde W(K)$ and $\widetilde W^{(s)}(K)$ throughout the definitions of $\rho$, the band \eqref{eq:band}, and the inversion \eqref{eq:profileset}. The analogous statement holds for the mixture projection. Suppose the approximate minimizers are consistent, $\tilde\beta(K)\pto\beta^\star(K)$, and the per-draw minimizers conditionally consistent, $\Pp_w\{|\tilde\beta^{(s)}(K)-\beta^\star(K)|>\eps\}\pto0$ for every $\eps>0$, both after the label alignment of Supplementary Theorem~\ref{thm:projection}. A sufficient condition is that they nearly minimize the corrected criterion value, $\hPsi(\ell_{\tilde\beta(K)})\le\inf_\beta\hPsi(\ell_\beta)+\opro(1)$ and $\Psi^{(s)}(\ell_{\tilde\beta^{(s)}(K)})\le\inf_\beta\Psi^{(s)}(\ell_\beta)+\opro^{\mathrm w}(1)$, from which consistency follows because a near-minimizer of the criterion value up to $\opro(1)$ lies, with probability tending to one, in any fixed neighborhood of the unique aligned minimizer, by compactness of $\calB_K$ and continuity of $\beta\mapsto\Psi_0(\ell_\beta)$, the standard argmin consistency argument. The per-draw version holds in conditional probability by the same argument applied to the weighted criterion. If in addition they solve the corrected stationarity condition up to $\opro(n^{-1/2})$, respectively $\opro^{\mathrm w}(n^{-1/2})$, then the conclusions of Supplementary Theorem~\ref{thm:projection} and Theorem~\ref{thm:effects} hold with $\tilde\beta(K)$ and $\tilde\beta^{(s)}(K)$ in place of the exact minimizers.
\end{proposition}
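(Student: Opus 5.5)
The plan is a perturbation argument. Every conclusion of Corollary~\ref{cor:path}, Theorem~\ref{thm:profile}, and Theorem~\ref{thm:honest} is built from only two ingredients: the joint sampling limit of $\sqrt n\{\widehat W(\cdot)-W(\cdot)\}$ and the conditional limit of $\sqrt n\{W^{(s)}(\cdot)-\widehat W(\cdot)\}$. Both are vectors in $\R^{\overline K}$, and everything downstream is a continuous map of them, together with quantiles that Lemma~\ref{lem:condquant} controls. So I would first show that
\[
\sqrt n\{\widetilde W(\cdot)-\widehat W(\cdot)\}=\opro(1),
\qquad
\sqrt n\{\widetilde W^{(s)}(\cdot)-W^{(s)}(\cdot)\}=\opro^{\mathrm w}(1).
\]
This is immediate from the hypotheses, because the attained value always dominates the infimum, so each gap is nonnegative and is exactly the assumed quantity. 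Then I would push these perturbations through each conclusion.

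\textbf{Corollary~\ref{cor:path}.} The unconditional limit follows by Slutsky. The conditional limit follows from the elementary fact recorded at the start of Supplementary Section~\ref{app:bvm}: an $\opro^{\mathrm w}(1)$ perturbation preserves $\dtow$. This fact is applied to $\sqrt n(\widetilde W^{(s)}-\widetilde W)=\sqrt n(W^{(s)}-\widehat W)+\opro^{\mathrm w}(1)+\opro(1)$, where a data-only $\opro(1)$ term is trivially $\opro^{\mathrm w}(1)$. The map to $\rho$ is $C^1$ near $W(1)>0$, so the same perturbation orders persist for $\tilde\rho$ and $\tilde\rho^{(s)}$.

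\textbf{Theorem~\ref{thm:profile}.} Part~(i) uses only consistency. Part~(ii) uses the conditional law of $\sqrt n\{\tilde\rho^{(s)}(K_0)-\tilde\rho(K_0)\}$, which is unchanged by the previous step, and the sampling law of $Z_n$, which changes only by $\opro(1)$. For part~(iii), the interquartile scale and the max-statistic quantile are conditional quantiles of statistics whose conditional laws have the same continuous, strictly increasing limits. Lemma~\ref{lem:condquant} then gives $\sqrt n\hat\sigma\pto\sigma_K$ and $q_{1-\alpha}\pto m_{1-\alpha}$ again, and the band-to-profile inversion is purely deterministic.

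\textbf{Theorem~\ref{thm:honest}.} This is the main obstacle, because the hypotheses are stated under $P_0$ while the conclusion is needed under the drifting laws $P_{n,u,s}$. I would handle it by contiguity. Le Cam's first lemma applies to the bounded tilts, uniformly over the totally bounded class $\calS$ and $|u|\le B_u$. Arguing by subsequences that attain the infimum and extracting convergent $(u_n,s_n)$, I would show that events of $P_0^{\otimes n}$-probability tending to zero also have vanishing $P_{n,u_n,s_n}^{\otimes n}$-probability. This transfers the $\opro(1)$ gap. The conditional $\opro^{\mathrm w}$ gap transfers as well, because $\E_w(|\cdot|\wedge1)$ is a data-measurable bounded quantity that tends to zero in probability. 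The proof of Theorem~\ref{thm:honest} can then be rerun verbatim on perturbed statistics.

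\textbf{Mixture projection.} Consistency follows from the standard argmin argument stated in the proposition, using uniform convergence of $\hPsi(\ell_\beta)$ and $\Psi^{(s)}(\ell_\beta)$ from Theorem~\ref{thm:bvm}. With consistency in hand, the proof of Supplementary Theorem~\ref{thm:projection} linearizes the stationarity equation $\hPsi(\nabla_\beta\ell_{\tilde\beta})=\opro(n^{-1/2})$. Using the Hessian $V_\beta$, the equicontinuity at jointly consistent random indices, and nonsingularity, this gives $\sqrt n(\tilde\beta-\beta^\star)=-V_\beta^{-1}\sqrt n(\hPsi-\Psi_0)(\nabla_\beta\ell_{\beta^\star})+\opro(1)$. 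The per-draw analogue holds in the $\opro^{\mathrm w}$ mode. Theorem~\ref{thm:effects} then follows by its ratio delta-method argument evaluated at these asymptotically equivalent estimators.
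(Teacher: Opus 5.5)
Your proposal is correct and follows essentially the paper's route. For the value statements it propagates nonnegative attained-value gaps through Slutsky, the bounded-Lipschitz perturbation fact, Lemma~\ref{lem:condquant} and the contiguity transfer of Lemma~\ref{lem:transfer}; for the projection and effects it uses a direct near-solution linearization, with the rate coming from the nonsingular $V_\beta$ and the expansion from equicontinuity at consistent random indices. The one thing to fix is the attribution: the paper's proof of Theorem~\ref{thm:projection} runs through the functional $T_\beta$ of Lemma~\ref{lem:zfun}, whose tolerance $\norm{\nu-\Psi_0}_\infty^2=\Op(n^{-1})$ does not admit an $\opro(n^{-1/2})$ stationarity gap, so your linearization has to be carried out as a standalone $Z$-estimation argument, and the paper does exactly that. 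Likewise the effect step closes by showing that $\sqrt n(\tilde\beta-\hat\beta)=\opro(1)$ perturbs $\hPsi(f^N_{h,a;\beta})$ and $\hPsi(f^D_{h;\beta})$ only by $\opro(n^{-1/2})$, rather than by re-invoking the Hadamard composite $\Theta_{h,a}$.
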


\textbf{Proof of Proposition~\ref{prop:approxmin}.}
Write $\widehat W(K)=\inf_{c\in\calC^K}\hPsi(g_c)$ and $W^{(s)}(K)=\inf_{c\in\calC^K}\Psi^{(s)}(g_c)$ for the exact minimized values entering Corollary~\ref{cor:path}, and recall the attained values $\widetilde W(K)=\hPsi(g_{\tilde c_K})$, $\widetilde W^{(s)}(K)=\Psi^{(s)}(g_{\tilde c^{(s)}_K})$. Because an attained value never falls below the corresponding infimum, the hypotheses give
\begin{equation*}
0\le\max_{K\le\overline K}\{\widetilde W(K)-\widehat W(K)\}=\opro(n^{-1/2}),
\qquad
0\le\max_{K\le\overline K}\{\widetilde W^{(s)}(K)-W^{(s)}(K)\}=\opro^{\mathrm w}(n^{-1/2}).
\end{equation*}
Every reported quantization functional is computed from the path values alone. Write $\tilde\rho(K)=1-\widetilde W(K)/\widetilde W(1)$ and $\tilde\rho^{(s)}(K)=1-\widetilde W^{(s)}(K)/\widetilde W^{(s)}(1)$ for the attained-value curves actually reported by Algorithm~\ref{alg:main}. The heterogeneity map $(x_1,\dots,x_{\overline K})\mapsto(1-x_K/x_1)_K$ is Lipschitz on a neighborhood of $(W(1),\dots,W(\overline K))$ because $W(1)>0$, so the two displays above give, uniformly in $K$,
\begin{equation*}
\sqrt n\{\tilde\rho(K)-\hat\rho(K)\}=\opro(1),
\qquad
\sqrt n\{\tilde\rho^{(s)}(K)-\rho^{(s)}(K)\}=\opro^{\mathrm w}(1).
\end{equation*}
Hence the attained-value draws $\sqrt n\{\tilde\rho^{(s)}(\cdot)-\tilde\rho(\cdot)\}$ differ from the exact-value draws $\sqrt n\{\rho^{(s)}(\cdot)-\hat\rho(\cdot)\}$ by $\opro^{\mathrm w}(1)$ uniformly in $K$, so by the bounded-Lipschitz perturbation fact recorded in the proof of Theorem~\ref{thm:bvm}\textup{(ii)} their conditional weak limit is again $\mathbb H$, while $\sqrt n\{\tilde\rho(\cdot)-\rho_0(\cdot)\}\dto\mathbb H$ unconditionally by Slutsky's lemma and Corollary~\ref{cor:path}. The scale estimate and the simultaneous quantile are recovered exactly as in the proof of Theorem~\ref{thm:profile}: Lemma~\ref{lem:condquant} applied to the attained-value draws at $\tau=0.25,0.75$ gives $\sqrt n\,\hat\sigma(K)\pto\sigma_K$, and applied to the sup-$t$ statistic at $\tau=1-\alpha$ gives $q_{1-\alpha}\pto m_{1-\alpha}$. Replacing the exact draws by the attained draws perturbs $\sqrt n\,\hat\sigma(K)$ by $\opro(1)$, not $\opro^{\mathrm w}(1)$, because $\hat\sigma(K)$ is a data-measurable functional of the posterior draws, and this does not disturb $\sqrt n\,\hat\sigma(K)\pto\sigma_K>0$. Finally the band \eqref{eq:band} is Lipschitz in $(\hat\rho,\hat\sigma,q_{1-\alpha})$ and the inversion \eqref{eq:profileset} is monotone in the band endpoints, so by Slutsky's lemma and the continuous mapping theorem the band-coverage and set-valued containment steps in the proofs of Theorem~\ref{thm:profile} and Theorem~\ref{thm:honest} persist with the attained values in place of the exact minimized ones. Under the drifting laws $Q_n$ of Theorem~\ref{thm:honest}, the optimization-gap bounds, including the conditional one measured through $\E_w(|\cdot|\wedge1)$, are data-measurable and $\opro(1)$ under $P_0^{\otimes n}$; they remain $\opro(1)$ under $Q_n$ by Lemma~\ref{lem:transfer} together with the contiguity established in Step~2 of the proof of Theorem~\ref{thm:honest}, after which the remaining steps of that proof apply to the attained-value quantities unchanged.

For the mixture projection, let $\tilde\beta(K)$ and $\tilde\beta^{(s)}(K)$ be as in the statement, so that $\tilde\beta(K)\pto\beta^\star(K)$, $\Pp_w\{|\tilde\beta^{(s)}(K)-\beta^\star(K)|>\eps\}\pto0$, and
\begin{equation*}
\sqrt n\,\hPsi(\nabla_\beta\ell_{\tilde\beta(K)})=\opro(1),
\qquad
\sqrt n\,\Psi^{(s)}(\nabla_\beta\ell_{\tilde\beta^{(s)}(K)})=\opro^{\mathrm w}(1).
\end{equation*}
Abbreviate $\beta^\star=\beta^\star(K)$, $V=V_\beta$, and $G_0(\beta)=\Psi_0(\nabla_\beta\ell_\beta)$, so that $G_0(\beta^\star)=0$ and $G_0(\beta)=V(\beta-\beta^\star)+o(|\beta-\beta^\star|)$ with $V$ nonsingular by Assumption~\ref{ass:proj}. The pointer in the statement is to an explicit near-solution $Z$-estimation argument, which we now give, rather than to the Hadamard argument of the proof of Theorem~\ref{thm:projection}, whose functional $T_\beta$ of Lemma~\ref{lem:zfun} carries a tolerance $\norm{\nu-\Psi_0}_\infty^2$ tighter than $\opro(n^{-1/2})$, so that near-solutions are not admissible values of $T_\beta$. The gradient class $\calG_\nabla=\{\nabla_\beta\ell_\beta^{(j)}:\beta\in\calB_K,\,j\}\subset\calF_{\mathrm{str}}\subset\calF_{\mathrm{eff}}$ by Assumption~\ref{ass:class}(iii), and the corrected-score class $\{\phi_f(\cdot;\eta_0):f\in\calG_\nabla\}$ is $P_0$-Donsker by Lemma~\ref{lem:donsker}(b), with $\beta\mapsto\phi_{\nabla_\beta\ell_\beta}(\cdot;\eta_0)$ Lipschitz into $L_2(P_0)$.

\textbf{Rate.} By the uniform asymptotic linearity of Theorem~\ref{thm:bvm}\textup{(i)} restricted to $\calF_{\mathrm{eff}}$, $\sup_{f\in\calF_{\mathrm{eff}}}\sqrt n|(\hPsi-\Psi_0)(f)|=\Op(1)$, so
$\sqrt n|G_0(\tilde\beta(K))|\le\sqrt n|\hPsi(\nabla_\beta\ell_{\tilde\beta(K)})|+\sup_{f\in\calF_{\mathrm{eff}}}\sqrt n|(\hPsi-\Psi_0)(f)|=\Op(1)$. The local identifiability $|G_0(\beta)|\ge c|\beta-\beta^\star|$ for $\beta$ in a neighborhood of $\beta^\star$, which follows from the nonsingular $V$, together with $\tilde\beta(K)\pto\beta^\star$, gives $|\tilde\beta(K)-\beta^\star|=\Op(n^{-1/2})$.

\textbf{Expansion.} A first-order Taylor expansion of $G_0$ at $\beta^\star$ gives $G_0(\tilde\beta(K))=V(\tilde\beta(K)-\beta^\star)+\opro(n^{-1/2})$ at the rate just obtained. The uniform asymptotic linearity of Theorem~\ref{thm:bvm}\textup{(i)} reduces $\sqrt n(\hPsi-\Psi_0)$ on $\calF_{\mathrm{eff}}$ to the empirical process $\G_n\phi_\cdot(\cdot;\eta_0)$, whose Donsker property and the $L_2(P_0)$-continuity of $\beta\mapsto\phi_{\nabla_\beta\ell_\beta}$ give the stochastic-equicontinuity increment $\sqrt n(\hPsi-\Psi_0)(\nabla_\beta\ell_{\tilde\beta(K)}-\nabla_\beta\ell_{\beta^\star})=\opro(1)$. Combining with $\Psi_0(\nabla_\beta\ell_{\beta^\star})=0$ and the stationarity gap,
\begin{equation*}
\sqrt n\{\tilde\beta(K)-\beta^\star\}=-V^{-1}\sqrt n\,\hPsi(\nabla_\beta\ell_{\beta^\star})+\opro(1).
\end{equation*}

\textbf{Conditional analogue.} The two steps run for the draw with the conditional consistency hypothesis in place of $\tilde\beta(K)\pto\beta^\star$. For the rate, parts \textup{(i)} and \textup{(ii)} of Theorem~\ref{thm:bvm} are used jointly. Writing $\Psi^{(s)}-\Psi_0=(\Psi^{(s)}-\hPsi)+(\hPsi-\Psi_0)$ and applying the triangle inequality to the two tight limits gives the conditional tightness $\sup_{f\in\calF_{\mathrm{eff}}}\sqrt n|(\Psi^{(s)}-\Psi_0)(f)|=\Op^{\mathrm w}(1)$, whence $\sqrt n|G_0(\tilde\beta^{(s)}(K))|=\Op^{\mathrm w}(1)$ and, by the local identifiability of $\beta^\star$, $\norm{\tilde\beta^{(s)}(K)-\beta^\star}=\Op^{\mathrm w}(n^{-1/2})$. For the expansion, the weighted equicontinuity increment $\sqrt n(\Psi^{(s)}-\Psi_0)(\nabla_\beta\ell_{\tilde\beta^{(s)}(K)}-\nabla_\beta\ell_{\beta^\star})=\opro^{\mathrm w}(1)$ by the joint asymptotic equicontinuity fact recorded at the start of Supplementary Section~\ref{app:bvm}. Combining as in the point case,
$\sqrt n\{\tilde\beta^{(s)}(K)-\beta^\star\}=-V^{-1}\sqrt n\,\Psi^{(s)}(\nabla_\beta\ell_{\beta^\star})+\opro^{\mathrm w}(1)$.

\textbf{Conclusion.} The exact minimizers $\hat\beta(K)$ and $\beta^{(s)}(K)$ of Theorem~\ref{thm:projection} satisfy the identical linear expansions, being the special case of zero stationarity gap, so subtracting gives $\sqrt n\{\tilde\beta(K)-\hat\beta(K)\}=\opro(1)$ and $\sqrt n\{\tilde\beta^{(s)}(K)-\beta^{(s)}(K)\}=\opro^{\mathrm w}(1)$. For the numerator and denominator evaluations defining $\psi_{h,a}(K)$, decompose each difference into a centered process increment and a deterministic drift. For the point evaluation,
\begin{equation*}
\hPsi(f^N_{h,a;\tilde\beta(K)})-\hPsi(f^N_{h,a;\hat\beta(K)})
=(\hPsi-\Psi_0)\bigl(f^N_{h,a;\tilde\beta(K)}-f^N_{h,a;\hat\beta(K)}\bigr)
+\bigl\{\Psi_0(f^N_{h,a;\tilde\beta(K)})-\Psi_0(f^N_{h,a;\hat\beta(K)})\bigr\},
\end{equation*}
and likewise for $f^D_{h;\beta}$. The centered increment is $\opro(n^{-1/2})$ by the same equicontinuity step, because $\{f^N_{h,a;\beta}\}$ and $\{f^D_{h;\beta}\}$ lie in $\calF_{\mathrm{eff}}$ with $\beta$-index Lipschitz into $\varsigma$ and $\tilde\beta(K)-\hat\beta(K)\pto0$. The drift is bounded by the sup-norm Lipschitz continuity of $\beta\mapsto f_\beta$ from Assumption~\ref{ass:class}(iii) times $\norm{\tilde\beta(K)-\hat\beta(K)}=\opro(n^{-1/2})$, hence is itself $\opro(n^{-1/2})$. The conditional evaluations decompose the same way, with the weighted centered increment controlled by the conditional equicontinuity step recorded in the proof of Theorem~\ref{thm:bvm}\textup{(ii)} and the drift bounded by the same Lipschitz constant times $\norm{\tilde\beta^{(s)}(K)-\beta^{(s)}(K)}=\opro^{\mathrm w}(n^{-1/2})$, giving $\opro^{\mathrm w}(n^{-1/2})$. Thus the numerator and denominator evaluations are perturbed by $\opro(n^{-1/2})$, respectively $\opro^{\mathrm w}(n^{-1/2})$, when $\tilde\beta$ replaces $\hat\beta$. Hence the conclusions of Theorem~\ref{thm:projection} and Theorem~\ref{thm:effects} transfer to the approximate minimizers by Slutsky's lemma and the bounded-Lipschitz perturbation fact recorded in the proof of Theorem~\ref{thm:bvm}\textup{(ii)}. The argument requires the stationarity gap only at the $\opro(n^{-1/2})$ rate, rather than the tighter tolerance built into $T_\beta$ in Lemma~\ref{lem:zfun}. \hfill$\square$

\subsection{Proof of Theorem~\ref{thm:impossibility}}\label{app:impossibility}

\textbf{Step 1 (a bounded least-favorable path).}
Let $\mathrm{IF}$ denote the influence function of $\rho(K_0)$, i.e.\ the $K_0$ coordinate of the limit $\mathbb H$ in Corollary~\ref{cor:path}:
\begin{equation*}
\mathrm{IF}(o)
=-\frac1{W(1)}\Bigl[\bigl\{\phi_{g_{c^\star(K_0)}}(o;\eta_0)-W(K_0)\bigr\}
-\{1-\rho_0(K_0)\}\bigl\{\phi_{g_{c^\star(1)}}(o;\eta_0)-W(1)\bigr\}\Bigr],
\end{equation*}
mean zero with $\Var_0(\mathrm{IF})=\sigma_{K_0}^2>0$.  For a truncation level $M$, set $s_M=\mathrm{IF}\,\ind{|\mathrm{IF}|\le M}-\E_0[\mathrm{IF}\,\ind{|\mathrm{IF}|\le M}]$ and $\tilde s=s_M/\norm{s_M}_{P_0,2}$, so that $\tilde s$ is bounded, mean zero, with $\E_0\tilde s^2=1$ and
\begin{equation*}
b:=\E_0[\mathrm{IF}\,\tilde s]
=\frac{\E_0[\mathrm{IF}^2\ind{|\mathrm{IF}|\le M}]}{\norm{s_M}_{P_0,2}}
\ \longrightarrow\ \sigma_{K_0}
\quad(M\to\infty);
\end{equation*}
fix $M$ so large that $b\ge\sigma_{K_0}-\eps$ (Cauchy--Schwarz gives $b\le\sigma_{K_0}$ in any case).  Define $dP_t=(1+t\tilde s)\,dP_0$ for $|t|\le t_0:=1/(2\norm{\tilde s}_\infty)$.  Each $P_t$ is a probability law with density factor in $[1/2,3/2]$, hence with the same support in $(X,A,Y)$ as $P_0$; in particular $|Y|\le B_Y$ almost surely is preserved.  Its conditional treatment probabilities satisfy $\pi^{P_t}_a\ge\eps_\pi/3$, and $|\mu^{P_t}_a|\le B_Y$ because each $\mu^{P_t}_a(x)$ is a weighted average of outcomes in $[-B_Y,B_Y]$.  Since $\tilde s$ is bounded, the conditional mean formula gives $\sup_{a,x}|\mu^{P_t}_a(x)-\mu_a(x)|\le C|t|$, and hence $\sup_x\norm{U_t(x)-U(x)}\le C|t|$.  The buffer condition in the theorem therefore implies $U_t(X)\in\calC$ for all sufficiently small $|t|$.  Thus the local paths used below satisfy the identification, positivity, bounded-outcome, and feature-support conditions uniformly (with the positivity constant possibly reduced).  Although $P_t$ is constructed as an observed-data law, it is interpreted through the standard full-data embedding: draw $X$ from its marginal law, draw potential outcomes from the arm-specific conditional outcome laws, and then draw $A$ from the conditional treatment law, which preserves no unmeasured confounding and realizes the same observed distribution.  The path is differentiable in quadratic mean at $t=0$ with score $\tilde s$: pointwise, $t^{-1}\{(1+t\tilde s)^{1/2}-1\}\to\tilde s/2$ with the difference quotient dominated by $C\tilde s^2$, so the defining $L_2(P_0)$ limit holds by dominated convergence \citep[Section~7.2]{vandervaart1998}.

\textbf{Step 2 (drift of the profile along the path).}
We first claim, uniformly over $c\in\calC^K$ and $K\le\overline K$,
\begin{equation}\label{eq:driftclaim}
\Psi_{P_t}(g_c)
=\Psi_0(g_c)+t\,\E_0\bigl[\{\phi_{g_c}(O;\eta_0)-\Psi_0(g_c)\}\,\tilde s\bigr]
+O\bigl(t^{1+(\am\wedge1)}\bigr).
\end{equation}
Write $\mu^{P_t}_a(x)=\E_0[Y(1+t\tilde s)\mid A=a,X=x]/\E_0[(1+t\tilde s)\mid A=a,X=x]$; a geometric expansion with bounded numerator and denominator gives $\mu^{P_t}_a=\mu_a+t\dot\mu_a+O(t^2)$ uniformly, where
\begin{equation*}
\dot\mu_a(x)=\E_0\bigl[\{Y-\mu_a(X)\}\tilde s\given A=a,X=x\bigr]
=\E_0\bigl[R_a(O;\eta_0)\,\tilde s\given X=x\bigr],
\end{equation*}
the second equality by conditioning on $(A,X)$ and dividing by $\pi_a$.  Hence $U_t:=H\bmu^{P_t}=U+tH\dot\bmu+O(t^2)$ in supremum norm.  Decompose $\Psi_{P_t}(g_c)=\E_0[(1+t\tilde s)\,g_c(U_t)]=\E_0[g_c(U_t)]+t\E_0[\tilde s\,g_c(U_t)]$.  In the second term, $|\E_0[\tilde s\{g_c(U_t)-g_c(U)\}]|\le\norm{\tilde s}_\infty\cdot4\diam\calC\cdot\norm{U_t-U}_\infty=O(t)$, so it equals $t\E_0[\tilde s\,g_c(U)]+O(t^2)$.  For the first term, apply Lemma~\ref{lem:geometry} with $\bar u=U$, $u=U_t$, $\Delta_t=U_t-U$, $\norm{\Delta_t}_\infty\le Ct$:
\begin{gather*}
g_c(U_t)=g_c(U)+\nabla g_c(U)^\top\Delta_t+\mathrm{rem}_t,\\
|\mathrm{rem}_t|\le\norm{\Delta_t}^2+\sum_{(h,j):\,c_h\ne c_j}2\norm{\Delta_t}\,\norm{c_h-c_j}\,\ind{\dist(U,B_{hj})\le2\norm{\Delta_t}},
\end{gather*}
and Assumption~\ref{ass:margin} bounds each crossing probability by $C_M(2Ct)^\am$, so $\E_0|\mathrm{rem}_t|\le C(t^2+t^{1+\am})$, uniformly in $c$.  Finally, by the tower property,
$\E_0[\nabla g_c(U)^\top H\dot\bmu(X)]=\E_0[\nabla g_c(U)^\top H\,\E_0(R\tilde s\mid X)]=\E_0[\nabla g_c(U)^\top HR\,\tilde s]$,
so the linear terms assemble to $t\E_0[\{g_c(U)+\nabla g_c(U)^\top HR\}\tilde s]=t\E_0[\phi_{g_c}\tilde s]$, which equals the centered form in \eqref{eq:driftclaim} since $\E_0\tilde s=0$.  This proves \eqref{eq:driftclaim}.

Next, the minimized values.  Uniform convergence \eqref{eq:driftclaim} and uniqueness of the minimizers (Assumption~\ref{ass:quant}(i)) give, by the standard argmin argument, that any minimizer $c_t$ of $\Psi_{P_t}(g_\cdot)$ over $\calC^K$ converges to $c^\star(K)$ as $t\to0$; and $c\mapsto\E_0[\{\phi_{g_c}-\Psi_0(g_c)\}\tilde s]$ is continuous by Lemma~\ref{lem:donsker}(c).  Sandwiching now yields the envelope (Danskin) expansion: from above, $W_{P_t}(K)\le\Psi_{P_t}(g_{c^\star(K)})=W(K)+t\,a_K+O(t^{1+\am\wedge1})$; from below, $W_{P_t}(K)=\Psi_{P_t}(g_{c_t})\ge\Psi_0(g_{c_t})+t\E_0[\{\phi_{g_{c_t}}-\Psi_0(g_{c_t})\}\tilde s]-O(t^{1+\am\wedge1})\ge W(K)+t\E_0[\{\phi_{g_{c_t}}-\Psi_0(g_{c_t})\}\tilde s]-O(t^{1+\am\wedge1})$, and the drift coefficient at $c_t$ converges to $a_K$ by continuity in $c$ and $c_t\to c^\star(K)$.  Hence
\begin{equation*}
W_{P_t}(K)=W(K)+t\,a_K+o(t),
\qquad
a_K:=\E_0\bigl[\{\phi_{g_{c^\star(K)}}(O;\eta_0)-W(K)\}\,\tilde s\bigr],
\end{equation*}
for each $K\le\overline K$.  The quotient rule then gives
$\rho_{P_t}(K_0)=\rho_0(K_0)+t\,\E_0[\mathrm{IF}\,\tilde s]+o(t)=\gamma+bt+o(t)$,
and $\rho_{P_t}(K)=\rho_0(K)+O(t)$ for every other $K$.  Set $t_n^{\pm}=\pm h/(b\sqrt n)$ and $P_n^{\pm}=P_{t_n^{\pm}}$, so that $\rho_{P_n^{\pm}}(K_0)=\gamma\pm h/\sqrt n+o(n^{-1/2})$ while the flanking strict inequalities $\rho(K)<\gamma$ for $K\le K_0-1$ and $\rho(K_0+1)>\gamma$ persist for all large $n$.  Under $P_n^{+}$, $\rho(K_0)>\gamma$ and no smaller order reaches $\gamma$, so $K^\star(\gamma)=K_0$; under $P_n^{-}$, $\rho(K_0)<\gamma<\rho(K_0+1)$, so $K^\star(\gamma)=K_0+1$.  This is part (i), and the density factor is $1+t_n^{\pm}\tilde s=1+O(n^{-1/2})$ in supremum norm as claimed.

\textbf{Step 3 (contiguity).}
Differentiability in quadratic mean with score $\tilde s$, $\E_0\tilde s^2=1$, gives the local asymptotic normality expansion
$\log\,d(P_{u/\sqrt n})^{\otimes n}/dP_0^{\otimes n}=(u/\sqrt n)\sum_i\tilde s(O_i)-u^2/2+\opro(1)$ under $P_0^{\otimes n}$, for each fixed $u$ \citep[Theorem~7.2]{vandervaart1998}; mutual contiguity of $(P_{u/\sqrt n})^{\otimes n}$ and $P_0^{\otimes n}$ follows from Le Cam's first lemma \citep[Example~6.5]{vandervaart1998}, and mutual contiguity of the two perturbed sequences follows by transitivity through $P_0^{\otimes n}$.  This is part (ii).

\textbf{Step 4 (the testing bound).}
A selector $\widetilde K_n$ induces the test ``declare $+$'' iff $\widetilde K_n=K_0$, whose error sum is dominated by the selector's:
$\Pp^{+}_n(\widetilde K_n\ne K_0)$ is its first error exactly, and its second error satisfies $\Pp^{-}_n(\widetilde K_n=K_0)\le\Pp^{-}_n(\widetilde K_n\ne K_0+1)$.  The minimal sum of error probabilities over all tests of $(P^{-}_n)^{\otimes n}$ against $(P^{+}_n)^{\otimes n}$ is $\int\min\{d(P^{+}_n)^{\otimes n},d(P^{-}_n)^{\otimes n}\}=\E_{-}\bigl[\min(e^{\Lambda_n},1)\bigr]$, where $\Lambda_n$ is the log-likelihood ratio of $+$ over $-$ and $\E_{-}$ is expectation under $(P^{-}_n)^{\otimes n}$.  Differencing the two LAN expansions at $u_\pm=\pm h/b$ (the quadratic terms cancel),
$\Lambda_n=(2h/b)\,n^{-1/2}\sum_i\tilde s(O_i)+\opro(1)$ under $P_0^{\otimes n}$, hence also under $(P^{-}_n)^{\otimes n}$ by contiguity; and by Le Cam's third lemma \citep[Example~6.7]{vandervaart1998}, $n^{-1/2}\sum_i\tilde s(O_i)\dto N(-h/b,\,1)$ under $(P^{-}_n)^{\otimes n}$.  Therefore $\Lambda_n\dto Z\sim N(-2\theta^2,\,4\theta^2)$ under $(P^{-}_n)^{\otimes n}$, with $\theta=h/b$.  Since $z\mapsto\min(e^z,1)$ is bounded and continuous,
\begin{align*}
\E_{-}\bigl[\min(e^{\Lambda_n},1)\bigr]
\ \longrightarrow\
\E\bigl[\min(e^{Z},1)\bigr]
&=\Pp(Z\ge0)+\E\bigl[e^{Z}\ind{Z<0}\bigr]\\
&=\Phi(-\theta)+e^{-2\theta^2+2\theta^2}\,\Phi\Bigl(\frac{2\theta^2-4\theta^2}{2\theta}\Bigr)
=2\,\Phi(-\theta),
\end{align*}
using the lognormal identity $\E[e^{Z}\ind{Z<x}]=e^{m+v/2}\Phi\{(x-m-v)/\sqrt v\}$ for $Z\sim N(m,v)$.  Since $b\ge\sigma_{K_0}-\eps$, $2\Phi(-h/b)\ge2\Phi\{-h/(\sigma_{K_0}-\eps)\}$, which is part (iii). \hfill$\square$

\subsection{Proof of Theorem~\ref{thm:honest}}\label{app:honest}

Throughout this subsection, constants $C$ depend only on the declared constants and on $(\bar S,B_u)$. We first record a transfer lemma used repeatedly.

\begin{lemma}[Contiguity transfer]\label{lem:transfer}
Let $Q_n$ and $P_0^{\otimes n}$ be probability laws on the sample space of $O_{1:n}$ such that $Q_n$ and $P_0^{\otimes n}$ are mutually absolutely continuous for each $n$ and $(Q_n)$ is contiguous with respect to $(P_0^{\otimes n})$. If random elements $X_n$ satisfy $X_n\to0$ in outer $P_0^{\otimes n}$-probability, then $X_n\to0$ in outer $Q_n$-probability.
\end{lemma}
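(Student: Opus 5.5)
The plan is to reduce the claim to the definition of contiguity. Contiguity of $(Q_n)$ with respect to $(P_0^{\otimes n})$ means: for every sequence of measurable events $A_n$, $P_0^{\otimes n}(A_n)\to0$ implies $Q_n(A_n)\to0$. Our hypothesis is stated for outer probability of possibly nonmeasurable $X_n$, so the first step is to pass from outer probabilities to measurable events.

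Fix $\eps>0$ and let $E_n=\{\norm{X_n}>\eps\}$, or $\{|X_n|>\eps\}$ for real-valued $X_n$. By definition of outer probability there is a measurable set $A_n\supseteq E_n$ whose measure matches the outer measure. The subtlety is that this measurable cover is taken with respect to $P_0^{\otimes n}$, and we want a set that serves for $Q_n$ as well.

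Mutual absolute continuity handles this. If $A_n$ is a $P_0^{\otimes n}$-measurable cover of $E_n$, then for any measurable $B\supseteq E_n$ we have $P_0^{\otimes n}(A_n\setminus B)=0$. Hence $Q_n(A_n\setminus B)=0$, so $Q_n(A_n)\le Q_n(B)$. Therefore $A_n$ is also a measurable cover under $Q_n$, and
\[
Q_n^*(E_n)=Q_n(A_n).
\]
The fact that covers agree under equivalent measures is the one point needing care, since outer measures are defined through measurable majorants. If null sets caused trouble I would instead take $A_n$ as the minimal measurable majorant of $\ind{E_n}$, which is unique up to null sets common to both measures.

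Now $P_0^{\otimes n}(A_n)=(P_0^{\otimes n})^*(E_n)\to0$ by hypothesis, and contiguity gives $Q_n(A_n)\to0$. So $Q_n^*(E_n)\to0$ for every $\eps>0$, which is exactly convergence to zero in outer $Q_n$-probability.

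I expect no real obstacle here beyond this measurability bookkeeping. In the applications, $Q_n=P_{n,u,s}^{\otimes n}$ has a density bounded between $1/2$ and $3/2$ per coordinate, so mutual absolute continuity holds trivially. Contiguity itself comes from the LAN expansion, as in Step 3 of the proof of Theorem~\ref{thm:impossibility}.
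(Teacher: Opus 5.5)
Your proof is correct and is the paper's argument: take a $P_0^{\otimes n}$-measurable cover of $\{|X_n|>\eps\}$, use mutual absolute continuity to see that it is also a cover under $Q_n$, and conclude by contiguity. Only the trivial inequality $Q_n^*(E_n)\le Q_n(A_n)$ is actually needed, since $A_n\supseteq E_n$ is measurable, so the absolute-continuity step (present in the paper's proof as well) could be dropped.
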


\begin{proof}
Fix $\eps>0$ and let $A_n^\ast$ be a measurable cover of the event $\{|X_n|>\eps\}$ under $P_0^{\otimes n}$, so that $P_0^{\otimes n}(A_n^\ast)\to0$. Mutual absolute continuity implies that measurable covers under the two laws agree up to null sets, and contiguity gives $Q_n(A_n^\ast)\to0$, hence $Q_n^\ast(|X_n|>\eps)\to0$. \hfill$\square$
\end{proof}

\textbf{Step 0 (reduction to sequences).} Suppose the display of the theorem fails. Then there are $\eps>0$, a subsequence $(n_j)$, and pairs $(u_j,s_j)\in[-B_u,B_u]\times\calS$ with coverage probability below $1-\alpha-\eps$ for every $j$. Since $[-B_u,B_u]$ is compact and $\calS$ is totally bounded in $L_2(P_0)$, we may pass to a further subsequence along which $u_j\to u^\ast\in[-B_u,B_u]$ and $s_j\to s^\ast$ in $L_2(P_0)$. The limit $s^\ast$ inherits $\E_0s^\ast=0$, and $\norm{s^\ast}_\infty\le\bar S$ after passing to an almost surely convergent further subsequence. It therefore suffices to derive a contradiction by proving the following claim along the extracted subsequence, which we relabel as the full sequence for notational ease,
\begin{equation}\label{eq:seqclaim}
\liminf_{n\to\infty}\ \Pp_{n,u_n,s_n}\Bigl(K^\star_{P_n}(\gamma)\in\widehat C(\gamma)\ \text{ for every }\gamma\in\bigl(0,\rho_{P_n}(\overline K)\bigr)\Bigr)\ \ge\ 1-\alpha
\end{equation}
along every sequence with $|u_n|\le B_u$, $u_n\to u^\ast$, $s_n\in\calS$, and $s_n\to s^\ast$ in $L_2(P_0)$, where we abbreviate $t_n=u_n/\sqrt n$, $P_n=P_{t_n,s_n}$, and $Q_n=P_n^{\otimes n}$.

\textbf{Step 1 (uniform regularity of the perturbed laws).} For all $n$ with $|t_n|\bar S\le1/2$, the density factor $1+t_ns_n$ lies in $[1/2,3/2]$. As in Step~1 of the proof of Theorem~\ref{thm:impossibility}, whose bounds depend on the score only through its supremum norm, $P_n$ preserves $|Y|\le B_Y$ almost surely, satisfies positivity with constant $\eps_\pi/3$, obeys $\sup_{a,x}|\mu^{P_n}_a(x)-\mu_a(x)|\le C|t_n|$, hence $\sup_x\norm{U_{P_n}(x)-U(x)}\le C|t_n|$, and the buffered feature-support condition places $U_{P_n}(X)$ in $\calC$ for all large $n$. The full-data embedding of that proof applies verbatim, so each $P_n$ satisfies the identification conditions.

\textbf{Step 2 (triangular local asymptotic normality and contiguity).} Let $\Lambda_n=\log dQ_n/dP_0^{\otimes n}=\sum_{i\le n}\log\{1+t_ns_n(O_i)\}$. Since $|t_ns_n|\le1/2$ pointwise, a third-order Taylor expansion of $x\mapsto\log(1+x)$ gives
\begin{equation*}
\Lambda_n\;=\;t_n\sum_{i\le n}s_n(O_i)\;-\;\frac{t_n^2}2\sum_{i\le n}s_n(O_i)^2\;+\;O\bigl(nt_n^3\bar S^3\bigr).
\end{equation*}
Because $\E_0s_n=0$, the linear term equals $u_n\G_n(s_n)$, and $\Var_0\{\G_n(s_n-s^\ast)\}\le\norm{s_n-s^\ast}_{P_0,2}^2\to0$ gives $\G_n(s_n)=\G_n(s^\ast)+\opro(1)$. For the quadratic term, Chebyshev's inequality with the bound $\bar S^4/n$ on the variance of the empirical second moment, together with $\E_0s_n^2\to\E_0(s^\ast)^2$, gives $t_n^2\sum_is_n(O_i)^2/2=(u^\ast)^2\E_0(s^\ast)^2/2+\opro(1)$. Hence
\begin{equation}\label{eq:lanhonest}
\Lambda_n\;=\;u^\ast\,\G_n(s^\ast)\;-\;\tfrac12(u^\ast)^2\,\E_0(s^\ast)^2\;+\;\opro(1)
\qquad\text{under }P_0^{\otimes n},
\end{equation}
so $\Lambda_n\dto N(-\sigma_\ast^2/2,\ \sigma_\ast^2)$ under $P_0^{\otimes n}$ with $\sigma_\ast^2=(u^\ast)^2\E_0(s^\ast)^2$, possibly degenerate. Le Cam's first lemma \citep[Example~6.5]{vandervaart1998} yields mutual contiguity of $(Q_n)$ and $(P_0^{\otimes n})$, and mutual absolute continuity at each fixed $n$ holds because the density factor is bounded away from zero and infinity.

\textbf{Step 3 (drift of the population path).} Display \eqref{eq:driftclaim} in the proof of Theorem~\ref{thm:impossibility} was established for an arbitrary bounded mean-zero score, with constants depending on the score only through its supremum bound. Applied at $t=t_n$ with score $s_n$, it gives, uniformly in $c\in\calC^K$ and $K\le\overline K$,
\begin{equation*}
\Psi_{P_n}(g_c)\;=\;\Psi_0(g_c)+t_n\,\E_0\bigl[\{\phi_{g_c}(O;\eta_0)-\Psi_0(g_c)\}\,s_n\bigr]+O\bigl(|t_n|^{1+\am\wedge1}\bigr).
\end{equation*}
By Cauchy--Schwarz and the constant envelope of the score class, the drift coefficient converges to its value at $s^\ast$ uniformly in $c$ as $\norm{s_n-s^\ast}_{P_0,2}\to0$. The envelope sandwich of Step~2 of that proof, which uses only the uniqueness of the optimal codebooks in Assumption~\ref{ass:quant}\textup{(i)}, the continuity of $c\mapsto\phi_{g_c}$ into $L_2(P_0)$ from Lemma~\ref{lem:donsker}(c), and the uniform convergence just displayed, then gives, for every $K\le\overline K$,
\begin{equation}\label{eq:Wdrifthonest}
\sqrt n\,\{W_{P_n}(K)-W(K)\}\;\longrightarrow\;u^\ast a_K(s^\ast),
\qquad
a_K(s)=\E_0\bigl[\{\phi_{g_{c^\star(K)}}(O;\eta_0)-W(K)\}\,s\bigr].
\end{equation}
Since $W(1)>0$, the quotient rule applied to $\rho_{P_n}(K)=1-W_{P_n}(K)/W_{P_n}(1)$ gives
\begin{equation}\label{eq:rhodrifthonest}
\sqrt n\,\{\rho_{P_n}(K)-\rho_0(K)\}\;\longrightarrow\;u^\ast b_K(s^\ast),
\qquad
b_K(s)=\E_0\{\mathrm{IF}_K\,s\},
\end{equation}
where $\mathrm{IF}_K$ denotes the influence function of $\rho(K)$ displayed in Step~1 of the proof of Theorem~\ref{thm:impossibility} with $K_0$ replaced by $K$. In particular $\rho_{P_n}(K)\to\rho_0(K)$ for every $K\le\overline K$.

\textbf{Step 4 (local regularity of the corrected point path).} By Theorem~\ref{thm:bvm}(i), $\sqrt n\,(\hPsi-\Psi_0)\dto\G_0$ in $\ell^\infty(\calF)$, and under Assumption~\ref{ass:quant}\textup{(i)} Lemma~\ref{lem:envelope} makes the map $\nu\mapsto\{\iota_K(\nu)\}_{K\le\overline K}$ Hadamard differentiable at $\Psi_0$ with the continuous linear derivative $\zeta\mapsto\{\zeta(g_{c^\star(K)})\}_K$. The delta method for maps with continuous linear derivatives \citep[Theorem~3.9.4]{vdvwellner1996} gives $\sqrt n\{\widehat W(K)-W(K)\}=\sqrt n(\hPsi-\Psi_0)(g_{c^\star(K)})+\opro(1)$, and the uniform asymptotic linearity \eqref{eq:linearity} replaces $\sqrt n(\hPsi-\Psi_0)(g_{c^\star(K)})$ by $\G_n\{\phi_{g_{c^\star(K)}}(\cdot;\eta_0)\}$ at the cost of another $\opro(1)$ term. The smooth map taking the vector $\{W(K)\}_K$ to $\{\rho(K)\}_K$ then yields the joint expansion
\begin{equation*}
\sqrt n\,\{\hat\rho(K)-\rho_0(K)\}\;=\;\G_n(\mathrm{IF}_K)+\opro(1)
\qquad\text{under }P_0^{\otimes n},\ \text{jointly over }K\le\overline K.
\end{equation*}
The remainder is $o(1)$ in outer $Q_n$-probability by Step~2 and Lemma~\ref{lem:transfer}. The vector $\{\G_n(\mathrm{IF}_K)\}_{K\le\overline K}$ and $\Lambda_n$ are jointly asymptotically normal under $P_0^{\otimes n}$ by \eqref{eq:lanhonest} and the multivariate central limit theorem, with asymptotic covariance $\Cov_0\{\mathrm{IF}_K,\,u^\ast s^\ast\}=u^\ast b_K(s^\ast)$ between the $K$th coordinate and $\Lambda_n$. Le Cam's third lemma \citep[Example~6.7]{vandervaart1998} gives, under $Q_n$,
\begin{equation*}
\{\G_n(\mathrm{IF}_K)\}_{K\le\overline K}\ \dto\ N\bigl(\{u^\ast b_K(s^\ast)\}_{K\le\overline K},\ \Sigma_{\mathbb H}\bigr),
\end{equation*}
where $\Sigma_{\mathbb H}$ is the covariance matrix of $\mathbb H$. Combining with \eqref{eq:rhodrifthonest},
\begin{equation}\label{eq:regularityhonest}
\bigl[\sqrt n\,\{\hat\rho(K)-\rho_{P_n}(K)\}\bigr]_{K\le\overline K}\ \dto\ \mathbb H
\qquad\text{under }Q_n .
\end{equation}
The deterministic drift of the moving target and the stochastic shift produced by Le Cam's third lemma cancel exactly. This is the local regularity of the corrected path estimator.

\textbf{Step 5 (posterior scale and quantile under the perturbation).} The proof of Theorem~\ref{thm:profile}(iii) established $\sqrt n\,\hat\sigma(K)\to\sigma_K$ for each $K$ and $q_{1-\alpha}\to m_{1-\alpha}$, both in outer $P_0^{\otimes n}$-probability. These are convergences to constants, so Lemma~\ref{lem:transfer} and Step~2 give the same limits in outer $Q_n$-probability. The posterior weights are drawn independently of the data, so their law is unchanged under $Q_n$.

\textbf{Step 6 (coverage and inversion).} Let $T_n=\max_{2\le K\le\overline K}|\hat\rho(K)-\rho_{P_n}(K)|/\hat\sigma(K)$, with the $K=1$ coordinate contributing zero because $L_\rho(1)=U_\rho(1)=0=\rho_{P_n}(1)$. By \eqref{eq:regularityhonest}, Step~5, and the continuous mapping theorem, $T_n\dto M=\max_{2\le K\le\overline K}|\mathbb H(K)|/\sigma_K$ under $Q_n$. The distribution function $F_M$ is continuous and strictly increasing on $(0,\infty)$, as shown in the proof of Theorem~\ref{thm:profile}(iii), so Slutsky's lemma gives
\begin{equation*}
Q_n\bigl(T_n\le q_{1-\alpha}\bigr)\ \longrightarrow\ F_M(m_{1-\alpha})\;=\;1-\alpha .
\end{equation*}
On the event $\{T_n\le q_{1-\alpha}\}$ the band covers $\rho_{P_n}(K)$ for every $K\le\overline K$. The band-to-profile inversion argument in the proof of Theorem~\ref{thm:profile}(iii) is deterministic and uses only the monotonicity of $K\mapsto\rho_{P_n}(K)$. Applying it with $\rho_0$ replaced by $\rho_{P_n}$ shows that on this event $K^\star_{P_n}(\gamma)\in\widehat C(\gamma)$ for every $\gamma\in(0,\rho_{P_n}(\overline K))$, and monotonized bounds are handled exactly as there. Hence \eqref{eq:seqclaim} holds, completing Step~0 and the proof. \hfill$\square$

\begin{remark}[Scope of the perturbation class]\label{rem:honestscope}
The two sequences $P_n^{\pm}$ of Theorem~\ref{thm:impossibility} are of the form $P_{n,u,s}$ with $s=\tilde s$ and $u=\pm h/b$, so the class of the theorem contains them once $B_u\ge h/(\sigma_{K_0}-\eps)$. Total boundedness of $\calS$ in $L_2(P_0)$ holds, for example, for any finite collection of scores, for parametric families indexed by compact sets and continuous in the index, and for uniformly bounded classes with finite bracketing entropy. The boundedness of the scores matches the least favorable construction in the impossibility proof, which is itself built from a truncated influence function, so no power is lost at the root-$n$ scale by the restriction.
\end{remark}

\begin{remark}[Cardinality along the drifting sequences]\label{rem:cardinality}
Steps 3 to 6 of the proof also bound the size of the report along the same sequences. Fix $\gamma=\rho_0(K_0)$ as in Theorem~\ref{thm:impossibility}. Under $Q_n$, every $K<K_0$ has $\hat\rho(K)\pto\rho_0(K)<\gamma$ by Step~4 while the band radius $q_{1-\alpha}\hat\sigma(K)$ is of order $n^{-1/2}$ by Step~5, so $U_\rho(K)<\gamma$ with probability tending to one and every such $K$ leaves the report. Likewise $L_\rho(K_0+1)\pto\rho_0(K_0+1)>\gamma$, so the defining condition of \eqref{eq:profileset} eventually excludes every $K\ge K_0+2$. Hence $Q_n\{\widehat C(\gamma)\subseteq\{K_0,K_0+1\}\}\to1$, and the honest report costs at most the two knot-adjacent counts, matching the mean cardinality near two observed in Study~4.
\end{remark}

\subsection{Proofs for the penalized profile and atomic recovery}\label{app:penalized}

Throughout this subsection, recall $W(K)=\inf_{c\in\calC^K}\Psi_0(g_c)$ and note that $W$ is automatically nonincreasing: any $K$-codebook is a $(K+1)$-codebook with a repeated center. Hence every slope ratio appearing in Proposition~\ref{prop:envelope}(ii) is nonnegative.

\subsubsection{Proof of Proposition~\ref{prop:envelope}}
Each map $\tau\mapsto W(K)+\tau K$ is affine and increasing, so $Q$, a minimum of $\overline K$ such maps, is concave, nondecreasing, and piecewise linear with at most $\overline K$ pieces. Two elementary facts organize the rest. \textbf{Supporting slopes.} If $K\in\calS_0(\tau)$, then $Q(\tau')\le W(K)+\tau'K$ for all $\tau'$ with equality at $\tau$, so the affine map with slope $K$ supports the concave $Q$ at $\tau$ and $K\in\partial Q(\tau)$, the superdifferential. \textbf{Monotone selection.} If $\tau<\tau'$, $K\in\calS_0(\tau)$, $K'\in\calS_0(\tau')$, adding the two optimality inequalities $W(K)+\tau K\le W(K')+\tau K'$ and $W(K')+\tau'K'\le W(K)+\tau'K$ gives $(\tau'-\tau)(K-K')\ge0$, so $K\ge K'$: selections are nonincreasing in $\tau$, uniformly over the choice of selection.

(i) At a differentiability point, $\partial Q(\tau)=\{Q'(\tau)\}$, so every member of the nonempty $\calS_0(\tau)$ equals $Q'(\tau)$: $\calS_0(\tau)=\{Q'(\tau)\}$. Conversely, if $\calS_0(\tau_0)=\{K\}$, the gap $\min_{L\ne K}\{W(L)+\tau_0L\}-\{W(K)+\tau_0K\}$ is positive and continuous in $\tau_0$, so $\calS_0=\{K\}$ on a neighborhood, on which $Q(\cdot)=W(K)+(\cdot)K$: every active order is a slope of $Q$, and the slope of each open linear piece is the unique selection there. The set $\{\tau:\calS_0(\tau)=\{K\}\}$ is open and is an interval: if it contains $\tau_1<\tau_2$ and $\tau\in(\tau_1,\tau_2)$, monotone selection against $\tau_1$ and against $\tau_2$ squeezes every member of $\calS_0(\tau)$ to equal $K$. At a kink $\tau_{(j)}$ the adjacent slopes are $K_{(j)}$ (right piece) and $K_{(j+1)}$ (left piece), so $\partial Q(\tau_{(j)})=[K_{(j)},K_{(j+1)}]$ and $\calS_0(\tau_{(j)})\subseteq[K_{(j)},K_{(j+1)}]$ by the supporting-slope fact; both endpoints lie in $\calS_0(\tau_{(j)})$ because $W(K_{(j)})+\tau K_{(j)}=Q(\tau)$ on the right piece and both sides are continuous at $\tau_{(j)}$, and symmetrically from the left.

(ii) $K$ is active iff there is $\tau>0$ with $W(K)+\tau K<W(L)+\tau L$ for every $L\ne K$, i.e.\ $\tau>\{W(K)-W(L)\}/(L-K)$ for every $L>K$ and $\tau<\{W(L)-W(K)\}/(K-L)$ for every $L<K$; that is, iff the open interval from $\max_{L>K}$ to $\min_{L<K}$ of the displayed ratios (conventions $\max_\emptyset=0$, $\min_\emptyset=\infty$) meets $(0,\infty)$. Since $W$ is nonincreasing all ratios are nonnegative, so the lower endpoint is $\ge0$ and the interval meets $(0,\infty)$ exactly when it is nonempty, which is the displayed strict inequality, saying that every chord arriving at $(K,W(K))$ from the left is steeper than every chord leaving to the right: a strict vertex of the greatest convex minorant. For $K=1$ the right side is $+\infty$ and the left side finite, so $1$ is always active; if $W(K)=W(K-1)$ the chord from $K-1$ contributes $0$ to the minimum, so the strict inequality fails and $K$ is inactive.

(iii) Each active order's persistence set is a nonempty open interval by (i), and by monotone selection the interval of the larger of two active orders lies entirely to the left. At any $\tau$ that is not a kink, $Q$ is differentiable, so by (i) $\tau$ belongs to the persistence interval of an active order; hence $(0,\infty)$ minus the finitely many kinks is the disjoint union of the persistence intervals, ordered with $K$ decreasing in $\tau$. Consecutive active orders $K_{(j)}<K_{(j+1)}$ therefore share a single boundary kink $\tau^\ast$, at which, by continuity, both attain $Q$: $W(K_{(j)})+\tau^\ast K_{(j)}=W(K_{(j+1)})+\tau^\ast K_{(j+1)}$, i.e.\ $\tau^\ast=\tau_{(j)}$ of Definition~\ref{def:merge}; it is strictly positive because activity of $K_{(j+1)}$ requires, by (ii) with $L=K_{(j)}$, that $\{W(K_{(j)})-W(K_{(j+1)})\}/(K_{(j+1)}-K_{(j)})$ strictly exceed a nonnegative quantity. The persistence interval of $K_{(j)}$ is thus exactly $(\tau_{(j)},\tau_{(j-1)})$, nonempty, which is the strict decrease $\tau_{(j)}<\tau_{(j-1)}$; the remaining clauses are immediate.

(iv) For all $\tau$ and $K$, $Q(\tau)\le W(K)+\tau K$, so $\sup_{\tau>0}\{Q(\tau)-\tau K\}\le W(K)$, with equality for $K\in\calS$ at any $\tau$ in its persistence interval. And restricting the minimum to $\calS$ changes nothing: off kinks the minimum is attained at an active order by (i), and at kinks by the two adjacent active orders. \hfill$\square$

\subsubsection{Proof of Proposition~\ref{prop:atomic}}
(i) The atoms lie in $\calC$, which contains the support of $P_U$. For $K\ge K_0$, the codebook consisting of the atoms (with repeats if $K>K_0$) gives $\Psi_0(g_c)=0$, so $W(K)=0$; conversely $\Psi_0(g_c)=0$ forces $\min_l\norm{u_h-c_l}=0$ for every atom, so the codebook contains all $K_0$ atoms, and for $K=K_0$ equals them up to labels. For $K<K_0$ and any $c\in\calC^K$, nearest-center assignment maps $K_0$ atoms to $K$ centers and is not injective, so some center $c_l$ is nearest to two atoms $u_h\ne u_{h'}$; then
\begin{equation*}
\begin{split}
\Psi_0(g_c)\ &\ge\ \omega_h\norm{u_h-c_l}^2+\omega_{h'}\norm{u_{h'}-c_l}^2
\ \ge\ \min_{z\in\R^q}\bigl\{\omega_h\norm{u_h-z}^2+\omega_{h'}\norm{u_{h'}-z}^2\bigr\}\\
&=\ \frac{\omega_h\omega_{h'}}{\omega_h+\omega_{h'}}\,\norm{u_h-u_{h'}}^2
\ \ge\ \frac{\omega_{\min}}{2}\,\Delta^2 ,
\end{split}
\end{equation*}
the identity by minimizing the quadratic at the weighted midpoint, and the last step because $ab/(a+b)$ is increasing in each argument, hence at least $\omega_{\min}^2/(2\omega_{\min})$. So $W(K)\ge\omega_{\min}\Delta^2/2=\bar\tau K_0$ for every $K<K_0$. Fix $\tau\in(0,\bar\tau)$: the penalized value at $K_0$ is $\tau K_0$; for $K>K_0$ it is $\tau K>\tau K_0$; for $K<K_0$ it is at least $\bar\tau K_0+\tau K>\tau K_0$. Hence $\calS_0(\tau)=\{K_0\}$ with the atom codebook, uniquely up to labels; $K_0$ is active with persistence interval containing $(0,\bar\tau)$; every $K>K_0$ has $W(K)=W(K_0)$ and is inactive by Proposition~\ref{prop:envelope}(ii); and $\calS\subseteq\{1,\dots,K_0\}$ with largest element $K_0$. Nothing used $\overline K$ beyond $\overline K\ge K_0$. In fact the merge scale below $K_0$ is even larger than $\bar\tau$: writing $K'<K_0$ for the next active order, $\tau_{(J-1)}=W(K')/(K_0-K')\ge(\omega_{\min}\Delta^2/2)/(K_0-1)=\bar\tau K_0/(K_0-1)$.

(ii) Applying the pairing bound with $K=1$ (possible since $K_0\ge2$) gives $W(1)\ge\omega_{\min}\Delta^2/2>0$, so $\rho$ is well defined; $\rho(K_0)=1$, and $\rho(K)<1$ for $K<K_0$ since $W(K)>0$. As $\rho$ is nondecreasing, for $\gamma\in(\rho(K_0-1),1)$ every $K<K_0$ has $\rho(K)\le\rho(K_0-1)<\gamma$ while $\rho(K_0)=1\ge\gamma$, so $K^\star(\gamma)=K_0$.

(iii) Write $\Delta_i(g_c)=\phi_{g_c}(O_i;\hateta^{(-b(i))})-\phi_{g_c}(O_i;\eta_0)$ and decompose, for every $K\le\overline K$, $c\in\calC^K$,
\begin{equation*}
\hPsi(g_c)-\Psi_0(g_c)=(P_n-P_0)\phi_{g_c}(\cdot;\eta_0)+\frac1n\sum_{i=1}^n\Delta_i(g_c).
\end{equation*}
\textbf{Oracle term.} Parts (a)--(b) of Lemma~\ref{lem:donsker} invoke only Assumptions~\ref{ass:bound}--\ref{ass:class}, because the entropy bound rests on the half-space geometry of Voronoi cells together with the constant envelope, so $\sup_{K,c}|(P_n-P_0)\phi_{g_c}(\cdot;\eta_0)|=\Op(n^{-1/2})$. \textbf{Increment term, stochastic part.} Conditionally on the training folds, for each fold the class $\{\Delta_\cdot(g_c)\}$ is the difference of two classes that are VC-type uniformly in $\eta$ (Lemma~\ref{lem:donsker}(a)) with envelope $2\bar F$; the per-fold conditional maximal inequality of Supplementary Section~\ref{app:onestep}, applied with the constant radius $2\bar F$ in place of a shrinking one, gives $\max_b\sup_{K,c}|(P_{n,b}-P_0)\Delta_\cdot(g_c)|=\Op(n^{-1/2})$. \textbf{Increment term, bias part.} For each fold, $P_0\Delta_\cdot(g_c)$ is the conditional bias of $\phi_{g_c}(\cdot;\hateta^{(-b)})$, to which the exact decomposition \eqref{eq:biasdecomp} applies with $f=g_c$ and $\nabla_mg_c=0$. Its second term is bounded by $C\hat r_\mu\hat r_\pi$ with $\hat r_\mu,\hat r_\pi$ the realized $L_2(P_0)$ nuisance errors, exactly as in Supplementary Section~\ref{app:eif}; for the first, Lemma~\ref{lem:geometry} gives the pointwise bound $|\mathrm{rem}|\le\norm{\Delta}^2+2\norm{\Delta}\,\diam\calC$, where the crossing indicators are bounded by one so the margin condition is never invoked, whence $\sup_{K,c}|P_0\Delta_\cdot(g_c)|\le C(\hat r_\mu^2+\hat r_\mu+\hat r_\mu\hat r_\pi)=\opro(1)$ under $r_\mu\vee r_\pi=\opro(1)$. Collecting the three bounds yields the first display. For the weighted version, decompose $\Psi^{(s)}(g_c)-\hPsi(g_c)=n^{-1}\sum_i(w_i-1)\phi_{g_c}(O_i;\eta_0)+n^{-1}\sum_i(w_i-1)\Delta_i(g_c)$ and reduce the Dirichlet weights to exponentials as in Supplementary Section~\ref{app:bvm} ($w_i=nE_i/S_n$, the factor $n/S_n$ handled on the usual event). Both index arrays are bounded with empirical VC-type entropy for every realization (Lemma~\ref{lem:donsker}(a)), so Lemma~\ref{lem:multmax} with $\hat\sigma$ bounded by the constant envelope shows that the multiplier processes scaled by $n^{-1/2}$ have conditional expectations of order one, hence the two averaged suprema are of conditional order $n^{-1/2}$, and conditional Markov yields the second display. \textbf{Recovery.} Since $|\inf_c\hPsi(g_c)-\inf_c\Psi_0(g_c)|\le\sup_c|\hPsi(g_c)-\Psi_0(g_c)|$, the first display gives $\max_{K\le\overline K}|\widehat W(K)-W(K)|=\opro(1)$, and conditionally for $W^{(s)}$. Fix $\tau\in(0,\bar\tau)$: by the computation in (i), the population penalized objective at $K_0$ beats every $K>K_0$ by at least $\tau$ and every $K<K_0$ by at least $K_0(\bar\tau-\tau)$; on the event that $2\max_K|\widehat W(K)-W(K)|$ is below $\min\{\tau,K_0(\bar\tau-\tau)\}$, whose probability tends to one, the empirical ordering is the same and $\widehat K^\dagger(\tau)=K_0$. For the threshold profile, $\widehat W(1)$ is bounded away from zero with probability tending to one, so $\max_K|\hat\rho(K)-\rho(K)|=\opro(1)$; on the event that this maximum is below $\min\{\gamma-\rho(K_0-1),\,1-\gamma\}$, $\hat\rho(K_0)\ge\gamma$ while $\hat\rho(K)<\gamma$ for all $K<K_0$, so $\widehat K^\star(\gamma)=K_0$. The conditional statements are identical with $W^{(s)}$ in place of $\widehat W$. \hfill$\square$

\subsubsection{Proofs of Corollaries~\ref{cor:penalized} and~\ref{cor:penalizedvalue}}
Write $Z_n(K)=\sqrt n\{\widehat W(K)-W(K)\}$ and $Z^{(s)}_n(K)=\sqrt n\{W^{(s)}(K)-\widehat W(K)\}$; Corollary~\ref{cor:path} gives $(Z_n(K))_{K\le\overline K}\dto Z:=(\G_0(g_{c^\star(K)}))_{K\le\overline K}$ and $(Z^{(s)}_n(K))_K\dtow Z$ in probability, so in particular $\max_K|\widehat W(K)-W(K)|=\Op(n^{-1/2})$ and $\max_K|W^{(s)}(K)-\widehat W(K)|=\opro^{\mathrm w}(1)$.

(a) Let $a(K)$ and $b(K)$ denote the max and min sides of the strict-vertex test in Proposition~\ref{prop:envelope}(ii). For active $K$, $b(K)-a(K)>0$. For inactive $K$, nondegeneracy gives $\eps_K=W(K)-\mathrm{GCM}(K)>0$; with $K_1<K<K_2$ the consecutive minorant vertices bracketing $K$, where vertices are active orders at which $W$ and the minorant agree, and with $s$ the common chord slope $\{W(K_1)-W(K_2)\}/(K_2-K_1)$, direct computation gives $a(K)\ge\{W(K)-W(K_2)\}/(K_2-K)=s+\eps_K/(K_2-K)$ and $b(K)\le\{W(K_1)-W(K)\}/(K-K_1)=s-\eps_K/(K-K_1)$, so $a(K)-b(K)\ge2\eps_K/\overline K>0$. Hence every order passes or fails the test with margin at least some $m_0>0$. Each of $a,b$ is a maximum or minimum of finitely many ratios with integer denominators $\ge1$, hence Lipschitz in $(W(K))_K$ for the supremum norm with constant $2$; on the event $\{8\max_K|\widehat W(K)-W(K)|<m_0\}$, of probability tending to one, every empirical test reproduces the population verdict, so $\widehat\calS=\calS$ and the empirical merge scales are the same linear functionals of $\widehat W$, converging to the $\tau_{(j)}$. Since neither endpoint of the reporting range is a merge scale, the finitely many strict inequalities determining which persistence intervals meet the range are reproduced as well, so $\Pp(\widehat\calS_{\mathrm{rep}}=\calS_{\mathrm{rep}})\to1$; the conditional statement is identical using $\max_K|W^{(s)}(K)-W(K)|=\opro^{\mathrm w}(1)$. On $\{\widehat\calS=\calS\}$, $\sqrt n(\hat\tau_{(j)}-\tau_{(j)})_j=D\,Z_n$ exactly, where row $j$ of $D$ carries $\pm(K_{(j+1)}-K_{(j)})^{-1}$ in the coordinates $K_{(j)},K_{(j+1)}$; multiplying by the indicator of this event and applying the continuous mapping theorem gives the unconditional limit, and on $\{\calS^{(s)}=\widehat\calS=\calS\}$, $\sqrt n(\tau^{(s)}_{(j)}-\hat\tau_{(j)})_j=D\,Z^{(s)}_n$, so the conditional limit follows from Corollary~\ref{cor:path} and the linear case of Theorem~\ref{thm:delta}.

For Corollary~\ref{cor:penalizedvalue}, ties occur only at merge scales: at any other $\tau$, $\calS_0(\tau)$ is a singleton by Proposition~\ref{prop:envelope}(i). Define $v(\tau)=\min_{K\ne K^\dagger(\tau)}\{W(K)+\tau K\}-Q(\tau)>0$ for $\tau\in T$. Since $T$ is compact and avoids the kinks, it splits into finitely many compact pieces on each of which $K^\dagger(\cdot)$ is constant and $v$ is continuous and positive; hence $v_T=\inf_Tv>0$. On the event $\{2\max_K|\widehat W(K)-W(K)|<v_T\}$, of probability tending to one, the empirical minimizer equals $K^\dagger(\tau)$ simultaneously for all $\tau\in T$, so $\sqrt n\{\widehat Q(\tau)-Q(\tau)\}=Z_n(K^\dagger(\tau))$ on $T$. The map $(z_K)_K\mapsto(z_{K^\dagger(\tau)})_{\tau\in T}$ is linear and bounded into $\ell^\infty(T)$, so the continuous mapping theorem gives the first display. At a merge scale, every order outside the tie set has a positive gap; on the corresponding event, $\sqrt n\{\widehat Q(\tau_{(j)})-Q(\tau_{(j)})\}=\min_{K\in\calS_0(\tau_{(j)})}Z_n(K)$, since $W(K)+\tau_{(j)}K=Q(\tau_{(j)})$ throughout the tie set; all coordinates being continuous maps of the single vector $Z_n$ on one event of probability tending to one, the convergences hold jointly. Under the nondegeneracy of (a) no inactive order lies on the minorant, so the tie set is the merging pair. The conditional statement on $T$ is identical: on the intersection of the two selection events, $\sqrt n\{Q^{(s)}(\tau)-\widehat Q(\tau)\}=Z^{(s)}_n(K^\dagger(\tau))$ on $T$, and the same bounded linear map applies. At a merge scale the relevant functional of the path, $z\mapsto\min_{K\in\calS_0(\tau_{(j)})}z_K$ at a tie, is directionally but not fully Hadamard differentiable, and the weighted posterior is in general inconsistent for such limits \citep{fang2019}.

(b) The scale, posterior-quantile, and band-coverage steps are verbatim those of Supplementary Section~\ref{app:profile}, with $(\hat\rho,\hat\sigma,\mathbb H)$ replaced by $(\widehat W,\hat\sigma_W,\{\G_0(g_{c^\star(K)})\}_K)$, the maximum running over $1\le K\le\overline K$, and $\sigma_{W,K}>0$ in place of $\sigma_K>0$: they yield $\liminf_n\Pp(B^W_n)\ge1-\alpha$ for the band event $B^W_n=\{L_W(K)\le W(K)\le U_W(K)\ \forall K\le\overline K\}$. On $B^W_n$, fix any $\tau>0$ and any $K_0\in\calS_0(\tau)$: for every $K'\le\overline K$,
\begin{equation*}
L_W(K_0)+\tau K_0\ \le\ W(K_0)+\tau K_0\ =\ Q(\tau)\ \le\ W(K')+\tau K'\ \le\ U_W(K')+\tau K' ,
\end{equation*}
so $K_0\in\widehat C^\dagger(\tau)$ by \eqref{eq:penset}. This holds for every $\tau$ and every member of $\calS_0(\tau)$ on the single event $B^W_n$, which proves the display. The two consequences recorded after the corollary also hold on $B^W_n$: for consecutive active orders, $\tau_{(j)}=\{W(K_{(j)})-W(K_{(j+1)})\}/\Delta K$ lies between $\{L_W(K_{(j)})-U_W(K_{(j+1)})\}_+/\Delta K$ and $\{U_W(K_{(j)})-L_W(K_{(j+1)})\}/\Delta K$, jointly over $j$, since each $W$ value is bracketed by its band; and replacing the two band statistics by their maximum, with a single posterior $(1-\alpha)$ quantile, produces an event contained in the intersection of a $\rho$-band and a $W$-band, on which the containment arguments for $\widehat C(\gamma)$ and $\widehat C^\dagger(\tau)$ run simultaneously. \hfill$\square$

\subsubsection{Proof of Proposition~\ref{prop:nontight}}
Fix $\gamma>0$, suppose $\Var\{\G_0(g_{c^\star(K_{(j)})})-\G_0(g_{c^\star(K_{(j+1)})})\}>0$, and set $\tau_n=\tau_{(j)}+\gamma n^{-1/2}$, $\Delta K=K_{(j+1)}-K_{(j)}$. On a fixed compact neighborhood of $\tau_{(j)}$ containing no other merge scale, all orders outside the merging pair are eliminated uniformly, with probability tending to one, by the gap argument in the proof of Corollary~\ref{cor:penalizedvalue} above; using $W(K_{(j+1)})+\tau_{(j)}K_{(j+1)}=W(K_{(j)})+\tau_{(j)}K_{(j)}$,
\begin{equation*}
\sqrt n\bigl\{\widehat Q(\tau_n)-Q(\tau_n)\bigr\}
=\min\bigl\{Z_n(K_{(j)}),\ Z_n(K_{(j+1)})+\gamma\,\Delta K\bigr\}+\opro(1)
\ \dto\ \min\bigl\{Z_{K_{(j)}},\ Z_{K_{(j+1)}}+\gamma\Delta K\bigr\},
\end{equation*}
which proves the local-drift display and shows explicitly that the pointwise limit changes with the $n^{-1/2}$ drift $\gamma$.

For non-tightness, keep a $\gamma>0$ for which
\[
p_1:=\Pp\{Z_{K_{(j)}}-Z_{K_{(j+1)}}>\gamma\Delta K+2\eps\}>0
\]
for some $\eps>0$, possible by the nondegenerate normality of the displayed difference. Let $X_n(\tau)=\sqrt n\{\widehat Q(\tau)-Q(\tau)\}$ and $\tau_m=\tau_{(j)}+\gamma m^{-1/2}$.  For every fixed $m$, as $N\to\infty$ the price $\tau_m$ is fixed to the right of the merge scale, so the competitor is eliminated and
\[
X_N(\tau_m)=Z_N(K_{(j)})+\opro(1),
\qquad
X_N(\tau_N)=\min\{Z_N(K_{(j)}),Z_N(K_{(j+1)})+\gamma\Delta K\}+\opro(1).
\]
Therefore
\[
\liminf_{N\to\infty}\Pp\{|X_N(\tau_N)-X_N(\tau_m)|>\eps\}\ge p_1 .
\]
Suppose, to the contrary, that $X_n$ were asymptotically tight in $\ell^\infty(T')$ for a set $T'$ containing a right neighborhood of $\tau_{(j)}$.  The asymptotic-tightness criterion in \cite[Theorem~1.5.7]{vdvwellner1996} supplies a semimetric $\varsigma'$ making $T'$ totally bounded and making $X_n$ asymptotically uniformly $\varsigma'$-equicontinuous.  Applying that criterion with $(\eps,p_1/2)$ gives a radius $r>0$ such that pairs with $\varsigma'$-distance below $r$ have increments larger than $\eps$ with limiting probability below $p_1/2$. The preceding display then implies that, for every fixed $m$, $\varsigma'(\tau_N,\tau_m)\ge r$ for all sufficiently large $N$.  Inductively choose $N_1<N_2<\cdots$ so that each new $\tau_{N_k}$ is at $\varsigma'$-distance at least $r$ from all earlier selected points.  This constructs an infinite $r$-separated subset of $T'$, contradicting total boundedness.  Thus no tight weak limit exists on a right neighborhood.  The same argument on the left of the merge scale, with the two active orders interchanged, proves the one-sided statement in general. \hfill$\square$

\subsection{Proofs of Theorems~\ref{thm:effects} and \ref{thm:projection}}\label{app:effects}

Fix $K$ and abbreviate $\beta^\star=\beta^\star(K)$, $V=V_\beta$, $\calB=\calB_K$. By Assumption~\ref{ass:class}(iii) and dominated differentiation, the population gradient map $G_0(\beta)=\Psi_0(\nabla_\beta\ell_\beta)$ is $C^1$ on the interior of $\calB$ with $G_0'(\beta^\star)=V$ nonsingular and $G_0(\beta^\star)=0$; fix $r_0>0$ such that $\bar B_0:=\{\beta:\norm{\beta-\beta^\star}\le r_0\}$ lies in the interior, $\beta^\star$ is the unique zero of $G_0$ in $\bar B_0$, and $G_0(\beta)=V(\beta-\beta^\star)+o(\norm{\beta-\beta^\star})$ as $\beta\to\beta^\star$. Write $\calG_\nabla=\{\nabla_\beta\ell_\beta^{(j)}:\beta\in\calB,\,j\le d_\beta\}\subset\calF_{\mathrm{str}}\subset\calF_{\mathrm{eff}}$ by Assumption~\ref{ass:class}(iii) and note that $\beta\mapsto\phi_{\nabla_\beta\ell_\beta}(\cdot;\eta_0)$ is Lipschitz into $L_2(P_0)$, hence into the covariance semimetric $\varsigma$.

\begin{lemma}[$Z$-functional differentiability]\label{lem:zfun}
For $\nu\in\ell^\infty(\calF_{\mathrm{eff}})$ define $T_\beta(\nu)$ as any point of $\bar B_0$ with
$\norm{\nu(\nabla_\beta\ell_\beta)|_{\beta=T_\beta(\nu)}}\le\inf_{\beta\in\bar B_0}\norm{\nu(\nabla_\beta\ell_\beta)}+\norm{\nu-\Psi_0}_\infty^2$.
Then $T_\beta$ is Hadamard differentiable at $\Psi_0$ tangentially to the set of $\zeta\in\ell^\infty(\calF_{\mathrm{eff}})$ that are $\varsigma$-continuous on $\calG_\nabla$, with derivative $T_\beta'(\zeta)=-V^{-1}\zeta(\nabla_\beta\ell_{\beta^\star})$, for every admissible selection.
\end{lemma}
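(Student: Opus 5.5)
The argument is the standard Hadamard-differentiability proof for approximate $Z$-estimators, specialized to the near-zero selection defining $T_\beta$. Fix $t_m\downarrow0$ and $\zeta_m\to\zeta$ uniformly on $\calF_{\mathrm{eff}}$, with $\zeta$ $\varsigma$-continuous on $\calG_\nabla$. Set $\nu_m=\Psi_0+t_m\zeta_m$ and $\beta_m=T_\beta(\nu_m)$ for an arbitrary admissible selection. Since $\norm{\nu_m-\Psi_0}_\infty=t_m\norm{\zeta_m}_\infty=O(t_m)$, the selection tolerance is $O(t_m^2)$. The goal is to show
\begin{equation*}
t_m^{-1}(\beta_m-\beta^\star)\to-V^{-1}\zeta(\nabla_\beta\ell_{\beta^\star}).
\end{equation*}

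\textbf{Step 1: $\beta_m$ nearly solves the perturbed equation.} Evaluating the defining inequality at the competitor $\beta^\star\in\bar B_0$ gives
\begin{equation*}
\norm{\nu_m(\nabla_\beta\ell_{\beta_m})}\le\norm{\nu_m(\nabla_\beta\ell_{\beta^\star})}+O(t_m^2)=t_m\norm{\zeta_m(\nabla_\beta\ell_{\beta^\star})}+O(t_m^2)=O(t_m),
\end{equation*}
because $G_0(\beta^\star)=0$.

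\textbf{Step 2: consistency and rate.} Write
\begin{equation*}
G_0(\beta_m)=\nu_m(\nabla_\beta\ell_{\beta_m})-t_m\zeta_m(\nabla_\beta\ell_{\beta_m}).
\end{equation*}
Both terms are $O(t_m)$, since $\zeta_m$ is uniformly bounded over $\calG_\nabla\subset\calF_{\mathrm{eff}}$. So $\sup$-norm closeness gives $\norm{G_0(\beta_m)}=O(t_m)$. By continuity of $G_0$, compactness of $\bar B_0$, and uniqueness of the zero $\beta^\star$ in $\bar B_0$, every limit point of $(\beta_m)$ is $\beta^\star$, so $\beta_m\to\beta^\star$. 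Nonsingularity of $V=G_0'(\beta^\star)$ then gives local identifiability, $\norm{G_0(\beta)}\ge c\norm{\beta-\beta^\star}$ near $\beta^\star$. Hence $\norm{\beta_m-\beta^\star}=O(t_m)$.

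\textbf{Step 3: linearization.} Expand
\begin{equation*}
\nu_m(\nabla_\beta\ell_{\beta_m})=G_0(\beta_m)+t_m\zeta_m(\nabla_\beta\ell_{\beta_m})=V(\beta_m-\beta^\star)+o(t_m)+t_m\zeta(\nabla_\beta\ell_{\beta_m})+o(t_m).
\end{equation*}
The first $o(t_m)$ comes from differentiability of $G_0$ combined with the rate in Step 2. The second comes from $\norm{\zeta_m-\zeta}_\infty\to0$. Next, replace $\zeta(\nabla_\beta\ell_{\beta_m})$ by $\zeta(\nabla_\beta\ell_{\beta^\star})+o(1)$. This needs $\varsigma$-continuity of $\zeta$ on $\calG_\nabla$ together with
\begin{equation*}
\varsigma(\nabla_\beta\ell_{\beta_m},\nabla_\beta\ell_{\beta^\star})\to0,
\end{equation*}
which follows because $\beta\mapsto\phi_{\nabla_\beta\ell_\beta}(\cdot;\eta_0)$ is Lipschitz into $L_2(P_0)$. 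The left side of the expansion is $o(t_m)$ only if Step 1 can be sharpened to that order, which is the main obstacle.

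\textbf{Main obstacle: the $o(t_m)$ bound on the attained residual.} Step 1 delivers only $O(t_m)$, so the infimum itself must be shown to be $O(t_m^2)$. Take as competitor the linearized root
\begin{equation*}
\tilde\beta_m=\beta^\star-t_mV^{-1}\zeta(\nabla_\beta\ell_{\beta^\star}),
\end{equation*}
which lies in $\bar B_0$ for large $m$. The expansion of Step 3, with $\tilde\beta_m$ in place of $\beta_m$, gives $\norm{\nu_m(\nabla_\beta\ell_{\tilde\beta_m})}=o(t_m)$. Therefore the infimum over $\bar B_0$ is $o(t_m)$, and adding the $O(t_m^2)$ tolerance, so is the attained value at $\beta_m$.

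Feeding this back into Step 3 yields
\begin{equation*}
V(\beta_m-\beta^\star)+t_m\zeta(\nabla_\beta\ell_{\beta^\star})=o(t_m),
\end{equation*}
and inverting $V$ gives the stated derivative. The derivative $T_\beta'(\zeta)=-V^{-1}\zeta(\nabla_\beta\ell_{\beta^\star})$ is linear and continuous in $\zeta$, so the limit does not depend on the chosen sequences or on the admissible selection.
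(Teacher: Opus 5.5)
Your proposal is correct and follows the paper's route. It uses the linearized root $\tilde\beta_m=\beta^\star-t_mV^{-1}\zeta(\nabla_\beta\ell_{\beta^\star})$ as the competitor showing the attained residual is $o(t_m)$, gets consistency from compactness of $\bar B_0$ and the unique zero of $G_0$, and then linearizes using nonsingular $V$ and $\varsigma$-continuity of $\zeta$. The only difference is ordering: you first extract an $O(t_m)$ rate from the cruder competitor $\beta^\star$, whereas the paper proves the $o(t_m)$ bound first and reads off the rate from the final expansion itself.
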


\begin{proof}
Let $t_m\downarrow0$, $\norm{\zeta_m-\zeta}_\infty\to0$ with $\zeta$ as stated, $\nu_m=\Psi_0+t_m\zeta_m$, and $\beta_m=T_\beta(\nu_m)$; write $G_m(\beta)=G_0(\beta)+t_m\zeta_m(\nabla_\beta\ell_\beta)$, the map whose point of nearly minimal norm $\beta_m$ is, and $\Xi=\sup_m\norm{\zeta_m}_\infty<\infty$.

\textbf{The infimum is $o(t_m)$.} Take $\tilde\beta_m=\beta^\star-t_mV^{-1}\zeta(\nabla_\beta\ell_{\beta^\star})\in\bar B_0$ eventually. Then
$G_m(\tilde\beta_m)=V(\tilde\beta_m-\beta^\star)+o(t_m)+t_m\zeta_m(\nabla_\beta\ell_{\tilde\beta_m})
=t_m\{\zeta_m(\nabla_\beta\ell_{\tilde\beta_m})-\zeta(\nabla_\beta\ell_{\beta^\star})\}+o(t_m)$,
and $|\zeta_m(\nabla_\beta\ell_{\tilde\beta_m})-\zeta(\nabla_\beta\ell_{\beta^\star})|\le\norm{\zeta_m-\zeta}_\infty+|\zeta(\nabla_\beta\ell_{\tilde\beta_m})-\zeta(\nabla_\beta\ell_{\beta^\star})|\to0$ by $\varsigma$-continuity of $\zeta$ and Lipschitz continuity of the index map. Hence $\inf_{\bar B_0}\norm{G_m}\le\norm{G_m(\tilde\beta_m)}=o(t_m)$, and since the tolerance is $t_m^2\norm{\zeta_m}^2_\infty=O(t_m^2)$, also $\norm{G_m(\beta_m)}=o(t_m)$.

\textbf{Consistency.} $\norm{G_0(\beta_m)}\le\norm{G_m(\beta_m)}+t_m\Xi\to0$; since $G_0$ is continuous on the compact $\bar B_0$ with unique zero $\beta^\star$, $\inf\{\norm{G_0(\beta)}:\beta\in\bar B_0,\norm{\beta-\beta^\star}\ge\eps\}>0$ for every $\eps>0$, so $\beta_m\to\beta^\star$.

\textbf{Expansion.} Using differentiability of $G_0$ at $\beta^\star$ and the same continuity argument as above (now along $\beta_m\to\beta^\star$),
\begin{equation*}
o(t_m)=G_m(\beta_m)=V(\beta_m-\beta^\star)+o(\norm{\beta_m-\beta^\star})+t_m\bigl\{\zeta(\nabla_\beta\ell_{\beta^\star})+o(1)\bigr\}.
\end{equation*}
Nonsingularity of $V$ first gives $\norm{\beta_m-\beta^\star}=O(t_m)$, then $(\beta_m-\beta^\star)/t_m\to-V^{-1}\zeta(\nabla_\beta\ell_{\beta^\star})$.
\end{proof}

\begin{lemma}[Evaluation chain rule]\label{lem:chain}
Let $\{f_\beta:\beta\in\calB\}\subset\calF_{\mathrm{eff}}$ satisfy Assumption~\ref{ass:class}\textup{(iii)} (in particular $\beta\mapsto f_\beta$ Lipschitz into $(\calF_{\mathrm{eff}},\varsigma)$ and $\beta\mapsto\Psi_0(f_\beta)$ continuously differentiable near $\beta^\star$ by dominated differentiation). Then $E(\nu,\beta)=\nu(f_\beta)$ is Hadamard differentiable at $(\Psi_0,\beta^\star)$ on $\ell^\infty(\calF_{\mathrm{eff}})\times\R^{d_\beta}$, tangentially to $\{\zeta\ \varsigma\text{-continuous}\}\times\R^{d_\beta}$, with derivative $(\zeta,b)\mapsto\zeta(f_{\beta^\star})+\partial_\beta\Psi_0(f_\beta)\big|_{\beta^\star}^\top b$.
\end{lemma}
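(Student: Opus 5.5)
The plan is to verify the definition of Hadamard differentiability directly along sequences. Fix $t_m\downarrow0$, $\zeta_m\to\zeta$ uniformly with $\zeta$ $\varsigma$-continuous, and $b_m\to b$ in $\R^{d_\beta}$. Write $\nu_m=\Psi_0+t_m\zeta_m$ and $\beta_m=\beta^\star+t_mb_m$. We then decompose the difference quotient into a smooth population piece and a perturbation piece:
\begin{equation*}
\frac{E(\nu_m,\beta_m)-E(\Psi_0,\beta^\star)}{t_m}
=\frac{\Psi_0(f_{\beta_m})-\Psi_0(f_{\beta^\star})}{t_m}+\zeta_m(f_{\beta_m}).
\end{equation*}
This identity is exact by linearity of $\nu\mapsto\nu(f)$.

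The first term is handled by the mean value theorem. Since $\beta\mapsto\Psi_0(f_\beta)$ is $C^1$ near $\beta^\star$, the quotient equals $\partial_\beta\Psi_0(f_\beta)|_{\tilde\beta_m}^\top b_m$ for some $\tilde\beta_m$ on the segment between $\beta^\star$ and $\beta_m$. We apply this coordinatewise if the image is vector-valued. By continuity of the derivative and $b_m\to b$, the term converges to $\partial_\beta\Psi_0(f_\beta)|_{\beta^\star}^\top b$.

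The second term is split as
\begin{equation*}
\zeta_m(f_{\beta_m})=\{\zeta_m(f_{\beta_m})-\zeta(f_{\beta_m})\}+\{\zeta(f_{\beta_m})-\zeta(f_{\beta^\star})\}+\zeta(f_{\beta^\star}).
\end{equation*}
The first bracket is bounded by $\norm{\zeta_m-\zeta}_\infty\to0$, since each $f_\beta$ lies in $\calF_{\mathrm{eff}}$, the index set of the sup norm. The second bracket vanishes for the following reason. The map $\beta\mapsto f_\beta$ is Lipschitz into $(\calF_{\mathrm{eff}},\varsigma)$, which gives $\varsigma(f_{\beta_m},f_{\beta^\star})\le Ct_m\norm{b_m}\to0$. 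Then $\varsigma$-continuity of $\zeta$ forces the bracket to zero. This is the only place the tangential restriction is used, and it is the step I expect to need care. Continuity of $\zeta$ is needed along the image curve $\{f_\beta\}$, not merely at the point $f_{\beta^\star}$, so I would make sure the tangent set is taken as $\varsigma$-continuous on all of $\calF_{\mathrm{eff}}$, or at least on $\{f_\beta\}$. This matches the paths of $\G_0$ by Lemma~\ref{lem:donsker}(b), which gives the Lipschitz property of the index map into $L_2(P_0)$ and hence into $\varsigma$.

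Combining the pieces, the difference quotient converges to $\zeta(f_{\beta^\star})+\partial_\beta\Psi_0(f_\beta)|_{\beta^\star}^\top b$. This limit is linear and continuous in $(\zeta,b)$ on $\ell^\infty(\calF_{\mathrm{eff}})\times\R^{d_\beta}$, because $|\zeta(f_{\beta^\star})|\le\norm{\zeta}_\infty$. So the derivative admits the continuous extension required by Theorem~\ref{thm:delta}. The multivariate version, for numerator and denominator scores jointly, follows coordinatewise, since Hadamard differentiability of each coordinate with linear derivatives gives joint differentiability.

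Finally, I note how the lemma will be consumed. It will be composed with Lemma~\ref{lem:zfun} through the chain rule \citep[Lemma~3.9.3]{vdvwellner1996}, and then with the smooth ratio map. This yields the composite derivative whose influence function is \eqref{eq:compositeIF}. That composition, however, belongs to the proof of Theorem~\ref{thm:effects} and is not part of the present statement.
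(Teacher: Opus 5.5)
Your proof is correct and is essentially the paper's argument. The paper uses the same exact split of the difference quotient into $\zeta_m(f_{\beta_m})$ plus the population increment, the same triangle inequality combining $\norm{\zeta_m-\zeta}_\infty\to0$ with $\varsigma$-continuity of $\zeta$ along $\varsigma(f_{\beta_m},f_{\beta^\star})\le Lt_m\norm{b_m}\to0$, and differentiability of $\beta\mapsto\Psi_0(f_\beta)$ along $b_m\to b$; your mean-value step is just an equivalent use of the $C^1$ hypothesis.

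One small correction to a side remark. Lemma~\ref{lem:donsker}\textup{(b)} is what places the paths of $\G_0$ in the tangent set. The Lipschitz property of $\beta\mapsto\phi_{f_\beta}$ into $L_2(P_0)$, and hence into $\varsigma$, comes instead from Assumption~\ref{ass:class}\textup{(iii)}, as used in the proof of Lemma~\ref{lem:donsker}\textup{(a)}.
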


\begin{proof}
Let $t_m\downarrow0$, $\zeta_m\to\zeta$ uniformly with $\zeta$ $\varsigma$-continuous, $b_m\to b$. Then
\begin{equation*}
\frac{E(\Psi_0+t_m\zeta_m,\beta^\star+t_mb_m)-E(\Psi_0,\beta^\star)}{t_m}
=\zeta_m(f_{\beta^\star+t_mb_m})+\frac{\Psi_0(f_{\beta^\star+t_mb_m})-\Psi_0(f_{\beta^\star})}{t_m}.
\end{equation*}
The first term tends to $\zeta(f_{\beta^\star})$, since $|\zeta_m(f_{\beta^\star+t_mb_m})-\zeta(f_{\beta^\star})|\le\norm{\zeta_m-\zeta}_\infty+|\zeta(f_{\beta^\star+t_mb_m})-\zeta(f_{\beta^\star})|\to0$, by $\varsigma$-continuity of $\zeta$ and $\varsigma(f_{\beta^\star+t_mb_m},f_{\beta^\star})\le L\,t_m\norm{b_m}\to0$. The second tends to $\partial_\beta\Psi_0(f_\beta)|_{\beta^\star}^\top b$ by differentiability along the converging directions $b_m\to b$.
\end{proof}

\subsubsection{Proof of Theorem~\ref{thm:projection}}
\textbf{Step 1 (localization and exact stationarity).} The subclass $\{\ell_\beta\}\cup\calG_\nabla$ is contained in $\calF_{\mathrm{eff}}$, so the restricted version of Theorem~\ref{thm:bvm} gives $\sup_\beta|\hPsi(\ell_\beta)-\Psi_0(\ell_\beta)|\le n^{-1/2}\sup_{f\in\calF_{\mathrm{eff}}}|\sqrt n(\hPsi_f-\Psi_{0,f})|=\opro(1)$, and, for every $\eps>0$, $\Pp_w\{\sup_\beta|\Psi^{(s)}(\ell_\beta)-\hPsi(\ell_\beta)|>\eps\}\le\Pp_w\{\sup_{f\in\calF_{\mathrm{eff}}}|\sqrt n(\Psi^{(s)}_f-\hPsi_f)|>\eps\sqrt n\}\pto0$ by conditional tightness. Since $\beta\mapsto\Psi_0(\ell_\beta)$ is continuous on the compact $\calB$ with, after label alignment, unique minimizer $\beta^\star$ (Assumption~\ref{ass:proj}), the standard argmin argument gives $\hat\beta\pto\beta^\star$ and $\Pp_w(\norm{\beta^{(s)}-\beta^\star}>\eps)\pto0$ for every $\eps$. Pointwise in the data and weights, $\partial_\beta\hphi_{\ell_\beta,i}=\hphi_{\nabla_\beta\ell_\beta,i}$ because mixed partials of $\ell$ commute under Assumption~\ref{ass:class}(iii); hence $\beta\mapsto\Psi^{(s)}(\ell_\beta)$ and $\beta\mapsto\hPsi(\ell_\beta)$ are continuously differentiable with gradients $\Psi^{(s)}(\nabla_\beta\ell_\beta)$, $\hPsi(\nabla_\beta\ell_\beta)$, and on the events $\{\hat\beta\in\mathrm{int}\,\bar B_0\}$ and $\{\beta^{(s)}\in\mathrm{int}\,\bar B_0\}$, whose conditional probabilities tend to one, the minimizers are exact zeros of these gradients and therefore admissible values of $T_\beta(\hPsi)$, $T_\beta(\Psi^{(s)})$ in the sense of Lemma~\ref{lem:zfun}.

\textbf{Step 2 (delta method).} $\G_0$ has $\varsigma$-continuous paths, so Lemma~\ref{lem:zfun} and Theorem~\ref{thm:delta} (applied to the functional $T_\beta$, modified arbitrarily off the events of probability tending to one from Step 1, which does not affect weak limits) yield both displays. For the conditional statement, the interior first-order condition holds for the weighted minimizer with conditional probability tending to one in probability, that is $\Pp_w\{\Psi^{(s)}(\nabla_\beta\ell_{\beta^{(s)}})=0,\ \beta^{(s)}\in\mathrm{int}\,\calB_K\}\pto1$, which follows from the conditional consistency in Step 1 and interiority of $\beta^\star(K)$ under Assumption~\ref{ass:proj}. On those events the minimizers are exact zeros, so the delta method operates on the exact-zero functional and the tolerance in the definition of $T_\beta$ matters only with vanishing probability. The common limit is $-V^{-1}\G_0(\nabla_\beta\ell_{\beta^\star})\sim N(0,V^{-1}\Sigma_\beta V^{-1})$ with $\Sigma_\beta=\Var\{\phi_{\nabla_\beta\ell_{\beta^\star}}(O;\eta_0)\}$. Membership surfaces: $\beta\mapsto r_h(\cdot;\beta)\in\ell^\infty(\calC)$ is Hadamard (indeed Fr\'echet) differentiable by the uniformly bounded $\beta$-derivatives of Assumption~\ref{ass:class}(iii), so a second application of the chain rule transfers the conclusion; joint statements over finite sets of $K$ are coordinatewise. \hfill$\square$

\subsubsection{Proof of Theorem~\ref{thm:effects}}
Write $\Theta_{h,a}=\varrho\circ E_{h,a}\circ\Gamma$ with $\Gamma(\nu)=(\nu,T_\beta(\nu))$, $E_{h,a}(\nu,\beta)=\bigl(\nu(f^N_{h,a;\beta}),\nu(f^D_{h;\beta})\bigr)$, and $\varrho(x,y)=x/y$. $\Gamma$ is Hadamard differentiable at $\Psi_0$ tangentially to $\varsigma$-continuous directions with derivative $\zeta\mapsto(\zeta,-V^{-1}\zeta(\nabla_\beta\ell_{\beta^\star}))$ (identity component trivially; second component by Lemma~\ref{lem:zfun}). $E_{h,a}$ is Hadamard differentiable at $(\Psi_0,\beta^\star)$ by Lemma~\ref{lem:chain} applied componentwise to the families $\{f^N_{h,a;\beta}\}$, $\{f^D_{h;\beta}\}$, both within Assumption~\ref{ass:class}(iii). $\varrho$ is $C^1$ at $(N_0,D_h)$ with $D_h>0$ (Assumption~\ref{ass:proj}). The chain rule for Hadamard-differentiable maps \citep[Lemma~3.9.3]{vdvwellner1996} composes the three, giving differentiability of $\Theta_{h,a}$ at $\Psi_0$ tangentially to $C_\varsigma(\calF_{\mathrm{eff}})$ with derivative
\begin{equation*}
\Theta_{h,a}'(\zeta)
=\frac1{D_h}\Bigl[\zeta(f^N_{h,a;\beta^\star})-\psi_{h,a}\,\zeta(f^D_{h;\beta^\star})\Bigr]
+\Bigl[\frac{a_N}{D_h}-\frac{N_0\,a_D}{D_h^2}\Bigr]^\top\bigl\{-V^{-1}\zeta(\nabla_\beta\ell_{\beta^\star})\bigr\}
\end{equation*}
where $a_N=\partial_\beta\Psi_0(f^N_{h,a;\beta})|_{\beta^\star}$, $a_D=\partial_\beta\Psi_0(f^D_{h;\beta})|_{\beta^\star}$; the bracket is exactly $\partial_\beta\psi_{h,a}(\beta^\star)$ by the quotient rule. Theorem~\ref{thm:delta} applies on the smooth/structured subclass, jointly over the finite index set $(h,a)$; evaluating the derivative at $\zeta=\G_0$ and using $\Cov\{\G_0(f),\G_0(f')\}=\Cov\{\phi_f,\phi_{f'}\}$ identifies the limit as $N(0,\Var\,\Phi)$ with $\Phi$ as in \eqref{eq:compositeIF} (additive centering constants in $\Phi$ do not change the variance). Finally, on the events of probability tending to one from Step 1 above, Algorithm~\ref{alg:main}'s draw-level quantity $\psi^{(s)}_{h,a}(K)$ equals $\Theta_{h,a}(\Psi^{(s)})$ and $\hat\psi_{h,a}(K)=\Theta_{h,a}(\hPsi)$ by construction, since the per-draw $\beta$-refit is the evaluation of $T_\beta$ at $\Psi^{(s)}$, so the conditional display is precisely the statement that the algorithm targets the composite law; freezing $\beta$ at $\hat\beta$ instead composes with the constant map $\nu\mapsto(\nu,\hat\beta)$, whose derivative lacks the second component, yielding $\Phi^{\mathrm{fix}}$ only. \hfill$\square$

\subsubsection{Proof of Corollary~\ref{cor:fixedgamma}}
Since $\gamma_0$ is not a knot, Theorem~\ref{thm:profile}\textup{(i)} gives $\widehat K^\star(\gamma_0)\pto K^\star_0$ and $\Pp_w\{K^{\star(s)}(\gamma_0)=K^\star_0\}\pto1$, so the data event $\mathcal E_n=\{\widehat K^\star(\gamma_0)=K^\star_0\}$ has $\Pp(\mathcal E_n)\to1$ and the per-draw selection equals $K^\star_0$ with conditional probability tending to one in probability.

\textbf{Sampling side.} On $\mathcal E_n$ the resolution selected from the profile equals the fixed resolution $K^\star_0$, so the point estimator $\hat\psi_{h,a}(\widehat K^\star(\gamma_0))$ coincides with its fixed-resolution counterpart $\hat\psi_{h,a}(K^\star_0)$. Estimator sequences that agree on events of probability tending to one share the same weak limit, so the sampling Gaussian limit of Theorem~\ref{thm:effects} at $K=K^\star_0$ transfers verbatim.

\textbf{Conditional side.} Write $Z^{(s)}_{\mathrm{sel}}=\sqrt n\{\psi^{(s)}_{h,a}(K^{\star(s)}(\gamma_0))-\hat\psi_{h,a}(\widehat K^\star(\gamma_0))\}$ and $Z^{(s)}_0=\sqrt n\{\psi^{(s)}_{h,a}(K^\star_0)-\hat\psi_{h,a}(K^\star_0)\}$. On $\mathcal E_n$ the centering agrees, and on the per-draw event $\{K^{\star(s)}(\gamma_0)=K^\star_0\}$ the numerator agrees, so $Z^{(s)}_{\mathrm{sel}}=Z^{(s)}_0$ there. Hence for every $\varphi\in\BL$,
\begin{equation*}
\bigl|\E_w\varphi(Z^{(s)}_{\mathrm{sel}})-\E_w\varphi(Z^{(s)}_0)\bigr|
\ \le\ 2\,\Pp_w\{K^{\star(s)}(\gamma_0)\ne K^\star_0\}+2\,\ind{\mathcal E_n^c}
\ \longrightarrow\ 0
\end{equation*}
in probability, using $|\varphi|\le1$, $\Pp_w\{K^{\star(s)}(\gamma_0)\ne K^\star_0\}\pto0$, and $\Pp(\mathcal E_n^c)\to0$. The right side is free of $\varphi$, so the same bound holds for the supremum over $\varphi\in\BL$, and both terms are data-measurable and tend to zero in probability, hence in outer probability, matching the mode of Definition~\ref{def:condweak}. By Theorem~\ref{thm:effects}, $Z^{(s)}_0\dtow N(0,\Sigma_\psi)$, so the triangle inequality in the bounded-Lipschitz metric of Definition~\ref{def:condweak} gives $Z^{(s)}_{\mathrm{sel}}\dtow N(0,\Sigma_\psi)$, jointly over the reported $(h,a)$.

The argument is pointwise in $P_0$. It does not extend uniformly over the root-$n$ neighborhoods of Theorem~\ref{thm:impossibility}, on which $\Pp(\mathcal E_n)\to1$ fails, which is why the recommended report when $\widehat C(\gamma_0)$ is not a singleton is to give the subgroup effects at every supported $K$ rather than at a single selected count. \hfill$\square$

\subsection{Supporting empirical-process lemmas}\label{app:lemmas}

\begin{lemma}[Entropy, Donsker property, and index continuity]\label{lem:donsker}
Let $\Phi_\eta=\{\phi_f(\cdot;\eta):f\in\calF\}$ for $\eta$ in the truncated range, with constant envelope $\bar F$.
\begin{enumerate}[label=(\alph*),leftmargin=2em]
\item There exist $A<\infty$ and $v<\infty$, depending only on the constants declared at the start of Supplementary Section~\ref{app:proofs}, such that
$\sup_Q N\bigl(\eps\bar F,\Phi_\eta,L_2(Q)\bigr)\le(A/\eps)^v$ for all $\eps\in(0,1]$, all finitely supported $Q$, and all $\eta$ in the truncated range.
\item $\Phi_{\eta_0}$ is $P_0$-Donsker.
\item For each $K$ and all ordered tuples $c,c'\in\calC^K$ with $d_\infty(c,c')=\max_{h\le K}\norm{c_h-c'_h}\le\delta\le1$,
$\norm{\phi_{g_c}(\cdot;\eta_0)-\phi_{g_{c'}}(\cdot;\eta_0)}_{P_0,2}\le C(\delta+\delta^{\am/2})$;
in particular $c\mapsto\phi_{g_c}$ is uniformly continuous into the covariance semimetric $\varsigma$. The statement is unaffected by repeated centers; if two ordered labels have the same center, their boundary contributes no jump.
\end{enumerate}
\end{lemma}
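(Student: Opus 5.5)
The plan treats the three blocks $\calF_{\mathrm{sm}}$, $\calF_{\mathrm{qt}}$, $\calF_{\mathrm{str}}$ separately and then takes a finite union, since uniform-entropy bounds add over finitely many classes (enlarge $A$, take the largest $v$).

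\textbf{Part (a).} Write $\phi_f(o;\eta)=f(U_\eta,\bmu_\eta)+[\nabla_uf(U_\eta,\bmu_\eta)^\top H+\nabla_mf(U_\eta,\bmu_\eta)^\top]R(o;\eta)$ and fix $\eta$. The map $o\mapsto(U_\eta(x),\bmu_\eta(x),R(o;\eta))$ is then a fixed measurable map into a bounded set: $|R_a|\le2(B_Y+B_\mu)/\eps_\pi$. Entropy of $\Phi_\eta$ therefore reduces to entropy of the classes $\{f\}$ and $\{\nabla f\}$ composed with this map, and products with the bounded coordinates $R_a$ do not increase uniform entropy beyond a constant factor. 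For $\calF_{\mathrm{sm}}$ and $\calF_{\mathrm{str}}$, the Lipschitz dependence of $t\mapsto(f_t,\nabla f_t)$ in sup norm over the compact finite-dimensional index set $T_{\mathrm{sm}}$ (resp.\ $\calB_K$) gives sup-norm covering numbers $\lesssim(A/\eps)^{d}$, uniformly in $Q$ and $\eta$.

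For $\calF_{\mathrm{qt}}$, write $g_c(u)=\sum_h\ind{h_c(u)=h}\norm{u-c_h}^2$ and $\nabla g_c(u)=2\sum_h\ind{h_c(u)=h}(u-c_h)$. Each cell indicator is an intersection of at most $K-1$ half-spaces in $\R^q$, so under the deterministic tie rule these indicators form a VC class of index $O(Kq)$. The factors $\norm{u-c_h}^2$ and $u-c_h$ are polynomials in $c$ of finite-dimensional span, hence VC-subgraph. Sums and products of finitely many bounded VC-type classes are VC-type (van der Vaart--Wellner 2.6.18 and permanence), and composing with the fixed map $o\mapsto(U_\eta(x),R(o;\eta))$ preserves the bound uniformly in $\eta$, because the bound is over all finitely supported $Q$.

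\textbf{Part (b).} This follows from (a) via the uniform entropy integral with a constant envelope, plus measurability (pointwise separability, using rational codebooks and indices and the right-continuity afforded by the tie rule).

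\textbf{Part (c).} Split $\phi_{g_c}-\phi_{g_{c'}}$ into the loss part and the gradient part. For the loss part, $|g_c-g_{c'}|\le C\delta$ pointwise, since a minimum of $4\diam\calC$-Lipschitz-in-$c$ functions is Lipschitz in $c$. For the gradient part, $\nabla g_c(U)-\nabla g_{c'}(U)$ equals $2(c'_{h'}-c_h)$ with the labels $h=h_c(U)$ and $h'=h_{c'}(U)$. When $h=h'$ this is at most $2\delta$. When $h\ne h'$ with $c_h\ne c_{h'}$, the point $U$ lies within $O(\delta)$ of the bisector of $c_h,c_{h'}$ (by the affine-function argument of Lemma~\ref{lem:geometry}), and the jump is bounded by $2\diam\calC$. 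Multiplying by the bounded $HR$ and squaring, the $L_2(P_0)$ norm is at most $C\delta$ plus $C\{\sum_{h,j}P_0(\dist(U,B_{hj})\le C\delta)\}^{1/2}\le C\delta^{\am/2}$ by Assumption~\ref{ass:margin}. When the centers coincide, the labels share a point, so there is no jump; this handles repeated centers. Uniform continuity into $\varsigma$ follows because $\varsigma\le\norm{\cdot}_{P_0,2}$.

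\textbf{Main obstacle.} The delicate step is making (a) genuinely uniform over $\eta$ and $Q$ for the quantization block, in particular checking that the cell indicators evaluated at $U_\eta$ inherit the half-space VC bound regardless of $\eta$. This works because VC dimension is invariant under composition with a fixed map, so I expect no real difficulty once it is phrased that way.
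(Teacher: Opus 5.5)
Parts (a) and (b) follow the paper's route. The tie-broken cell indicators are intersections of half-spaces, bounded uniform-entropy classes are combined by products and finite sums, you compose with the fixed map $o\mapsto(U_\eta(x),\bmu_\eta(x),R(o;\eta))$, and the uniform-entropy Donsker theorem finishes. The gap is in (c), at the claim that on the flip event $\{h\ne h',\ c_h\ne c_{h'}\}$ the point $U$ lies within $O(\delta)$ of the bisector $B_{hh'}$ of $c_h,c_{h'}$.

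Put $s(u)=\norm{u-c_h}^2-\norm{u-c_{h'}}^2$ and $d=\norm{c_h-c_{h'}}$. Optimality of $h$ under $c$ and of $h'$ under $c'$ gives only $|s(U)|\le C\,\diam\calC\,\delta$. Since $\norm{\nabla s}=2d$, this yields $\dist(U,B_{hh'})\le C\delta/d$, not $C\delta$. The analogy with Lemma~\ref{lem:geometry} breaks exactly here. There the point moves by $\Delta$ and $s$ changes by at most $2\norm{\Delta}d$, so $d$ cancels. Here the codebook moves by $\delta$ and $s$ changes by $O(\delta)$ with no factor $d$.

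For distinct but close centers the containment fails outright. Take $q=1$, $c=(0,\epsilon)$, $c'=(\delta,\epsilon)$ with $0<\epsilon<\delta$. Every $U\ge(\delta+\epsilon)/2$ then has $h=2$ and $h'=1$, yet typically lies far from $B_{12}=\{\epsilon/2\}$, and this event can carry probability close to one. Because you bound the jump crudely by $2\diam\calC$, your estimate gives no decay in $\delta$. Part (c) needs a constant uniform over all of $\calC^K$, where centers are not separated.

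The missing idea is to use the actual size of the jump. Decompose $c'_{h'}-c_h=(c'_{h'}-c_{h'})+(c_{h'}-c_h)$. The first term has norm at most $\delta$ everywhere. The second is nonzero only on the flip event, where its norm is exactly $d$ and $\dist(U,B_{hh'})\le C\delta/d$.

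Assumption~\ref{ass:margin}, with $C_M$ enlarged so it holds for all $t>0$, then bounds each pair's squared $L_2(P_0)$ contribution by $C\,d^2\min\{1,(\delta/d)^{\am}\}$. This is at most $\delta^2\le\delta^{\am}$ when $d\le\delta$, and at most $C(\diam\calC)^{2-\am}\delta^{\am}$ when $d>\delta$. Summing over at most $\overline K^2$ pairs gives $C(\delta^2+\delta^{\am})$, hence $C(\delta+\delta^{\am/2})$. This is the paper's argument. In the example above the true jump is $2(\delta-\epsilon)$, so the lemma holds; only your route to it breaks.

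A minor point in (b): your pointwise-separability claim via rational codebooks and ``right-continuity afforded by the tie rule'' is not justified as stated. With the lowest-label rule and three or more centers, it is unclear that one approximating sequence reproduces the tie-broken label at every boundary point simultaneously. The paper avoids this by noting that the classes are indexed by compact metric spaces with jointly measurable evaluation, hence are image-admissible Suslin.
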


\begin{proof}
(a) Treat the subclasses separately; finite unions multiply covering numbers by constants. For $f=f_\theta\in\calF_{\mathrm{sm}}\cup\calF_{\mathrm{str}}$, Assumption~\ref{ass:class}(iii) makes $\theta\mapsto(f_\theta,\nabla f_\theta)$ Lipschitz in supremum norm over compact finite-dimensional index sets; since the residual is bounded in the truncated range, $|\phi_{f_\theta}(o;\eta)-\phi_{f_{\theta'}}(o;\eta)|\le C\norm{\theta-\theta'}$ pointwise, so covering the index set by $(3\,\mathrm{diam}/\tau)^{d_\theta}$ balls gives $N(C\tau,\cdot,L_2(Q))\le(C'/\tau)^{d_\theta}$ for every $Q$ and $\eta$. For $f=g_c\in\calF_{\mathrm{qt}}$, write
$\phi_{g_c}(o;\eta)=\sum_{h\le K}\ind{h_c(U_\eta(x))=h}\,m_{c_h}(o;\eta)$,
$m_{c_h}=\norm{U_\eta-c_h}^2+2(U_\eta-c_h)^\top HR_\eta$,
with ties broken by lowest index. The functions $x\mapsto s_{c,hj}(U_\eta(x))$, where $s_{c,hj}(u)=\norm{u-c_h}^2-\norm{u-c_j}^2$ compares cells $h$ and $j$ as in the proof of Lemma~\ref{lem:geometry}, are affine in $U_\eta(x)$, hence range over a vector space of dimension at most $q+1$; the sets $\{s\le0\}$ therefore form a VC class of index at most $q+3$ \citep[Lemma~2.6.15 and Lemma~2.6.18]{vdvwellner1996}, cell indicators are intersections of at most $\overline K-1$ such sets and remain VC with index bounded by a constant \citep[Lemma~2.6.17]{vdvwellner1996}, so they admit uniform entropy $(A_1/\eps)^{v_1}$ \citep[Theorem~2.6.7]{vdvwellner1996}. The factors $m_{c_h}$ are uniformly bounded and pointwise Lipschitz in $c_h\in\calC$, hence of uniform entropy $(A_2/\eps)^{q}$ as above. For uniformly bounded classes, $\norm{fg-f'g'}_{Q,2}\le\norm{g}_\infty\norm{f-f'}_{Q,2}+\norm{f}_\infty\norm{g-g'}_{Q,2}$, so products multiply covering numbers and sums over $h\le\overline K$ and the union over $K\le\overline K$ keep the polynomial form. All Lipschitz constants and envelopes are uniform over the truncated $\eta$-range, giving (a).

(b) The classes are indexed by compact metric spaces with $(o,\text{index})\mapsto\phi$ jointly measurable, hence image-admissible Suslin, and standard measurability for suprema applies \citep[Section~8.2]{kosorok2008}. With (a) at $\eta_0$ and the bounded envelope, the uniform-entropy Donsker theorem \citep[Theorem~2.5.2]{vdvwellner1996} gives (b).

(c) The loss parts differ by at most $4\diam\calC\cdot\delta$ pointwise. The gradient parts differ by $2(c'_{h'}-c_h)^\top HR$ with $h=h_c(U)$, $h'=h_{c'}(U)$; decompose $c'_{h'}-c_h=(c'_{h'}-c_{h'})+(c_{h'}-c_h)$. The first term is at most $\delta$ in norm. For the second term, if $h'=h$ there is no contribution. If $h'\ne h$ but $c_{h'}=c_h$ as points, the jump is again zero; repeated centers therefore cause no difficulty. Otherwise put $d=\norm{c_{h'}-c_h}>0$. On the flip event, optimality under both codebooks gives
$0\le\norm{U-c_{h'}}^2-\norm{U-c_h}^2\le8\diam\calC\cdot\delta$,
so $\dist(U,B_{c,h'h})\le4\diam\calC\,\delta/d$. The jump magnitude is $d$, and Assumption~\ref{ass:margin}, with the trivial bound used when $4\diam\calC\,\delta/d>t_0$, gives
\begin{equation*}
\E\Bigl[\norm{c_{h'}-c_h}^2\,\norm{HR}^2\,\ind{h\ne h'}\Bigr]
\le C\sum_{(h,h')} d_{hh'}^2\min\{1,(\delta/d_{hh'})^\am \}
\le C\overline K^2(\diam\calC)^{2-\am}\delta^\am,
\end{equation*}
where pairs with $d_{hh'}=0$ are interpreted as contributing zero, and the last inequality uses $d^2\le\delta^\am$ for $d\le\delta\le1$ and $d^2(\delta/d)^\am\le(\diam\calC)^{2-\am}\delta^\am$ for $d>\delta$. Collecting, $\norm{\phi_{g_c}-\phi_{g_{c'}}}_{P_0,2}^2\le C(\delta^2+\delta^\am)$, which is (c); the $\varsigma$-statement follows since $\varsigma^2\le\norm{\cdot}_{P_0,2}^2$ for centered differences.
\end{proof}

\begin{lemma}[Estimated-score increment classes]\label{lem:incrementclass}
Let $\Delta_\eta(f)=\phi_f(\cdot;\eta)-\phi_f(\cdot;\eta_0)$ for a nuisance value $\eta$ in the truncated range, and let $\Delta_i(f)=\phi_f(O_i;\hateta^{(-b(i))})-\phi_f(O_i;\eta_0)$ be the cross-fitted increment array.
\begin{enumerate}[label=(\alph*),leftmargin=2em]
\item For each fixed $\eta$, the class $\calD_\eta=\{\Delta_\eta(f):f\in\calF\}$ has envelope $2\bar F$ and satisfies
\[
\sup_Q N\bigl(\eps\,2\bar F,\calD_\eta,L_2(Q)\bigr)\le(A/\eps)^v,
\qquad 0<\eps\le1,
\]
with constants independent of $\eta$.
\item Conditional on the training folds, the cross-fitted vector class
$\calD_n=\{(\Delta_1(f),\ldots,\Delta_n(f)):f\in\calF\}$ has envelope $2\bar F$ and
\[
\log N\bigl(\eps\,2\bar F,\calD_n,\norm{\cdot}_{P_n,2}\bigr)\le Bv\log(A/\eps),
\qquad 0<\eps\le1.
\]
The same conclusion holds with $P_n$ replaced by the empirical norm on any fixed union of folds.
\item The squared classes $\calD_\eta^2=\{d^2:d\in\calD_\eta\}$ and $\calD_n^2=\{(\Delta_1(f)^2,\ldots,\Delta_n(f)^2):f\in\calF\}$ have envelopes $4\bar F^2$ and polynomial entropy with the same type of constants, uniformly in $\eta$ and conditionally on the training folds.
\item With $\delta_n=\max_b\sup_f\norm{\Delta^{(b)}(f)}_{P_0,2}$ and $\delta'_n=\sup_f\{P_n\Delta_\cdot(f)^2\}^{1/2}$,
\[
(\delta'_n)^2\le\delta_n^2+\Op(n^{-1/2}),
\qquad
\delta'_n\le\delta_n+\Op(n^{-1/4}).
\]
\end{enumerate}
\end{lemma}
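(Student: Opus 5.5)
The plan is to reduce everything to Lemma~\ref{lem:donsker}(a), which already gives polynomial uniform entropy for $\Phi_\eta$ with constants uniform over the truncated range. Then I would propagate it through differences, cross-fitting concatenation, squaring, and a uniform law of large numbers.

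For (a), write $\calD_\eta\subseteq\Phi_\eta-\Phi_{\eta_0}$ with the pairing $f\mapsto(\phi_f(\cdot;\eta),\phi_f(\cdot;\eta_0))$ indexed by the same $f$. The class $\{\Delta_\eta(f)\}$ is the image of the joint class $\{(\phi_f(\cdot;\eta),\phi_f(\cdot;\eta_0))\}$ under subtraction. This is not simply the Minkowski difference, so I would cover it through the index directly. For the smooth and structured blocks, $\theta\mapsto\phi_{f_\theta}(\cdot;\eta)$ is sup-norm Lipschitz uniformly in $\eta$, as in the proof of Lemma~\ref{lem:donsker}(a), so a single $\tau$-net of the index set covers both terms at once. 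For the quantization block, I would use the product representation in which cell indicators are computed from $U_\eta$ and from $U_{\eta_0}$ separately. The indicator class $\{x\mapsto(\ind{h_c(U_\eta(x))=h},\ind{h_c(U_{\eta_0}(x))=h})\}$ is indexed by $c$. Each coordinate class is VC with bounded index, so the family of pairs indexed by a common $c$ has polynomial covering numbers: the product of the two covers, followed by the triangle inequality, gives $(A/\eps)^{2v}$, which suffices after renaming $v$. The envelope $2\bar F$ is immediate.

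For (b), conditional on the training folds, $\Delta_i(f)=\Delta_{\hateta^{(-b)}}(f)(O_i)$ for $i\in I_b$. The empirical norm then splits as $\norm{d}_{P_n,2}^2=\sum_b(n_b/n)\norm{d}_{P_{n_b}^{(b)},2}^2$. Covering each fold-restricted class at radius $\eps 2\bar F$ in its own finitely supported $L_2(P^{(b)}_{n_b})$ via (a), and taking the product of the $B$ covers, gives an $\eps 2\bar F$-cover for the concatenated vectors. Its size is at most $(A/\eps)^{Bv}$, which is the displayed bound; the same argument restricted to any fixed union of folds gives the last sentence. For (c), I would use $|a^2-b^2|\le 4\bar F|a-b|$ on the envelope range, so covers of $\calD$ map to covers of $\calD^2$ at the same scale times a constant.

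For (d), which is the only probabilistic step, I would write $(\delta_n')^2\le\max_b\sup_f P^{(b)}_{n_b}\Delta^{(b)}(f)^2$ and decompose each term as $P_0\Delta^{(b)}(f)^2+(P^{(b)}_{n_b}-P_0)\Delta^{(b)}(f)^2$. The first term is at most $\delta_n^2$. For the second, conditionally on $\calT_b$ the fold observations are i.i.d.\ and the class $\calD^2_{\hateta^{(-b)}}$ is fixed, bounded by $4\bar F^2$, and of uniform polynomial entropy by (c). The maximal inequality of \cite[Theorem~2.14.1]{vdvwellner1996} then bounds its conditional expected supremum by $C n_b^{-1/2}$ with $C$ free of the nuisance fit. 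Conditional Markov, then unconditioning and a union over the finitely many folds, gives $\Op(n^{-1/2})$, which is the first display. The second display follows from $\sqrt{x+y}\le\sqrt x+\sqrt y$.

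The main obstacle is the quantization block in (a), because the increment pairs indicators evaluated at two different feature maps. Care is needed that the pairing by a common codebook does not destroy VC structure. I would handle it exactly as above, treating the class of pairs as a subclass of the product of two VC classes, so entropies add.
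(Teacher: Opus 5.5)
Your proposal is correct and follows the paper's proof essentially step for step: (a) by combining covers of the two score classes supplied by Lemma~\ref{lem:donsker}(a), (b) by taking the product of the $B$ foldwise covers, (c) by the Lipschitz bound $|a^2-b^2|\le 4\bar F|a-b|$, and (d) by the foldwise split into $\delta_n^2$ plus a conditional bounded-class maximal inequality at fixed radius. The only inessential difference is in (a), where the index-by-index detour is not needed: $\calD_\eta$ is a subset of the Minkowski difference $\{\phi-\phi':\phi\in\Phi_\eta,\ \phi'\in\Phi_{\eta_0}\}$, and covering numbers (whose centers need not lie in the class) are monotone under inclusion, which is exactly the product-of-covers argument you end up using anyway.
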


\begin{proof}
Part (a) follows from Lemma~\ref{lem:donsker}\textup{(a)}: $\calD_\eta$ is contained in the difference of two uniformly VC-type score classes with the same bounded envelope, and products of the two covering nets cover the difference class. For (b), cover the restriction of $\calD_n$ to each fold using part (a), and take the product of the $B$ foldwise nets. Since $B$ is fixed, only the entropy exponent changes by the factor $B$.

For (c), if $|a|,|b|\le 2\bar F$, then $|a^2-b^2|\le4\bar F|a-b|$. Thus an $L_2(Q)$ net of radius $\eps\bar F$ for $\calD_\eta$ induces an $L_2(Q)$ net of radius $4\eps\bar F^2$ for $\calD_\eta^2$; the cross-fitted array is handled fold by fold as in (b). This gives the squared-class entropy directly from the Lipschitz bound.

For (d), decompose
\[
P_n\Delta_\cdot(f)^2
=\sum_{b=1}^B\frac{n_b}{n}P^{(b)}_{n_b}\{\Delta^{(b)}(f)^2\}
\le \delta_n^2+\sum_{b=1}^B\frac{n_b}{n}\sup_f\bigl|(P^{(b)}_{n_b}-P_0)\Delta^{(b)}(f)^2\bigr|.
\]
Conditional on the training data, the squared fold classes have bounded envelopes and polynomial entropy by (c). The same bounded VC maximal inequality used in \eqref{eq:epbound}, now with fixed radius $O(1)$, gives each supremum $\Op(n_b^{-1/2})$; $B$ is fixed and $n_b\asymp n$, hence the sum is $\Op(n^{-1/2})$. The square-root bound follows from $\sqrt{x+y}\le\sqrt x+\sqrt y$. \hfill$\square$
\end{proof}

\begin{lemma}[Conditional multiplier maximal inequality]\label{lem:multmax}
Let $\xi_1,\dots,\xi_n$ be i.i.d.\ mean zero with $\norm{\xi_1}_{\psi_1}\le4$, independent of the fixed array $\{z_i(f):i\le n,\,f\in\calF'\}$ with $|z_i(f)|\le\bar B$ and
$\log N\bigl(\eps\bar B,\calF',\norm{\cdot}_{P_n,2}\bigr)\le v\log(A/\eps)$ for $\eps\in(0,1]$. Then, with $\hat\sigma^2=\sup_fP_nz(f)^2$,
\begin{equation*}
\E_\xi\sup_{f\in\calF'}\Bigl|\frac1{\sqrt n}\sum_{i=1}^n\xi_i z_i(f)\Bigr|
\;\le\;C_v\Bigl\{\hat\sigma\sqrt{\log(A\bar B/\hat\sigma)}+\bar B\,\frac{\log n}{\sqrt n}\Bigr\}.
\end{equation*}
\end{lemma}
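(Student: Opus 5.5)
The plan is a localized chaining bound for a multiplier process with sub-exponential multipliers, applied conditionally on the array. Since the $\xi_i$ are independent of the fixed array $\{z_i(f)\}$, we work on the finite index set $[n]$ with the empirical seminorm $\norm{\cdot}_{P_n,2}$, and we bound the process $X_f=n^{-1/2}\sum_i\xi_i z_i(f)$ directly. The first step establishes its increments through Bernstein's inequality for sums of independent $\psi_1$ variables. For $f,g$ with $d_2(f,g)=\norm{z(f)-z(g)}_{P_n,2}$ and $d_\infty(f,g)=\max_i|z_i(f)-z_i(g)|\le 2\bar B$, we aim for
\begin{equation*}
\Pp_\xi\bigl(|X_f-X_g|>x\bigr)\le 2\exp\Bigl\{-c\min\Bigl(\frac{x^2}{d_2(f,g)^2},\ \frac{\sqrt n\,x}{d_\infty(f,g)}\Bigr)\Bigr\},
\end{equation*}
which has the mixed sub-Gaussian/sub-exponential form. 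The constant $c$ depends only on the bound $\norm{\xi_1}_{\psi_1}\le4$.

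The second step is chaining under this mixed tail. We use the Talagrand/van der Vaart--Wellner generic chaining bound for processes with Bernstein-type increments (e.g.\ \citet[Lemma~2.2.10 and its Bernstein variant]{vdvwellner1996}). It gives
\begin{equation*}
\E_\xi\sup_f|X_f-X_{f_0}|\lesssim\int_0^{\hat\sigma}\sqrt{\log N(\eps,\calF',d_2)}\,d\eps+\frac{1}{\sqrt n}\int_0^{2\bar B}\log N(\eps,\calF',d_\infty)\,d\eps .
\end{equation*}
This bound is awkward because $d_\infty$ entropy is not assumed. The standard remedy is to run the chain in $d_2$ only and handle the sub-exponential part crudely. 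At each link of the chain, $d_\infty\le2\bar B$ always. For $N_k$ links at level $k$, the maximal inequality for the $\psi_1$-type part of a finite maximum contributes $\bar B n^{-1/2}\log N_k$. The chain is stopped at resolution $\eps_{\min}=\bar B/\sqrt n$. Below that level, Cauchy--Schwarz gives a bound on the remaining increment of $\E_\xi|X_f-X_{\pi f}|\le\E|\xi|\cdot\sqrt n\,d_2\le C\bar B/\sqrt n\cdot\sqrt n\,/\sqrt n$. Dominating this remainder uniformly over $f$ instead uses $\sup_f|n^{-1/2}\sum\xi_i a_i(f)|\le n^{-1/2}\sum|\xi_i|\cdot\sqrt{P_n a^2}\cdot\sqrt n/\sqrt n$, which is of order $\sqrt n\,\eps_{\min}=\bar B$ scaled appropriately. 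I would choose $\eps_{\min}$ so that this residual is at most $\bar B\log n/\sqrt n$. The sub-exponential sums $\bar B n^{-1/2}\sum_k\log N_k$ over $O(\log n)$ levels are also controlled by $\bar B\log n/\sqrt n$, because $\log N_k\le v\log(A2^k)$. This accounting is the main obstacle. The geometric number of levels combined with polynomial entropy gives $\sum_k k\lesssim\log^2 n$ unless the levels are weighted carefully. I would therefore first try the Bernstein-chaining inequality of \citet[Lemma~3.4.3-type]{vdvwellner1996} form. After truncating $\xi_i$ at level $C\log n$, which costs only $O(n^{-1})$ in expectation by the $\psi_1$ tail, the truncated multipliers are bounded, so $d_\infty$ effectively becomes $C\log n\cdot\bar B$. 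The bounded-class maximal inequality $\E\sup\lesssim J(\hat\sigma)+\bar B\log n\,J(\hat\sigma)^2/(\hat\sigma^2\sqrt n)$ then applies.

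The last step evaluates the entropy integral. With $\log N(\eps\bar B)\le v\log(A/\eps)$,
\[
J(\hat\sigma)=\int_0^{\hat\sigma}\sqrt{v\log(A\bar B/\eps)}\,d\eps\le C_v\hat\sigma\sqrt{\log(A\bar B/\hat\sigma)}.
\]
The second term becomes $C\bar B\log n\cdot\log(A\bar B/\hat\sigma)/\sqrt n$. Here $\hat\sigma\ge\bar B n^{-1/2}$ can be assumed without loss, since replacing $\hat\sigma$ by $\hat\sigma\vee\bar B n^{-1/2}$ only enlarges the right side. The log factor then folds into the constant times $\log n$ after absorbing $\log A$. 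Adding back $\E|X_{f_0}|\le C\hat\sigma$ completes the bound.
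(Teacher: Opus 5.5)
\textbf{There is a genuine gap: the proposal never reaches the stated $\bar B\log n/\sqrt n$ term.} The route you settle on (truncate $\xi_i$ at $C\log n$, then apply a bounded-class maximal inequality) leaves a second term $C\bar B\log n\cdot\log(A\bar B/\hat\sigma)/\sqrt n$. After your normalization $\hat\sigma\ge\bar B n^{-1/2}$, the factor $\log(A\bar B/\hat\sigma)$ can be as large as $\log A+\tfrac12\log n$. It is not a constant, so what you actually obtain is of order $\bar B\log^2 n/\sqrt n$. The claim that it ``folds into the constant times $\log n$'' is false.

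The extra logarithm comes from the truncation itself, which inflates the effective envelope from $\bar B$ to $C\bar B\log n$. Two further problems:
\begin{itemize}
\item The local maximal inequality of the form $J+\bar B\log n\,J^2/(\hat\sigma^2\sqrt n)$ is an i.i.d.\ empirical-process statement. You do not argue that it applies to a fixed array of non-identical coefficients with the empirical $\hat\sigma$.
\item The crude-chaining paragraph does not close either. With $\eps_{\min}=\bar B/\sqrt n$, the Cauchy--Schwarz remainder $\E\bigl(\sum_i\xi_i^2\bigr)^{1/2}\eps_{\min}$ is of order $\bar B$. And, as you note, charging $\bar B n^{-1/2}\log N_k$ at each of $O(\log n)$ levels again gives $\log^2 n$.
\end{itemize}
A $\bar B\log^2 n/\sqrt n$ remainder would still serve the paper's downstream uses, but it is not the lemma as stated.

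\textbf{The missing idea is an elementary comparison, and with it your first chaining display already finishes the proof.} That display should be attributed to a two-metric generic chaining bound of the form $\gamma_2(T,d_2)+\gamma_1(T,n^{-1/2}d_\infty)$ (Talagrand; Dirksen's mixed-tail theorem). It does not follow from Lemma~2.2.10 of van der Vaart--Wellner, which only controls finite maxima.

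No separate $d_\infty$ entropy assumption is needed. On the $n$-point design,
\[
\max_i|a_i|\le\Bigl(\sum_i a_i^2\Bigr)^{1/2}=\sqrt n\,\norm{a}_{P_n,2},
\]
so $d_\infty\le\sqrt n\,d_2$. Hence $N(\eps,\calF',d_\infty)\le N(\eps/\sqrt n,\calF',d_2)\le(A\bar B\sqrt n/\eps)^v$ for $\eps\le 2\bar B$; the $d_2$-radius is then at most $2\bar B/\sqrt n\le\bar B$, so the hypothesis applies. Therefore
\[
n^{-1/2}\int_0^{2\bar B}\log N(\eps,\calF',d_\infty)\,d\eps
\;\le\; v\,n^{-1/2}\int_0^{2\bar B}\log\frac{A\bar B\sqrt n}{\eps}\,d\eps
\;=\;2v\bar B\,n^{-1/2}\Bigl\{\log\frac{A\sqrt n}{2}+1\Bigr\},
\]
which is at most $C\bar B\log n/\sqrt n$. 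Combining this with your evaluation of the $d_2$ integral and $\E_\xi|X_{f_0}|\le C\hat\sigma$ gives the lemma.

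This is exactly the paper's proof. The paper builds the factor $n^{-1/2}$ into $d_\infty$, so the comparison reads $d_\infty\le d_2$ and the $d_\infty$-diameter is $2\bar B/\sqrt n$. The sub-exponential part enters only through its own entropy integral over a range of length $O(\bar B/\sqrt n)$ in the normalized metric, which is why one logarithm suffices.
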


\begin{proof}
For fixed $f,f'$ and $a_i=z_i(f)-z_i(f')$, Bernstein's inequality for sums of independent sub-exponential variables gives
\begin{equation*}
\Pp_\xi\Bigl\{\bigl|n^{-1/2}\textstyle\sum_i\xi_ia_i\bigr|>t\Bigr\}\le2\exp\bigl[-c\min\bigl\{t^2/\norm{a}_{P_n,2}^2,\ \sqrt n\,t/\max_i|a_i|\bigr\}\bigr],
\end{equation*}
i.e.\ the process has mixed sub-Gaussian/sub-exponential increments for the pair $d_2(f,f')=\norm{z(f)-z(f')}_{P_n,2}$ and $d_\infty(f,f')=n^{-1/2}\max_i|z_i(f)-z_i(f')|$. Chaining for processes with mixed sub-Gaussian and sub-exponential increments (\citealp{dirksen2015}; see also \citealp[Section~2.2]{vdvwellner1996}) yields
\begin{equation*}
\E_\xi\sup_{f}\Bigl|\frac1{\sqrt n}\sum_i\xi_i\{z_i(f)-z_i(f_0)\}\Bigr|
\le C\Bigl\{\int_0^{\hat\sigma'}\!\!\sqrt{\log N(\eps,d_2)}\,d\eps
+\int_0^{D_\infty}\!\!\log N(\eps,d_\infty)\,d\eps\Bigr\},
\end{equation*}
with $\hat\sigma'\le2\hat\sigma$, $D_\infty\le2\bar B/\sqrt n$. The first integral is at most $C\hat\sigma\sqrt{v\log(A\bar B/\hat\sigma)}$ by the entropy hypothesis. For the second, $d_\infty\le\norm{z(f)-z(f')}_{P_n,2}$ pointwise on an $n$-point sample only after rescaling: $\max_i|a_i|\le\sqrt n\norm{a}_{P_n,2}$, so $N(\eps,d_\infty)\le N(\eps,d_2)\le(A\bar B/\eps)^v$, whence the second integral is at most
$\int_0^{2\bar B/\sqrt n}v\log(A\bar B/\eps)\,d\eps\le Cv\,\bar Bn^{-1/2}\log(A\sqrt n)\le C_v\bar Bn^{-1/2}\log n$.
Adding the single-point bound $\E_\xi|n^{-1/2}\sum\xi_iz_i(f_0)|\le C\hat\sigma$ completes the proof.
\end{proof}

\begin{lemma}[Dirichlet weights are admissible multipliers]\label{lem:pwcheck}
The weights $w^{(s)}_i=nE_i/S_n$, $E_i$ i.i.d.\ standard exponential, satisfy the conditions of the exchangeable-bootstrap central limit theorem \citep[Theorem~2.2]{praestgaard1993} and \citep[Theorem~3.6.13]{vdvwellner1996}: they are nonnegative and exchangeable with $\sum_iw_i=n$; $\sup_n\norm{w_1}_{2,1}<\infty$; $n^{-1}\sum_i(w_i-1)^2\pto1$; and $n^{-1/2}\,\E_w\max_i|w_i-1|\to0$.
\end{lemma}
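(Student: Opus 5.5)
The plan is to work throughout with the representation $w_i=nE_i/S_n=E_i/\bar E_n$, where $E_1,\dots,E_n$ are i.i.d.\ standard exponential, $S_n=\sum_jE_j$ and $\bar E_n=S_n/n$. Each condition then reduces either to an exact marginal computation or to the law of large numbers for the $E_i$, plus a split on the event $A_n=\{S_n\ge n/2\}$. This is the same event used in the proof of Theorem~\ref{thm:bvm}\textup{(ii)}, and $\Pp(A_n^c)\le e^{-cn}$ by a Chernoff bound for the Gamma$(n,1)$ sum.

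\textbf{Structural conditions.} Nonnegativity and $\sum_iw_i=n$ hold by construction. Exchangeability holds because the vector $(E_1,\dots,E_n)$ is exchangeable and $S_n$ is a symmetric function of it.

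\textbf{Moment condition $\sup_n\norm{w_1}_{2,1}<\infty$.} Here $\norm{w_1}_{2,1}=\int_0^\infty\sqrt{\Pp(|w_1|>t)}\,dt$. I will use the exact marginal law: $E_1/S_n\sim\mathrm{Beta}(1,n-1)$. This gives, for $0\le t\le n$,
\[
\Pp(w_1>t)=(1-t/n)^{n-1}\le e^{-t(n-1)/n}\le e^{-t/2},\qquad n\ge2,
\]
and the probability is zero for $t>n$. Hence $\norm{w_1}_{2,1}\le\int_0^\infty e^{-t/4}\,dt=4$ uniformly in $n$.

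\textbf{Variance condition.} Expand using $\sum_iw_i=n$:
\[
n^{-1}\sum_i(w_i-1)^2=n^{-1}\sum_iw_i^2-1=\frac{n^{-1}\sum_iE_i^2}{\bar E_n^{\,2}}-1.
\]
By the weak law of large numbers, $n^{-1}\sum_iE_i^2\pto\E E_1^2=2$ and $\bar E_n\pto1$. The continuous mapping theorem then gives the limit $2-1=1$.

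\textbf{Maximum condition.} This is the only step needing care, because the common denominator $S_n$ couples all the weights. First bound $|w_i-1|\le w_i+1$, so it suffices to control $\E_w\max_iw_i$.
\begin{itemize}
\item On $A_n$, $\max_iw_i\le2\max_iE_i$, and $\E\max_iE_i=\sum_{k\le n}1/k\le1+\log n$.
\item On $A_n^c$, use the trivial bound $\max_iw_i\le n$, which holds since the weights sum to $n$; this contributes at most $n\,\Pp(A_n^c)\le ne^{-cn}$.
\end{itemize}
Combining,
\[
n^{-1/2}\,\E_w\max_i|w_i-1|\le n^{-1/2}\bigl\{1+2(1+\log n)+ne^{-cn}\bigr\}\to0.
\]
If the cited theorems require the conditional version, with the weights only asymptotically exchangeable in a stronger sense, nothing changes, because the weights are independent of the data. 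Together these four items verify the hypotheses of \citet[Theorem~2.2]{praestgaard1993} and \citet[Theorem~3.6.13]{vdvwellner1996}, with limiting multiplier variance $c^2=1$.
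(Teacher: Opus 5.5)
Your proof is correct and follows the paper's argument in structure: the representation $w_i=E_i/\bar E_n$, immediate structural conditions, and the same algebra $n^{-1}\sum_i(w_i-1)^2=n^{-1}\sum_iE_i^2/\bar E_n^{2}-1$ combined with a law of large numbers. The remaining differences are local, and both of your variants are valid and arguably slightly cleaner. For the $L_{2,1}$ bound you use the exact $\mathrm{Beta}(1,n-1)$ marginal to get $\norm{w_1}_{2,1}\le4$ with no remainder, whereas the paper uses $\Pp(w_1>t)\le\Pp(E_1>t/2)+\Pp(S_n<n/2)$ and obtains $4+o(1)$. For the maximum condition you split on $A_n=\{S_n\ge n/2\}$ with the trivial bound $\max_iw_i\le n$ off $A_n$, whereas the paper applies Cauchy--Schwarz with $\E(n/S_n)^2=n^2/\{(n-1)(n-2)\}$ and $\E\max_iE_i^2\le C\log^2n$.
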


\begin{proof}
Exchangeability and the sum constraint are immediate. For the $L_{2,1}$ bound, $w_1\le n$ and
$\Pp(w_1>t)\le\Pp(E_1>t/2)+\Pp(S_n<n/2)\le e^{-t/2}+e^{-cn}$,
so $\norm{w_1}_{2,1}=\int_0^n\sqrt{\Pp(w_1>t)}\,dt\le\int_0^\infty e^{-t/4}dt+ne^{-cn/2}\le4+o(1)$.
By the strong law, $n^{-1}\sum E_i^2\to2$ and $\bar E_n\to1$ almost surely, so
$n^{-1}\sum(w_i-1)^2=(n/S_n)^2\{n^{-1}\sum E_i^2-\bar E_n^2\}\to1$.
Finally, $\E\max_iE_i\le1+\log n$ for i.i.d.\ standard exponentials, and $\E\max_iE_i^2\le C\log^2n$ by integrating the union tail bound $\Pp(\max_iE_i>t)\le ne^{-t}$. Since $S_n\sim\mathrm{Gamma}(n,1)$ gives $\E(n/S_n)^2=n^2/\{(n-1)(n-2)\}\le C$ for $n\ge3$, Cauchy--Schwarz yields $\E_w\max_iw_i=\E\{(n/S_n)\max_iE_i\}\le\{\E(n/S_n)^2\}^{1/2}\{\E\max_iE_i^2\}^{1/2}\le C\log n$, so $n^{-1/2}\E_w\max_i|w_i-1|\le n^{-1/2}(C\log n+1)\to0$.
\end{proof}

\begin{lemma}[Automatic increment rate on the quantization class]\label{lem:deltaqt}
Under Assumptions~\ref{ass:ident}--\ref{ass:margin}, for each fold $b$,
\begin{equation*}
\sup_{K\le\overline K}\sup_{c\in\calC^K}\norm{\phi_{g_c}(\cdot;\hateta^{(-b)})-\phi_{g_c}(\cdot;\eta_0)}_{P_0,2}
\;=\;\Op\bigl(r_\mu^{\am/(2+\am)}+r_\mu+r_\pi\bigr).
\end{equation*}
\end{lemma}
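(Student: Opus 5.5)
We will bound the $L_2(P_0)$ increment by decomposing the corrected quantization score into its loss part and its correction part, and treating each piece separately uniformly in the codebook. Write $\hat\eta=\hateta^{(-b)}$, $\hU=H\hat\bmu(X)$, $\Delta=\hU-U$, and recall
\[
\phi_{g_c}(O;\eta)=g_c(U_\eta)+2\{U_\eta-c_{h_c(U_\eta)}\}^\top HR(O;\eta).
\]
The difference $\phi_{g_c}(\cdot;\hat\eta)-\phi_{g_c}(\cdot;\eta_0)$ splits into three terms:
\begin{itemize}
\item[(a)] the loss difference $g_c(\hU)-g_c(U)$;
\item[(b)] the correction with the true gradient field but changed residual, $2\{U-c_{h_c(U)}\}^\top H\{R(O;\hat\eta)-R(O;\eta_0)\}$;
\item[(c)] the gradient-field difference $2\{(\hU-c_{h_c(\hU)})-(U-c_{h_c(U)})\}^\top HR(O;\hat\eta)$.
\end{itemize}
All bounds are conditional on the training folds and use only the constants declared at the start of the proofs section. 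The result is then unconditional in $\Op$ form, because $r_\mu,r_\pi$ denote the realized errors.

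For (a), $g_c$ is Lipschitz on $\calC$ with constant $2\diam\calC$ uniformly in $c$. Hence $\norm{g_c(\hU)-g_c(U)}_{P_0,2}\le C\norm{\Delta}_{P_0,2}\le Cr_\mu$.

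For (b), the gradient factor is bounded. Writing
\[
R_a(\hat\eta)-R_a(\eta_0)=\ind{A=a}\Bigl[\Bigl(\frac1{\hat\pi_a}-\frac1{\pi_a}\Bigr)(Y-\mu_a)+\frac{\mu_a-\hat\mu_a}{\hat\pi_a}\Bigr],
\]
the truncation $\hat\pi_a\ge\eps_\pi/2$ and the boundedness of $Y$ give a bound of $C(r_\pi+r_\mu)$.

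For (c), split the gradient-field difference as $(\hU-U)+(c_{h_c(U)}-c_{h_c(\hU)})$. Since $R(\hat\eta)$ is bounded by the truncation, the first part contributes $Cr_\mu$. The second part vanishes unless $U$ and $\hU$ lie in cells with distinct centers. By Lemma~\ref{lem:geometry}(b), on that event $\dist(U,B_{hj})\le2\norm{\Delta}$ for the bisector $B_{hj}$ of the two centers involved. Its squared contribution is therefore at most $C(\diam\calC)^2\sum_{(h,j)}P_0\{\dist(U,B_{hj})\le2\norm{\Delta}\}$, summed over the at most $\overline K^2$ pairs.

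The main step is bounding this crossing probability uniformly over codebooks when $\norm{\Delta}$ is random rather than deterministic. We truncate at a level $t>0$:
\[
P_0\{\dist(U,B)\le2\norm{\Delta}\}\le P_0(\norm{\Delta}>t)+P_0\{\dist(U,B)\le2t\}\le t^{-2}Cr_\mu^2+C_M(2t)^{\am},
\]
using Chebyshev for the first term and Assumption~\ref{ass:margin}, with the constant enlarged so it holds for all $t>0$, for the second. The margin bound holds for every hyperplane, so this is uniform in $c$. Optimizing with $t=r_\mu^{2/(2+\am)}$ gives a probability bound of order $r_\mu^{2\am/(2+\am)}$. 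Taking square roots, the crossing term contributes $Cr_\mu^{\am/(2+\am)}$ to the $L_2$ norm. Repeated centers contribute zero jump, as in Lemma~\ref{lem:donsker}(c).

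Adding (a), (b) and (c) gives the stated rate uniformly over $c\in\calC^K$ and $K\le\overline K$; the supremum passes through because every constant is free of $c$. The only delicate point is the crossing term in (c): it gives the slowest rate and is the only place the margin condition enters.
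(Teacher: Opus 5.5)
Your proof is correct and follows the paper's argument essentially step for step: the same Lipschitz bound for the loss part, the same $C(r_\mu+r_\pi)$ bound for the residual difference, and the same crossing-event control via Lemma~\ref{lem:geometry}(b), a Chebyshev truncation at level $t$, and the margin condition with $t\asymp r_\mu^{2/(2+\am)}$. The only differences are cosmetic. You pair the residual difference with the true rather than the estimated gradient field, which is immaterial since both residuals are bounded. You also state the bound deterministically in the realized nuisance errors rather than invoking Markov over the training randomness.
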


\begin{proof}
Write $\hU=U_{\hateta}(X)$, $\hat\Delta=\hU-U$, so $\norm{\hat\Delta}\le\norm{H}\norm{\hat\bmu-\bmu}$ pointwise and $\norm{\hat\Delta}_{P_0,2}\le Cr_\mu$. The loss parts differ by $|g_c(\hU)-g_c(U)|\le4\diam\calC\,\norm{\hat\Delta}$, of $L_2$ norm $\le Cr_\mu$. The gradient parts differ by
$2(\hU-c_{h_c(\hU)})^\top H\hR-2(U-c_{h_c(U)})^\top HR
=2(\hU-c_{h_c(\hU)})^\top H(\hR-R)+2\{\hat\Delta+(c_{h_c(U)}-c_{h_c(\hU)})\}^\top HR$.
The first term is bounded by $C\norm{\hR-R}$, of $L_2$ norm $\le C(r_\mu+r_\pi)$ (insert and subtract $\mu$, $\pi$ in $\hR$ and use the truncation). In the second, $\norm{\hat\Delta}^\top$-part has $L_2$ norm $\le Cr_\mu$, and the center-difference is nonzero only on the crossing event between $U$ and $\hU$ under $c$, on which, by Lemma~\ref{lem:geometry}(b), $\dist(U,B)\le2\norm{\hat\Delta}$ for some bisector hyperplane $B$ of $c$. For any $t>0$, by Assumption~\ref{ass:margin} and Chebyshev,
$P_0(\text{crossing})\le\overline K^2\bigl\{C_M(2t)^\am +t^{-2}\E\norm{\hat\Delta}^2\bigr\}$;
optimizing $t=(\E\norm{\hat\Delta}^2)^{1/(2+\am)}$ gives $P_0(\text{crossing})\le C(\E\norm{\hat\Delta}^2)^{\am/(2+\am)}$, so the center-difference term, bounded by $\diam\calC\cdot\norm{HR}\ind{\text{crossing}}$, has $L_2$ norm $\Op(r_\mu^{\am/(2+\am)})$ conditionally on the training fold, by Markov over the training randomness. Collecting terms proves the claim; constants are uniform over $(c,K)$.
\end{proof}

\end{document}